\newtheorem{defn0}{Definition}[section]
\newtheorem{prop0}[defn0]{Proposition}
\newtheorem{thm0}[defn0]{Theorem}
\newtheorem{lemma0}[defn0]{Lemma}
\newtheorem{corollary0}[defn0]{Corollary}
\newtheorem{example0}[defn0]{Example}
\newtheorem{remark0}[defn0]{Remark}
\newtheorem{conjecture0}[defn0]{Conjecture}
\newtheorem{ansatz0}[defn0]{Ansatz}
\newtheorem{notation0}[defn0]{Notation}
\newenvironment{definition}{ \begin{defn0}}{\end{defn0}}
\newenvironment{proposition}{\bigskip \begin{prop0}}{\end{prop0}}
\newenvironment{theorem}{\bigskip \begin{thm0}}{\end{thm0}}
\newenvironment{lemma}{\bigskip \begin{lemma0}}{\end{lemma0}}
\newenvironment{corollary}{\bigskip \begin{corollary0}}{\end{corollary0}}
\newenvironment{remark}{ \bigskip \begin{remark0}}{\end{remark0}}
\newenvironment{notation}{\begin{notation0}}{\end{notation0}}
\newcommand{\transv}{\mathrel{\text{\tpitchfork}}}
\newcommand{\bil}[2]{(#1\mid #2)} 
\numberwithin{equation}{section}
\numberwithin{figure}{section}
\definecolor{ao(english)}{rgb}{0.0, 0.5, 0.0}
\definecolor{DBrown}{RGB}{161, 104, 17}
\definecolor{RawSienna}{RGB}{125, 73, 49}
\definecolor{RedViolet}{RGB}{146, 43, 62}
\definecolor{Cerulean}{RGB}{29,172,214}
\definecolor{blueKepler}{RGB}{45,107,156}
\definecolor{redKepler}{RGB}{199,86,18}
\definecolor{grey}{RGB}{129,129,129}
\newcommand\Alpha{\mathrm{A}}
\newcommand{\constantValue}{\lambda}
\newcommand{\pS}{\overline p}
\newcommand{\rS}{\overline r}
\newcommand{\thetaS}{\overline \theta}
\newcommand{\RS}{\overline R}
\newcommand{\ThetaS}{\overline \Theta}
\newcommand\blfootnote[1]{%
  \begingroup
  \renewcommand\thefootnote{}\footnote{#1}%
  \addtocounter{footnote}{-1}%
  \endgroup
}
\newcommand{\tpitchfork}{%
  \vbox{
    \baselineskip\z@skip
    \lineskip-.52ex
    \lineskiplimit\maxdimen
    \m@th
    \ialign{##\crcr\hidewidth\smash{$-$}\hidewidth\crcr$\pitchfork$\crcr}
  }%
}
\title{Second-species dynamics in the restricted planar circular three body problem: chaos, final motions and periodic orbits}
\author[1,3]{Marcel Guardia \orcidlink{0000-0002-4802-3151}}
\author[2]{Jos\'e Lamas \orcidlink{0000-0002-1809-1823}}
\author[2,3]{Tere M-Seara \orcidlink{0000-0001-8421-8717}}
\affil[1]{Departament de Matem\`atiques i Inform\`atica, Universitat de Barcelona, Gran Via, 585, 08007 Barcelona, Spain}
\affil[2]{Departament de Matem\`atiques \& IMTECH, Universitat Polit\`ecnica de Catalunya, Diagonal 647, 08028 Barcelona, Spain}
\affil[3]{Centre de Recerca Matem\`atica, Campus de Bellaterra, Edifici C, 08193, Barcelona, Spain}
\date{}
\begin{document}
\justifying
\maketitle

\begin{abstract}
\blfootnote{\textit{E-mail addresses}: \href{mailto:guardia@ub.edu}{guardia@ub.edu} (M.Guardia), \href{mailto:jose.lamas.rodriguez@upc.edu}{jose.lamas.rodriguez@upc.edu} (J.Lamas), \href{mailto:tere-m.seara@upc.edu}{tere-m.seara@upc.edu} (Tere M-Seara).}

Consider the Restricted Planar Circular Three Body Problem (RPC3BP), which models the motion of a massless particle (Asteroid) under the gravitational influence of two massive bodies (the primaries) moving on circular orbits. By considering the ratio between the masses of the primaries to be arbitrarily small, we construct orbits with close encounters with the smaller primary (Jupiter) that realize any combination of past and future final motions (in the sense of Chazy’s), including oscillatory motions. We also obtain arbitrarily large ejection-collision orbits with Jupiter and ejection-collision orbits between the two primaries (Sun and Jupiter), as well as arbitrarily large periodic orbits that pass arbitrarily close to Jupiter. Our approach combines singular perturbation theory and Levi-Civita regularization near Jupiter, and McGehee regularization near infinity and near the Sun, together with a global analysis that leads to transverse intersections of invariant manifolds.
\end{abstract}
\tableofcontents
\section{Introduction}\label{sec: Introduction}
In his \textit{M\'ethodes nouvelles de la m\'ecanique c\'eleste}~\cite{poincare1987methodes}, Poincar\'e pointed out the fundamental role of collision and near-collision orbits in celestial mechanics. In particular, he introduced the concept of \textit{second-species} solutions: periodic orbits that, while avoiding actual collision, pass arbitrarily close to it. Their construction typically involves singular perturbation methods and careful analysis of the flow near collision via regularization techniques. In contrast, \textit{first-species} solutions correspond to periodic orbits that are continuation of those of the  $2$ body problem that arise through regular perturbations of integrable systems. 
In recent years, a variety of techniques (ranging from variational methods to geometric and perturbative approaches) have been developed to construct second-species solutions and to analyze their structure (see for instance \cite{MR2331205,MR2245344,MR1805879,MR1335057}).

A natural question concerning these near-collision behaviors is how they relate to the broader picture of long-term dynamics of the $3$ body problem. These dynamics were studied extensively by Chazy, who classified all complete solutions (those defined for all positive or negative time) based on their asymptotic behavior. In the RPC3BP, this classification is reduced to 4 possible asymptotic behaviors. To provide the classification, let $q$ denote the position of the Asteroid.
%
\begin{theorem}[Chazy, 1922, see also \cite{MR2269239}]\label{thm:chazyJ}
Every solution of the RPC3BP defined for all time belongs to one of the following four classes.
\begin{itemize}
    \item Hyperbolic ($\mathcal{H}^\pm$) $\colon \underset{t\to \pm \infty}{\lim} \|q(t)\| = \infty$ and $\underset{t\to \pm \infty}{\lim}\|\dot q(t)\| = c > 0$.
    \item Parabolic ($\mathcal{P}^\pm)$ $\colon \underset{t\to \pm \infty}{\lim} \|q(t)\| = \infty$ and $\underset{t\to \pm \infty}{\lim}\|\dot q(t)\| = 0$.
    \item Bounded ($\mathcal{B}^\pm$) $ \colon \underset{t\to \pm \infty}{\limsup}\, \|q(t)\| < \infty$.
    \item Oscillatory ($\mathcal{OS}^\pm$) $\colon \underset{t\to \pm \infty}{\limsup}\, \|q(t)\| = \infty$ and $\underset{t\to \pm \infty}{\liminf}\,\|q(t)\| < \infty$.
\end{itemize}
\end{theorem}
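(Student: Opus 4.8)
The plan is to reduce the four-fold classification to one analytic statement about escaping orbits. Fix a solution $q(t)$ defined on $[0,+\infty)$; the behaviour as $t\to-\infty$ follows by the same argument together with the reversibility of the RPC3BP, which is what produces the $\pm$ superscripts. Write $\rho(t)=\|q(t)\|$ and $v(t)=\|\dot q(t)\|$. If $\limsup_{t\to+\infty}\rho(t)<\infty$, then $q$ is of class $\mathcal B^+$; if $\limsup_{t\to+\infty}\rho(t)=+\infty$ while $\liminf_{t\to+\infty}\rho(t)<\infty$, then $q$ is of class $\mathcal{OS}^+$. Thus the entire content is the \emph{escape dichotomy}: if $\rho(t)\to+\infty$, then $v(t)$ converges to some $c\ge0$, and $c>0$ gives $\mathcal H^+$ while $c=0$ gives $\mathcal P^+$ (the remaining condition $\rho\to\infty$ being already assumed, and $h(t):=\tfrac12 v(t)^2-\rho(t)^{-1}\to\tfrac12 c^2\ge0$ coming for free). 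Together with this trichotomy on $\rho$, the four classes are then seen to be pairwise disjoint.

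Assume therefore $\rho(t)\to+\infty$. In the inertial frame $\ddot q=\nabla_q U(q,t)$, with $U(\cdot,t)$ the Newtonian potential of the two primaries; since the primaries remain on the unit circle and their total mass is normalized to $1$, for $\rho$ large one has $\nabla_q U=-q/\rho^3+R$ with $\|R\|=O(\rho^{-3})$. The relevant quantity is the osculating Kepler energy $h=\tfrac12 v^2-\rho^{-1}$, for which a one-line computation gives $\dot h=\dot q\cdot R=O(v\,\rho^{-3})$. The Jacobi integral (the RPC3BP is autonomous in the rotating frame) is needed only for an \emph{a priori} bound: rewriting it in inertial polar coordinates $(\rho,\theta)$ and using $U=O(\rho^{-1})$, it becomes $\dot\rho^2+\bigl(\rho^{-1}L-\rho\bigr)^2=\rho^2+O(1)$, where $L=\rho^2\dot\theta$ is the (non-conserved) angular momentum. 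This forces $|\dot\rho|\le\rho+O(1)$ and $|L|=O(\rho^2)$, hence $v^2=\dot\rho^2+L^2\rho^{-2}=O(\rho^2)$, i.e.\ $v=O(\rho)$, and therefore $|\dot h|=O(\rho^{-2})$. Consequently the theorem reduces to the \emph{no slow escape} estimate $\int^{+\infty}\rho(t)^{-2}\,dt<\infty$: granting it, $h$ is Cauchy, $h\to h_\infty$, and $v^2=2h+2\rho^{-1}\to 2h_\infty=:c^2$.

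Proving $\int^{+\infty}\rho^{-2}\,dt<\infty$ — equivalently, ruling out pathologically slow or oscillating escape rates such as $\rho(t)\sim t^{1/2}$ — is the step I expect to be the main obstacle. The cleanest route, and the one in keeping with the methods of this paper, is the McGehee blow-up at infinity: a singular change of variables together with a time rescaling under which $\{\rho=+\infty\}$ becomes an invariant boundary manifold $\mathcal I$ to which the vector field extends smoothly. One checks that an escaping orbit has $\omega$-limit set inside $\mathcal I$, that the flow on the two-dimensional $\mathcal I$ is gradient-like with the rescaled radial velocity as Lyapunov function, and hence, by Poincar\'e--Bendixson, that the $\omega$-limit set is a single equilibrium of the boundary flow — the equilibrium with zero rescaled radial velocity giving $\mathcal P^+$ and all others giving $\mathcal H^+$. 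The hands-on equivalent is a dichotomy on the sign of $\ddot\rho=L^2\rho^{-3}-\rho^{-2}+O(\rho^{-3})$ for large $t$, using $\dot L=O(\rho^{-2})$ (the monopole of $U$ exerts no torque) and the Jacobi relation above: where $L^2>\rho$ one has $\ddot\rho>0$, so $\dot\rho$ is increasing, and since $\rho\to\infty$ makes $\dot\rho$ positive at arbitrarily large times, $\dot\rho$ is eventually bounded below by a constant, giving linear escape; in the complementary regime $\dot\rho$ is eventually positive and decreasing with a finite limit $m\ge0$, and when $m=0$ the Jacobi relation (with $\dot\rho\to0$) pins $L$ to be bounded, so $-\ddot\rho=\rho^{-2}(1+O(\rho^{-1}))$ and $\int^{+\infty}(-\ddot\rho)\,dt<\infty$ yields the estimate. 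Handling the combinatorics of sign changes of $\ddot\rho$ uniformly is exactly what the McGehee compactification takes care of.

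Collecting the cases, every complete solution lies, as $t\to+\infty$, in exactly one of $\mathcal B^+,\mathcal{OS}^+,\mathcal H^+,\mathcal P^+$, and the same analysis as $t\to-\infty$ gives the classes $\mathcal B^-,\mathcal{OS}^-,\mathcal H^-,\mathcal P^-$, proving the theorem.
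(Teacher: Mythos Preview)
The paper does not prove this theorem; it is stated as Chazy's classical 1922 result and simply cited (see the attribution ``Chazy, 1922, see also \cite{MR2269239}''). There is therefore no paper proof to compare against.

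Your sketch is broadly sound and follows the standard modern route: the trichotomy on $\limsup\rho$ and $\liminf\rho$ is trivial set-theoretically, and the real content is indeed the escape dichotomy, for which the osculating Kepler energy $h=\tfrac12 v^2-\rho^{-1}$ is the right monitor. Your Jacobi-integral identity $\dot\rho^2+(L/\rho-\rho)^2=\rho^2+O(1)$ and the resulting bound $v=O(\rho)$ are correct, so $|\dot h|=O(\rho^{-2})$ is the key estimate to integrate. You correctly identify the ``no slow escape'' step $\int^\infty\rho^{-2}\,dt<\infty$ as the crux and honestly flag that your elementary sign-of-$\ddot\rho$ argument does not cleanly close: the dichotomy ``$L^2>\rho$ eventually'' versus its complement is not obviously exhaustive (both $L$ and $\rho$ are moving), and in the first branch ``$\ddot\rho>0$ so $\dot\rho$ increasing'' needs one more line to get a uniform positive lower bound on $\dot\rho$. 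The McGehee compactification you invoke is exactly the clean way to finish, and is also the device the paper itself uses (Section~\ref{sec: The invariant manifolds of infinity}) for its own purposes --- the change $\hat r=2x^{-2}$ together with a time rescaling makes $\{x=0,\hat R=0\}$ into an invariant set of periodic orbits, and the gradient-like structure you describe on the boundary then forces the $\omega$-limit of an escaping orbit to be a single equilibrium, giving parabolic versus hyperbolic. So: right strategy, right diagnosis of the hard step, but the elementary alternative you offer is not yet a proof.
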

Note that this classification applies both when $t\to+\infty$ or $t\to-\infty$. To distinguish both cases we add a superindex $+$ or $-$ to each of the cases, e.g $\mathcal{P}^+$ and $\mathcal{P}^-$.

While this classification excludes collisions, it provides a natural framework in which to understand global dynamics, and it remains central to the description of the long-term motion in the $3$ body problem. Among these classes, oscillatory motions are particularly subtle: they exhibit infinitely many excursions far away from the primaries without tending to infinity. The first example of oscillatory motions was given by Sitnikov in \cite{MR0127389}. Since then, more recent works have proven their existence in several models by showing their connection to chaotic behavior (see for instance \cite{MR3455155,MR3583476,moser2001stable}).

The existence of orbits exhibiting oscillatory behavior (and transitions between different types of final motions) has led to renewed interest in how near-collision dynamics fit into the Chazy framework. In our previous work \cite{guardia2024oscillatorymotionsparabolicorbits}, we analyzed the different types of final motions for the 
RPC3BP (including oscillatory motions) passing near the massive primary, referred to as the Sun. These orbits could be studied through classical perturbation theory because the system, also near collision, is a regular perturbation of the Sun-Asteroid $2$ body problem. Using a slight abuse of terminology in line with Poincaré’s original classification, we refer to these orbits as \emph{first-species} solutions to emphasize their distinction with the ``more singular'' behavior of the orbits considered in this article.

The present paper focuses on the behavior near the smaller primary, referred to as Jupiter. Here, the problem possesses a singular perturbation regime in which the dynamics can no longer be viewed as a small perturbation of a Kepler problem between the Sun and the Asteroid. As a result, the analytical framework used in \cite{guardia2024oscillatorymotionsparabolicorbits} is no longer applicable, and a different class of geometric and regularization techniques is required. The orbits constructed here (including oscillatory motions, symbolic dynamics, and ejection-collision trajectories) although similar in nature to those in \cite{guardia2024oscillatorymotionsparabolicorbits}, arise through fundamentally different mechanisms and are  referred to as \textit{second-species} solutions in this broader sense.

\subsection{Main results}\label{subsec: main results}
We consider the RPC3BP and normalize the masses of the primaries to be $1-\mu$ and $\mu$, with $\mu \in (0,1/2]$ denoting the mass ratio. Throughout this paper we consider $\mu> 0$ small, and accordingly we refer to the more massive primary as the Sun and the smaller one as Jupiter. Taking the appropriate units, the RPC3BP is a Hamiltonian system with respect to
\begin{equation}\label{eqn: non autonomous Hamiltonian}
    \mathrm{H}_\mu(Q,P,t) = \frac{|P|^2}{2} - \frac{1-\mu}{|Q+\mu Q_0(t)|} - \frac{\mu}{|Q-(1-\mu)Q_0(t)|},
\end{equation}
where $Q,P\in \mathds{R}^2$ and $-\mu Q_0(t)$ and $(1-\mu)Q_0(t)$ with $Q_0(t) = (\cos t,\sin t)$ are the positions of the primaries. 

The symplectic change to rotating coordinates $(\hat q,\hat p)$ makes the Hamiltonian \eqref{eqn: non autonomous Hamiltonian} autonomous and of the form 
\begin{equation}\label{eqn: Hamiltonian in synodical cartesian coordinates centered at CM J}
\begin{aligned}
    \hat{H}_\mu(\hat{q},\hat{p}) = \frac{|\hat{p}|^2}{2} - \left(\hat{q}_1\hat{p_2} -\hat{q}_2\hat{p}_1\right) - \frac{1-\mu}{|\hat{q} + (\mu,0)|}- \frac{\mu}{|\hat{q} - (1-\mu,0)|},
\end{aligned}
\end{equation}
where $\hat{q},\hat{p}\in \mathds{R}^2$ and $(-\mu,0)$ and $(1-\mu,0)$ are the new position of the primaries, which are now fixed.

For any $\mu\in (0,1/2]$ it is known (see \cite{MR3455155,MR0573346}) that 
\[
X^{+}\cap Y^{-}\neq \emptyset\quad \text{where}\quad X,Y=\mathcal{H},\mathcal{P},\mathcal{B},\mathcal{OS}.
\]
Note that, when $\mu=0$, the RPC3BP is reduced to a Kepler problem and therefore $\mathcal{OS}^\pm=\emptyset$ and $\mathcal{H}^+=\mathcal{H}^-$, $\mathcal{P}^+=\mathcal{P}^-$, $\mathcal{B}^+=\mathcal{B}^-$. 

We define the Sun and Jupiter as the collision sets
\begin{equation}\label{eqn: Collision sets}
    \mathcal{S} = \left\{(\hat{q},\hat{p})\in \mathds R^4\colon \hat{q} = (-\mu,0)\right\},\quad \mathcal{J}=\left\{(\hat{q},\hat{p})\in\mathds R^4\colon \hat{q} = (1-\mu,0)\right\}.
\end{equation}
Let us define also the ejection and collision orbits.
\begin{definition}\label{def: EC-orbits}
    An ejection or collision orbit $\hat \gamma(t) = (\hat q(t), \hat p(t))$ associated to the Hamiltonian \eqref{eqn: Hamiltonian in synodical cartesian coordinates centered at CM J} belongs to one of the following families.
    \begin{itemize}
        \item Ejection orbits from the Sun ($\mathcal S^-$) $\colon$ there exists $t_0\in \mathds{R}$ such that $\underset{t\to t_0^+}{\lim} \hat{q}(t) = (-\mu,0)$. 
        \item Ejection orbits from Jupiter ($\mathcal J^-$) $\colon$ there exists $t_0\in \mathds{R}$ such that $\underset{t\to t_0^+}{\lim} \hat{q}(t) = (1-\mu,0)$.
        \item Collision orbits to the Sun ($\mathcal S^+$) $\colon$ there exists $t_1\in \mathds{R}$ such that $\underset{t\to t_1^-}{\lim} \hat{q}(t) = (-\mu,0)$.
        \item Collision orbits to Jupiter ($\mathcal J^+$) $\colon$ there exists $t_1\in \mathds{R}$ such that $\underset{t\to t_1^-}{\lim} \hat{q}(t) = (1-\mu,0)$. 
        \end{itemize}
        Moreover, we define
        \begin{itemize}
        \item Ejection-collision orbit with the Sun if it belongs to $\mathcal S^- \cap \mathcal S^+$.
        \item Ejection-collision orbit with Jupiter if it belongs to $\mathcal J^-\cap \mathcal J^+$.
        \item Ejection-collision orbit between the primaries if it belongs to $\mathcal J^-\cap \mathcal S^+$ or $\mathcal J^+\cap \mathcal S^-$.
    \end{itemize}
\end{definition}
The main results of the present paper are the following. The first one constructs two different ``types'' of ejection-collision orbits with Jupiter and between the primaries.
\begin{theorem}\label{thm: ejection-collision orbits}
For any open set $U\subset \left[\frac{1-\sqrt 3}{3},\frac{1+\sqrt{3}}{3}\right]$ there exists $\mu_0 > 0$ such that, for any $\mu \in (0,\mu_0)$, the following statements hold:
\begin{enumerate}
    \item For any $\hat h \in U$, there exists a sequence of trajectories $\{\hat z_k(t)\}_{k\in \mathds{N}}$, $\hat z_k(t) = (\hat q_k(t),\hat p_k(t)) \in \mathcal{J}^- \cap \mathcal{J}^+$ in the energy level $\hat H_\mu(\hat q,\hat p) = \hat h$ satisfying
    \begin{equation}\label{eqn: J+- close to inf}
        \underset{k\to \infty}{\limsup}\; \left(\underset{t\in (t_0^k,t_1^k)}{\sup}\;|\hat q_k(t)|\right) = +\infty,
    \end{equation}
    where $t_0^k,t_1^k$ are as in Definition \ref{def: EC-orbits}.
    \item There exists $\eta > 0$ (independent of $\mu$) such that, for any $\hat h\in(-\eta\mu,\eta\mu) \subset U$, one can find 
    \begin{itemize}
    \item 
    a sequence of trajectories $\{\hat z_k^1(t)\}_{k\in \mathds{N}}$, $\hat z_k^1(t) = (\hat q_k^1(t),\hat p_k^1(t)) \in \mathcal{S}^- \cap \mathcal{J}^+$,
    \item 
    a sequence of trajectories $\{\hat z_k^2(t)\}_{k\in \mathds{N}}$, $\hat z_k^2(t) = (\hat q_k^2(t),\hat p_k^2(t)) \in \mathcal{J}^-\cap \mathcal S^+$,
    \end{itemize}
    in the energy level $\hat H_\mu(\hat q,\hat p) = \hat h$ which satisfy
    \begin{equation}\label{eqn: SJ close to inf}
    \underset{k\to \infty}{\limsup}\; \left(\underset{t\in (t_{0,i}^k,t_{1,i}^k)}{\sup}\;|\hat q_k^i(t)|\right) = +\infty,\quad i=1,2,
    \end{equation}
    where $(t_{0,1}^k,t_{1,1}^k)$ and $(t_{0,2}^k,t_{1,2}^k)$ are as in Definition \ref{def: EC-orbits}.
    \item 
    There exists $\eta > 0$ (independent of $\mu$) such that, for any $\hat h\in(-\eta\mu,\eta\mu) \subset U$, one can find 
    \begin{itemize}
    \item 
    a trajectory $\hat z^3(t) = (\hat q^3(t), \hat p^3(t)) \in \mathcal S^- \cap \mathcal J^+$,
    \item 
    a trajectory $\hat z^4(t) = (\hat q^4(t), \hat p^4(t)) \in \mathcal J^- \cap \mathcal S^+$,
    \end{itemize}
    in the energy level $\hat H_\mu(\hat q,\hat p) = \hat h$ such that
    \begin{equation}\label{eqn: SJ bounded}
        \underset{t\in (t_{0,i},t_{1,i})}{\sup}\; |\hat q^i(t)|  = 1-\mu,\quad i=3,4,
    \end{equation}
    where $(t_{0,3},t_{1,3})$ and $(t_{0,4},t_{1,4})$ are as in Definition \ref{def: EC-orbits}.
\end{enumerate}
\end{theorem}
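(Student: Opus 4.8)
The plan is to obtain every trajectory in \thmref{thm: ejection-collision orbits} as an orbit shadowing a transverse heteroclinic chain between the ``boundary'' invariant objects of a regularized form of \eqref{eqn: Hamiltonian in synodical cartesian coordinates centered at CM J}. I would first introduce, for $\mu$ small, the three regularizations announced in the abstract: a Levi-Civita change of variables near Jupiter, performed after the singular rescaling that turns Jupiter's potential into a term of order one, and McGehee-type blow-ups near the Sun and near infinity. After these, infinity becomes a manifold $\Lambda_\infty$ of parabolic equilibria carrying stable and unstable manifolds of parabolic orbits, the Sun's collision set a hyperbolic collision manifold $\Lambda_{\mathcal S}$, and Jupiter's collision set a regular surface that the regularized flow crosses transversally. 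Comparing with \defref{def: EC-orbits}, on each level $\{\hat H_\mu=\hat h\}$ the set $\mathcal S^{\mp}$ is the trace of $W^{u/s}(\Lambda_{\mathcal S})$, the orbits escaping to or returning from infinity in the Chazy classification are the trace of $W^{u/s}(\Lambda_\infty)$, and $\mathcal J^{-}$ (resp. $\mathcal J^{+}$) is a codimension-one surface, namely the forward (resp. backward) flow-out of the Jupiter collision surface, playing the role of an unstable (resp. stable) manifold. An ejection--collision orbit is then exactly a point where two of these codimension-one sets meet inside the three-dimensional energy level, so the theorem reduces to producing such intersections with prescribed geometry and shadowing them.

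The technical heart, which I expect to be the main obstacle, is to establish the required transversalities uniformly for $\mu$ small. For item~1 one needs $\mathcal J^{-}\transv W^s(\Lambda_\infty)$ and $W^u(\Lambda_\infty)\transv\mathcal J^{+}$ for every $\hat h$ in the interval $\left[\frac{1-\sqrt3}{3},\frac{1+\sqrt3}{3}\right]$ containing $U$, this interval being precisely the range of energies for which the outer arcs through Jupiter's position exist; for item~2 one additionally needs $W^u(\Lambda_{\mathcal S})\transv W^s(\Lambda_\infty)$ and its time-reverse, essentially the mechanism of \cite{guardia2024oscillatorymotionsparabolicorbits}; and for item~3 one needs the ``direct'' transversalities $W^u(\Lambda_{\mathcal S})\transv\mathcal J^{+}$ and $\mathcal J^{-}\transv W^s(\Lambda_{\mathcal S})$. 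Each of these demands a global analysis gluing three pieces --- the inner Levi-Civita flow near Jupiter, where the ejection/collision directions trace a circle; the outer flow, a perturbation of size $\mu$ of the rotating Kepler problem about the Sun, McGehee-regularized at infinity and at the Sun; and the singular layer joining them --- and then a nonzero splitting estimate for the corresponding manifolds. This estimate must hold uniformly as $\mu\to0$ and, for items~1 and~2, uniformly along the families of outer excursions of unbounded size; arranging for the McGehee chart at infinity to carry that uniform control is the crux. The restriction $\hat h\in(-\eta\mu,\eta\mu)$ in items~2 and~3 is the price of hooking onto the collision manifold of the Sun, where this global analysis only closes up near zero energy.

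Granting the transversalities, items~1 and~2 follow from a $\lambda$-lemma and shadowing argument along $\Lambda_\infty$. For item~1, the transverse chain $\mathcal J^{-}\to\Lambda_\infty\to\mathcal J^{+}$ provides, for each prescribed large value of the ``time'' spent near $\Lambda_\infty$ (equivalently, the semi-major axis of the near-parabolic outer arc), an orbit that leaves Jupiter along $\mathcal J^{-}$, drifts close to $W^u(\Lambda_\infty)$ going out and to $W^s(\Lambda_\infty)$ coming back --- hence reaching distance at least $R$ from the origin --- and returns to collide with Jupiter along $\mathcal J^{+}$. Enumerating these as $R\to\infty$ gives a sequence $\{\hat z_k\}$ in $\mathcal J^-\cap\mathcal J^+$ with $\sup_t|\hat q_k(t)|\to\infty$, which is \eqref{eqn: J+- close to inf}. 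Item~2 is the same construction with the chain $\Lambda_{\mathcal S}\to\Lambda_\infty\to\mathcal J^{+}$ for $\{\hat z^1_k\}\subset\mathcal S^-\cap\mathcal J^+$ and its time-reverse $\mathcal J^{-}\to\Lambda_\infty\to\Lambda_{\mathcal S}$ for $\{\hat z^2_k\}\subset\mathcal J^-\cap\mathcal S^+$, yielding \eqref{eqn: SJ close to inf}.

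Item~3 is the ``short'' counterpart: one uses the single transverse heteroclinic $W^u(\Lambda_{\mathcal S})\transv\mathcal J^{+}$ (respectively $\mathcal J^{-}\transv W^s(\Lambda_{\mathcal S})$), so the connecting orbit never visits $\Lambda_\infty$. In the singular limit this orbit is the radial ejection arc leaving the Sun and first reaching Jupiter's position, along which the distance to the origin increases monotonically from $\mu$ to $1-\mu$ and the final approach to Jupiter is essentially head-on from the Sun side. Controlling that approach with the Levi-Civita normal form, so the orbit does not swing to the far side of Jupiter, and the intermediate stretch with the near-integrable outer flow, this monotonicity survives for small $\mu>0$; hence $\sup_t|\hat q^3(t)|$ is attained only at the collision and equals $1-\mu$, which is \eqref{eqn: SJ bounded}, and symmetrically for $\hat z^4$. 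Beyond the transversality, the one genuinely delicate point here is to guarantee that the perturbed arc never overshoots $|\hat q|=1-\mu$ before colliding.
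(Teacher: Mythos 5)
Your proposal follows essentially the same route as the paper: the same three regularizations (Levi-Civita at Jupiter, McGehee at the Sun and at infinity), the same identification of $\mathcal S^{\mp}$, $\mathcal J^{\mp}$ and the parabolic escape sets with (un)stable manifolds or flow-outs of the regularized boundary objects, the same transverse intersections ($\mathcal J^{-}\transv W^s_\mu(\Alpha_{\hat\Theta_0})$, $W^u_\mu(\Alpha_{\hat\Theta_0})\transv\mathcal J^{+}$, the Sun--infinity heteroclinics from \cite{guardia2024oscillatorymotionsparabolicorbits} for item~2, and the direct Sun--Jupiter intersection near $\rS=\delta^2$ for item~3), and the same Moser-type spiraling/shadowing near infinity for items~1 and~2 together with monotonicity of $|\hat q|$ along the perturbed radial arc for item~3. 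No substantive divergence from the paper's argument.
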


\begin{remark}\label{def: large-ballistic}
We say that an ejection-collision trajectory is \emph{large} if it belongs to a sequence of trajectories for which \eqref{eqn: J+- close to inf} holds. 
An ejection-collision trajectory between the primaries is \emph{ballistic} if it satisfies \eqref{eqn: SJ bounded}.
\end{remark}

The second main theorem shows the existence of hyperbolic sets (whose dynamics is conjugated to the infinite symbols shift) which are unbounded and contain Jupiter at their closure.

\begin{theorem}\label{thm:chaosJ} 
There exist $\eta > 0$ and $\mu_0 > 0$ such that, for any $\mu \in (0,\mu_0)$, 
there exists $\hat h \in \left(-1-\eta\mu^{\frac{1}{12}},-1+\eta\mu^{\frac{1}{12}}\right)$ 
and a section $\Pi \subset \{\hat H_\mu(\hat q,\hat p) = \hat h\}$ transverse to the flow of \eqref{eqn: Hamiltonian in synodical cartesian coordinates centered at CM J} where the induced Poincar\'e map
\[
\mathbb{P} : V \subset \Pi \to \Pi
\]
has an invariant set $\mathcal{X}$ which is homeomorphic to $\mathbb{N}^\mathbb{Z}$
and whose  dynamics $\mathbb{P}_{|_{\mathcal X}}: \mathcal{X}\to \mathcal{X}$ is topologically conjugated to the shift $\sigma: \mathbb{N}^\mathbb{Z}\to \mathbb{N}^\mathbb{Z}$, $(\sigma\omega)_k=\omega_{k+1}$.
Moreover, this invariant set $\mathcal X$ satisfies 
\begin{equation*}
    \overline{\mathcal X}\cap \mathcal J \neq \emptyset,\quad \overline{\mathcal X}\cap \mathcal J^\pm \neq \emptyset,\quad \overline{\mathcal X}\cap \mathcal P^\pm \neq \emptyset.
\end{equation*}
\end{theorem}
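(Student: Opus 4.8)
\medskip
\noindent\textbf{Proof strategy.}
The plan is to realize $\mathcal X$ as the invariant set of a horseshoe-type construction with countably many branches, built from a \emph{transverse homoclinic orbit} to the manifold at infinity whose homoclinic excursion makes a close encounter with Jupiter. First I would desingularize the three relevant singularities. Levi-Civita coordinates near Jupiter, together with a suitable $\mu$-dependent rescaling of space and time, turn a close encounter into the scattering of a \emph{hyperbolic} Kepler orbit of the Asteroid-Jupiter two body problem; McGehee coordinates near infinity replace the point at infinity by a normally \emph{parabolic} invariant manifold $\Lambda_\infty$ (a circle in the level $\{\hat H_\mu=\hat h\}$; after regularization the flow approaches it only polynomially in time, like $t^{2/3}$), whose invariant manifolds $W^{s}(\Lambda_\infty)$ and $W^{u}(\Lambda_\infty)$ consist of the orbits parabolic in the future ($\mathcal P^{+}$) and in the past ($\mathcal P^{-}$), respectively; and McGehee coordinates near the Sun control possible near-collisions with the Sun along the large excursion. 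In the outer region (Asteroid far from Jupiter) the system is an $O(\mu)$ perturbation of the rotating Kepler problem, which is integrable and for which $W^{u}(\Lambda_\infty)$ and $W^{s}(\Lambda_\infty)$ coincide; the critical value $\hat h^{\ast}=-1$ is the energy at which the parabolic homoclinic orbits of this outer limit pass through the position of Jupiter.

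\medskip
The heart of the matter is to prove that, for $0<\mu<\mu_0$ and $\hat h$ in the $O(\mu^{1/12})$-window around $\hat h^{\ast}$, the manifolds $W^{u}(\Lambda_\infty)$ and $W^{s}(\Lambda_\infty)$ intersect \emph{transversally} inside $\{\hat H_\mu=\hat h\}$ along an orbit passing within distance $O(\mu)$ of Jupiter. This is not a regular ($O(\mu)$) Melnikov problem: on the outer scale $\mu$ acts as a regular perturbation, so the splitting of the parabolic manifolds is created by the \emph{singular} effect of the Jupiter passage. Concretely I would follow a branch of $W^{u}(\Lambda_\infty)$ as it leaves $\Lambda_\infty$, swings around the Sun and returns toward Jupiter; match this outer solution with the inner Levi-Civita solution (an incoming hyperbolic Jupiter orbit) on an overlap region between the Jupiter scale and the $O(1)$ scale; use the explicit scattering of the inner hyperbolic Kepler orbit (deflection as a function of impact parameter and inner energy) to propagate the orbit past Jupiter; and compare the outgoing outer orbit with $W^{s}(\Lambda_\infty)$. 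Tuning the free parameters --- the orientation of the homoclinic within the $\Lambda_\infty$-family, the closest approach to Jupiter, and $\hat h$ --- produces an intersection, and a quantitative estimate shows it is transverse with a splitting polynomially small in $\mu$; this both forces the energy window to have width $O(\mu^{1/12})$ and, for each fixed such $\mu$, yields a transverse homoclinic to $\Lambda_\infty$ whose orbit accumulates on $\mathcal J$, on $\mathcal J^{\pm}$ and --- along $W^{s}(\Lambda_\infty)$, $W^{u}(\Lambda_\infty)$ --- on $\mathcal P^{\pm}$. This singular transversality is shared with the ejection-collision statements of Theorem~\ref{thm: ejection-collision orbits}, so that machinery can be reused here.

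\medskip
Granting the transverse homoclinic, the symbolic dynamics follows from the standard mechanism for \emph{parabolic} invariant manifolds (as in \cite{moser2001stable,MR3455155}, and, for the first-species analogue, \cite{guardia2024oscillatorymotionsparabolicorbits}): since $\Lambda_\infty$ is normally parabolic rather than hyperbolic, an orbit can shadow it for a time $\sim n$ --- making $n$ near-parabolic revolutions around the Sun --- before departing along $W^{u}(\Lambda_\infty)$, and discretizing this number of revolutions assigns a symbol $n\in\mathbb N$ to each excursion; concatenating excursions and translating the picture through a Conley-McGehee isolating-block / windowing argument, one obtains a section $\Pi$ transverse to the flow, a domain $V\subset\Pi$, and two nested families of horizontal and vertical strips indexed by $\mathbb N^{\mathbb Z}$ whose mutual intersection is a set $\mathcal X$ homeomorphic to $\mathbb N^{\mathbb Z}$ with $\mathbb P|_{\mathcal X}$ conjugate to $\sigma$. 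Since $\mathcal X$ is carved out of a region of orbit segments that at one end follow $W^{s}(\Lambda_\infty)$, $W^{u}(\Lambda_\infty)$ (hence accumulate on $\mathcal P^{\pm}$) and at the other end pass arbitrarily close to --- and, in the limit, through --- Jupiter (hence accumulate on the Jupiter ejection and collision orbits), the relations $\overline{\mathcal X}\cap\mathcal J\neq\emptyset$, $\overline{\mathcal X}\cap\mathcal J^{\pm}\neq\emptyset$, $\overline{\mathcal X}\cap\mathcal P^{\pm}\neq\emptyset$ follow. The main obstacle, and the bulk of the work, is the quantitative transversality through the singular Jupiter passage in the second step: one must carry the inner-outer matching to high enough order to detect the polynomially small splitting, verify it does not vanish, and keep uniform control of all constants so that \emph{infinitely many} strips (all $n\ge n_0(\mu)$) are genuinely crossed --- without this last uniformity one would obtain only a horseshoe with finitely many symbols.
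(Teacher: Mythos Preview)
Your overall architecture is right and close to the paper's: Levi-Civita near Jupiter, McGehee at infinity, a homoclinic to the parabolic manifold $\Lambda_\infty$, and a Moser-type horseshoe with countably many symbols coming from the time spent near infinity. But there is a genuine gap in the mechanism you propose for obtaining $\overline{\mathcal X}\cap\mathcal J\neq\emptyset$ and $\overline{\mathcal X}\cap\mathcal J^\pm\neq\emptyset$.

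You build the horseshoe around a transverse homoclinic to $\Lambda_\infty$ \emph{passing within distance $O(\mu)$ of Jupiter}. The paper explicitly observes (see the strategy outline in \S\ref{subsec: main ideas for the proofs}, item 4) that this alone gives a hyperbolic set whose closure contains the homoclinic points to infinity \emph{but not} Jupiter: all orbits of such a horseshoe stay uniformly bounded away from $\mathcal J$. What is actually needed (item 5 there, and Proposition~\ref{prop: triple intersection J}) is a \emph{triple intersection}: one must tune the energy to a specific value $\hat h^*(\mu)=-1+O(\mu^{(11\nu-3)/8})$ so that the transverse homoclinic point of $W^u(\Lambda_\infty)$ and $W^s(\Lambda_\infty)$ lies \emph{on} the ejection manifold $\mathcal J^-$ --- equivalently, the homoclinic orbit to infinity is, in the Levi-Civita regularized flow, an orbit that goes \emph{through} collision with Jupiter, not merely close to it. Concretely, on the section $\Sigma_\nu^>$ one shows that the three curves $\Lambda_\infty^s$, $\Lambda_{\mathcal J}^-$, and the image under the near-collision transition map of $\Lambda_\infty^u$ all pass through a common point $p_-^*$ and are pairwise transverse there. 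A further \emph{ordering condition} on the three tangent directions at $p_-^*$ (inequality~\eqref{eqn: Inequality angles}) is then needed so that the domain $\mathcal D$ on which the forward and backward return maps are simultaneously defined actually contains a piece of $\Lambda_{\mathcal J}^-$; without this the horseshoe would again miss Jupiter.

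Two smaller points. First, your description of the splitting as ``polynomially small'' created by the singular passage is not how the argument goes: the intersections are found by explicit leading-order computation of the curves $\Lambda_{\mathcal J}^\mp$ and $\Lambda_\infty^{s,u}$ on $\Sigma_\nu$, with transversality angles of order $\mu^\nu$ (Theorem~\ref{thm: distance transversality} and \eqref{eqn: difference derivative transversality}); no Melnikov-type cancellation has to be detected. The exponent $1/12$ in the energy window comes from optimizing $(11\nu-3)/8$ over $\nu\in(3/11,1/3)$, not from a splitting estimate. Second, the McGehee analysis near the Sun is not used in the proof of this theorem; it enters only in Theorem~\ref{thm: ejection-collision orbits}.
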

A  consequence of this result is the following theorem, which provides the existence of any combination of past and future asymptotic behaviors (in the sense of Chazy) accumulating to Jupiter.
\begin{theorem}\label{thm: Existence of parabolic, oscillatory and periodic orbits J} 
There exists $\mu_0 > 0$  such that, for any $\mu\in (0,\mu_0)$,
\[
\overline{X^+\cap Y^-}\cap \mathcal{J} \neq\emptyset\qquad \text{where}\qquad  X,Y=\mathcal{H},\mathcal{P},\mathcal{B},\mathcal{OS}.
\]
Moreover, 
\begin{itemize}
 \item 
 There exist trajectories $(\hat{q}(t),\hat{p}(t))$ of \eqref{eqn: Hamiltonian in synodical cartesian coordinates centered at CM J} which belong to $\mathcal{OS}^-\cap \mathcal{OS}^+$ and get arbitrarily close to collision with Jupiter. Namely, they satisfy
    \begin{equation*}
        \underset{t\to \pm \infty}{\limsup}\, |\hat q(t)|= \infty \;\;\;\; \text{and} \;\;\;\; \underset{t\to \pm \infty}{\liminf}\, |\hat q(t) - (1-\mu,0)| = 0.
    \end{equation*} 
    In particular, this also implies that they are oscillatory in their velocity:
    \begin{equation}\label{def:oscillatoryspeedjup}
            \underset{t\to \pm \infty}{\limsup}\, \left|\frac{d}{dt} \hat q(t)\right| = \infty \;\;\;\; \text{and} \;\;\;\; \underset{t\to \pm \infty}{\liminf}\, \left|\frac{d}{dt} \hat q(t) \right|<\infty.
    \end{equation}
 \item For any $\varepsilon>0$,  there exists a periodic trajectory $(\hat{q}(t),\hat{p}(t))$ of \eqref{eqn: Hamiltonian in synodical cartesian coordinates centered at CM J} satisfying
    \begin{equation*}
       \underset{t\in \mathds{R}}{\sup}\;|\hat q(t)|\geq \varepsilon^{-1} \;\;\;\; \text{and} \;\;\;\;\underset{t\in \mathds{R}}{\inf}\,|\hat q(t)- (1-\mu,0)|\leq \varepsilon.
    \end{equation*}
\end{itemize}
\end{theorem}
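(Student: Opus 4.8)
The plan is to deduce every assertion from the symbolic dynamics provided by Theorem~\ref{thm:chaosJ}, exploiting the geometric content of the symbols. Fix $\mu\in(0,\mu_0)$ and let $\hat h$, $\Pi$, $\mathbb{P}\colon V\to\Pi$ and $\mathcal X\cong\mathbb{N}^{\mathbb{Z}}$ be as in that statement, with $\Phi\colon\mathbb{N}^{\mathbb{Z}}\to\mathcal X$ the conjugacy intertwining $\mathbb{P}$ and the shift $\sigma$. The construction of $\mathcal X$ assigns to each symbol $\omega_k=n$ a ``loop'' of the corresponding orbit --- a close passage near Jupiter followed by an excursion away from the primaries --- in such a way that, as $n\to\infty$, the maximal distance of that excursion to the primaries tends to $+\infty$ and the closest approach to Jupiter during the passage tends to $0$. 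Consequently, for a trajectory $\hat z_\omega=(\hat q_\omega,\hat p_\omega)$ of \eqref{eqn: Hamiltonian in synodical cartesian coordinates centered at CM J} through $\Phi(\omega)$ one has $\limsup_{t\to+\infty}|\hat q_\omega(t)|=+\infty$ exactly when $(\omega_k)_{k\ge0}$ is unbounded, and a finite limsup otherwise (and likewise as $t\to-\infty$ with $(\omega_k)_{k\le0}$), while $\liminf_{t\to\pm\infty}|\hat q_\omega(t)|<\infty$ always, since the orbit returns at every iterate of $\mathbb{P}$ to $V$, on which $|\hat q|$ is bounded. In particular every orbit of $\mathcal X$ is of Chazy type $\mathcal B$ or $\mathcal{OS}$ in each time direction, the type being selected by whether the corresponding symbol tail is bounded.

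For each of the four combinations $X^+\cap Y^-$ with $X,Y\in\{\mathcal B,\mathcal{OS}\}$, choose $\omega\in\mathbb{N}^{\mathbb{Z}}$ whose forward tail is bounded when $X=\mathcal B$ and unbounded when $X=\mathcal{OS}$, and similarly for the backward tail and $Y$; then $\hat z_\omega\in X^+\cap Y^-$. To place $\mathcal J$ in the closure of each such set, note that $\overline{\mathcal X}\cap\mathcal J\ne\emptyset$ is witnessed by orbits of $\mathcal X$ having one symbol arbitrarily large --- hence an arbitrarily close Jupiter passage at that index --- with all other symbols free; prescribing the two tails as above and letting that symbol grow yields orbits of $X^+\cap Y^-$ whose $\hat q$-component comes arbitrarily close to $(1-\mu,0)$, i.e.\ $\overline{X^+\cap Y^-}\cap\mathcal J\ne\emptyset$. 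Choosing in addition $\omega_k\to\infty$ both as $k\to+\infty$ and as $k\to-\infty$ gives a single trajectory $(\hat q,\hat p)\in\mathcal{OS}^-\cap\mathcal{OS}^+$ with $\limsup_{t\to\pm\infty}|\hat q(t)|=+\infty$ and $\liminf_{t\to\pm\infty}|\hat q(t)-(1-\mu,0)|=0$; the velocity oscillation \eqref{def:oscillatoryspeedjup} then follows from conservation of $\hat H_\mu$, which on the energy level reads $\bigl|\tfrac{d}{dt}\hat q\bigr|^2=|\hat q|^2+2\hat h+2U(\hat q)$ with $U\ge0$ the Newtonian potential in \eqref{eqn: Hamiltonian in synodical cartesian coordinates centered at CM J}: when $|\hat q-(1-\mu,0)|\to0$ the term $U\ge\mu/|\hat q-(1-\mu,0)|$ blows up while $|\hat q|$ stays bounded, so $\bigl|\tfrac{d}{dt}\hat q\bigr|\to\infty$, whereas at the infinitely many times the orbit meets the fixed sphere $\{|\hat q|=2\}$ the speed is bounded. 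Finally, a $p$-periodic sequence $\omega$ corresponds under $\Phi$ to a $p$-periodic point of $\mathbb{P}$ and hence to a periodic orbit of the flow; taking $\omega$ constant equal to a large $n$, this orbit performs each period an excursion whose maximal distance to the primaries and whose closest distance to Jupiter tend respectively to $+\infty$ and $0$ as $n\to\infty$, which yields the last item of the theorem.

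The combinations with $X$ or $Y$ in $\{\mathcal P,\mathcal H\}$ require $|\hat q(t)|\to\infty$ and so cannot occur inside $\mathcal X$; for these one uses that the transverse intersection of invariant manifolds underlying Theorems~\ref{thm: ejection-collision orbits} and \ref{thm:chaosJ} involves the stable and unstable manifolds $W^{s}(\Lambda_\infty)$, $W^{u}(\Lambda_\infty)$ of the invariant manifold at infinity $\Lambda_\infty$ (in the McGehee variables near infinity), on which orbits are parabolic as $t\to+\infty$, resp.\ $t\to-\infty$, and near which orbits escape either parabolically (on the manifold) or hyperbolically (transverse to it, on the appropriate side). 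The same transversality and shadowing that permit concatenating arbitrarily long finite strings of near-Jupiter loops also permit attaching, at the right (resp.\ left) end of such a string, a branch following $W^{s}(\Lambda_\infty)$ (resp.\ $W^{u}(\Lambda_\infty)$) which escapes parabolically, or a nearby one that escapes hyperbolically. This realises every remaining combination $X^+\cap Y^-$ by orbits in $\overline{\mathcal X}$, and making the symbols of the finitely many near-Jupiter loops large forces these orbits arbitrarily close to $\mathcal J$; the inputs $\overline{\mathcal X}\cap\mathcal P^\pm\ne\emptyset$ and $\overline{\mathcal X}\cap\mathcal J^\pm\ne\emptyset$ from Theorem~\ref{thm:chaosJ} are exactly the building blocks here.

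I expect the main obstacle to be this last step: showing that $W^{s}(\Lambda_\infty)$ and $W^{u}(\Lambda_\infty)$ can be spliced onto finite strings of near-Jupiter loops compatibly with the transversality of Theorem~\ref{thm:chaosJ}, and that at the energy $\hat h\approx-1$ furnished there one genuinely reaches the hyperbolic side of parabolic infinity --- not merely the manifold itself --- which is what delivers the classes $\mathcal H^\pm$. By contrast, the $\mathcal B$ and $\mathcal{OS}$ combinations, the oscillatory-to-collision trajectories, and the periodic orbits should be essentially bookkeeping on top of the conjugacy and the geometric dictionary of symbols supplied by the proof of Theorem~\ref{thm:chaosJ}.
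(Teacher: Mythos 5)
Your overall route is the paper's: Theorems~\ref{thm:chaosJ} and \ref{thm: Existence of parabolic, oscillatory and periodic orbits J} are proved together in Section~\ref{sec: Triple intersection} by running Moser's symbolic-dynamics construction on the return map near the homoclinic points to parabolic infinity, and your dictionary (bounded tails give $\mathcal B$, unbounded tails give $\mathcal{OS}$, constant sequences give periodic orbits, the symbol ``$\infty$'' and its hyperbolic side give $\mathcal P$ and $\mathcal H$), together with the identity $\left|\tfrac{d}{dt}\hat q\right|^2=|\hat q|^2+2\hat h+2U(\hat q)$ for the velocity oscillation, is exactly how the final motions are extracted.

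The one claim you treat as part of ``the construction of $\mathcal X$'', but which is the actual crux, is that the closest approach to Jupiter during a passage tends to $0$ as the corresponding symbol grows. This does not follow from the statement of Theorem~\ref{thm:chaosJ} (which only yields $\overline{\mathcal X}\cap\mathcal J\neq\emptyset$, i.e.\ a sequence of orbits approaching collision, not a per-symbol estimate along a single orbit), and it is not automatic from having a transverse homoclinic to infinity near Jupiter: at a generic energy one only gets item~4 of the strategy in Section~\ref{subsec: main ideas for the proofs}, namely oscillatory motions passing close to, but not accumulating on, $\mathcal J$. What makes your dictionary valid is Proposition~\ref{prop: triple intersection J}: the energy is tuned to $h^*=-1+\mathcal O\left(\mu^{\frac{11\nu-3}{8}}\right)$ so that the homoclinic point $p_+^*$ lies exactly on the collision fiber $\Lambda_{\mathcal J}^+(\mu)$, and the ordering $-\pi/2<A<B<0$ of the tangent directions at $p_-^*$ guarantees that the domain $\mathcal D=D^>\cap\tilde D^<$ of the return map is nonempty and that the transition through the Levi--Civita neighborhood avoids collision while passing arbitrarily close to it; only then does a long excursion force a return near $p_+^*\in\Lambda_{\mathcal J}^+(\mu)$ and hence a near-collision. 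This is the step your argument must cite or prove. By contrast, the splicing of $W_\mu^{s,u}(\Alpha_{\hat\Theta_0})$ onto finite symbol strings to realize the $\mathcal P^\pm$ and $\mathcal H^\pm$ classes, which you flag as the main obstacle, is the standard part of Moser's scheme and is handled exactly as in \cite{guardia2024oscillatorymotionsparabolicorbits}.
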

Theorem \ref{thm: ejection-collision orbits} is proved in Section \ref{sec: Proof of ECO SJ}. Theorems \ref{thm:chaosJ} and \ref{thm: Existence of parabolic, oscillatory and periodic orbits J} are proved  together in Section \ref{sec: Triple intersection}. 
\subsection{Literature and previous results}\label{subsec: literature and previous results} 
The study of near-collision orbits has been the subject of several works, particularly in relation to second-species solutions. After Poincaré’s initial classification \cite{poincare1987methodes}, a geometric framework for studying these solutions was developed through regularization techniques and the analysis of invariant manifolds associated with the collision set. In the context of the planar $3$ body problem, Levi-Civita and McGehee regularizations \cite{MR1555161, MR0359459} have been used to describe the flow near binary collisions. This approach has enabled the construction of near-collision orbits in both planar and spatial models~\cite{MR2331205,  MR2245344,MR1805879,Negrini,font2001consecutive,MR1335057}. One can also rely on KAM Theory to build punctured tori (invariant tori containing collisions for the regularized flow), see \cite{MR0967629, MR1849229,MR1919782,MR3417880}.



Ejection and collision orbits have also received considerable attention. Saari, and later Fleisher and Knauf, proved that the set of initial conditions leading to collision has zero Lebesgue measure \cite{fleischer2019improbability1, fleischer2019improbability2, MR0295648, MR0321386}, though it may still be topologically large. The question of whether collision orbits form a dense set within an open set of the phase space was first posed by Siegel and later formulated as a conjecture by Alekseev \cite{MR0629685}. Although it remains open, a partial answer was given in \cite{MR3951693}, where the authors prove that the set of orbits leading to collision in the RPC3BP is $\mu^\alpha$-dense (for some $\alpha>0$) in some open set of phase space.

In the RPC3BP, ejection-collision orbits were first constructed by Lacomba and Llibre \cite{MR0949626,MR0682839}. Their results were later generalized in \cite{MR4518121,MR4110029,olle2018ejection}.
Using computer assisted proofs, ballistic ejection-collision orbits (see Remark \ref{def: large-ballistic}) have been obtained in \cite{MR4576879F}, and \cite{capinski2024oscillatory} provides oscillatory orbits in velocity (in the sense of \eqref{def:oscillatoryspeedjup}) accumulating to collision. 



In the full $3$ body problem, triple collisions are also possible and have been extensively study since the work of McGehee \cite{MR0359459} (see also \cite{MR0586428,MR1013560,MR0571374,MR0636961,MR0640127}). The analysis of these singularities has provided a framework for constructing various types of trajectories in the $3$ body problem, including those with oscillatory behavior in both position and velocity (by Moeckel \cite{MR1000223, MR2350333}). These constructions rely on trajectories passing arbitrarily close to triple collision and therefore the total angular momentum has to be very close to zero. This is not the case of the present paper, where the primaries perform circular motion and therefore their angular momenta are not small. 
 
Regarding the combination of past and future final motions, it can be traced back to the work done by Sitnikov for the now known as the Sitnikov problem \cite{MR0127389}, where he showed that all of them were possible  (including oscillatory motions). A decade later, Moser \cite{moser2001stable} gave a new proof, relating these motions to chaotic behavior via the construction of Smale horseshoes. Moser’s approach has since been adapted to other versions of the restricted $3$ body problem \cite{MR3455155, MR0573346} (see \cite{paradela2022oscillatory} for similar results using other methods). In the non-restricted setting, similar results were given by Alekseev \cite{MR0249754, MR0276949, MR0276950}, and have been recently extended in \cite{guardia2022hyperbolic}.

\subsection{Main ideas for the proofs of Theorems \ref{thm: ejection-collision orbits}, \ref{thm:chaosJ}  and \ref{thm: Existence of parabolic, oscillatory and periodic orbits J}}\label{subsec: main ideas for the proofs}

The orbits constructed in Theorems \ref{thm: ejection-collision orbits}, \ref{thm:chaosJ} and \ref{thm: Existence of parabolic, oscillatory and periodic orbits J} rely on a combination of invariant manifold theory and singular perturbation techniques applied to certain objects of the RPC3BP. These include both the analysis close to the primaries (the Sun and Jupiter) as well as the behavior of the system at infinity. After suitable regularizations and compactifications, these objects can be understood within the context of different regularized flows.  To analyze the dynamics in these different regions of the phase space, we perform several changes of coordinates and time scalings. The McGehee regularizations near the Sun and at infinity (see  \cite{McGeheeInf,MR0359459}) were deeply studied in \cite{guardia2024oscillatorymotionsparabolicorbits}. In contrast, in the present paper the analysis near Jupiter is achieved through a Levi-Civita transformation \cite{MR1555161}, which is described in Section \ref{sec: local analysis at collision with J} below. 

At a fixed energy level, the McGehee regularization transforms the Sun into an invariant torus containing two normally hyperbolic invariant circles, each foliated by equilibrium points. As detailed in Section \ref{sec: local analysis close to collision with S}, the ejection and collision orbits correspond to unstable and stable manifolds associated to these circles, and their behavior can be analyzed as perturbations of the flow induced by the unperturbed Hamiltonian $\hat H_0$ in \eqref{eqn: Hamiltonian in synodical cartesian coordinates centered at CM J}. 

After compactification, the ``parabolic infinity'' (described in Section \ref{sec: The invariant manifolds of infinity}) becomes a degenerate periodic orbit at each energy level. Though degenerate, it is known to possess well-defined stable and unstable manifolds for any value of $\mu \in [0,1/2]$ (see \cite{MR0362403}).

Near Jupiter we consider the Levi-Civita regularization (see Section \ref{sec: local analysis at collision with J} below). Then, at a given energy level, the collision set becomes a circle of regular points (located close to a saddle), and the associated ejection and collision orbits are generated by the forward and backward flow of this circle. Unlike the cases of the Sun and the parabolic infinity, the dynamics close to Jupiter does not arise as a small $\mu$-perturbation of the Sun-Asteroid $2$ body problem. Rather, after a scaling and a singular perturbation analysis, the dynamics is governed by the Jupiter-Asteroid $2$ body problem and therefore a separate analysis is required.

We show that the invariant manifolds of infinity intersect transversally with the ejection and collision orbits associated to Jupiter. 
By combining these intersections with a local analysis near collisions and infinity, we construct the different types of motions provided by Theorems \ref{thm: ejection-collision orbits}, \ref{thm:chaosJ}  and \ref{thm: Existence of parabolic, oscillatory and periodic orbits J}.

Let us describe the strategy more precisely:
\begin{enumerate}
\item We prove that the stable manifold of infinity intersects transversally the ejection manifold $\mathcal J^-$ and the unstable manifold of infinity intersects transversally the collision manifold $\mathcal J^+$ (see Section \ref{sec: The distance between the invariant manifolds and EC orbits}).

\item 
Relying on the local analysis close to infinity (at the $C^1$ level) done in \cite{moser2001stable}, we prove that the ejection and collision manifolds $\mathcal J^\mp$ intersect transversally and that these intersections can be arbitrarily far away from Jupiter. 
Building on our results in  \cite{guardia2024oscillatorymotionsparabolicorbits}, where we prove an analogous result for the ejection and collision manifolds with $\mathcal S$, we prove the existence of transverse intersections of ejection and collision manifolds associated to $\mathcal S$ and $\mathcal J$ which can be arbitrarily far away from both primaries (see Section \ref{subsec: first part of theorem eco}). 
This proves the first two items of Theorem \ref{thm: ejection-collision orbits}.

\item 
Through the analysis of the invariant manifolds of collision with $\mathcal S$ done in \cite{guardia2024oscillatorymotionsparabolicorbits}, we prove that the ejection and collision manifolds associated to $\mathcal S$ and $\mathcal J$ intersect transversally inside the region of the phase space enclosed by Jupiter's orbit (see Section \ref{subsec: second part of thm eco}). 
This gives the third item of Theorem \ref{thm: ejection-collision orbits}.

\item 
 From the local analysis close to Jupiter (at the $C^1$ level) done in \cite{MR1335057} (which we recall in Section \ref{sec: local analysis at collision with J} below), we prove that the stable and unstable manifolds of infinity intersect transversally close to the collision set $\mathcal J$. 
Then, we construct hyperbolic sets with symbolic dynamics which contain the homoclinic points to infinity in its closure (but not containing Jupiter in its closure). 
This leads to oscillatory motions passing close to $\mathcal J$ (and combination of past and future different final motions), but not to oscillatory motions which have Jupiter at its closure, and it does not imply Theorems \ref{thm:chaosJ} and  \ref{thm: Existence of parabolic, oscillatory and periodic orbits J}.

\item 
To prove Theorems \ref{thm:chaosJ} and  \ref{thm: Existence of parabolic, oscillatory and periodic orbits J} we have to further analyze the invariant manifolds of infinity and the ejection/collision manifolds $\mathcal J^\mp$. That is, we prove that there exists an energy level for which the transverse intersection of the invariant manifolds of infinity belongs to one of the ejection fibers of $\mathcal J^-$. Then, making use of the local analysis close to $\mathcal J$ and the tools developed by Moser in \cite{moser2001stable}, one can construct the behaviors provided by Theorems \ref{thm:chaosJ} and  \ref{thm: Existence of parabolic, oscillatory and periodic orbits J} (see Section \ref{sec: Triple intersection}).
\end{enumerate}

\section{The Kepler problem}\label{sec: Unperturbed case mu=0}
We start by considering the Hamiltonian \eqref{eqn: Hamiltonian in synodical cartesian coordinates centered at CM J} with $\mu = 0$, which is the classical Kepler problem in rotating coordinates
\begin{equation}\label{eqn: Hamiltonian in cartesian synodical coordinates mu=0}
    \hat{ H}_0(\hat q,\hat{p}) = \frac{|\hat p|^2}{2} - (\hat q_1 \hat p_2 - \hat q_2 \hat p_1) - \frac{1}{|\hat q|}.
\end{equation}
This Hamiltonian has two first integrals: the angular momentum and the energy of the Asteroid in non-rotating coordinates centered at $\mathcal S$, defined respectively as
\begin{equation}\label{eqn: F0 and Theta in (hat q, hat p)}
    \hat \Theta(\hat q, \hat p) = \hat q_1 \hat p_2 - \hat q_2\hat p_1,\quad \mathrm H_0(\hat q,\hat p) = \frac{|\hat p|^2}{2} - \frac{1}{|\hat q|}.
\end{equation}
%
In this model, we are interested in orbits hitting $\mathcal{J}$, defined in \eqref{eqn: Collision sets} (for $\mu = 0$ they are regular orbits). 
Relying on \eqref{eqn: Hamiltonian in cartesian synodical coordinates mu=0}, for a fixed $\hat h\in \mathds R$, the collision set $\mathcal{J}$ at the energy level $\{\hat H_0 = \hat h\}$ becomes
\begin{equation*}
    \mathcal{J}_{\hat h} = 
    \left\{(\hat q,\hat p) \in \mathds{R}^4\colon \quad\hat q=(1,0), \quad \hat p_1^2 + (\hat p_2 - 1)^2 = 2\hat{h} + 3\right\}.
\end{equation*}
%
The following lemma, whose proof is straightforward,
gives a characterization of the ejection and collision orbits $\mathcal{J}^-$, $\mathcal{J}^+$ (see Definition \ref{def: EC-orbits}) in terms of $\hat{h}$ and $\hat{\Theta}$.
\begin{lemma}\label{lemma: Characterization of the keplerian orbits in terms of h and Theta}
     An orbit $\hat \gamma(t) \in \mathcal{J}^- \cup \mathcal{J}^+$ of the equations of motion associated to \eqref{eqn: Hamiltonian in cartesian synodical coordinates mu=0} in the hypersurface $\{\hat H_0(\hat q,\hat p) = \hat h\}$ belongs to one of the following three classes.
    \begin{itemize}
        \item Elliptic if one of the following conditions holds:
        \begin{itemize}
        \item $\hat{h} \in \left(-\frac 3 2, -\sqrt{2}\right)$ and $\hat \Theta \in \left[1-\sqrt{2\hat h+3},  1+\sqrt{2\hat h+3}\right]$.
        \item $\hat h \in \left[-\sqrt 2, \sqrt 2\right)$ and $\hat \Theta \in \left[1-\sqrt{2\hat h+3},-\hat h\right)$.
        \end{itemize}
        \item Parabolic if $\hat{h} \in \left[-\sqrt{2},\sqrt{2}\right]$ and $\hat{\Theta} = -\hat{h}$.
        \item Hyperbolic if $\hat{h}\in \left(-\sqrt{2},+\infty\right)$ and $\hat{\Theta} > -\hat{h}$.
    \end{itemize}
\end{lemma}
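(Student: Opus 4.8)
The plan is to reduce the statement to a one–variable inequality analysis by exploiting the two first integrals of \eqref{eqn: Hamiltonian in cartesian synodical coordinates mu=0}. The starting point is the identity $\hat H_0 = \mathrm H_0 - \hat\Theta$, immediate from \eqref{eqn: F0 and Theta in (hat q, hat p)}; hence on the level set $\{\hat H_0 = \hat h\}$ one has $\mathrm H_0 = \hat h + \hat\Theta$ along every orbit. Since $\mathrm H_0$ and $\hat\Theta$ are precisely the Keplerian energy and angular momentum of the Asteroid in the non–rotating frame centered at the Sun, the orbit is an ellipse, a parabola or a hyperbola (a degenerate conic when $\hat\Theta = 0$) according to the sign of $\mathrm H_0$, that is, according to whether $\hat\Theta < -\hat h$, $\hat\Theta = -\hat h$ or $\hat\Theta > -\hat h$.

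Next I would record the constraint forced by membership in $\mathcal J^- \cup \mathcal J^+$. Since $\mu = 0$, the point $(1,0)$ is a regular point of the flow, so by \defref{def: EC-orbits} such an orbit is exactly one passing through $(1,0)$, i.e.\ one meeting the circle $\mathcal J_{\hat h}$. Evaluating $\hat\Theta$ and $\hat H_0$ at $\hat q = (1,0)$ gives $\hat p_2 = \hat\Theta$ and $\hat p_1^2 = 2\hat h + 3 - (\hat\Theta - 1)^2$, so the existence of a real $\hat p_1$ is equivalent to $\hat h \ge -\tfrac32$ together with $\hat\Theta \in I_{\hat h} := \bigl[\, 1 - \sqrt{2\hat h + 3},\, 1 + \sqrt{2\hat h + 3}\, \bigr]$; conversely, any $\hat\Theta \in I_{\hat h}$ with the corresponding $\hat p_1$ produces a point of $\mathcal J_{\hat h}$ and hence an orbit in $\mathcal J^- \cap \mathcal J^+$. (As a consistency check, $\hat p_1^2 = 2\hat h + 3 - (\hat\Theta-1)^2$ together with $\hat p_2 = \hat\Theta$ indeed returns $\mathrm H_0 = \hat h + \hat\Theta$.)

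It then remains to intersect $I_{\hat h}$ with each of the three half–lines $\{\hat\Theta < -\hat h\}$, $\{\hat\Theta = -\hat h\}$, $\{\hat\Theta > -\hat h\}$, which amounts to comparing $-\hat h$ with the endpoints $1 \pm \sqrt{2\hat h + 3}$. Isolating the square root and squaring, while keeping track of the sign of the linear side, one gets
\[
1 - \sqrt{2\hat h + 3} \le -\hat h \iff \hat h \le \sqrt 2, \qquad -\hat h \le 1 + \sqrt{2\hat h + 3} \iff \hat h \ge -\sqrt 2,
\]
with equalities exactly at $\hat h = \sqrt 2$ and $\hat h = -\sqrt 2$ respectively. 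Therefore: for $\hat h \in (-\tfrac32, -\sqrt 2)$ the whole interval $I_{\hat h}$ lies strictly below $-\hat h$ and every admissible orbit is elliptic; for $\hat h \in [-\sqrt 2, \sqrt 2]$ the value $-\hat h$ belongs to $I_{\hat h}$ and splits it into the elliptic part $\hat\Theta \in [1 - \sqrt{2\hat h+3}, -\hat h)$, the lone parabolic value $\hat\Theta = -\hat h$, and (when $\hat h > -\sqrt 2$) the hyperbolic part $\hat\Theta \in (-\hat h, 1 + \sqrt{2\hat h+3}]$; and for $\hat h > \sqrt 2$ the whole $I_{\hat h}$ lies strictly above $-\hat h$ and every admissible orbit is hyperbolic. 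Reassembling these three regimes reproduces exactly the trichotomy in the statement. I do not expect any genuine obstacle: the computation is elementary, and the only delicate points are the bookkeeping of strict versus weak inequalities at the thresholds $\hat h = \pm\sqrt 2$ and $\hat h = -\tfrac32$, and a short remark explaining why the rectilinear orbits ($\hat\Theta = 0$, which also pass through $(1,0)$, namely for $\hat h \ge -1$) obey the same energy–sign classification as the genuine conics.
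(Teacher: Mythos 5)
Your proof is correct and is exactly the ``straightforward'' argument the paper omits: it combines the identity $\hat H_0=\mathrm H_0-\hat\Theta$ from \eqref{eqn: F0 and Theta in (hat q, hat p)} with the admissibility constraint $\hat\Theta\in\bigl[1-\sqrt{2\hat h+3},\,1+\sqrt{2\hat h+3}\bigr]$ encoded in the circle $\mathcal J_{\hat h}$ displayed just before the lemma, and the endpoint comparisons at $\hat h=\pm\sqrt 2$ check out (note $2\sqrt2+3=(\sqrt2+1)^2$, so the elliptic interval is empty at $\hat h=\sqrt2$, consistent with its exclusion). Your closing caveat about the degenerate rectilinear orbits with $\hat\Theta=0$ is the right one to record; nothing is missing.
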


We study the parabolic orbits in Lemma \ref{lemma: Characterization of the keplerian orbits in terms of h and Theta}. It is convenient to work in the non-rotating polar coordinates defined by
%
%
\begin{equation}\label{eqn: Change from sidereal cartesian to sidereal polar centered at CM}
\begin{aligned}
    Q_1 &=  \check{r}\cos\check{\theta},&\quad  P_1 &= \check{R}\cos\check{\theta} - \frac{\check{\Theta}}{\check{r}}\sin\check{\theta},\\
    Q_2&= \check{r}\sin\check\theta,&\quad  P_2 &= \check{R}\sin\check{\theta} + \frac{\check{\Theta}}{\check{r}}\cos\check{\theta}.
\end{aligned}
\end{equation}
In these coordinates, the Hamiltonian $\mathrm H_0(Q,P)$ in \eqref{eqn: non autonomous Hamiltonian} becomes
\begin{equation}\label{eqn:Hamiltonian rotating polar coordinates for mu = 0 J}
    \check{\mathcal {H}}_0 (\check{r},\check{R},\check{\Theta}) = \frac 1 2\left(\check{R}^2 + \frac{\check{\Theta}^2}{\check{r}^2}\right) - \frac{1}{\check{r}}.
\end{equation}
From \eqref{eqn: Hamiltonian in cartesian synodical coordinates mu=0}, \eqref{eqn: F0 and Theta in (hat q, hat p)} and Lemma \ref{lemma: Characterization of the keplerian orbits in terms of h and Theta}, the parabolic  orbits crossing the orbit of Jupiter at a fixed angular momentum 
\begin{equation}\label{eqn:checkTheta-hatTheta}
\check \Theta=\check\Theta_0 = - \hat h \in \left[-\sqrt 2, \sqrt 2\right]
\end{equation}
are given by
\begin{equation}\label{eqn: set parabolic orbits passing through J}
\mathcal{V}_{\check \Theta_0} = \Big\{(\check r, \check \theta,\check R,\check \Theta) \in (0,+\infty)\times \mathds T \times \mathds R^2 \colon \check{\Theta} = \check \Theta_0\Big\} \cap \Big\{\check {\mathcal H}_0 =0\Big\}.
\end{equation}
The rest of the section is devoted to compute explicitly the parabolic  orbits in $\mathcal{V}_{\check \Theta_0}$ hitting Jupiter. We proceed as follows. First, we exploit the  integrability of $\hat{\mathcal H}_0$ in \eqref{eqn:Hamiltonian rotating polar coordinates for mu = 0 J} to compute the parabolic orbits in $\mathcal V_{\check \Theta_0}$ and to analyze its asymptotic behavior as $t\to \pm\infty$. Then we identify the orbits within this set that hit $\mathcal J$.

The equations of motion associated with the Hamiltonian $\check {\mathcal H}_0$ restricted to the plane $\left\{\check \Theta=  \check \Theta_0, \check{\mathcal{H}}_0 = 0\right\}$ are reduced to
\begin{equation}\label{eqn: Eq motion of F0}
    \begin{aligned}
        \frac{d}{dt}\check r =\; \check R,\qquad
        \frac{d}{dt}\check \theta =\; \frac{\check{\Theta}_0}{\check{r}^2},\qquad
        \frac{d}{dt}\check R =\; \frac{\check{\Theta}_0^2}{\check{r}^3} - \frac{1}{\check{r}^2}.
    \end{aligned}
\end{equation}
Next lemma (see \cite{MR0573346}) computes the trajectories of these equations.
\begin{lemma}\label{lemma: Reparameterization of time unperturbed solution}
Fix $\check \Theta_0 \in [-\sqrt 2, \sqrt 2]$ and let $w(t)$ be the unique analytic function defined by
\begin{equation}\label{eqn: tw}
    t = \frac 1 2\left(\frac 1 3 w^3 + \check\Theta_0^2 w\right),
\end{equation}
such that $w$ is real for real values of $t$, that is,
    \begin{equation}\label{eqn: wt}
    w(t) = \begin{cases} \sqrt[3]{6t}\quad &\text{if } \check \Theta_0 = 0,\\
    \left(3t + \sqrt{9t^2 + \check\Theta_0^6}\right)^{\frac 1 3} - \check\Theta_0^2\left(3t+ \sqrt{9t^2 + \check\Theta_0^6}\right)^{-\frac 1 3}\quad &\text{if } \check \Theta_0 \neq 0.
    \end{cases}
    \end{equation}
Then:
%
\begin{enumerate}
    \item[(i)] If $\check \Theta_0 \neq 0$, the parabolic orbit solution of \eqref{eqn: Eq motion of F0} with initial condition 
    \begin{equation}\label{eqn: initcondJupiter0}
    \check r_h(0,\check \Theta_0) = \frac{\check \Theta_0^2}{2},\quad \check \theta_h(0,\check \theta_0,\check \Theta_0) = \check \theta_0,\quad \check R_h(0,\check \Theta_0) = 0,
    \end{equation}
    is given by
    \begin{equation}\label{eqn: Unperturbed separatrix}
    \begin{aligned}
    \check\gamma_h(t,\check \theta_0,\check{\Theta}_0) &= \left(\check{r}_h(t,\check{\Theta}_0),\check{\theta}_h(t,\check \theta_0,\check{\Theta}_0),\check{R}_h(t,\check{\Theta}_0),\check{\Theta}_0\right) \\
    &= \left(\frac 1 2 \left(w(t)^2 + \check\Theta_0^2\right), \check \theta_0  -i\log\left(\frac{i\check\Theta_0-w(t)}{i\check\Theta_0+w(t)}\right),\frac{2w(t)}{w(t)^2 + \check\Theta_0^2},\check \Theta_0\right) ,\quad \;t\in \mathds R,
    \end{aligned}
    \end{equation}
    and satisfies $\check \gamma_h(t,\check \theta_0,\check \Theta_0) \subset \mathcal P^+ \cap \mathcal P^-$.
    
    \item[(ii)] If $\check \Theta_0 = 0$, $\check\gamma_h(t,\check \theta_0,0)$ above is no longer defined at $t=0$. Instead, there exist two trajectories of \eqref{eqn: Eq motion of F0} given by
    \begin{equation}\label{eqn: unperturbed sun}
        \begin{aligned}
            \check \gamma_h^+(t,\check \theta_0) &= \left(\check r_h^+(t), \check \theta_h^+(t,\check \theta_0), \check R_h^+(t), \check \Theta_h^+(t)\right) = \left(\lambda t^{\frac 2 3}, \check \theta_0, \sqrt{\frac{2}{\lambda}}t^{-\frac 1 3},0\right),\quad \forall\;t>0,\\
            \check \gamma_h^-(t,\check \theta_0) &= \left(\check r_h^-(t), \check \theta_h^-(t,\check \theta_0), \check R_h^-(t), \check \Theta_h^-(t)\right) = \left(\lambda t^{\frac 2 3}, \check \theta_0, -\sqrt{\frac{2}{\lambda}}|t|^{-\frac 1 3},0\right),\quad \forall\;t<0,\\
        \end{aligned}
    \end{equation}
    where $\lambda=\left(\frac 9 2\right)^{\frac 1 3 }$. They satisfy $\check \gamma_h^\pm(t,\check \theta_0)\subset \mathcal S^\mp \cap \mathcal P^\pm$ (see Definition \ref{def: EC-orbits}).
\end{enumerate}
\end{lemma}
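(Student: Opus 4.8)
The plan is to verify Lemma~\ref{lemma: Reparameterization of time unperturbed solution} by direct substitution, treating the two cases $\check\Theta_0\neq 0$ and $\check\Theta_0=0$ separately, since they correspond to parabolic Kepler orbits with nonzero and zero angular momentum respectively. The essential fact underlying both cases is that on the level set $\{\check\Theta=\check\Theta_0,\ \check{\mathcal H}_0=0\}$ the radial equation decouples: combining $\frac12(\check R^2+\check\Theta_0^2/\check r^2)-1/\check r=0$ with $\frac{d}{dt}\check r=\check R$ gives $\dot{\check r}^2 = \frac{2}{\check r}-\frac{\check\Theta_0^2}{\check r^2}$, i.e. $\check r\,\dot{\check r}=\pm\sqrt{2\check r-\check\Theta_0^2}$ (with the sign matching that of $\check R$). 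This is a separable ODE whose solution is exactly encoded by the Levi-Civita--type time reparametrization $t=\frac12(\frac13 w^3+\check\Theta_0^2 w)$.

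First I would treat case (i). Setting $\check r_h(t)=\frac12(w(t)^2+\check\Theta_0^2)$, one differentiates \eqref{eqn: tw} to get $\frac{dt}{dw}=\frac12(w^2+\check\Theta_0^2)=\check r_h$, hence $\dot w = 1/\check r_h$ and $\dot{\check r}_h = w\dot w = w/\check r_h$; squaring gives $\check r_h^2\dot{\check r}_h^2 = w^2 = 2\check r_h-\check\Theta_0^2$, which is precisely the energy relation, so $\check R_h=\dot{\check r}_h = w/\check r_h = 2w/(w^2+\check\Theta_0^2)$ as claimed, and then $\dot{\check R}_h$ is checked against $\check\Theta_0^2/\check r_h^3-1/\check r_h^2$ by one more differentiation. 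For $\check\theta_h$ one integrates $\dot{\check\theta}=\check\Theta_0/\check r_h^2 = \check\Theta_0\dot w \cdot \frac{2}{w^2+\check\Theta_0^2}$; the antiderivative of $\frac{2\check\Theta_0}{w^2+\check\Theta_0^2}$ in $w$ is $2\arctan(w/\check\Theta_0)$, which one rewrites in the logarithmic form $-i\log\frac{i\check\Theta_0-w}{i\check\Theta_0+w}$ appearing in \eqref{eqn: Unperturbed separatrix}, checking it matches the stated value $\check\theta_0$ at $t=0$ (where $w(0)=0$). The initial condition \eqref{eqn: initcondJupiter0} follows from $w(0)=0$. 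That $w(t)$ given by \eqref{eqn: wt} is well-defined, analytic, real for real $t$, and inverts \eqref{eqn: tw} is a Cardano-formula computation (the cubic $\frac13 w^3+\check\Theta_0^2 w - 2t=0$ has discriminant of a fixed sign, so exactly one real root). Finally, since $w(t)\to\pm\infty$ as $t\to\pm\infty$, one reads off $\check r_h(t)\to\infty$ and $\check R_h(t)\to 0$, giving $\check\gamma_h\subset\mathcal P^+\cap\mathcal P^-$ via Theorem~\ref{thm:chazyJ}.

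For case (ii), when $\check\Theta_0=0$ the formula $\check r_h=\frac12 w^2$ with $w=\sqrt[3]{6t}$ would force $\check r_h(0)=0$, a genuine collision with $\mathcal S$, so the orbit is not defined through $t=0$; instead one gets the two branches \eqref{eqn: unperturbed sun}. Here I would just substitute $\check r=\lambda t^{2/3}$ (with $\lambda=(9/2)^{1/3}$, chosen so that $\dot{\check r}^2=2/\check r$ holds: $\dot{\check r}=\frac{2\lambda}{3}t^{-1/3}$, so $\dot{\check r}^2=\frac{4\lambda^2}{9}t^{-2/3}$, and $2/\check r=\frac{2}{\lambda}t^{-2/3}$, equal iff $\lambda^3=9/2$), verify the radial equations directly, note $\check\theta$ is constant since $\check\Theta_0=0$, and conclude the asymptotics: $\check r\to\infty$ with $\check R\to 0$ as $t\to\pm\infty$ (parabolic) while $\check r\to 0$ as $t\to 0^\mp$ (collision/ejection with $\mathcal S$), giving $\check\gamma_h^\pm\subset\mathcal S^\mp\cap\mathcal P^\pm$.

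The computation is entirely routine — as the excerpt itself says, ``whose proof is straightforward.'' The only mild subtlety, and the one place I would be slightly careful, is the passage from the real $\arctom$-style primitive $2\arctan(w/\check\Theta_0)$ to the complex-logarithm normalization in \eqref{eqn: Unperturbed separatrix}: one must fix the branch so that the expression is real-analytic along the real $w$-axis and agrees with $\check\theta_0$ at $w=0$, and check that the total angle swept remains finite as $t\to\pm\infty$ (the orbit is a parabola, so $\check\theta_h$ has finite limits $\check\theta_0\pm\pi$). Everything else is substitution into \eqref{eqn: Eq motion of F0} and Cardano's formula for the inversion of \eqref{eqn: tw}.
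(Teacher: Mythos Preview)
Your proposal is correct. The paper does not actually supply its own proof of this lemma: it simply introduces it with ``Next lemma (see \cite{MR0573346}) computes the trajectories of these equations,'' deferring to the literature. Your direct-verification argument---reparametrizing by $w$ via \eqref{eqn: tw}, reading off $\dot w=1/\check r_h$, and then checking each component of \eqref{eqn: Unperturbed separatrix} against the equations \eqref{eqn: Eq motion of F0} and the energy constraint $\check{\mathcal H}_0=0$---is exactly the standard computation one would carry out, and it goes through as you describe. The identification $2\arctan(w/\check\Theta_0)=-i\log\frac{i\check\Theta_0-w}{i\check\Theta_0+w}$ and the Cardano inversion are both routine, and your handling of the $\check\Theta_0=0$ case is fine. (One trivial typo: you wrote ``$\arctom$'' for $\arctan$.)
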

%
Note that the singularities in \eqref{eqn: Unperturbed separatrix}, located at $w = \pm i\check\Theta_0$ (or equivalently at $t = \pm i\frac{\check\Theta_0^3}{3}$), are in fact zeroes of the function $\check r_h$ and therefore correspond to collisions of the parabolic orbit with the primary $\mathcal S$, which occur at purely complex values of time if $\check \Theta_0 \neq 0$.
To deal at the same time with the cases $\check \Theta_0\neq  0$ and $\check \Theta_0 = 0$, we introduce 
\begin{equation}\label{eqn: orbits non-rotating mu=0}
    \check \gamma_h^\pm(t,\check \theta_0,\check \Theta_0) = (\check r_h^\pm(t,\check\Theta_0), \check \theta_h^\pm(t,\check \theta_0,\check\Theta_0), \check R_h^\pm(t,\check\Theta_0), \check \Theta_0)=\begin{cases}\check \gamma_h(t,\check \theta_0,\check \Theta_0) \; \text{for} \pm  t \geq0\;\; \text{if }\check \Theta_0 \neq 0,\\ \check \gamma_h^\pm(t,\check\theta_0)\;\text{for} \pm t >0\;\;\text{if } \check \Theta_0=0.\end{cases}
\end{equation}
%
The following corollary provides the asymptotic behavior of the parabolic trajectories $\check \gamma_h^\pm(t,\check \theta_0,\check \Theta_0)$ as $t\to\pm\infty$. 
\begin{corollary}\label{corollary: Behaviour at infinity of the unperturbed separatrix}
    The trajectories $\check\gamma^\pm_h(t,\check \theta_0,\check{\Theta}_0)=\left(\check{r}_h^\pm(t,\check{\Theta}_0),\check{\theta}_h^\pm(t,\check \theta_0,\check{\Theta}_0),\check{R}^\pm_h(t,\check{\Theta}_0),\check{\Theta}_0\right)$ in \eqref{eqn: orbits non-rotating mu=0} satisfy
    \[\check{r}_h^\pm(t,\check\Theta_0) \sim |t|^{\frac 2 3},\quad \check{R}_h^\pm(t,\check\Theta_0) \sim \pm |t|^{-\frac 1 3},\quad \check{\theta}_h^\pm(t,\check \theta_0,\check\Theta_0)- \check{\theta}_0 \sim \check{\Theta}_0 |t|^{-\frac 1 3}\qquad \text{ as  }t\to \pm \infty.\]
\end{corollary}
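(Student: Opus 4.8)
The plan is to derive the asymptotics directly from the closed-form expressions in Lemma~\ref{lemma: Reparameterization of time unperturbed solution}, treating the two cases $\check\Theta_0\neq 0$ and $\check\Theta_0=0$ separately and then noting the case $\check\Theta_0=0$ is already explicit. The only real work is the case $\check\Theta_0\neq 0$, where everything is governed by the growth rate of $w(t)$ as $t\to\pm\infty$. So first I would establish that $w(t)\sim (6t)^{1/3}$ as $t\to+\infty$ (and $w(t)\sim -(6|t|)^{1/3}$ as $t\to-\infty$): from the defining relation $t=\tfrac12(\tfrac13 w^3+\check\Theta_0^2 w)$ one sees that $w(t)\to\pm\infty$ as $t\to\pm\infty$ (since $w$ is real-analytic and monotone in $t$, the cubic term dominates), hence $t\sim \tfrac16 w^3$, i.e. $w(t)\sim (6t)^{1/3}$. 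Equivalently this can be read off the explicit formula \eqref{eqn: wt}: as $t\to+\infty$, $\bigl(3t+\sqrt{9t^2+\check\Theta_0^6}\bigr)^{1/3}\sim (6t)^{1/3}$ while the second term is $O(t^{-1/3})$, so $w(t)=(6t)^{1/3}\bigl(1+O(t^{-2/3})\bigr)$.

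Next I would substitute this into the three components of $\check\gamma_h$ in \eqref{eqn: Unperturbed separatrix}. For the radial coordinate, $\check r_h^\pm(t,\check\Theta_0)=\tfrac12(w(t)^2+\check\Theta_0^2)\sim \tfrac12 w(t)^2\sim \tfrac12 (6t)^{2/3}$, so $\check r_h^\pm(t,\check\Theta_0)\sim |t|^{2/3}$ up to the harmless constant $\tfrac12\cdot 6^{2/3}$ (the statement is up to a positive constant, which matches the $\check\Theta_0=0$ case where the constant is $\lambda=(9/2)^{1/3}$; note $\tfrac12 6^{2/3}=(9/2)^{1/3}$, so the constants actually coincide). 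For the conjugate radial momentum, $\check R_h^\pm(t,\check\Theta_0)=\dfrac{2w(t)}{w(t)^2+\check\Theta_0^2}\sim \dfrac{2}{w(t)}\sim \dfrac{2}{(6t)^{1/3}}$ as $t\to+\infty$, which is $\sim |t|^{-1/3}$ with the correct sign; as $t\to-\infty$, $w(t)<0$ gives the $-|t|^{-1/3}$ behavior. For the angle, from \eqref{eqn: Unperturbed separatrix} we have $\check\theta_h^\pm-\check\theta_0=-i\log\bigl(\tfrac{i\check\Theta_0-w(t)}{i\check\Theta_0+w(t)}\bigr)$; writing the argument as $\tfrac{1-i\check\Theta_0/w}{1+i\check\Theta_0/w}=1-2i\check\Theta_0/w+O(w^{-2})$ and using $\log(1+x)=x+O(x^2)$, one gets $\check\theta_h^\pm-\check\theta_0 = -i\cdot(-2i\check\Theta_0/w)+O(w^{-2})=-2\check\Theta_0/w+O(w^{-2})$, hence $\check\theta_h^\pm-\check\theta_0\sim \check\Theta_0 |t|^{-1/3}$ up to a constant (here $w\sim -(6|t|)^{1/3}$ as $t\to-\infty$ gives $-2\check\Theta_0/w\sim +2\check\Theta_0/(6|t|)^{1/3}$, consistent with the stated $\check\Theta_0|t|^{-1/3}$; as $t\to+\infty$ the sign is also consistent).

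Finally, for $\check\Theta_0=0$ the trajectories in \eqref{eqn: unperturbed sun} are already given in closed form: $\check r_h^\pm = \lambda |t|^{2/3}$, $\check R_h^\pm=\pm\sqrt{2/\lambda}\,|t|^{-1/3}$, and $\check\theta_h^\pm-\check\theta_0=0=\check\Theta_0|t|^{-1/3}$, so all three asymptotics hold trivially (with the angular relation being the degenerate equality $0\sim 0$). This makes the uniform statement for $\check\gamma_h^\pm$ in \eqref{eqn: orbits non-rotating mu=0} immediate. I do not anticipate a genuine obstacle here: the corollary is a direct consequence of Lemma~\ref{lemma: Reparameterization of time unperturbed solution}, and the only point requiring a line of care is controlling the error in the $w(t)\sim(6t)^{1/3}$ expansion well enough to conclude the $\sim$ for the angle, which is the component with the most cancellation; since $\check\Theta_0/w=O(|t|^{-1/3})$ that control is straightforward from \eqref{eqn: wt}.
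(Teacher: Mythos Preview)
The paper gives no proof for this corollary, treating it as an immediate consequence of Lemma~\ref{lemma: Reparameterization of time unperturbed solution}; your approach---substitute the asymptotic $w(t)\sim(6t)^{1/3}$ into the explicit formulas \eqref{eqn: Unperturbed separatrix}---is exactly the intended one, and your arguments for $\check r_h^\pm$ and $\check R_h^\pm$ are correct (including the check that the leading constants agree with the $\check\Theta_0=0$ case).

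There is one algebraic slip in the angular component: when you factor $w$ out of $\tfrac{i\check\Theta_0-w}{i\check\Theta_0+w}$ you drop an overall sign; the correct identity is $\tfrac{i\check\Theta_0-w}{i\check\Theta_0+w} = -\tfrac{1-i\check\Theta_0/w}{1+i\check\Theta_0/w}$, or equivalently $-i\log\bigl(\tfrac{i\check\Theta_0-w}{i\check\Theta_0+w}\bigr)=\pi-2\arctan(\check\Theta_0/w)$. Consequently $\check\theta_h-\check\theta_0\to\pm\pi$ as $t\to\pm\infty$ (for $\check\Theta_0>0$), and it is $\check\theta_h-(\check\theta_0\pm\pi)$ that decays like $\check\Theta_0|t|^{-1/3}$, not $\check\theta_h-\check\theta_0$ itself. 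Your missing sign accidentally hides this and produces a result that matches the corollary as literally stated; but in fact the corollary's angle asymptotic is written somewhat loosely---the intended content, and all that is used downstream (e.g.\ in Appendix~\ref{appendix: proposition 5.2}), is that $\check\theta_h^\pm$ converges to a finite limit at rate proportional to $\check\Theta_0|t|^{-1/3}$. So your method is right; just fix the factorisation and note that the convergence is to the asymptotic angle $\check\theta_0\pm\pi$, not to $\check\theta_0$.
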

Recall that for $\mu = 0$, the Hamiltonian \eqref{eqn: Hamiltonian in synodical cartesian coordinates centered at CM J} reduces to a Kepler problem involving only the Sun and the Asteroid. Hence, it is an abuse of language to say that the orbits $\check{\gamma}_h^\pm$ in \eqref{eqn: orbits non-rotating mu=0} collide with $\mathcal J$. To address this ambiguity, we say that these orbits ``hit Jupiter'' if there exist $(t_c,\check \theta_0) \in (0,+\infty)\times \mathds{T}$ such that 
\[
\check r_h^\pm(\pm t_c,\check\Theta_0) = 1,\quad \check \theta_h^\pm(\pm t_c,\pm \check \theta_0,\check\Theta_0) = \pm t_c.
\]
Imposing both conditions on \eqref{eqn: orbits non-rotating mu=0} yields
\begin{equation}\label{eqn: tc thetac hatTheta}
\begin{aligned}
    t_c = t_c(\check \Theta_0) = \frac 1 3\sqrt{2-\check \Theta_0^2} \left(1+\check\Theta_0^2\right),\quad\check \theta_c := \check \theta_0(\check \Theta_0) = t_c(\check \Theta_0) +i\log\left(\frac{i\check\Theta_0 - w( t_c(\check \Theta_0) )}{i\check\Theta_0 + w(t_c(\check \Theta_0))}\right).
\end{aligned}
\end{equation}
Then, the two trajectories
\begin{equation}\label{eqn: parabolic ejection and collision mu=0}
    \check \gamma_c^\pm (t,\check\Theta_0) := \check \gamma_h^\pm(t, \pm\check \theta_c,\check \Theta_0)
\end{equation} 
correspond to the parabolic orbits colliding with $\mathcal J$ at time $t=\pm t_c(\check \Theta_0)$, respectively.

\begin{figure}[h!]
    \centering
    \begin{subfigure}{0.48\textwidth}
       \centering
        \begin{overpic}[width=\linewidth]{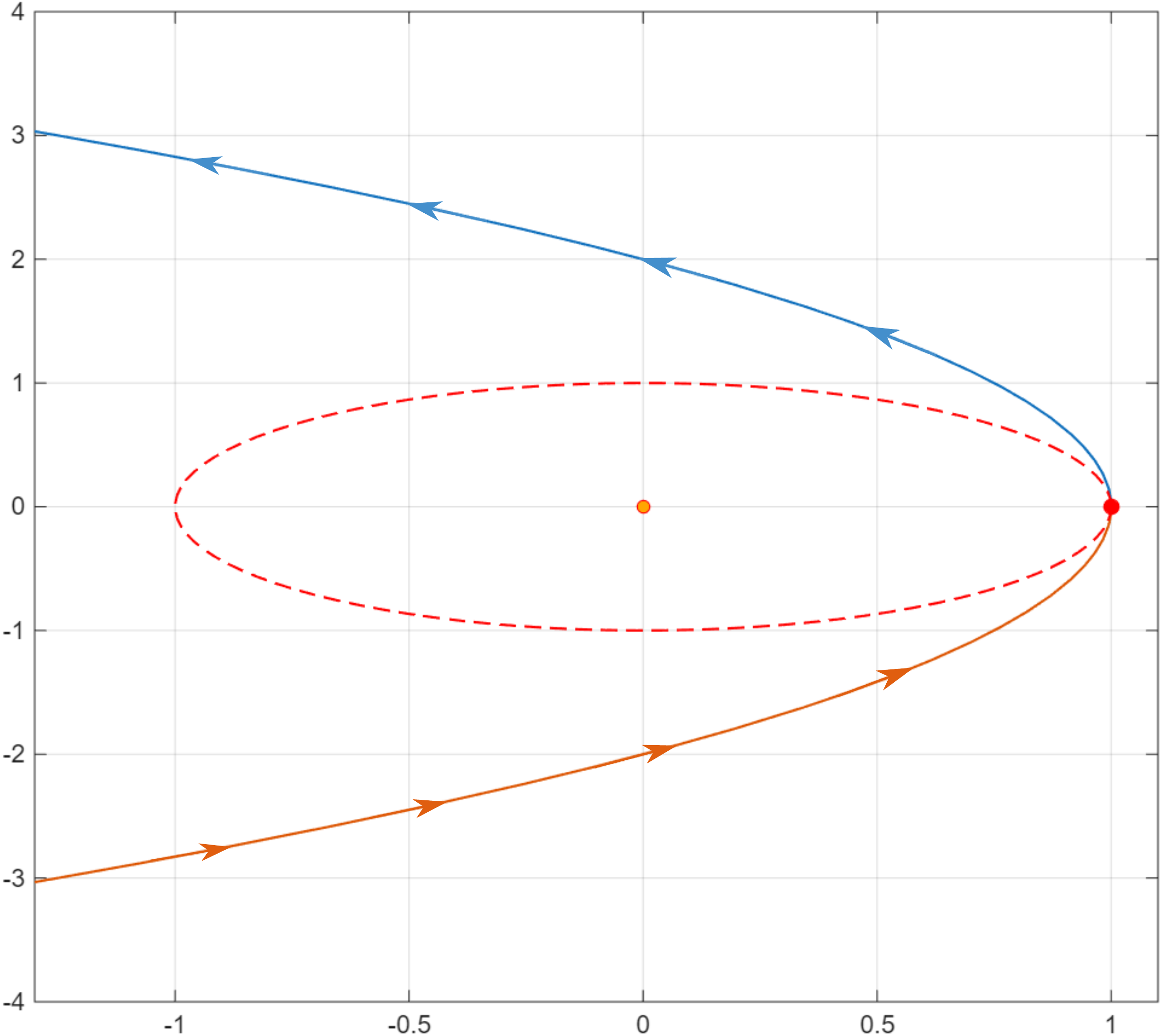}
             \put(50,70){\color{blueKepler}$\scriptstyle \check \gamma_c^+\in\mathcal J^-\cap \mathcal P^+$}
            \put(50,20){\color{redKepler}$\scriptstyle\check \gamma_c^-\in \mathcal J^+\cap \mathcal P^-$}
            \put(97,45){\color{red}$\scriptstyle \mathcal J$}
            \put(57,45){\color{orange}$\scriptstyle \mathcal S$}
            \put(55,-3){\color{grey}$\scriptstyle \check q_1$}
            \put(-4,45){\color{grey}$\scriptstyle \check q_2$}
        \end{overpic}
    \end{subfigure}
    \hfill
    \begin{subfigure}{0.48\textwidth}
        \centering
        \begin{overpic}[width=\linewidth]{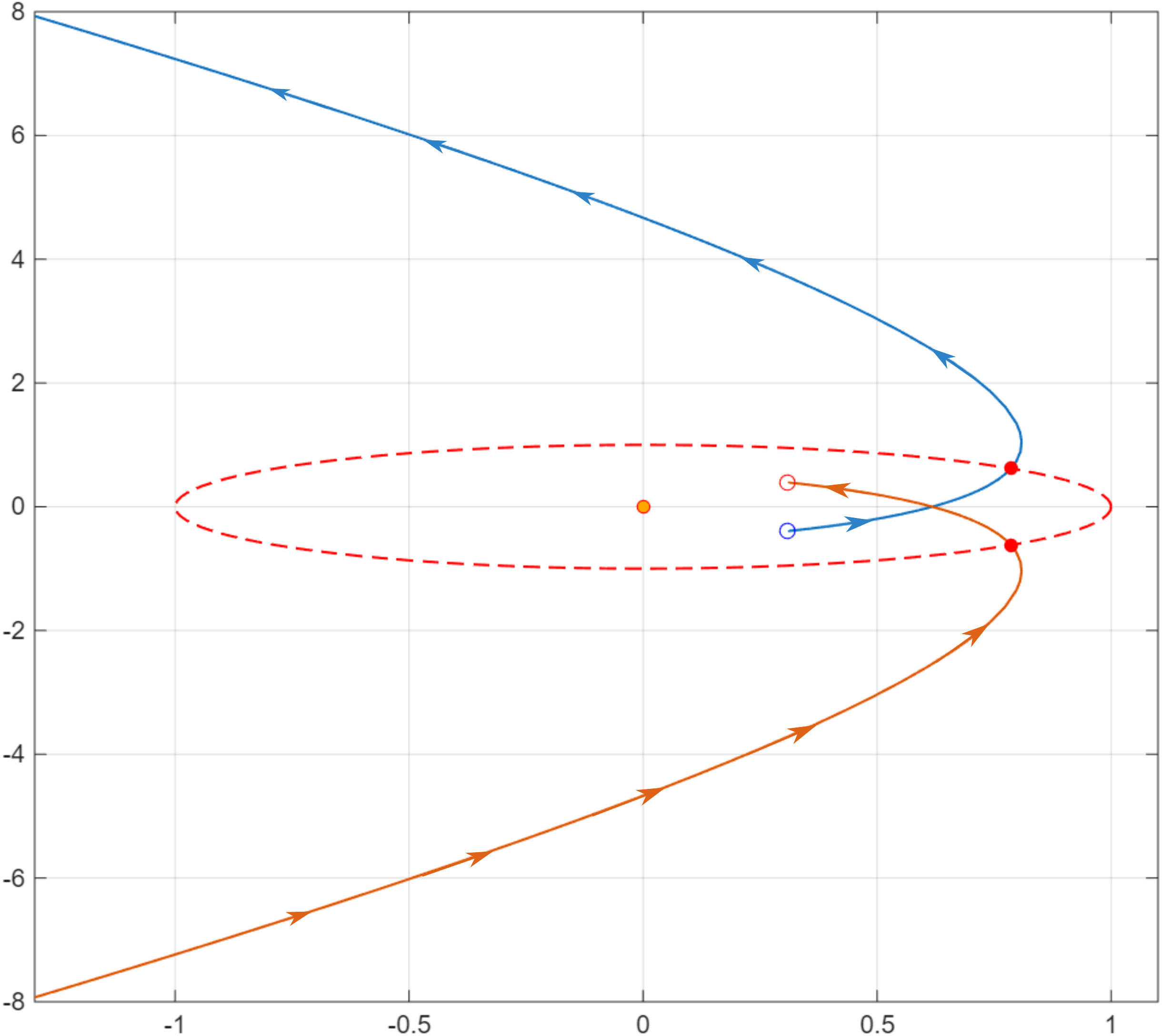}
            \put(60,38){\color{blueKepler}$\scriptstyle \check \gamma_c^+([0,t_c],1) \in \mathcal J^+$}
            \put(60,51){\color{redKepler} $\scriptstyle \check \gamma_c^-([-t_c,0],1) \in \mathcal J^-$}
            \put(55,72){\color{blueKepler}$\scriptstyle \check \gamma_c^+((t_c,+\infty),1)\in\mathcal J^-\cap \mathcal P^+$}
            \put(55,17){\color{redKepler}$\scriptstyle\check \gamma_c^-((-\infty,-t_c),1)\in \mathcal J^+\cap \mathcal P^-$}
            \put(88,50){\color{red}$\scriptstyle \mathcal J$}
            \put(88,40){\color{red}$\scriptstyle \mathcal J$}
            \put(57,45){\color{orange}$\scriptstyle \mathcal S$}
            \put(55,-3){\color{grey}$\scriptstyle \check q_1$}
            \put(-4,45){\color{grey}$\scriptstyle \check q_2$}
        \end{overpic}
    \end{subfigure}
    \caption{
    Examples of the orbits analyzed in Remark \ref{remark: parabolic EC orbits with J}. In the left  and right pictures we consider $\check \Theta_0^*=\sqrt2$ and $\check \Theta_0^* = 1$ respectively.
    }
    \label{fig:KeplerproblemECOparabolic}
\end{figure}

\begin{remark}\label{remark: parabolic EC orbits with J}
We make the following remark regarding the orbits $\check \gamma_c^\pm(t,\check\Theta_0)$ (see Figure \ref{fig:KeplerproblemECOparabolic}).
\begin{itemize}
    \item For $\check \Theta_0^* = \pm \sqrt 2$ we have $t_c(\pm\sqrt 2) = 0$ in \eqref{eqn: tc thetac hatTheta} so the orbits $\check \gamma_c^\pm(t,\check \Theta_0^*)$ correspond to parabolic ejection and collision orbits with $\mathcal J$, respectively. Namely
    \[\check \gamma_c^+(t,\check \Theta_0^*)\subset \mathcal J^-\cap \mathcal P^+, \quad \check \gamma_c^-(t,\check \Theta_0^*) \subset \mathcal J^+ \cap \mathcal P^-.\]
    \item For $\check \Theta_0^*\in (-\sqrt 2, \sqrt 2)\setminus\{0\}$ we have
    \begin{itemize}
        \item For $t\in [0,t_c]$, $\check \gamma_c^+(t,\check \Theta_0^*) \subset \mathcal J^+$ and for $t \geq t_c$, $\check \gamma_c^+(t,\check \Theta_0^*) \subset \mathcal J^-\cap \mathcal P^+$.
        \item For $t\in [-t_c,0]$, $\check \gamma_c^-(t,\check \Theta_0^*) \subset \mathcal J^-$ and for $t\leq -t_c$, $\check \gamma_c^-(t,\check \Theta_0^*) \subset \mathcal J^+\cap \mathcal P^-$.
    \end{itemize}
    \item For $\check \Theta_0^* = 0$ we have
    \begin{itemize}
        \item For $t\in (0,t_c]$, $\check \gamma_c^+(t,0) \subset \mathcal S^- \cap \mathcal J^+$ (and ballistic, see Remark \ref{def: large-ballistic}) and for $t\geq t_c$, $\check \gamma_c^+(t,0) \subset \mathcal J^-\cap \mathcal P^+$.
        \item For $t\in [-t_c,0)$, $\check \gamma_c^-(t,0) \subset \mathcal S^+ \cap \mathcal J^-$ (also ballistic) and for $t\leq -t_c$, $\check \gamma_c^-(t,0) \subset \mathcal J^+\cap \mathcal P^-$.
    \end{itemize}
\end{itemize}
\end{remark}

\section{Local analysis close to $\mathcal J$}\label{sec: local analysis at collision with J}
To carry out a local analysis around the collision set $\mathcal J$ in \eqref{eqn: Collision sets}, we first consider the translation to put $\mathcal J$ at the origin
\begin{alignat*}{3}
& q_1 &&= \hat q_1 - (1-\mu),   \quad   &&q_2 = \hat q_2,\\
& p_1 &&= \hat p_1,\quad  &&p_2 = \hat p_2-(1-\mu).
\end{alignat*}
In these coordinates, the Hamiltonian \eqref{eqn: Hamiltonian in synodical cartesian coordinates centered at CM J} reads
\begin{equation}\label{eqn: Hamiltonian in cartesian coordinates centered at P2}
	H_\mu(q,p) = \frac{|p|^2}{2} - \left( q_1p_2 - q_2 p_1\right) - q_1(1-\mu)  - \frac{1-\mu}{|q+(1,0)|} - \frac{\mu}{|q|} - \frac{(1-\mu)^2}{2},
\end{equation}
and $\mathcal J$ in \eqref{eqn: Collision sets} becomes $\{ q = 0\}$. For $q$ close to $0$, it has the expansion
\[H_\mu(q,p) =   \frac{|p|^2}{2} - \left( q_1p_2 - q_2 p_1\right) - (1-\mu)\left(q_1^2 - \frac{q_2^2}{2}\right) - \frac{\mu}{|q|} - (1-\mu)\left(1+\frac{(1-\mu)}{2}\right) + \mathcal O_3(q).\]
We consider the Hamiltonian
%
\begin{equation}\label{eqn: Hamiltonian + cte}
\begin{aligned}
    G_\mu (q,p) &= H_\mu(q,p)  + (1-\mu)\left(1+\frac{1-\mu}{2}\right) \\
    &= \frac{|p|^2}{2} - \left( q_1p_2 - q_2 p_1\right) - (1-\mu)\left(q_1^2 - \frac{q_2^2}{2}\right) - \frac{\mu}{|q|}+ \mathcal O_3(q).
\end{aligned}
\end{equation}
Both Lemma \ref{lemma: Characterization of the keplerian orbits in terms of h and Theta} and Hamiltonian \eqref{eqn: Hamiltonian + cte} prompt us to consider energy levels  $G_\mu=g$, where $ h \in (-\sqrt 2, \sqrt 2)$ and therefore $g$ satisfies
\begin{equation}\label{eqn: energy i}
    g =  h + (1-\mu)\left(1+\frac{1-\mu}{2}\right) > 0.
\end{equation}
%
The analysis performed in Sections \ref{subsec: Regularization of the system}
and \ref{subsec: Ejection and collision orbits from J} below  will be restricted to a neighborhood of $q = 0$ in the energy level $G_\mu^{-1}(g)$,
\begin{equation}\label{eqn: neighborhood of collision}
    B_{\gamma} = \left\{(q,p) \in G_\mu^{-1}(g) \colon |q| \leq \mu^\gamma \right\},
\end{equation}
where $\gamma\in (0,1)$ is independent of $\mu$. 

\begin{notation}
Throughout the paper we use several coordinate systems. For a coordinate $x$ and $k\in \mathds N$ we write
$f(x)=\mathcal O_k(x)$ if there exist  $C> 0$ independent of $\mu$,  such that there is $d>0$  and a function
$j\in C^k((-d,d))$ with  $\|j\|_{C^k((-d,d))}\leq C$, such that
$f(x)=x^k\,j(x)$.
\end{notation}

\subsection{Levi-Civita regularization of the collision   $\mathcal{J}$}\label{subsec: Regularization of the system}
Following \cite{MR3951693}, we perform the Levi-Civita transformation
\begin{equation}\label{eqn: Change of coordinates Levi-Civita}
\begin{aligned}
\psi\colon \mathds{R}^4 &\to \mathds{R}^4\\
(z_1,z_2,w_1,w_2) &\mapsto \left(2(z_1^2 - z_2^2), 4z_1z_2, \frac{w_1z_1- w_2z_2}{\xi |z|^2}, \frac{w_1z_2 + w_2z_1}{\xi |z|^2}\right),
\end{aligned}
\end{equation}
%
where $\xi$ is related to $g$  
in \eqref{eqn: energy i} as 
\begin{equation}\label{eqn: Energy xi}
\xi = (2g)^{-\frac 1 2}  =(2 h + 3)^{-\frac 1 2} + \mathcal{O}(\mu) > 0,
\end{equation}
to the system of equations associated to Hamiltonian \eqref{eqn: Hamiltonian + cte}.
Applying this change of coordinates and also  the time scaling
\[d\tau = \frac{dt}{\xi^2|z|^2}\]
to the Hamiltonian $G_\mu$ in \eqref{eqn: Hamiltonian + cte}, we obtain a new system which is Hamiltonian with respect to
\begin{equation}\label{eqn: Levi-Civita Hamiltonian in coord (z,w)}
\begin{aligned}
L_\mu(z,w) =&\; \xi^2|z|^2\left(\left(G_\mu-\frac{1}{2\xi^2}\right)\circ \psi\right)\\
=&\;\frac 1 2 \left(|w|^2-|z|^2\right) - \frac{\xi^2\mu}{2}- 2\xi |z|^2(z_1w_2-z_2w_1) - \xi^2|z|^2 f(z),
\end{aligned}
\end{equation}
where
\begin{equation}\label{eqn: f(z) LC}
f(z) = 4(1-\mu)\left((z_1^2-z_2^2)^2 - 2z_1^2z_2^2\right) +  \mathcal{O}_6(z).
\end{equation}
The orbits belonging to the hypersurface $\left\{G_\mu= 1/(2\xi^2)\right\}$, including the ejection and collision ones, now lie in $\left\{L_\mu(z,w) = 0\right\}$. In particular, in coordinates $(z,w)$, the collision manifold $\left\{|q|= 0\right\}$ restricted to the level set $\{L_\mu(z,w) = 0\}$ becomes the circle
\begin{equation}\label{eqn: Collision in coord (z,w)}
\begin{aligned}
    \mathcal{J}_h =& \left\{(z,w)\in L_\mu^{-1}(0)\colon z= 0\right\}=\left\{ (0,0,w_1,w_2) \colon w_1^2+w_2^2 = \xi^2\mu \right\}\\
    =& \left\{(0,0,\xi\mu^{\frac 1 2}\cos\beta,\xi\mu^{\frac 1 2}\sin\beta),\beta\in \mathds{T}\right\},
\end{aligned}
\end{equation}
and $B_\gamma$ defined in \eqref{eqn: neighborhood of collision} now becomes a neighborhood of $z=0$ in $L_\mu^{-1}(0)$ given by
\begin{equation}\label{eqn: neighborhood of collision (z,w)}
    \mathrm{B_\gamma} =\left\{(z,w) \in L_\mu^{-1}(0)\colon |z| \leq \frac{1}{\sqrt 2}\mu^{\frac{\gamma}{2}}\right\}.
\end{equation}
Note that $f(z)$ in \eqref{eqn: f(z) LC} is analytic in $\mathrm B_\gamma$.
%
%
The equations of motion associated to \eqref{eqn: Levi-Civita Hamiltonian in coord (z,w)} have a saddle at the origin with eigenvalues $\pm 1$ with multiplicity $2$. We denote by $W_\mu^s(0)$ and $W_\mu^u(0)$ its stable and unstable invariant manifolds respectively. The following proposition (see \cite{MR1335057}, $\S$ 3.3.1, p. 231) establishes a change of coordinates that straightens both invariant manifolds.

\begin{proposition}\label{prop: Straightening invariant manifolds collision LC}
    Fix $\epsilon > 0$ small enough and denote by $\mathrm B$ a $\epsilon$-neighborhood of the origin. 
    Then, there exists $\mu_0 >0$ such that, for any $0<\mu<\mu_0$, there exists an analytic change of variables
    \begin{equation}\label{eqn: straightening (z,w)-(s,u)}
        \begin{aligned}
            \Xi \colon \mathrm{B}\subset \mathds{R}^4 &\to \Xi(\mathrm B)\subset \mathds R^4\\
            (z,w) &\mapsto  (s, u)
        \end{aligned}
    \end{equation}
    satisfying
    \begin{equation*}
    \begin{aligned}
        s  = s(z,w) = \frac{z-w}{\sqrt 2}+\mathcal{O}_5(z, w),\quad u = u(z, w) =  \frac{z+w}{\sqrt 2} + \mathcal{O}_5(z, w),
    \end{aligned}
    \end{equation*}
   which transforms the equations of motion associated to Hamiltonian \eqref{eqn: Levi-Civita Hamiltonian in coord (z,w)} into equations of the form
    \begin{equation}\label{eqn: Eq motion in coordinates (s,u)}
    \begin{aligned}
       s' &= -s \left(1 + F^{s}(s,u)\right)\\
        u' &= u \left(1 + F^{u}( s,u)\right),
    \end{aligned}
    \end{equation}
    where $F^s$ and $F^u$ are analytic functions such that $F^{*}(s,u) = \mathcal{O}_2(s,u)$, for $*=s,u$.
    In these coordinates the saddle point remains as $(s,u) = (0,0)$ and its local invariant manifolds become
    \begin{equation*}
        \mathrm W_{\mu}^s(0) = \{u=0\},\quad \mathrm W_{\mu}^u(0) = \{s=0\}.
    \end{equation*}
    Moreover the Hamiltonian \eqref{eqn: Levi-Civita Hamiltonian in coord (z,w)} becomes the first integral
    \begin{equation}\label{eqn: energy in coordinates (s,u)}
    \begin{aligned}
        l_\mu(s,u) = L_\mu \circ \Xi^{-1}(s,u) =& -(s_1 u_1+s_2 u_2)  - \frac{\xi^2\mu}{2} \\
        &-\xi\left((s_1+ u_1)^2 + ( s_2+ u_2)^2\right)( s_1 u_2- s_2 u_1)
        +\mathcal{O}_6( s, u).
    \end{aligned}
    \end{equation}
\end{proposition}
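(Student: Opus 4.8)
The plan is to obtain this statement, as in \cite{MR1335057}, by combining the analytic stable/unstable manifold theorem with a finite sequence of near-identity analytic changes of variables that first straighten the invariant manifolds and then bring the flow to a simple (``radial'') normal form. The relevant linear structure is the following: the origin is a hyperbolic equilibrium of the analytic system associated to \eqref{eqn: Levi-Civita Hamiltonian in coord (z,w)} whose linearization has eigenvalues $\pm 1$, each of multiplicity two, with stable eigenspace $\{w=-z\}$ and unstable eigenspace $\{w=z\}$; the linear isomorphism $(z,w)\mapsto\bigl((z-w)/\sqrt2,(z+w)/\sqrt2\bigr)$ already diagonalizes it and carries these eigenspaces to $\{u=0\}$ and $\{s=0\}$, which is why $\Xi$ is sought as a correction of it.

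First I would invoke the classical analytic invariant manifold theorem: since the eigenvalues are real and nonzero, there exist analytic two-dimensional local stable and unstable manifolds $\mathrm W^s_\mu(0)$, $\mathrm W^u_\mu(0)$, tangent at the origin to the two eigenspaces; because the linear part of \eqref{eqn: Levi-Civita Hamiltonian in coord (z,w)} is $\mu$-independent and the nonlinearity is analytic in $(z,w,\mu)$ and uniformly bounded on a fixed neighborhood of the origin for $\mu\in[0,\mu_0]$, these manifolds (and every subsequent construction) exist on a neighborhood $\mathrm B$ and with constants independent of $\mu$. Next I would straighten them: since $\mathrm W^s_\mu(0)$ and $\mathrm W^u_\mu(0)$ are transverse at the origin, one applies first an analytic change with the prescribed linear part sending $\mathrm W^s_\mu(0)$ to $\{u=0\}$; in the new variables $\mathrm W^u_\mu(0)$ is a graph $s=\psi(u)$ with $\psi=\mathcal O_2(u)$, which the further analytic change $(s,u)\mapsto(s-\psi(u),u)$, whose linear part is the identity and which fixes $\{u=0\}$, straightens to $\{s=0\}$. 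After these changes, invariance of the two coordinate planes together with the analytic division (Hadamard) lemma puts the equations into the form $s'=K(s,u)\,s$, $u'=N(s,u)\,u$ with $K,N$ analytic, $K(0,0)=-\mathrm{Id}$, $N(0,0)=\mathrm{Id}$.

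The technical core, and the step I expect to be the main obstacle, is the remaining reduction of $K$ and $N$ to the scalar multipliers $-(1+F^s(s,u))$ and $1+F^u(s,u)$ by one more near-identity analytic change, as carried out in \cite{MR1335057}. Here the favorable circumstance is that the spectrum $\{1,1,-1,-1\}$ yields only integer resonance combinations, so there are no small divisors and all the normalizing transformations converge; the real work is to remove the non-resonant terms order by order and to check that the resonant terms that necessarily survive can be organized exactly into the scalar form \eqref{eqn: Eq motion in coordinates (s,u)}, which uses the specific structure of \eqref{eqn: Levi-Civita Hamiltonian in coord (z,w)} (the form of $f$ and the Hamiltonian, near-rotationally-symmetric structure inherited from the Levi-Civita regularization of the two-body term). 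Finally, formula \eqref{eqn: energy in coordinates (s,u)} follows by direct substitution: writing $z=(s+u)/\sqrt2+\mathcal O_5(s,u)$, $w=(u-s)/\sqrt2+\mathcal O_5(s,u)$ in \eqref{eqn: Levi-Civita Hamiltonian in coord (z,w)}, one computes $\tfrac12(|w|^2-|z|^2)=-(s_1u_1+s_2u_2)+\mathcal O_6(s,u)$, $|z|^2=\tfrac12|s+u|^2+\mathcal O_6(s,u)$ and $z_1w_2-z_2w_1=s_1u_2-s_2u_1+\mathcal O_6(s,u)$, while $f$ contributes only at order $\ge6$; collecting these gives $l_\mu$.
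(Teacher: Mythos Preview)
The paper does not give its own proof of this proposition; it states it with the attribution ``see \cite{MR1335057}, \S 3.3.1, p.~231'' and uses it as a black box. Your proposal is therefore not in conflict with anything in the paper, and your outline (analytic stable/unstable manifold theorem, straightening of the two transverse analytic graphs, then a normal-form reduction exploiting the absence of small divisors for the spectrum $\{1,1,-1,-1\}$) is exactly the skeleton of the argument in the cited reference. Your verification of \eqref{eqn: energy in coordinates (s,u)} by direct substitution of the linear part of $\Xi^{-1}$ into \eqref{eqn: Levi-Civita Hamiltonian in coord (z,w)} is correct and is the intended computation.

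One remark on the step you flag as the main obstacle: the reduction to \emph{scalar} multipliers $-(1+F^s)$ and $1+F^u$, rather than just to matrix-valued analytic functions $K(s,u)$, $N(s,u)$ with $K(0,0)=-\mathrm{Id}$, $N(0,0)=\mathrm{Id}$, is exactly the content of the cited section of \cite{MR1335057}, and it does rely on the additional structure inherited from the Levi-Civita regularization (essentially the complex/rotational symmetry of the leading two-body part). Your sketch is right to isolate this as the nontrivial point, but since the paper simply imports the result you should not expect to find an independent argument for it here.
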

In coordinates $(s,u)$, the collision circle defined in $\eqref{eqn: Collision in coord (z,w)}$ becomes
\begin{equation}\label{eqn: Collision in coord (s,u)}
\begin{aligned}
\mathcal{J}_h=\left\{(s,u)\in l_\mu^{-1}(0):\ (s,u)=\mathbf \Gamma(\beta),\ \beta\in\mathds{T}\right\},
\end{aligned}    
\end{equation}
where $\mathbf \Gamma(\beta)$ satisfies
\[
\mathbf \Gamma(\beta)
= \left(-\frac{\xi\mu^{\frac 1 2}}{\sqrt 2}(\cos\beta,\sin\beta), \frac{\xi \mu^{\frac 1 2}}{\sqrt 2}(\cos\beta,\sin\beta)\right) + \mathcal{O}_{C^1}\left(\mu^{\frac 5 2}\right).
\]
Note that $\mathrm{B}_\gamma \subset \mathrm{B}$ (where $\mathrm B_\gamma$ is defined in \eqref{eqn: neighborhood of collision (z,w)}). Hence, there exists a function $W$ satisfying 
\[
W(s,u)= (s_1+u_1)^2 +(s_2+u_2)^2 + \mathcal{O}_6(s,u)
\]
such that the  boundary of 
 the domain $\mathbf B_\gamma =\Xi(\mathrm B_\gamma)$ in \eqref{eqn: straightening (z,w)-(s,u)} is given by 
%
\begin{equation}\label{eqn: neighborhood of collision delta mu (s,u)}
 \partial \mathrm{B}_\gamma =\left\{(s,u)\in l_\mu^{-1}(0): W(s,u)=\mu^\gamma\right\}.
\end{equation}
\subsection{Ejection and collision orbits from $\mathcal J$}\label{subsec: Ejection and collision orbits from J}

As stated in Section \ref{subsec: main ideas for the proofs}, we compare the ejection and collision orbits from Jupiter with the invariant manifolds of infinity (see Section \ref{sec: The invariant manifolds of infinity}) in a common set of coordinates. We do the comparison in rotating polar coordinates centered at Jupiter, given by 
\begin{equation}\label{eqn: (q,p) - (r,theta,R,Theta)}
    \begin{aligned}
    \Upsilon\colon [0,+\infty) \times \mathds{T} \times \mathds{R}^2 &\to \mathds{R}^4\\
    (r,\theta,R,\Theta) &\mapsto (q_1,q_2,p_1,p_2).
    \end{aligned}
\end{equation}
%
%
In these coordinates, the Hamiltonian $H_\mu (q,p)$ in \eqref{eqn: Hamiltonian in cartesian coordinates centered at P2} reads
\begin{equation}\label{eqn: Hamiltonian in polar coordinates centered at J}
    \mathcal H_\mu(r,\theta,R,\Theta) = \frac 1 2 \left(R^2 + \frac{\Theta^2}{r^2}\right) - \Theta - r\cos\theta(1-\mu) - \frac{1-\mu}{\sqrt{r^2  +2r\cos\theta + 1}} - \frac{\mu}{r} - \frac{(1-\mu)^2}{2},
\end{equation}
which is reversible with respect to the symmetry
%
\begin{equation}\label{eqn: Symmetry in synodical polar coordinates centered at J}
    \left(r,\theta,R,\Theta;t\right) \to \left(r,-\theta,-R,\Theta;-t\right), 
\end{equation}
and $\partial \mathrm B_\gamma$ in \eqref{eqn: neighborhood of collision (z,w)} (see also  $\partial B_\gamma$ in  \eqref{eqn: neighborhood of collision}) becomes, for a fixed level set $\{\mathcal H_\mu = h\}$, the section
\begin{equation}\label{eqn: Section Sigma (r,theta,R,Theta)}
    \Sigma_\gamma= \left\{(r,\theta,R,\Theta) \in [0,+\infty)
    \times \mathds{T}\times \mathds{R}^2 \colon r= \mu^\gamma, \mathcal{ H}_\mu(\mu^\gamma,\theta,R,\Theta) =h\right\}.    
\end{equation}    
This section analyzes the ejection and collision orbits from Jupiter and their intersections with  $\Sigma_\gamma$. We proceed as follows. First, in coordinates $(s,u)$, we compute the ejection and collision orbits from the collision circle $\mathcal J_h$ in \eqref{eqn: Collision in coord (s,u)} and, in particular, their intersections with $\partial \mathbf B_\gamma$, defined in \eqref{eqn: neighborhood of collision delta mu (s,u)}. Then, we translate the result to coordinates $(r,\theta,R,\Theta)$.

Next lemma analyzes the dynamics of the collision circle $\mathcal J_h$ through the linearized part of the vector field in \eqref{eqn: Eq motion in coordinates (s,u)} until it reaches the section $\partial \mathbf B_\gamma$. To control the non-linear terms for the full system, we complexify $\beta$ in the parameterization of $\mathcal J_h$ in \eqref{eqn: Collision in coord (s,u)} on the complex strip
\begin{equation}\label{eqn: complexdomainbeta}
    \mathds{T}_{\sigma_0} = \{\beta \in \mathds{C}/(2\pi\mathds{Z})\colon |\mathrm{Im}\beta|\leq \sigma_0\},\qquad \sigma_0\in(0,1).
\end{equation}
\begin{lemma}\label{lemma: Linear ECO-orbits intersect section (z,w)}
Fix a closed interval $I \subset \left(-\sqrt{2}+\frac 3 2, \sqrt{2}+\frac 3 2\right)$, $\xi > 0$ with $1/(2\xi^2) \in I$, $\sigma_0\in (0,1)$ and $\gamma \in (0,1)$. Denote by $(s_{\mathrm{lin}}(\tau,\beta), u_{\mathrm{lin}}(\tau,\beta))$ the ejection (defined for $\mathrm{Re}(\tau) > 0$) and collision (defined for $\mathrm{Re}(\tau) <0$) trajectories of the linearization of \eqref{eqn: Eq motion in coordinates (s,u)} such that $(s_{\mathrm{lin}}(0,\beta),u_{\mathrm{lin}}(0,\beta)) = \mathbf \Gamma(\beta) \in \mathcal J_h$ (see \eqref{eqn: Collision in coord (s,u)}) with $\beta \in \mathds{T}_{\sigma_0}$ in \eqref{eqn: complexdomainbeta}. Then there exists $\mu_0 > 0$ such that, for $0<\mu<\mu_0$,  there exists a time $\tau_{\mathrm{lin}}(\beta)$ of the form

\begin{equation}\label{eqn: Time lineal Levi-Civita}
    \begin{aligned}
        \tau_{\mathrm{lin}}(\beta) = \mathrm{arcsinh}\left(\frac{\mu^{\frac{\gamma-1}{2}}}{\sqrt 2 \xi}\right) + \mathcal{O}_{C^1}\left(\mu^{\frac{3+\gamma}{2}}\right) = \log\left(\frac{\sqrt{2} \mu^\frac{\gamma-1}{2}}{\xi}\right) + \mathcal{O}_{C^1}\left(\mu^{1-\gamma}\right),
    \end{aligned}
\end{equation}
such that the orbits $(s_{\mathrm{lin}}, u_{\mathrm{lin}})$ satisfy $(s_{\mathrm{lin}}(\pm\tau_{\mathrm{lin}}(\beta),\beta), u_{\mathrm{lin}}(\pm\tau_{\mathrm{lin}}(\beta),\beta)) \in \partial \mathbf B_\gamma$ defined in \eqref{eqn: neighborhood of collision delta mu (s,u)}, forming two complex-analytic curves $\mathbf D_{\mathcal{J}}^\mp(\mu)\subset \partial \mathbf  B_\gamma$ of the form
\begin{equation*}
    \mathbf D_{\mathcal J}^\mp(\mu) = \left\{\left(s_{\mathrm{lin}}^\mp(\beta),u_{\mathrm{lin}}^{\mp}(\beta)\right)\colon \beta \in \mathds{T}_{\sigma_0}\right\} \subset \partial \mathbf B_\gamma
\end{equation*}
where
\begin{equation}\label{eqn: curve sulin}
\begin{aligned}
    s_{\mathrm{lin}}^\mp(\beta) &= -\frac{1}{2} \mu^{\frac \gamma 2}\left(\mp 1 + 1\right)(\cos\beta,\sin\beta) + \mathcal{O}_{C^1}\left(\mu^{1-\frac \gamma 2}\right),\\ u_{\mathrm{lin}}^\mp(\beta) &= \frac{1}{2} \mu^{\frac \gamma 2}\left(\pm 1 + 1\right)(\cos\beta,\sin\beta) + \mathcal{O}_{C^1}\left(\mu^{1-\frac \gamma 2}\right).
\end{aligned}
\end{equation}

\end{lemma}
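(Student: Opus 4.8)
The plan is to integrate the linearized system \eqref{eqn: Eq motion in coordinates (s,u)} explicitly and then locate, by an implicit function argument, the first time at which the resulting orbit meets $\partial\mathbf{B}_\gamma$. The linearization of \eqref{eqn: Eq motion in coordinates (s,u)} is $s'=-s$, $u'=u$, so, writing $\mathbf{\Gamma}(\beta)=(\mathbf{\Gamma}_s(\beta),\mathbf{\Gamma}_u(\beta))$ for the parameterization of $\mathcal J_h$ in \eqref{eqn: Collision in coord (s,u)}, the trajectory through $\mathbf{\Gamma}(\beta)$ is
\[
s_{\mathrm{lin}}(\tau,\beta)=e^{-\tau}\mathbf{\Gamma}_s(\beta),\qquad u_{\mathrm{lin}}(\tau,\beta)=e^{\tau}\mathbf{\Gamma}_u(\beta),
\]
entire in $\tau$ and analytic in $\beta\in\mathds{T}_{\sigma_0}$. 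By \eqref{eqn: Collision in coord (s,u)}, $\mathbf{\Gamma}_s(\beta)=-\mathbf{a}(\beta)+\mathcal{O}_{C^1}(\mu^{5/2})$ and $\mathbf{\Gamma}_u(\beta)=\mathbf{a}(\beta)+\mathcal{O}_{C^1}(\mu^{5/2})$ with $\mathbf{a}(\beta)=\tfrac{\xi\mu^{1/2}}{\sqrt2}(\cos\beta,\sin\beta)$, so $\bil{\mathbf{a}(\beta)}{\mathbf{a}(\beta)}=\tfrac{\xi^2\mu}{2}$ and $\bil{\mathbf{\Gamma}_s(\beta)}{\mathbf{\Gamma}_s(\beta)}=\bil{\mathbf{\Gamma}_u(\beta)}{\mathbf{\Gamma}_u(\beta)}=\tfrac{\xi^2\mu}{2}+\mathcal{O}(\mu^3)$ for all $\beta\in\mathds{T}_{\sigma_0}$, where $\bil{\cdot}{\cdot}$ denotes the bilinear (holomorphic) extension of the Euclidean product. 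The structural fact to exploit is that the linearized flow conserves $\bil{s_{\mathrm{lin}}(\tau,\beta)}{u_{\mathrm{lin}}(\tau,\beta)}=\bil{\mathbf{\Gamma}_s(\beta)}{\mathbf{\Gamma}_u(\beta)}=-\tfrac{\xi^2\mu}{2}+\mathcal{O}(\mu^3)$.

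\emph{Reduction to a scalar equation.} Substituting this orbit into the defining equation $W=\mu^\gamma$ of $\partial\mathbf{B}_\gamma$ from \eqref{eqn: neighborhood of collision delta mu (s,u)}, writing $W(s,u)=\bil{s+u}{s+u}+\mathcal{O}_6(s,u)$, and using the identities above,
\[
\bil{s_{\mathrm{lin}}(\tau)+u_{\mathrm{lin}}(\tau)}{s_{\mathrm{lin}}(\tau)+u_{\mathrm{lin}}(\tau)}=e^{-2\tau}\bil{\mathbf{\Gamma}_s}{\mathbf{\Gamma}_s}+2\bil{\mathbf{\Gamma}_s}{\mathbf{\Gamma}_u}+e^{2\tau}\bil{\mathbf{\Gamma}_u}{\mathbf{\Gamma}_u}=2\xi^2\mu\sinh^2\tau+\mathcal{O}\!\bigl(\mu^3\cosh 2\tau\bigr).
\]
For $0<\mathrm{Re}(\tau)\le\tau_0+1$ (with $\tau_0$ as below) the orbit remains in the domain $\mathbf{B}$ of \propref{prop: Straightening invariant manifolds collision LC}, so the remainder $\mathcal{O}_6(s_{\mathrm{lin}}(\tau),u_{\mathrm{lin}}(\tau))$ can be controlled using the structure $\mathcal{O}_6(s,u)=2\sqrt2\,\bil{s+u}{\mathcal{O}_5(s,u)}+2|\mathcal{O}_5(s,u)|^2$ together with the pinned value of $\bil{s_{\mathrm{lin}}}{u_{\mathrm{lin}}}$. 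Hence $W(s_{\mathrm{lin}}(\tau,\beta),u_{\mathrm{lin}}(\tau,\beta))=\mu^\gamma$ reduces to $2\xi^2\mu\sinh^2\tau=\mu^\gamma+\mathcal{E}(\tau,\beta)$, where $\mathcal{E}$ is analytic and small relative to $\mu^\gamma$ in the relevant range.

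\emph{Solving for $\tau$ and extracting the curves.} The truncated equation $2\xi^2\mu\sinh^2\tau=\mu^\gamma$ has the solution $\tau_0=\mathrm{arcsinh}\bigl(\mu^{(\gamma-1)/2}/(\sqrt2\xi)\bigr)$, which tends to $+\infty$ as $\mu\to0$ since $\gamma\in(0,1)$, with $e^{\tau_0}=\tfrac{\sqrt2}{\xi}\mu^{(\gamma-1)/2}\bigl(1+\mathcal{O}(\mu^{1-\gamma})\bigr)$ and $e^{-\tau_0}=\mathcal{O}(\mu^{(1-\gamma)/2})$. Since $\partial_\tau\bigl(2\xi^2\mu\sinh^2\tau\bigr)=2\xi^2\mu\sinh 2\tau$ is comparable to $\mu^\gamma$ at $\tau=\tau_0$, hence nonzero and dominating $\partial_\tau\mathcal{E}$, the implicit function theorem applies to the full equation in the complex variables $(\tau,\beta)$, uniformly for $1/(2\xi^2)\in I$ and $\beta\in\mathds{T}_{\sigma_0}$, yielding a unique analytic $\tau_{\mathrm{lin}}(\beta)$ near $\tau_0$ with $\tau_{\mathrm{lin}}(\beta)-\tau_0$ of order $\mathcal{E}(\tau_0,\beta)/\mu^\gamma$; Cauchy estimates turn the sup bound into the $C^1$ bound, which gives the first expression in \eqref{eqn: Time lineal Levi-Civita}, and the second follows from $\mathrm{arcsinh}(x)=\log\bigl(x+\sqrt{x^2+1}\bigr)=\log(2x)+\mathcal{O}(x^{-2})$ with $x=\mu^{(\gamma-1)/2}/(\sqrt2\xi)$. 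Substituting $\tau=\pm\tau_{\mathrm{lin}}(\beta)$ back into $s_{\mathrm{lin}},u_{\mathrm{lin}}$: on the ejection branch $s_{\mathrm{lin}}^-(\beta)=e^{-\tau_{\mathrm{lin}}}\mathbf{\Gamma}_s(\beta)=\mathcal{O}_{C^1}(\mu^{1-\gamma/2})$ and $u_{\mathrm{lin}}^-(\beta)=e^{\tau_{\mathrm{lin}}}\mathbf{\Gamma}_u(\beta)=\mu^{\gamma/2}(\cos\beta,\sin\beta)+\mathcal{O}_{C^1}(\mu^{1-\gamma/2})$, which is the $\mp=-$ case of \eqref{eqn: curve sulin}; the collision branch follows either from the same computation with $\mathrm{Re}(\tau)<0$ or from the reversibility \eqref{eqn: Symmetry in synodical polar coordinates centered at J}, which in the coordinates $(s,u)$ interchanges the two eigendirections of the saddle and reverses $\tau$ (this also displays $\mathbf{D}_{\mathcal J}^+(\mu)$ as the mirror image of $\mathbf{D}_{\mathcal J}^-(\mu)$). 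Analyticity in $\beta\in\mathds{T}_{\sigma_0}$ is inherited at every step, so $\mathbf{D}_{\mathcal J}^\mp(\mu)$ are complex-analytic curves in $\partial\mathbf{B}_\gamma$.

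The technical heart, which I expect to be the main obstacle, is the error control in the two middle steps: the linearized flow amplifies the unstable component by $e^{\tau_0}\sim\mu^{-(1-\gamma)/2}$, so the $\mathcal{O}_{C^1}(\mu^{5/2})$ error in the parameterization $\mathbf{\Gamma}$ of $\mathcal J_h$ and the order-$6$ remainder in $W$ get magnified, and one must verify they remain below $\mu^\gamma$ with exactly the margin recorded in \eqref{eqn: Time lineal Levi-Civita}, uniformly in $\beta\in\mathds{T}_{\sigma_0}$ and in the energy parameter $\xi$. The conservation of $\bil{s_{\mathrm{lin}}}{u_{\mathrm{lin}}}$ along the linearized flow is precisely what keeps the expansion of $\bil{s_{\mathrm{lin}}+u_{\mathrm{lin}}}{s_{\mathrm{lin}}+u_{\mathrm{lin}}}$ fully explicit and the estimate tractable, while complexifying $\beta$ on $\mathds{T}_{\sigma_0}$ is what lets one trade sup bounds for $C^1$ (indeed $C^k$) bounds via Cauchy's estimates.
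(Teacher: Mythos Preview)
Your proposal is correct and follows the same approach as the paper: integrate the linearized system $s'=-s$, $u'=u$ from the initial condition $\mathbf{\Gamma}(\beta)$, substitute into the defining equation of $\partial\mathbf{B}_\gamma$, and read off $\tau_{\mathrm{lin}}(\beta)$ via the $\sinh^2$ identity. The paper's own proof is a two-line remark that the result ``is a direct consequence of the integration of the linear part,'' simply displaying the formulas $s_{\mathrm{lin}}(\tau,\beta)=e^{-\tau}\mathbf{\Gamma}_s(\beta)$, $u_{\mathrm{lin}}(\tau,\beta)=e^{\tau}\mathbf{\Gamma}_u(\beta)$; you have spelled out in detail (the conservation of $\bil{s_{\mathrm{lin}}}{u_{\mathrm{lin}}}$, the implicit function argument, the Cauchy estimates for the $C^1$ bounds) what the paper leaves implicit.
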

\begin{proof}
    The proof of this lemma is a direct consequence of the integration of the linear part of system \eqref{eqn: Eq motion in coordinates (s,u)}, whose trajectory (with initial condition at a point in \eqref{eqn: Collision in coord (s,u)}) is of the form
    \begin{equation}\label{eqn: slin ulin}
    \begin{aligned}
    s_{\mathrm{lin}}(\tau,\beta) &= \left(-\frac{\xi\mu^{\frac 1 2}}{\sqrt 2}\left(\cos \beta, \sin\beta\right) + \mathcal{O}_{C^1}\left(\mu^{\frac 5 2}\right)\right)e^{-\tau},\\
    u_{\mathrm{lin}}(\tau,\beta) &=\left(\frac{\xi\mu^{\frac 1 2}}{\sqrt 2}\left(\cos \beta, \sin\beta\right) + \mathcal{O}_{C^1}\left(\mu^{\frac 5 2}\right)\right)e^{\tau}.
    \end{aligned}
    \end{equation}
\end{proof}
The next proposition shows that if one considers the full system \eqref{eqn: Eq motion in coordinates (s,u)}, the same is true up to a small error.
\begin{proposition}\label{prop: general ECO-orbits intersect section (z,w)}
Fix a closed interval $I \subset \left(-\sqrt{2}+\frac 3 2, \sqrt{2}+ \frac 3 2\right)$, $\xi > 0$ with $1/(2\xi^2)\in I$,  $\sigma_0 \in (0,1)$ and $\gamma \in \left(\frac{3}{11}, 1\right)$. Denote by $(s_\mu(\tau,\beta),u_\mu(\tau,\beta))$ the ejection (defined for $\mathrm{Re}(\tau) > 0$) and collision (defined for $\mathrm{Re}(\tau) < 0$) trajectories of \eqref{eqn: Eq motion in coordinates (s,u)} such that $(s_\mu(0,\beta),u_\mu(0,\beta)) = \mathbf \Gamma(\beta) \in \mathcal J_h$ (see \eqref{eqn: Collision in coord (s,u)}) with $\beta \in \mathds{T}_{\sigma_0/2}$ in \eqref{eqn: complexdomainbeta}. Then there exists $\mu_0 > 0$ such that, for $0<\mu<\mu_0$, there exists a time $T(\beta)$ satisfying
\begin{equation}\label{eqn: Estimate tilde T}
T(\beta) = \tau_{\mathrm{lin}}(\beta) + \mathcal{O}_{C^1}\left(\mu^{\frac{11\gamma-3}{8}}\right),
\end{equation}
such that $(s_\mu(\pm T(\beta),\beta), u_\mu(\pm T(\beta),\beta)) \in \partial \mathbf B_\gamma$ in \eqref
{eqn: neighborhood of collision delta mu (s,u)}, forming two complex-analytic curves $\mathbf  \Lambda_{\mathcal J}^\mp(\mu)\subset \partial \mathbf B_\gamma$ of the form
\begin{equation}\label{eqn: Intersect eco (s,u)-Sigma}
\begin{aligned}
\mathbf  \Lambda_{\mathcal J}^\mp(\mu) = \left\{\left(s^\mp(\beta), u^\mp(\beta)\right)\colon \beta \in \mathds{T}_{\sigma_0/2} \right\}\subset \partial \mathbf B_\gamma,
\end{aligned}
\end{equation}
where
\begin{equation}\label{eqn: sb ub}
    (s^\mp(\beta),u^\mp(\beta)) = (s_{\mathrm{lin}}^\mp(\beta), u_{\mathrm{lin}}^\mp(\beta)) + \mathcal{O}_{C^1}\left(\mu^{\frac{3(5\gamma-1)}{8}}, \mu^{1-\frac{\gamma }{2}}\right).
\end{equation}
%
\end{proposition}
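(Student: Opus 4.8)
The plan is to bootstrap from the linear solution in Lemma~\ref{lemma: Linear ECO-orbits intersect section (z,w)} to the full nonlinear system \eqref{eqn: Eq motion in coordinates (s,u)} via a fixed-point argument, tracking how the error accumulates along the flight from $\mathcal J_h$ to $\partial\mathbf B_\gamma$. First I would write the trajectory of the full system as $s_\mu = s_{\mathrm{lin}} + \Delta s$, $u_\mu = u_{\mathrm{lin}} + \Delta u$, where $\mathbf \Gamma(\beta)$ is the common initial condition at $\tau=0$, and convert the equations into integral form using the variation-of-constants formula: $\Delta s(\tau) = -\int_0^\tau e^{-(\tau-\sigma)}\,[s_\mu F^s(s_\mu,u_\mu) - (\text{analogous zero term})]\,d\sigma$ and similarly for $\Delta u$ with growing exponential $e^{\tau-\sigma}$. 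Since $F^*(s,u)=\mathcal O_2(s,u)$ and on $\mathbf B_\gamma$ one has $|s_\mu|,|u_\mu|\lesssim \mu^{\gamma/2}$, the nonlinearity is of size $\mathcal O(\mu^{3\gamma/2})$ pointwise; the dangerous direction is $u$, whose integral picks up the factor $e^{\tau}$ over a flight time $\tau_{\mathrm{lin}}\sim \log(\mu^{(\gamma-1)/2})$, i.e. $e^{\tau_{\mathrm{lin}}}\sim \mu^{(\gamma-1)/2}$. The choice $\gamma>3/11$ is precisely what makes the resulting geometric series converge and keeps the error $\mu^{3(5\gamma-1)/8}$ genuinely small (one checks $3(5\gamma-1)/8>0$ iff $\gamma>1/5$, and the sharper threshold $3/11$ comes from balancing against the $\mathcal O_6$ terms in \eqref{eqn: energy in coordinates (s,u)} and the $\mu$-dependence of $\xi$). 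So I would set up the Banach space of complex-analytic curves $\beta\mapsto(\Delta s(\tau,\beta),\Delta u(\tau,\beta))$ on $\mathds T_{\sigma_0/2}$ with a weighted sup norm (weight $e^{-\tau}$ on the $u$-component to absorb the exponential growth, as in Moser-type arguments), show the integral operator is a contraction on a ball of radius $\sim\mu^{3(5\gamma-1)/8}$, and conclude existence and uniqueness of $(s_\mu,u_\mu)$ with the stated bound \eqref{eqn: sb ub}.

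Next I would determine the exit time $T(\beta)$. The exit condition is $W(s_\mu(T),u_\mu(T))=\mu^\gamma$ from \eqref{eqn: neighborhood of collision delta mu (s,u)}. Writing $W(s,u)=(s_1+u_1)^2+(s_2+u_2)^2+\mathcal O_6(s,u)$ and using that along the ejection orbit $s_\mu$ is exponentially small while $u_\mu\approx \frac{\xi\mu^{1/2}}{\sqrt2}(\cos\beta,\sin\beta)e^\tau + (\text{small})$, the leading part of $W$ is $\tfrac12\xi^2\mu e^{2\tau}$, so solving $\tfrac12\xi^2\mu e^{2\tau}=\mu^\gamma$ gives exactly $\tau_{\mathrm{lin}}(\beta)=\log(\sqrt2\,\mu^{(\gamma-1)/2}/\xi)$ plus corrections. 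The correction to $T$ is then obtained by the implicit function theorem applied to $\Phi(T,\beta):=W(s_\mu(T,\beta),u_\mu(T,\beta))-\mu^\gamma$: one has $\partial_T\Phi \approx \xi^2\mu e^{2\tau}\cdot 2 \sim \mu^\gamma$ (nondegenerate), and $\Phi(\tau_{\mathrm{lin}},\beta)$ is of size governed by the nonlinear error $\mathcal O(\mu^{3(5\gamma-1)/8})$ times the weight, which after dividing by $\partial_T\Phi$ yields the estimate $T(\beta)=\tau_{\mathrm{lin}}(\beta)+\mathcal O_{C^1}(\mu^{(11\gamma-3)/8})$ in \eqref{eqn: Estimate tilde T}. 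The $C^1$ dependence on $\beta$ follows from Cauchy estimates on the strip $\mathds T_{\sigma_0/2}$, having done all the analysis on the slightly larger strip $\mathds T_{\sigma_0}$ used for the linear part, or by shrinking the strip once.

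Finally, evaluating $(s_\mu,u_\mu)$ at $\tau=\pm T(\beta)$ and substituting the expansion of $T$ gives the curves $\mathbf\Lambda_{\mathcal J}^\mp(\mu)$: $s^\mp(\beta)=s_\mu(\pm T(\beta),\beta)$, $u^\mp(\beta)=u_\mu(\pm T(\beta),\beta)$, and expanding $e^{\pm T(\beta)}=e^{\pm\tau_{\mathrm{lin}}(\beta)}(1+\mathcal O(\mu^{(11\gamma-3)/8}))$ one recovers $s_{\mathrm{lin}}^\mp(\beta),u_{\mathrm{lin}}^\mp(\beta)$ from \eqref{eqn: curve sulin} up to the error $\mathcal O_{C^1}(\mu^{3(5\gamma-1)/8},\mu^{1-\gamma/2})$ claimed in \eqref{eqn: sb ub} — the first error slot being the propagated nonlinear error and the second being already present in the linear curve. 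I expect the main obstacle to be the bookkeeping of the competing small quantities: the nonlinearity size $\mu^{3\gamma/2}$, the exponential amplification $\mu^{(\gamma-1)/2}$ along the unstable direction, and the $\mathcal O_6$ and $\mathcal O(\mu)$-in-$\xi$ corrections all have to be balanced to justify both the restriction $\gamma\in(3/11,1)$ and the precise exponents $(11\gamma-3)/8$ and $3(5\gamma-1)/8$; setting up a weighted norm that simultaneously closes the contraction and keeps the exit-time correction under control is the delicate point, and is exactly where the argument of \cite{MR1335057} (and the Moser-type technology of \cite{moser2001stable}) is needed.
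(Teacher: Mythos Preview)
Your plan is essentially the paper's proof (Appendix~\ref{appendix: Proof of proposition 3.2}): variation of constants $s=e^{-\tau}(s_0+\tilde s)$, $u=e^{\tau}(u_0+\tilde u)$, a contraction for $(\tilde s,\tilde u)$ in sup norm on a complex $\tau$-strip, then a Newton-type fixed point for the exit time $T=\tau_{\mathrm{lin}}+z$, and finally Cauchy estimates in $\beta$ on the shrunken strip $\mathds T_{\sigma_0/2}$ for the $C^1$ bounds. One correction worth making before you implement: the threshold $\gamma>3/11$ does \emph{not} come from the $\mathcal O_6$ remainder in \eqref{eqn: energy in coordinates (s,u)} but from the need to run the first fixed point on a strip of length $\eta\,\tau_{\mathrm{lin}}$ with some $\eta>1$ (so that the exit-time argument has room to move around $\tau_{\mathrm{lin}}$); the Lipschitz constant of the trajectory operator is then $C\mu^{1-\eta+\eta\gamma}$, and the paper's choice $\eta=11/8$ gives exactly $\gamma>1-1/\eta=3/11$ and the exponent $(11\gamma-3)/8$ that reappears as $|T-\tau_{\mathrm{lin}}|$.
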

The proof of this proposition is done in Appendix \ref{appendix: Proof of proposition 3.2}. The following proposition translates the curves $\mathbf \Lambda_\mathcal J^\mp(\mu)$ to polar coordinates $(r,\theta,R,\Theta)$.
\begin{proposition}\label{prop: Parameterization of the ECO orbits in (r,theta,R,Theta)}
Fix a closed interval $I \subset \left(-\sqrt{2}+ \frac 3 2, \sqrt{2}+ \frac 3 2\right)$,  $\xi > 0$ with $1/(2\xi^2) \in I$ and $\gamma \in \left(\frac{3}{11}, 1\right)$. Then there exists $\mu_0 > 0$ such that, for $0<\mu<\mu_0$, the curves $\mathbf \Lambda_{\mathcal J}^\mp (\mu)$ defined in \eqref{eqn: Intersect eco (s,u)-Sigma} for $\beta\in\mathds{T}$ are written, in coordinates $(r,\theta,R,\Theta)$, as graphs of the form
\begin{equation}\label{eqn: Curves ECO-section}
    \begin{aligned}
       \Lambda_{\mathcal J}^\mp(\mu) =& \left\{\left(r,\theta, R_{\mathcal J}^\mp(\theta), \Theta_{\mathcal J}^\mp(\theta)\right)\colon r = \mu^\gamma, \theta \in \mathds{T}\right\} \subset \Sigma_\gamma,
    \end{aligned}
\end{equation}
where $\Sigma_\gamma$ is defined in \eqref{eqn: Section Sigma (r,theta,R,Theta)} and both $R_{\mathcal J}^\mp$, $\Theta_{\mathcal J}^\mp$ are real-analytic functions of the form 
\begin{equation}\label{eqn: (R,Theta) in curves ECO-section}
    \begin{aligned}
        R_{\mathcal J}^\mp(\theta)= \pm \xi^{-1} + \mathcal{O}_{C^1}\left(\mu^{\frac{11\gamma-3}{8}},  \mu^{1-\frac \gamma 2}\right),\quad \Theta_{\mathcal J}^\mp(\theta) = \mathcal{O}_{C^1}\left(\mu^{\frac{19\gamma-3}{8}}, \mu\right).
    \end{aligned}
\end{equation}
\end{proposition}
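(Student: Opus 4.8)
The plan is to obtain $\Lambda_{\mathcal J}^\mp(\mu)$ as the image of $\mathbf\Lambda_{\mathcal J}^\mp(\mu)$ under the composition of coordinate changes $\Upsilon^{-1}\circ\psi\circ\Xi^{-1}$, and then to read off the asymptotics in $\mu$ by tracking how each factor acts on the estimates \eqref{eqn: sb ub}. First I would invert $\Xi$ near the origin: from $s=(z-w)/\sqrt2+\mathcal O_5(z,w)$, $u=(z+w)/\sqrt2+\mathcal O_5(z,w)$ one gets $z=(s+u)/\sqrt2+\mathcal O_5(s,u)$, $w=(u-s)/\sqrt2+\mathcal O_5(s,u)$. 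Plugging in \eqref{eqn: curve sulin}–\eqref{eqn: sb ub}, the leading term of $s^\mp+u^\mp$ on $\mathbf\Lambda_{\mathcal J}^\mp$ is (for the ejection curve, superscript $-$) $\mu^{\gamma/2}(\cos\beta,\sin\beta)$ and for the collision curve $(+)$ it is $-\mu^{\gamma/2}(\cos\beta,\sin\beta)$, while $u^\mp-s^\mp$ has leading term $\mp$ the same, i.e. $|z|=\mu^{\gamma/2}/\sqrt2+\dots$ and $|w|=\mu^{\gamma/2}/\sqrt2+\dots$, consistent with the definition \eqref{eqn: neighborhood of collision (z,w)} of $\partial\mathrm B_\gamma$. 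The error terms are $\mathcal O_{C^1}(\mu^{3(5\gamma-1)/8},\mu^{1-\gamma/2})$.

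Next I would apply $\psi$ in \eqref{eqn: Change of coordinates Levi-Civita}: $q_1=2(z_1^2-z_2^2)$, $q_2=4z_1z_2$, so $|q|=2|z|^2=\mu^\gamma+\dots$, confirming that the curves land on $r=\mu^\gamma$ as required for $\Sigma_\gamma$, and the polar angle $\theta$ of $q$ is (twice) the angle of $z$, which to leading order is the angle $\beta$ of $(z_1,z_2)$; so $\theta\mapsto\beta$ is, to leading order, a diffeomorphism $\mathds T\to\mathds T$ (a double cover collapsed by the energy constraint — one must check that the restriction to $L_\mu^{-1}(0)$ makes this single-valued, or simply reparameterize the curve by $\theta$). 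For the momenta, $\psi$ gives $(p_1,p_2)=\big(\tfrac{w_1z_1-w_2z_2}{\xi|z|^2},\tfrac{w_1z_2+w_2z_1}{\xi|z|^2}\big)$; with $|z|^2\sim\mu^\gamma/2$, $|w|\sim\mu^{\gamma/2}/\sqrt2$, and $w\approx\pm z$ (ejection $w\approx z$ gives $|p|\sim\xi^{-1}$ and $p$ roughly radial; collision $w\approx-z$ likewise) one gets $|p|=\xi^{-1}+\dots$. Then passing to rotating polar coordinates centered at Jupiter via $\Upsilon$ in \eqref{eqn: (q,p) - (r,theta,R,Theta)} — which is just the change \eqref{eqn: Change from sidereal cartesian to sidereal polar centered at CM} adapted plus the rotation — the radial velocity $R$ picks up the radial component of $p$, giving $R_{\mathcal J}^\mp(\theta)=\pm\xi^{-1}+\dots$, and $\Theta$ picks up the tangential component, which vanishes at leading order because $p\parallel q$ there; so $\Theta_{\mathcal J}^\mp(\theta)$ is higher order. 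One then uses the energy relation $\mathcal H_\mu=h$ in \eqref{eqn: Hamiltonian in polar coordinates centered at J} at $r=\mu^\gamma$ to pin down the relation among $R,\Theta$ and to see that, given $\theta$, the pair $(R_{\mathcal J}^\mp,\Theta_{\mathcal J}^\mp)$ is determined — i.e. the curves really are graphs over $\theta$.

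To get the precise error exponents $\mu^{(11\gamma-3)/8}$ and $\mu^{1-\gamma/2}$ in $R_{\mathcal J}^\mp$ and $\mu^{(19\gamma-3)/8}$, $\mu$ in $\Theta_{\mathcal J}^\mp$, I would carefully compose the error bounds: the dominant $\mu^{3(5\gamma-1)/8}$ error in $(s,u)$ propagates to a relative error $\mu^{3(5\gamma-1)/8-\gamma/2}=\mu^{(11\gamma-3)/8}$ after dividing by $|z|^2\sim\mu^\gamma$ in the formula for $p$ (an extra factor $\mu^{-\gamma/2}$ comes from $1/|z|$ times the linear size $\mu^{\gamma/2}$); the $\mathcal O(\mu^{1-\gamma/2})$ piece in \eqref{eqn: sb ub} similarly contributes $\mu^{1-\gamma/2}$ after the same scaling, and the geometric term $\xi^2|z|^2(z_1w_2-z_2w_1)\sim\mu^{2\gamma}$ in $L_\mu$ plus the $\mathcal O_3(q)\sim\mu^{3\gamma/2}$ correction feed the smallest exponents. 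For $\Theta$, the tangential component of $p$ is zero to leading order, so one gains an extra $\mu^{\gamma/2}$ (the size of the angular deviation of $w$ from $\pm z$, governed by the $\xi^2|z|^2(z_1w_2-z_2w_1)$ coupling in \eqref{eqn: Levi-Civita Hamiltonian in coord (z,w)}, which is $\mathcal O(\mu^\gamma)$ relative), turning $\mu^{(11\gamma-3)/8}$ into $\mu^{(11\gamma-3)/8+\gamma}=\mu^{(19\gamma-3)/8}$ and $\mu^{1-\gamma/2}$ into a full $\mathcal O(\mu)$; the constraint $\gamma>3/11$ is exactly what keeps $(11\gamma-3)/8>0$. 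The $C^1$ (in $\theta$, or in $\beta$) estimates follow because everything in sight — $\Xi^{-1}$, $\psi$ away from $z=0$ after using $|z|\sim\mu^{\gamma/2}$, $\Upsilon^{-1}$, and the implicit-function solve for the energy constraint — is analytic with derivatives bounded uniformly in $\mu$ on the relevant domains, so Cauchy estimates on the complex strip $\mathds T_{\sigma_0/2}$ upgrade the $C^0$ bounds on $(s^\mp,u^\mp)$ to $C^1$ bounds on the composed maps. The main obstacle, and the step deserving the most care, is the division by $|z|^2\to 0$ in $\psi$: because $|z|\sim\mu^{\gamma/2}$ is small, errors in $(z,w)$ are amplified by $\mu^{-\gamma}$, so one must be scrupulous about which error term in \eqref{eqn: sb ub} dominates after this amplification and about showing the amplified error is still $o(1)$ — this is precisely where $\gamma>3/11$ enters — and about verifying that the map $\theta\leftrightarrow\beta$ is a genuine (single-valued, $C^1$-close-to-identity) reparameterization once restricted to the zero energy level, so that "graph over $\theta$" is legitimate.
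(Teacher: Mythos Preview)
Your proposal is correct and follows essentially the same route as the paper: compose $\Upsilon^{-1}\circ\psi\circ\Xi^{-1}$, compute the leading terms $q^\mp(\beta)=\mu^\gamma(\cos 2\beta,\sin 2\beta)+\dots$ and $p^\mp(\beta)=\pm\xi^{-1}(\cos 2\beta,\sin 2\beta)+\dots$, then invert $\theta^\mp(\beta)=2\beta+\dots$ to write the curves as graphs over $\theta$. The paper's proof is in fact terser than yours---it simply records the outcome of each coordinate change without the error-propagation bookkeeping you sketch---so your explanation of how $\mu^{3(5\gamma-1)/8}$ becomes $\mu^{(11\gamma-3)/8}$ after dividing by $|z|^2\sim\mu^\gamma$, and how the parallelism $p\parallel q$ buys an extra $\mu^\gamma$ in $\Theta$, is a welcome expansion. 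Your worry about the double cover is harmless: since $\theta=2\beta+\mathcal O_{C^1}(\mu^{(11\gamma-3)/8},\mu^{1-\gamma})$, as $\beta$ runs once over $\mathds T$ the angle $\theta$ runs twice over $\mathds T$, and the two sheets give the same $(q,p)$ (the Levi-Civita map identifies $\beta$ with $\beta+\pi$), so the graph over $\theta\in\mathds T$ is well defined.
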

\begin{proof}
We apply the transformations $\Xi^{-1}$ and $\psi$ in \eqref{eqn: straightening (z,w)-(s,u)} and \eqref{eqn: Change of coordinates Levi-Civita} respectively to translate the curves $\mathbf \Lambda_{\mathcal J}^\mp(\mu)$ in \eqref{eqn: Intersect eco (s,u)-Sigma} into coordinates $(q, p)$ yielding
\begin{equation}\label{eqn: ECO-points (q,p)}
    \begin{aligned}
        q^\mp(\beta) &= \mu^\gamma \left(\cos(2\beta),\sin(2\beta)\right) + \mathcal{O}_{C^1}\left(\mu^{\frac{19\gamma-3}{8}}, \mu\right),\\
        p^\mp(\beta) &= \pm \xi^{-1}\left(\cos(2\beta),\sin(2\beta)\right) + \mathcal{O}_{C^1}\left(\mu^{\frac{11\gamma-3}{8}},  \mu^{1-\frac \gamma 2}\right),
    \end{aligned}
\end{equation} 
for $\beta \in \mathds{T}$. Note that the points $q^\mp(\beta)$ satisfy $|q^\mp(\beta)| = \mu^\gamma $ since $\mathbf \Lambda_{\mathcal J}^\mp(\mu) \subset \partial \mathbf B_\gamma$ (see \eqref{eqn: neighborhood of collision delta mu (s,u)}).

Then, we apply the transformation $\Upsilon^{-1}$ in \eqref{eqn: (q,p) - (r,theta,R,Theta)} to obtain the components $(r,\theta)$ as $r=\mu^\gamma$ and
\begin{equation}\label{eqn: (r,theta) components ECO-curves}
    \begin{aligned}
        \theta^\mp(\beta) = \arctan\left(\frac{q_2^\mp(\beta)}{q_1^\mp(\beta)}\right) = 2\beta + \mathcal{O}_{C^1}\left(\mu^{\frac{11\gamma-3}{8}}, \mu^{1-\gamma}\right),
    \end{aligned}
\end{equation}
which can be inverted as 
\[\beta(\theta^\mp) = \frac{\theta^\mp}{2}  +\mathcal{O}_{C^1}\left(\mu^{\frac{11\gamma-3}{8}}, \mu^{1-\gamma}\right).\]
Relying on the change $\Upsilon^{-1}$ in \eqref{eqn: (q,p) - (r,theta,R,Theta)} we write the momenta $(R,\Theta)$ in terms of $\theta \in \mathds{T}$, leading to \eqref{eqn: (R,Theta) in curves ECO-section} and completing the proof.
\end{proof}

\begin{remark}\label{remark: Angle ECO-orbits}
We make the following remarks regarding this coordinate transformation and notation.
\begin{enumerate}
    \item The sign of the radial velocity $R$ of the curve $ \Lambda_{\mathcal{J}}^-(\mu)$ is positive, and negative for $ \Lambda_{\mathcal J}^+(\mu)$ (see \eqref{eqn: Curves ECO-section}). This prompts us to define the following sections
    \begin{equation}\label{eqn: Transverse sections}
        \begin{aligned}
            \Sigma^>_\gamma =& \left\{(r,\theta,R,\Theta)\colon r= \mu^\gamma, \mathcal{H}_\mu(\mu^\gamma, \theta,R, \Theta) = h, R > 0\right\} \subset \Sigma_\gamma,\\
            \Sigma^<_\gamma =& \left\{(r,\theta,R,\Theta)\colon r= \mu^\gamma, \mathcal{H}_\mu(\mu^\gamma, \theta,R, \Theta) = h, R  < 0\right\}\subset  \Sigma_\gamma,
        \end{aligned}
    \end{equation}
    where $\mathcal H_\mu$ is defined in \eqref{eqn: Hamiltonian in polar coordinates centered at J}. Then $\Lambda_{\mathcal{J}}^{-}(\mu) \subset  \Sigma_\gamma^>$ and $\Lambda_{\mathcal{J}}^{+}(\mu) \subset  \Sigma_\gamma^<$.
  
    \item Equation \eqref{eqn: (r,theta) components ECO-curves} relates the angles from the collision curve $ \Lambda_{\mathcal{J}}^+(\mu)$ with the ones from the ejection curve $ \Lambda_{\mathcal{J}}^-(\mu)$ as follows
    \begin{equation}\label{eqn: theta+-}
    \theta^+ = \theta^-  + \mathcal{O}_{C^1}\left(\mu^{\frac{11\gamma-3}{8}},\mu^{1-\gamma}\right).
    \end{equation}
\end{enumerate}
\end{remark}

\subsection{The transition map close to collision}

Let us decompose the boundary $\partial \mathbf B_\gamma = \boldsymbol{\Sigma}^- \cup \boldsymbol{\Sigma}^+$  (see \eqref{eqn: neighborhood of collision delta mu (s,u)}) where
\begin{equation}\label{eqn: bold sigma+-}
    \begin{aligned}
       \boldsymbol{\Sigma}^- = \{(s,u) \in \partial \mathbf{B}_\gamma \colon |s| \leq |u|\},\quad \boldsymbol{\Sigma}^+ = \{(s,u) \in \partial \mathbf{B}_\gamma \colon |s| \geq |u|\},
    \end{aligned}
\end{equation}
so that $\boldsymbol{\Lambda}_{\mathcal J}^-(\mu) \subset \boldsymbol{\Sigma}^-$ and $\boldsymbol{\Lambda}_{\mathcal J}^+(\mu) \subset \boldsymbol{\Sigma}^+$ (see \eqref{eqn: Intersect eco (s,u)-Sigma}). The following proposition defines a transition map from $\boldsymbol{\Sigma}^+$ to $\boldsymbol{\Sigma}^-$. It sends transverse curves to $\boldsymbol{\Lambda}_{\mathcal J}^+$ to transverse curves to $\boldsymbol{\Lambda}_{\mathcal J}^-$.
\begin{proposition}\label{prop: transition map close to collision}
    Fix $\beta_* \in \mathds{T}$, $\gamma \in \left(\frac{3}{11},1\right)$ and $\iota > 0$ small enough. There exists $\mu_0 > 0$ such that, for $0<\mu<\mu_0$, consider a curve $\boldsymbol{\gamma}_{\mathrm{in}} \subset \boldsymbol{\Sigma}^+$ of the form
    \[\boldsymbol{\gamma}_{\mathrm{in}}(\beta) = (s_{\mathrm{in}}(\beta),u_{\mathrm{in}}(\beta)),\quad \beta\in(\beta_*-\iota, \beta_*+\iota)\]
    where $s_{\mathrm{in}}(\beta),u_{\mathrm{in}}(\beta)$ are $C^1$-functions, which is transverse to $\mathbf{\Lambda}_{\mathcal J}^+(\mu) \subset \boldsymbol{\Sigma}^+$ at
    \[\mathbf{p}_+ = \boldsymbol{\gamma}_{\mathrm{in}}(\beta_*) = (s^+(\beta_*),u^+(\beta_*))\in \mathbf{\Lambda}^+_{\mathcal J}(\mu).
    \]
    Then, the flow associated to \eqref{eqn: Eq motion in coordinates (s,u)} induces a $C^1$ Poincar\'e map $\mathbf{f}\colon \boldsymbol{\Sigma}^+\to \boldsymbol{\Sigma}^-$ that maps the curve $\boldsymbol{\gamma}_{\mathrm{in}}$ to a curve $\boldsymbol{\gamma}_{\mathrm{out}}\subset \boldsymbol{\Sigma}^-$, parameterized as
    \begin{equation}\label{eqn: curve gammaout}
    \boldsymbol{\gamma}_{\mathrm{out}}(\beta) = \mathbf{f}(\boldsymbol{\gamma}_{\mathrm{in}}(\beta)) = \left(\frac{|u_{\mathrm{in}}(\beta)|}{|s_{\mathrm{in}}(\beta)|}\cdot s_{\mathrm{in}}(\beta), \frac{|s_{\mathrm{in}}(\beta)|}{|u_{\mathrm{in}}(\beta)|}\cdot u_{\mathrm{in}}(\beta)\right) + \mathcal{O}_{2}(|s_{\mathrm{in}}(\beta)|).
    \end{equation}
    Moreover, it is transverse to $\mathbf{\Lambda}_{\mathcal J}^-(\mu)\subset \boldsymbol{\Sigma}^-$ at
    \begin{equation}
        \mathbf{p}_- = \boldsymbol{\gamma}_{\mathrm{out}}(\beta_*)= \mathbf{f}(\mathbf p_+) =  (s^-(\beta_*),u^-(\beta_*)) \in \mathbf{\Lambda}_{\mathcal J}^-(\mu).
    \end{equation}

\end{proposition}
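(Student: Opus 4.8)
The plan is to work throughout in the straightened coordinates $(s,u)$ of Proposition~\ref{prop: Straightening invariant manifolds collision LC} and to exploit a structural feature of the normal form \eqref{eqn: Eq motion in coordinates (s,u)}: because $F^s,F^u$ are \emph{scalar} functions and $1+F^\ast>0$ near the origin, along any trajectory inside $\mathbf B$ the vector $s(\tau)$ remains a positive multiple of its initial value, and likewise for $u(\tau)$. Hence one may write $s(\tau)=|s_0|\,\hat s_0\,a(\tau)$ and $u(\tau)=|u_0|\,\hat u_0\,b(\tau)$, where $\hat s_0=s_0/|s_0|$ and $\hat u_0=u_0/|u_0|$ are \emph{constant} unit vectors and $a,b>0$ solve the scalar system $a'=-a(1+F^s)$, $b'=b(1+F^u)$, $a(0)=b(0)=1$. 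In particular $|s(\tau)|=|s_0|a(\tau)$, $|u(\tau)|=|u_0|b(\tau)$, and the flow inside the box reduces to a planar problem that is uniformly $C^1$-close to the linear saddle $a=e^{-\tau}$, $b=e^{\tau}$ (for which $ab\equiv1$).

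Next I would show that every orbit issued from a point $(s_0,u_0)\in\boldsymbol\Sigma^+$ with $|u_0|>0$ crosses $\boldsymbol\Sigma^-$ exactly once. Setting $\Phi(\tau)=W(s(\tau),u(\tau))$ with $W$ as in \eqref{eqn: neighborhood of collision delta mu (s,u)}, the identity $W(s,u)=|s+u|^2+\mathcal O_6(s,u)$ together with the equations of motion gives $\Phi'(\tau)=2\big(|u(\tau)|^2-|s(\tau)|^2\big)+\mathcal O_4$, while $\tfrac{d}{d\tau}\big(|u|^2-|s|^2\big)=2|u|^2(1+F^u)+2|s|^2(1+F^s)>0$; hence for $\mu$ small $\Phi'$ is strictly increasing in $\tau$, so $\Phi$ is strictly decreasing on $(0,\tau_m)$, strictly increasing on $(\tau_m,\infty)$ with $\Phi\to+\infty$ (since $b\to\infty$), where $\tau_m$ is the unique time at which $|s|=|u|$. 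Since $\Phi(0)=\mu^\gamma$ and $|s_0|\ge|u_0|$, there is a unique $\tau^\ast=\tau^\ast(s_0,u_0)>\tau_m$ with $\Phi(\tau^\ast)=\mu^\gamma$, and there $|s(\tau^\ast)|<|u(\tau^\ast)|$, so the exit point lies in $\boldsymbol\Sigma^-$. This defines $\mathbf f(s_0,u_0)=(s(\tau^\ast),u(\tau^\ast))$; it is $C^1$ by smooth dependence of solutions on initial data and the implicit function theorem, which applies because $\Phi'(\tau^\ast)>0$ is bounded away from zero. Moreover, near $\mathbf p_+$ (where $|s|\gg|u|$) the vector field is transverse to $\partial\mathbf B_\gamma$, its normal component being $\approx-2|s|^2\neq0$, and likewise near $\mathbf p_-$ (where $|u|\gg|s|$); hence $\mathbf f$ is a local $C^1$ diffeomorphism from a neighbourhood of $\mathbf p_+$ in $\boldsymbol\Sigma^+$ onto a neighbourhood of $\mathbf p_-$ in $\boldsymbol\Sigma^-$.

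To obtain the formula \eqref{eqn: curve gammaout} I would compare with the linear flow through two almost-conserved quantities: $a(\tau)b(\tau)=1+\mathcal O\!\big(\int_0^\tau(|s|^2+|u|^2)\big)$ and $\langle s(\tau),u(\tau)\rangle=\langle s_0,u_0\rangle\,a(\tau)b(\tau)$. Since $|s|$ decreases and $|u|$ increases, with $|u(\tau^\ast)|\lesssim|s_0|\lesssim\mu^{\gamma/2}$, on $[0,\tau^\ast]$ one has $|s|^2+|u|^2\lesssim\mu^{\gamma}$ and $\tau^\ast\simeq\log(|s_0|/|u_0|)\lesssim|\log\mu|$, so the remainder in $ab-1$ is $o(1)$ as $\mu\to0$. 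Equating $\Phi(\tau^\ast)=\Phi(0)=\mu^\gamma$ with $W=|s+u|^2+\mathcal O_6$ and inserting these estimates shows that $|s(\tau^\ast)|^2$ and $|u(\tau^\ast)|^2$ are, up to $o(|s_0|^2)$, the two roots of $X^2-(|s_0|^2+|u_0|^2)X+|s_0|^2|u_0|^2=0$, i.e.\ $\{|s_0|^2,|u_0|^2\}$; with $|s(\tau^\ast)|<|u(\tau^\ast)|$ and $|s_0|>|u_0|$ this forces $|s(\tau^\ast)|=|u_0|+o(|s_0|)$ and $|u(\tau^\ast)|=|s_0|+o(|s_0|)$. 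As $\hat s_0,\hat u_0$ are preserved exactly, $\mathbf f(s_0,u_0)=\big(\tfrac{|u_0|}{|s_0|}s_0,\ \tfrac{|s_0|}{|u_0|}u_0\big)$ plus a remainder controlled by $\int_0^{\tau^\ast}(|s|^2+|u|^2)$; evaluated at $(s_0,u_0)=\boldsymbol\gamma_{\mathrm{in}}(\beta)$ this is \eqref{eqn: curve gammaout}. The transversality assertion is then automatic: $\mathbf f$ is a local $C^1$ diffeomorphism, and $\mathbf f(\mathbf\Lambda_{\mathcal J}^+(\mu))=\mathbf\Lambda_{\mathcal J}^-(\mu)$ locally, because both curves are cut by $\partial\mathbf B_\gamma$ from the orbits through the collision circle $\mathcal J_h$ --- the point $(s_\mu(-T(\beta),\beta),u_\mu(-T(\beta),\beta))\in\mathbf\Lambda_{\mathcal J}^+(\mu)$ flows forward in time $2T(\beta)=\tau^\ast$ to $(s_\mu(T(\beta),\beta),u_\mu(T(\beta),\beta))\in\mathbf\Lambda_{\mathcal J}^-(\mu)$, cf.\ Proposition~\ref{prop: general ECO-orbits intersect section (z,w)}. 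Since diffeomorphisms preserve transverse intersections, $\boldsymbol\gamma_{\mathrm{out}}=\mathbf f(\boldsymbol\gamma_{\mathrm{in}})$ meets $\mathbf\Lambda_{\mathcal J}^-(\mu)=\mathbf f(\mathbf\Lambda_{\mathcal J}^+(\mu))$ transversally at $\mathbf p_-=\mathbf f(\mathbf p_+)$.

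The real work sits in two places. First, all estimates must be made uniform in the $\mu$-dependent geometry: $\partial\mathbf B_\gamma$ has size $\mu^{\gamma/2}$, the transition time grows like $|\log\mu|$, and $\mathbf\Lambda_{\mathcal J}^+(\mu)$ sits only at distance $\sim\mu^{1-\gamma/2}$ from $W^s=\{u=0\}$, so one must check (for $\iota$ small enough) that $\boldsymbol\gamma_{\mathrm{in}}$ stays in the domain of $\mathbf f$, i.e.\ away from $W^s$, and that the constants hidden in $F^s,F^u=\mathcal O_2$ neither destroy the monotonicity of $\Phi'$ nor the bound on $\int_0^{\tau^\ast}(|s|^2+|u|^2)$. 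Second, pinning down the precise order of the remainder in \eqref{eqn: curve gammaout} requires a careful Gronwall-type bookkeeping of the nonlinear corrections to $ab$ and to $\langle s,u\rangle$ over $[0,\tau^\ast]$, which is where the hypothesis $\gamma\in(3/11,1)$ --- matching the remainder exponents in Proposition~\ref{prop: general ECO-orbits intersect section (z,w)} --- enters. I expect this bookkeeping, rather than the geometric skeleton, to be the main obstacle.
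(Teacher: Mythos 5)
Your proposal is essentially correct but takes a genuinely different route from the paper. The paper's proof black-boxes the analytic content into Lemma \ref{lemma: transition map close to J} (quoted from \cite{MR1335057}), which gives the transition map between the sections $C_s(\varepsilon)$ and $C_u(\varepsilon)$ of \eqref{eqn: Cs and Cu}, and then obtains $\mathbf f$ by letting $\varepsilon=\varepsilon(\beta)=|s_{\mathrm{in}}(\beta)|$ vary along the curve and applying $f_{\varepsilon(\beta)}$ fiberwise. You instead give a self-contained Shilnikov-type analysis directly on $\partial\mathbf B_\gamma$: exact preservation of the directions $\hat s,\hat u$ coming from the radial form of \eqref{eqn: Eq motion in coordinates (s,u)}, convexity of $\tau\mapsto W(s(\tau),u(\tau))$ to get existence and uniqueness of the exit time with $\Phi'(\tau^*)>0$ (hence exit through $\boldsymbol{\Sigma}^-$), and the two almost-invariants $ab$ and $\bil{s}{u}$ to recover the exit moduli. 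This buys a cleaner treatment of two points the paper's proof glosses over: the discrepancy between the cited lemma's sections $\{|s|=\varepsilon\}$, $\{|u|=\varepsilon\}$ and the actual sections $\boldsymbol{\Sigma}^\pm\subset\{W=\mu^\gamma\}$, and the final transversality claim, which you obtain from ``$\mathbf f$ is a local $C^1$ diffeomorphism mapping $\mathbf\Lambda^+_{\mathcal J}(\mu)$ onto $\mathbf\Lambda^-_{\mathcal J}(\mu)$'' (consistent with Proposition \ref{prop: general ECO-orbits intersect section (z,w)}, since uniqueness of the exit time forces $\tau^*=2T(\beta)$ along $\mathbf\Lambda^+_{\mathcal J}(\mu)$). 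The costs are the deferred Gronwall bookkeeping needed for the $C^1$ version of the remainder in \eqref{eqn: curve gammaout}, plus one structural caveat: your first paragraph rests on reading $F^s,F^u$ in \eqref{eqn: Eq motion in coordinates (s,u)} as literally scalar. If the normal form of \cite{MR1335057} only straightens $\mathrm W^{s}_{\mu}(0)$ and $\mathrm W^{u}_{\mu}(0)$ while leaving matrix-valued $\mathcal O_2$ corrections, then $\hat s,\hat u$ are only approximately constant, and the exact identities $|s(\tau)|=|s_0|a(\tau)$ and $\bil{s(\tau)}{u(\tau)}=\bil{s_0}{u_0}\,a(\tau)b(\tau)$ acquire angular error terms that must be fed back into the Gronwall estimates; the presence of the $\mathcal O_{C^1}(\varepsilon^2)$ remainder in $f^u_{\varepsilon}=\varepsilon u/|u|+\mathcal O_{C^1}(\varepsilon^2)$ of Lemma \ref{lemma: transition map close to J} suggests this is indeed the situation in the reference. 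The geometric skeleton survives that weakening, but you should either justify the scalar reading of Proposition \ref{prop: Straightening invariant manifolds collision LC} or carry the angular estimates as well.
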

The proof of this proposition relies on the analysis carried out in \cite{MR1335057}, which is performed within the following neighborhood
\begin{equation}\label{eqn: neighborhood of collision (s,u)}
    \mathbf V_\varepsilon = \left\{(s,u)\in l_\mu^{-1}(0)\colon |s| \leq \varepsilon, |u|\leq \varepsilon\right\},
\end{equation}
where $\varepsilon > 0$ is considered small enough. The boundary $\partial \mathbf V_\varepsilon$ is the union of the submanifolds $C_s(\varepsilon)$ and $C_u(\varepsilon)$ defined as
\begin{equation}\label{eqn: Cs and Cu}
\begin{aligned}
    C_s(\varepsilon) = \left\{(s,u) \in l_\mu^{-1}(0) \colon |s|= \varepsilon,|u| \leq \varepsilon\right\},\quad
    C_u(\varepsilon) = \left\{(s,u)\in l_\mu^{-1}(0)\colon |s|\leq \varepsilon, |u| = \varepsilon\right\}.
\end{aligned}
\end{equation}
Using the results in \cite{MR1335057} ($\S$ 3.3.2, $\S$ 3.3.4, p. 232), the following lemma provides the transition map from $C_s$ to $C_u$. This transition map is a perturbation of the one given by the linearized part of the vector field in \eqref{eqn: Eq motion in coordinates (s,u)}.

\begin{lemma}\label{lemma: transition map close to J}
    There exists $\varepsilon_0 > 0$ such that, for any $0 < \varepsilon < \varepsilon_0$, the following $C^1$-transition map 
    \begin{equation}\label{eqn: transition map close to J}
    \begin{aligned}
        f_{\varepsilon}\colon C_s(\varepsilon) &\to C_u(\varepsilon)\\
        (s,u) &\mapsto \left(f^s_{\varepsilon}( s,u), f^u_{\varepsilon}( s,u)\right)
    \end{aligned}
    \end{equation}
    maps points $(s,u)\in C_s(\varepsilon)$ to points $\left (f^s_{\varepsilon}( s,u), f^u_{\varepsilon}( s,u)\right)\in C_u(\varepsilon)$ (see \eqref{eqn: Cs and Cu}) as follows
    \begin{equation*}
       f^s_{\varepsilon}(s,u) = \frac{| u|}{\varepsilon} s + \mathcal{O}_{C^1}(\varepsilon^2),\quad  f^u_{\varepsilon}(s,u)=\varepsilon\frac{ u}{| u|} + \mathcal{O}_{C^1}(\varepsilon^2).
    \end{equation*}
\end{lemma}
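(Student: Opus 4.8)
This lemma is a quantitative $C^{1}$ version of the Shilnikov / $\lambda$--lemma estimate for the passage of trajectories of \eqref{eqn: Eq motion in coordinates (s,u)} across a neighborhood of the hyperbolic saddle $(s,u)=(0,0)$, and it is essentially the content of \cite{MR1335057}, $\S\,3.3.2$ and $\S\,3.3.4$; the plan is to reconstruct it as follows. First I would rescale $(s,u)=\varepsilon(\sigma,v)$, which turns $\mathbf V_{\varepsilon}$ in \eqref{eqn: neighborhood of collision (s,u)} into the fixed bidisc $\{|\sigma|\le 1,\ |v|\le 1\}$ (intersected with the energy level) and, since $F^{s},F^{u}=\mathcal O_{2}(s,u)$ by \propref{prop: Straightening invariant manifolds collision LC}, turns \eqref{eqn: Eq motion in coordinates (s,u)} into $\sigma'=-\sigma\,(1+\widetilde F^{s})$, $v'=v\,(1+\widetilde F^{u})$ with $\widetilde F^{*}(\sigma,v)=F^{*}(\varepsilon\sigma,\varepsilon v)=\mathcal O(\varepsilon^{2})$ uniformly on the bidisc and uniformly in $\mu\in(0,\mu_{0})$. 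Thus the whole nonlinear contribution has size $\varepsilon^{2}$, and it suffices to show that the transition map from $\{|\sigma|=1,\ |v|\le 1\}$ to $\{|\sigma|\le 1,\ |v|=1\}$ of this $\varepsilon^{2}$--perturbation of the linear saddle flow $(\sigma_{0}e^{-\tau},v_{0}e^{\tau})$ differs from the linear transition map by $\mathcal O_{C^{1}}(\varepsilon^{2})$.

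For initial data $(\sigma_{0},v_{0})$ with $|\sigma_{0}|=1$ and $0<|v_{0}|\le 1$, I would write the forward solution as a perturbation of the linear flow, $\sigma(\tau)=\sigma_{0}e^{-\tau}\bigl(1+a(\tau)\bigr)$, $v(\tau)=v_{0}e^{\tau}\bigl(1+b(\tau)\bigr)$ with $a(0)=b(0)=0$, which by variation of constants in \eqref{eqn: Eq motion in coordinates (s,u)} becomes a coupled integral system for $(a,b)$ along the trajectory. The trajectory stays in the bidisc on the relevant time interval $[0,T]$ because $\sigma$ contracts and $v$ expands, so the system is solved by Picard iteration, and the decisive estimate is that $\int_{0}^{T}|\widetilde F^{*}(\sigma(\tau),v(\tau))|\,d\tau=\mathcal O(\varepsilon^{2})$ \emph{uniformly in} $T$: one has $|\widetilde F^{*}|\le C\varepsilon^{2}(|\sigma|^{2}+|v|^{2})$, the $|\sigma|^{2}$--part integrates to $\mathcal O(1)$ by the contraction, and the a priori dangerous part $|v|^{2}\sim |v_{0}|^{2}e^{2\tau}$ also integrates to $\mathcal O(1)$ precisely because the trajectory is stopped at $|v(T)|=1$, i.e.\ $|v_{0}|e^{T}=\mathcal O(1)$. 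Hence the fixed point satisfies $(a,b)=\mathcal O(\varepsilon^{2})$ uniformly in $(\sigma_{0},v_{0})$; the exit time $T=T(\sigma_{0},v_{0})$ defined by $|v(T)|=1$ then follows from the implicit function theorem as $T=-\log|v_{0}|+\mathcal O(\varepsilon^{2})$, and substituting gives $\sigma(T)=\sigma_{0}|v_{0}|+\mathcal O(\varepsilon^{2})$ and $v(T)=v_{0}/|v_{0}|+\mathcal O(\varepsilon^{2})$. Undoing the rescaling yields exactly $f^{s}_{\varepsilon}(s,u)=\frac{|u|}{\varepsilon}\,s+\mathcal O(\varepsilon^{2})$ and $f^{u}_{\varepsilon}(s,u)=\varepsilon\frac{u}{|u|}+\mathcal O(\varepsilon^{2})$, for the transition map from $C_{s}(\varepsilon)$ to $C_{u}(\varepsilon)$ in \eqref{eqn: Cs and Cu}.

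For the $C^{1}$ part I would differentiate the fixed-point equations with respect to $(\sigma_{0},v_{0})$ — equivalently, solve the variational equation along each trajectory — and run the same weighted-norm contraction for the first derivatives, together with the estimate on $\partial T$ obtained by differentiating $|v(T)|=1$; this gives $\mathcal O(\varepsilon^{2})$ bounds on the derivatives of $f_{\varepsilon}$ away from the local stable manifold $\{u=0\}$ (on which the transition map is of course undefined). The step I expect to be the main obstacle is precisely this uniformity: the transition time $T\to\infty$ as $|v_{0}|\to 0$, so neither the $C^{0}$ nor the $C^{1}$ estimate can come from a naive Gronwall bound on $[0,T]$ — one must exploit the exponential dichotomy (contraction in $\sigma$, expansion in $v$) through the right weighted norms on $[0,T]$, which is the heart of the Shilnikov-lemma mechanism recalled from \cite{MR1335057}. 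A routine final check is that the whole construction is compatible with the first integral $l_{\mu}$ in \eqref{eqn: energy in coordinates (s,u)}, so that $C_{s}(\varepsilon),C_{u}(\varepsilon)\subset l_{\mu}^{-1}(0)$ and the map $f_{\varepsilon}$ are well defined on the energy level; this is automatic since $l_{\mu}$ is preserved by the flow of \eqref{eqn: Eq motion in coordinates (s,u)}.
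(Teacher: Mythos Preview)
Your reconstruction is correct and is exactly the Shilnikov-type argument that the cited reference \cite{MR1335057} carries out; note that the paper itself does not give an independent proof of this lemma but simply invokes \cite{MR1335057}, \S3.3.2 and \S3.3.4, so there is no ``paper's own proof'' to compare against beyond that citation. Your rescaling to the unit bidisc, the key observation that $\int_0^T|\widetilde F^*|\,d\tau=\mathcal O(\varepsilon^2)$ uniformly in $T$ because the orbit is stopped at $|v(T)|=1$, and the use of weighted norms to get the $C^1$ control uniformly as $|v_0|\to 0$ are precisely the ingredients of that reference, so your proposal is faithful to it.
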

The rest of the proof consists on translating this result to the sections $\boldsymbol{\Sigma}^+$ and $\boldsymbol{\Sigma}^-$. Note that $\boldsymbol{\gamma}_{\mathrm{in}}\subset \boldsymbol{\Sigma}^+ \subset \partial\mathbf{B}_\gamma$ (see \eqref{eqn: neighborhood of collision delta mu (s,u)} and \eqref{eqn: bold sigma+-}), so $|s_{\mathrm{in}}(\beta)| \geq |u_{\mathrm{in}}(\beta)|$ with $|s_{\mathrm{in}}(\beta)| = \mathcal{O}(\mu^{\frac{\gamma}{2}})$. Hence, if we consider $\varepsilon(\beta)= |s_{\mathrm{in}}(\beta)|$, then $\boldsymbol{\gamma}_{\mathrm{in}}(\beta)\in C_s(\varepsilon(\beta))$ for $\beta \in B_{\iota}(\beta_*)$. Namely, the curve $\boldsymbol{\gamma}_{\mathrm{in}}$ satisfies
\[\boldsymbol{\gamma}_{\mathrm{in}} \subset \underset{\beta \in B_{\iota}(\beta_*)}{\bigcup}C_s(\varepsilon(\beta)).\]
Then, we apply the transition map $f_\varepsilon$ from Lemma \ref{lemma: transition map close to J} to the curve $\boldsymbol{\gamma}_{\mathrm{in}}$, which leads to the curve 
\[\boldsymbol{\gamma}_{\mathrm{out}} = \underset{\beta\in B_\iota(\beta_*)}{\bigcup} f_{\varepsilon(\beta)}(\boldsymbol{\gamma}_{\mathrm{in}}(\beta)) \subset \boldsymbol{\Sigma}^-,\]
yielding \eqref{eqn: curve gammaout} and completing the proof.

\section{Local analysis close to $\mathcal S$}\label{sec: local analysis close to collision with S}
To analyze the dynamics close to the collision set $\mathcal S$ in \eqref{eqn: Collision sets}, we follow the study  performed in \cite{guardia2024oscillatorymotionsparabolicorbits}. Consider the change to rotating polar coordinates centered at $\mathcal S$
%
\begin{equation}\label{eqn: Change from synodical cartesian to synodical polar centered at P1 J}
    \begin{aligned}
        \hat q_1 &= -\mu + \rS\cos\thetaS, &\quad\hat p_1 &= \RS\cos\thetaS - \frac{\ThetaS}{\rS}\sin\thetaS,\\
        \hat q_2&= \rS\sin\thetaS, &\quad \hat p_2 &= \RS\sin\thetaS + \frac{\ThetaS}{\rS}\cos\thetaS,
    \end{aligned}
\end{equation}
where $(\hat q, \hat p)$ are the cartesian coordinates centered at the center of mass. In these coordinates, the Hamiltonian $\hat H_\mu$ in \eqref{eqn: Hamiltonian in synodical cartesian coordinates centered at CM J} becomes
\begin{equation}\label{eqn: Hamiltonian function in rotating polar coordinates centered at S J}
\overline{\mathcal{H}}_\mu(\rS,\thetaS,\RS,\ThetaS) = \frac 1 2 \left(\RS^2 + \frac{\ThetaS^2}{\rS^2}\right) - \frac 1 \rS - \ThetaS - \overline V(\rS,\thetaS,\RS,\ThetaS;\mu),
\end{equation}
where
\begin{equation*}
    \overline V(\rS,\thetaS,\RS,\ThetaS;\mu) = -\mu \left(\frac 1 \rS + \RS\sin \thetaS + \frac{\ThetaS}{\rS}\cos \thetaS - \frac{1}{\sqrt{1+\rS^2-2\rS\cos\thetaS}}\right).
\end{equation*}
Moreover, this system is reversible with respect to the involution
%
\begin{equation}\label{eqn: Symmetry of the RPC3BP in synodical polar coordinates centered at P1 J}
    (\rS,\thetaS,\RS,\ThetaS) \to (\rS,-\thetaS,-\RS,\ThetaS)
\end{equation}
and the collision set $\mathcal{S}$ in \eqref{eqn: Collision sets} becomes $\{\rS = 0\}$.

To regularize \eqref{eqn: Hamiltonian function in rotating polar coordinates centered at S J} at $\rS = 0$ we perform the McGehee transformation \cite{MR0359459}
\begin{equation}\label{eqn:McGehee map Collision J}
    \begin{aligned}
    \overline \zeta \colon \mathds{R}^+ \times \mathds{T} \times \mathds{R}^2 &\to \mathds{R}^+ \times \mathds{T} \times \mathds{R}^2\\
    (\rS,\thetaS,v, u) &\mapsto (\rS,\thetaS,\RS, \ThetaS) = \left(\rS,\thetaS, v\rS^{-\frac 1 2} - \mu\sin\thetaS, u\rS^{\frac 1 2} + \rS^2 - \mu \rS \cos\thetaS\right)
    \end{aligned}
\end{equation}
and the change of time
\begin{equation*}
    dt = \rS^\frac 3 2 d\tau,
\end{equation*} 
so that the equations of motion associated to the Hamiltonian $\overline{\mathcal{H}}_\mu$ in \eqref{eqn: Hamiltonian function in rotating polar coordinates centered at S J} become
\begin{equation}\label{eqn:Motion regularized McGehee Collision J}
    \begin{aligned}
    \rS'&= \rS v\\
    \thetaS' &= u\\
    v'&= \frac{v^2}{2} + u^2 + 2u \rS^{\frac 3 2 } + \rS^3 - 1 + \mu \left[1-\rS^2\left(\cos \thetaS + \frac{\rS-\cos \thetaS}{(1+\rS^2-2\rS\cos\thetaS)^{\frac 3 2}}\right)\right]\\
    u' &= -\frac{u v}{2} -2v \rS^{\frac 3 2} + \mu \rS^2\sin\thetaS\left[1-\frac{1}{(1+\rS^2-2\rS\cos\theta)^{\frac 3 2}}\right],
    \end{aligned}
\end{equation}
where $\phantom{3}^{'}$ denotes $\frac{d}{d\tau}$. Observe that \eqref{eqn:Motion regularized McGehee Collision J} is now regular at $\rS=0$.

The change of variables in \eqref{eqn:McGehee map Collision J} is not symplectic but the Hamiltonian $\overline{\mathcal{H}}_\mu$ in \eqref{eqn: Hamiltonian function in rotating polar coordinates centered at S J} is still a first integral of \eqref{eqn:Motion regularized McGehee Collision J}. 
Moreover, the level set $\left\{\overline{\mathcal{H}}_\mu = h\right\}$ is now given by $(\overline{\mathcal{H}}_\mu-h) \circ \overline \zeta = 0$, where
\begin{equation*}
(\overline{\mathcal{H}}_\mu-h) \circ \overline \zeta(\rS,\thetaS,v,u) = - h+ \frac{v^2+u^2}{2\rS} -\frac{\rS^2}{2} -\frac{1-\mu}{\rS} + \mu\left[-\frac{\mu}{2} +\rS\cos\thetaS - \frac{1}{\sqrt{1+\rS^2-2\rS\cos\thetaS}}\right].
\end{equation*}
We now multiply by $\rS$ to remove the singularity, obtaining
\begin{equation*}
\begin{aligned}
    \overline{M}(\rS,\thetaS,v,u;\mu,h) =& -\rS h + \frac{v^2+u^2}{2} -\frac{\rS^3}{2} -1 + \mu + \mu \rS\left[-\frac \mu 2 + \rS\cos\thetaS - \frac{1}{\sqrt{1+\rS^2-2\rS\cos\thetaS}}\right].
\end{aligned}
\end{equation*}
We study \eqref{eqn:Motion regularized McGehee Collision J} restricted to the manifold  $\overline{M}(\rS,\thetaS,v,u;\mu,h)=0$ since the orbits belonging to the hypersurface $\left\{\overline{\mathcal{H}}_\mu(\rS,\thetaS,\RS,\ThetaS) = h\right\}$, including the ejection and collision ones, now lie in $\overline{M}(\rS,\thetaS,v,u;\mu,h) = 0$. It is convenient to introduce a last change of coordinates
\begin{equation}\label{eqn:polar change of coordinates for u,v into rho alpha J}
    \begin{aligned}
    \zeta\colon  \mathds{R}^+ \times \mathds{T}^2 \times \mathds{R}^+ &\to \mathds{R} \times \mathds{T}\times \mathds{R}^2 \\
    (s,\thetaS,\alpha,\rho) &\mapsto (\rS,\thetaS,v,u) = \left(s^2,\thetaS, \sqrt{2(1-\mu) +\rho}\sin\alpha, \sqrt{2(1-\mu) +\rho}\cos\alpha\right),
\end{aligned}
\end{equation}
such that $\left\{\overline{M}=0\right\}$ becomes
\begin{equation}\label{eqn:Relation rho with (r,theta,H,mu) J}
    0= M(s,\thetaS,\alpha, \rho;\mu,h) = -\rho+ 2s^2 h + s^6 -2\mu s^2\left[-\frac{\mu}{2} +s^2\cos\thetaS - \frac{1}{\sqrt{1+s^4-2s^2\cos\thetaS}}\right].
\end{equation}
Note that we have taken $\rS = s^2$ so the vector field \eqref{eqn:Motion regularized McGehee Collision J} in coordinates $(s,\thetaS,\alpha,\rho)$ is now of class $C^\infty$.    

To study the motion in coordinates $(s,\thetaS,\alpha,\rho)$, we define the $3$-dimensional submanifold
\begin{equation}\label{eqn: 3dimensional submanifold McGehee in coordinates (r,theta,rho,alpha) J}
    \mathcal{M} =\{(s,\thetaS,\rho,\alpha) \in \mathds{R}^+\times \mathds{T} \times \mathds{R}\times \mathds{T}  \colon M(s,\thetaS,\rho,\alpha;\mu,h) = 0\}.
\end{equation}
Using $(s,\thetaS,\alpha)$ as coordinates in $\mathcal{M}$ ($\rho$ can be obtained from \eqref{eqn:Relation rho with (r,theta,H,mu) J}) the collision manifold $\{\rS=0\}$ becomes the invariant torus
\begin{equation}\label{eqn: Collision manifold J}
    \Omega = \{(0,\thetaS,\alpha) \colon \thetaS  \in \mathds{T}, \alpha \in \mathds{T}\} \subset \mathcal{M}
\end{equation}
whose dynamics is given by
\begin{equation*}
\begin{aligned}
    \thetaS' = \sqrt{2(1-\mu)} \cos \alpha,\qquad \alpha' = \frac{\sqrt{2(1-\mu)}}{2} \cos \alpha.
\end{aligned}
\end{equation*}
This system has two circles of critical points
\begin{equation}\label{eqn: Circles of equilibrium points for mu neq 0 J}
    \begin{aligned}
        S^+= \left\{S_{\overline{\theta}}^+ = \left(0, \overline{\theta}, \frac \pi 2\right)\colon \overline{\theta} \in \mathds{T}\right\}, \;\;\;\;\; S^- = \left\{S_{\overline{\theta}}^- = \left(0,\overline{\theta}, -\frac{\pi}{2}\right)\colon \overline{\theta} \in \mathds{T}\right\}.
    \end{aligned}
\end{equation}
Next lemma, whose proof is done in \cite{guardia2024oscillatorymotionsparabolicorbits}, analyzes the stable and unstable invariant manifolds associated to these circles.

\begin{lemma}
The invariant circles $S^{\pm}$ in \eqref{eqn: Circles of equilibrium points for mu neq 0 J} are normally hyperbolic and they have $2$-dimensional stable and unstable manifolds $W_\mu^{u,s}(S^{\pm}) = \underset{\overline{\theta} \in \mathds{T}}{\bigcup}W_\mu^{u,s}(S_{\overline{\theta}}^\pm)$ such that

\begin{itemize}   
    \item $W_\mu^s(S^+)$ and $W_\mu^u(S^-)$ are contained in $\Omega$. Moreover, they coincide along a homoclinic manifold
    \begin{equation*}
        \mathrm M := W_\mu^s(S^+) = W_\mu^u(S^-) \subset \Omega.
    \end{equation*}
    Therefore
    \begin{equation*}
        \Omega = S^+ \cup S^- \cup \mathrm M,
    \end{equation*}
    and $\mathrm M$ is foliated by a family of heteroclinic orbits between $S_{\overline{\theta}}^-$ and $S_{\overline{\theta}}^+$, for $\overline{\theta}\in \mathds{T}$. 

    \item $W_\mu^s(S^-)$ and  $W_\mu^u(S^+)$ belong to $\mathcal{M}\setminus \Omega$. 
\end{itemize}
\end{lemma}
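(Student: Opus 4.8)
The plan is to read everything off the explicit vector field \eqref{eqn:Motion regularized McGehee Collision J}, working on the three–manifold $\mathcal M$ in \eqref{eqn: 3dimensional submanifold McGehee in coordinates (r,theta,rho,alpha) J} with $(s,\thetaS,\alpha)$ as coordinates (legitimate since $\partial_\rho M\equiv -1\neq 0$ in \eqref{eqn:Relation rho with (r,theta,H,mu) J}), and using that $\Omega=\{s=0\}$ is invariant: in \eqref{eqn:Motion regularized McGehee Collision J} one has $\rS'=\rS v$, which vanishes at $\rS=s^2=0$, and every $\mu$-dependent term carries a factor $\rS^2$. First I would establish normal hyperbolicity of the circles $S^\pm$. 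Since $\rS=s^2$, the radial equation reads $s'=\tfrac12 s\,v$ with $v|_{\mathcal M}=\sqrt{2(1-\mu)+\rho}\,\sin\alpha$ and $\rho|_{s=0}=0$; hence at $S^+_{\thetaS}=(0,\thetaS,\tfrac\pi2)$ the linearization of the flow on $\mathcal M$ is block-triangular in $(s,\thetaS,\alpha)$, with eigenvalue $+\tfrac12\sqrt{2(1-\mu)}$ along $s$ (transverse to $\Omega$), eigenvalue $-\tfrac12\sqrt{2(1-\mu)}$ along $\alpha$ (from $\alpha'=\tfrac{\sqrt{2(1-\mu)}}{2}\cos\alpha$, inside $\Omega$), and a zero eigenvalue along the circle $\thetaS$; at $S^-_{\thetaS}=(0,\thetaS,-\tfrac\pi2)$ the two nonzero eigenvalues change sign because $v$ becomes $-\sqrt{2(1-\mu)}$. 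Thus each $S^\pm$ is a saddle-type normally hyperbolic invariant circle, and the invariant-manifold theorem produces $C^\infty$ two–dimensional $W^{u,s}_\mu(S^\pm)=\bigcup_{\thetaS\in\mathds T}W^{u,s}_\mu(S^\pm_{\thetaS})$, each foliated by the one-dimensional strong fibers tangent to the relevant eigendirection.

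Next I would locate these manifolds. The strong stable fiber of $S^+_{\thetaS}$ is tangent to the $\alpha$-direction, hence lies in $T\Omega$, so $W^s_{\mu,\mathrm{loc}}(S^+)\subset\Omega$; since $\Omega$ is invariant under the backward flow, the global $W^s_\mu(S^+)$ stays in $\Omega$, and symmetrically $W^u_\mu(S^-)\subset\Omega$. Conversely the strong unstable fiber of $S^+_{\thetaS}$ and the strong stable fiber of $S^-_{\thetaS}$ are tangent to the $s$-direction, so $W^u_\mu(S^+)$ and $W^s_\mu(S^-)$ leave $\{s=0\}$, i.e.\ lie in $\mathcal M\setminus\Omega$. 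It then only remains to understand the reduced flow on the $2$-torus $\Omega$, namely $\thetaS'=\sqrt{2(1-\mu)}\cos\alpha$, $\alpha'=\tfrac12\sqrt{2(1-\mu)}\cos\alpha$. For the scalar equation $\alpha'=\tfrac12\sqrt{2(1-\mu)}\cos\alpha$ on $\mathds T$, every solution with $\alpha(0)\notin\{\pm\tfrac\pi2\}$ satisfies $\alpha(t)\to\tfrac\pi2$ as $t\to+\infty$ and $\alpha(t)\to-\tfrac\pi2$ as $t\to-\infty$; hence $\Omega\setminus(S^+\cup S^-)$ consists entirely of orbits heteroclinic from $S^-$ to $S^+$. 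This set is therefore simultaneously the nontrivial part of $W^s_\mu(S^+)$ and of $W^u_\mu(S^-)$ inside $\Omega$, which I would take as the definition of $\mathrm M$, giving at once $\mathrm M=W^s_\mu(S^+)=W^u_\mu(S^-)$ and the partition $\Omega=S^+\cup S^-\cup\mathrm M$.

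Finally, for the foliation statement: along any orbit of $\mathrm M$ one has $\dfrac{d\thetaS}{d\alpha}=\dfrac{\thetaS'}{\alpha'}=2$, so $\thetaS=2\alpha+\mathrm{const}$; as $\alpha$ runs monotonically from $-\tfrac\pi2$ to $\tfrac\pi2$ (through $\alpha=0$ or through $\alpha=\pi$, according to the branch), $\thetaS$ changes by $\pm2\pi\equiv 0$, so the orbit issuing from $S^-_{\thetaS_0}$ limits onto $S^+_{\thetaS_0}$. Hence $\mathrm M$ is foliated by heteroclinic orbits between $S^-_{\thetaS}$ and $S^+_{\thetaS}$, $\thetaS\in\mathds T$, which finishes the proof.

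I expect the only genuinely delicate point to be the first step: verifying the spectral-gap condition for normal hyperbolicity and invoking the invariant-manifold theorem to get the fibered two–dimensional $W^{u,s}_\mu(S^\pm)$. Here this is actually painless, because the tangential Lyapunov rate along the circles is exactly $0$ while the normal rates are $\pm\tfrac12\sqrt{2(1-\mu)}$, bounded away from $0$ uniformly for $\mu\in(0,1/2]$; the remaining work is the completely explicit phase-plane analysis of the $\mu$-independent (up to the factor $\sqrt{2(1-\mu)}$) flow on the $2$-torus $\Omega$. One small preliminary check worth not skipping is that $\mathcal M$ is a bona fide $C^\infty$ submanifold near $\{s=0\}$ and that $\Omega$ is invariant, both of which follow immediately from \eqref{eqn:Relation rho with (r,theta,H,mu) J} and \eqref{eqn:Motion regularized McGehee Collision J} as indicated above.
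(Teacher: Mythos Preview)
Your argument is correct. The paper itself does not prove this lemma: it simply states that the proof is done in the authors' previous article \cite{guardia2024oscillatorymotionsparabolicorbits}, so there is no in-paper proof to compare against. What you have written is precisely the standard direct verification one would expect in that reference: compute the linearization of the restricted vector field on $\mathcal M$ at the circles $S^\pm$, read off the eigenvalues $\pm\tfrac12\sqrt{2(1-\mu)}$ in the $s$ and $\alpha$ directions and the zero eigenvalue along $\partial_{\thetaS}$, invoke normal hyperbolicity, and then analyze the explicit decoupled flow on $\Omega$ via $d\thetaS/d\alpha=2$. Your check that the $\thetaS$-displacement along each heteroclinic is $\pm 2\pi\equiv 0$, so that the fiber $W^u_\mu(S^-_{\thetaS_0})$ lands exactly on $S^+_{\thetaS_0}$, is the right way to get the last foliation claim. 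The only cosmetic remark is that $S^\pm\subset\Omega$ are themselves contained in $W^u_\mu(S^+)$ and $W^s_\mu(S^-)$, so strictly speaking it is $W^u_\mu(S^+)\setminus S^+$ and $W^s_\mu(S^-)\setminus S^-$ that lie in $\mathcal M\setminus\Omega$; this is also implicit in the paper's formulation.
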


\begin{remark}\label{remark: definition of the invariant manifolds of collision in rotating polar coordinates centered at P1 J}

The definitions of $ W_\mu^s(S^-) $ and $W_\mu^u(S^+) $ can be translated to coordinates $(\rS,\thetaS,\RS,\ThetaS)$ by means of the changes \eqref{eqn:McGehee map Collision J} and \eqref{eqn:polar change of coordinates for u,v into rho alpha J}. Abusing the notation, we denote the collision and ejection manifolds as
\begin{equation}\label{eqn: definition of the invariant manifolds of collision in rotating polar coordinates centered at P1 J}
\begin{aligned}
     W_\mu^s(S^-)  =& \Bigg\{(\rS,\thetaS,\RS,\ThetaS) \in \mathds{R}^+ \times \mathds{T} \times \mathds{R}^2 \colon \exists t_* = t_*(\rS,\thetaS,\RS,\ThetaS) >0 \text{ such that }\\
    &\underset{t \to t_*^-}{\lim} \overline \Phi_t^{\rS}(\rS,\thetaS,\RS,\ThetaS) = 0,\underset{t \to t_*^-}{\lim}\overline \Phi_t^{\RS}(\rS,\thetaS,\RS,\ThetaS) = - \infty\Bigg\},\\
    &\\
    W_\mu^u(S^+)  =& \Bigg\{(\rS,\thetaS,\RS,\ThetaS) \in \mathds{R}^+ \times \mathds{T} \times \mathds{R}^2 \colon \exists t_* = t_*(\rS,\thetaS,\RS,\ThetaS)<0 \text{ such that }\\
    &\underset{t \to t_*^+}{\lim} \overline \Phi_t^{\rS}(\rS,\thetaS,\RS,\ThetaS) = 0, \underset{t \to t_*^+}{\lim}\overline \Phi_t^{\RS}(\rS,\thetaS,\RS,\ThetaS) = + \infty\Bigg\},
\end{aligned}
\end{equation}
where $\overline \Phi_t$ refers to the flow of the equations of motion associated to the Hamiltonian $\overline{\mathcal{H}}_\mu$ in \eqref{eqn: Hamiltonian function in rotating polar coordinates centered at S J}.

We stress that, although invariant (until hitting collision), they are not stable and unstable manifolds of any invariant objects since $S^+$ and $S^-$ collapse to the singular set $\{\rS=0\}$.

\end{remark}
To compute the ejection and collision orbits $\mathcal S^-$ and $\mathcal S^+$ (see Definition \ref{def: EC-orbits}) we recall the results obtained in \cite{guardia2024oscillatorymotionsparabolicorbits} ($\S$ 2.5, pp. 17-18), contained in the following proposition, which provides a parameterization in polar coordinates centered at $\mathcal S$ in \eqref{eqn: Change from synodical cartesian to synodical polar centered at P1 J} of the invariant manifolds of collision $W_\mu^u(S^+)$ and $W_\mu^s(S^-)$ (see Remark \ref{remark: definition of the invariant manifolds of collision in rotating polar coordinates centered at P1 J}).

\begin{proposition}\label{proposition: Perturbed invariant manifolds of collision in synodical polar coordinates J}
    Fix $a,b\in \mathds{R}$ and $\varrho \in[a,b]$. Then, there exist $\delta_0,\mu_0 > 0$ such that, for $0<\delta<\delta_0$, $0<\mu<\mu_0$ and $h=-\hat\Theta_0 = -\mu \varrho$, the invariant manifolds $W_\mu^u(S^+)$, $W_\mu^s(S^-)$ in \eqref{eqn: definition of the invariant manifolds of collision in rotating polar coordinates centered at P1 J}, written in polar coordinates centered at $\mathcal S$ (see \eqref{eqn: Change from synodical cartesian to synodical polar centered at P1 J}),  intersect the section 
    \begin{equation}\label{eqn: SigmahSun}
    \overline \Sigma = \{\rS=\delta^2, \overline{\mathcal{H}}_\mu(\delta^2,\thetaS,\RS,\ThetaS) = h\},
    \end{equation}
    where $\overline{\mathcal{H}}_\mu$ is the Hamiltonian \eqref{eqn: Hamiltonian function in rotating polar coordinates centered at S J}, at two curves $\overline{\Delta}_{S^+}^u(\mu)$ and $\overline{\Delta}_{S^-}^s(\mu)$ that can be written as graphs with respect to $\bar \theta$ as
    \begin{equation}\label{eqn: definition of the intersection of the invariant manifolds of collision with delta2 as graphs J}
    \begin{aligned}
        \overline{\Delta}_{S^+}^u(\mu) = \left\{(\thetaS,\ThetaS_{S^+}^u(\thetaS,\delta,\mu)), \thetaS\in\mathds{T}\right\},\quad \overline \Delta_{S^-}^s(\mu) = \left\{(\thetaS,\ThetaS_{S^-}^s(\thetaS,\delta,\mu)),\thetaS\in\mathds{T}\right\},
    \end{aligned}
    \end{equation}
    which depend smoothly on $\mu$. The expression for $\ThetaS_{S^+}^u(\thetaS)$ is given by
    \begin{equation}\label{eqn: Value of the angular momentum of Wu(S+) at r=2 J}
    \begin{aligned}
        \ThetaS_{S^+}^u(\thetaS,\delta,\mu) = \mu I_{S^+}^u(\thetaS) + \mathcal{O}(\mu^2)
    \end{aligned}
    \end{equation}
    with
    \begin{equation*}
        I_{S^+}^u\left(\thetaS\right) = \mathcal{I}_{S^+}^u\left(\thetaS + \frac{\sqrt{2}}{3} \delta^3\right)
    \end{equation*}
    where
    \begin{equation*}
    \begin{aligned}
        \mathcal{I}_{S^+}^u(\alpha) =\constantValue\bigintss_{\frac{\sqrt{2}}{3}\delta^3}^0 \frac{s^{\frac 2 3}\sin\left(\alpha - s\right)}{\left(1+\constantValue^2 s^{\frac 4 3} - 2\constantValue s^{\frac 2 3}\cos\left(\alpha - s\right)\right)^{\frac 3 2}}\;ds + \sqrt{\frac 2 \constantValue} \bigintss_{\frac{\sqrt{2}}{3}\delta^3}^0\frac{\cos\left(\alpha - s\right)}{s^{\frac 1 3}}\;ds
    \end{aligned}
    \end{equation*} 
    with $\lambda = (\frac{9}{2})^\frac {1}{ 3}$. 
    
    The expression for $\ThetaS_{S^-}^s(\thetaS,\delta,\mu)$ comes from the symmetry \eqref{eqn: Symmetry of the RPC3BP in synodical polar coordinates centered at P1 J} as
    \begin{equation}\label{eqn: Theta S-}
    \ThetaS_{S^-}^s(\thetaS,\delta,\mu) = \ThetaS_{S^+}^u(-\thetaS,\delta,\mu).
    \end{equation}
\end{proposition}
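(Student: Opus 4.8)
The proof I have in mind is a regular–perturbation (Melnikov) argument in $\mu$, exactly along the lines of \cite{guardia2024oscillatorymotionsparabolicorbits}, $\S$ 2.5, whose conclusion is being recalled here. The plan is to (a) describe the $\mu=0$ manifolds explicitly, (b) use smooth dependence on $\mu$ of the normally hyperbolic objects to reduce to a first-order computation, (c) evaluate that first-order term by integrating the angular momentum along the unperturbed parabolic orbits of Lemma \ref{lemma: Reparameterization of time unperturbed solution}, and (d) obtain $\ThetaS_{S^-}^s$ from reversibility. For $\mu=0$ one has $h=-\mu\varrho=0$, the Hamiltonian \eqref{eqn: Hamiltonian function in rotating polar coordinates centered at S J} is the Kepler problem in synodical polar coordinates, and the ejection/collision orbits from $\mathcal S$ on $\{\overline{\mathcal{H}}_0=0\}$ are precisely the rectilinear parabolic orbits $\check\gamma_h^{\pm}(t,\check\theta_0,0)$ of Lemma \ref{lemma: Reparameterization of time unperturbed solution}(ii), which satisfy $\ThetaS\equiv0$. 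Hence $W_0^u(S^+)$ (resp.\ $W_0^s(S^-)$) is foliated over $\mathds{T}$ by these orbits and, since $\rS=\lambda s^{2/3}$ along them, it meets the section $\overline\Sigma$ (where $\rS=\delta^2$, i.e.\ $s=\delta$ in the regularized coordinate of \eqref{eqn:polar change of coordinates for u,v into rho alpha J}, and which is transverse to the flow because $\rS'=\rS v\neq0$ there) along the graph $\{\ThetaS=0\}$ over $\thetaS$, with $\RS=\sqrt2/\delta>0$ forced by the energy relation.

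For small $\mu$ I would argue as follows. In the coordinates $(s,\thetaS,\alpha,\rho)$ the vector field is $C^\infty$ and analytic in $\mu$, and the circles $S^\pm$ are normally hyperbolic uniformly for $\mu$ near $0$, so $W_\mu^{u,s}(S^\pm)$ and their transverse intersections with $\overline\Sigma$ depend smoothly (in fact $C^r$ for any $r$) on $\mu$. Since at $\mu=0$ this intersection is the graph $\{\ThetaS=0\}$, for small $\mu$ it remains a graph $\ThetaS=\ThetaS_{S^+}^u(\thetaS,\delta,\mu)$ over $\thetaS$, smooth in $\mu$, with $\ThetaS_{S^+}^u(\thetaS,\delta,0)\equiv 0$; hence
\[
\ThetaS_{S^+}^u(\thetaS,\delta,\mu)=\mu\,\partial_\mu\ThetaS_{S^+}^u(\thetaS,\delta,0)+\mathcal{O}(\mu^2).
\]
To identify the derivative I would integrate the angular momentum: from \eqref{eqn: Hamiltonian function in rotating polar coordinates centered at S J} one has $\ThetaS'=-\partial_{\thetaS}\overline{\mathcal{H}}_\mu=\partial_{\thetaS}\overline V=\mathcal{O}(\mu)$, so, normalizing ejection to $t=0$ (where the orbit is on $S^+$, hence $\ThetaS=0$) and letting $t_\delta$ be the arrival time at $\overline\Sigma$,
\[
\ThetaS_{S^+}^u(\thetaS,\delta,\mu)=\int_0^{t_\delta}\partial_{\thetaS}\overline V\big(\overline\gamma_\mu(t)\big)\,dt .
\]
Because $\partial_{\thetaS}\overline V=\mathcal{O}(\mu)$, one may replace $\overline\gamma_\mu$ by the unperturbed parabola through the matching point of $\overline\Sigma$, namely $\rS=\lambda t^{2/3}$, $\RS=\sqrt{2/\lambda}\,t^{-1/3}$, $\ThetaS=0$, with synodical angle $\check\theta_0-t$, committing only an $\mathcal{O}(\mu^2)$ error (this also absorbs the $\mathcal{O}(\mu)$ shift of $\overline\Sigma$ caused by $h=-\mu\varrho$, and the $\tfrac{\ThetaS}{\rS}\sin\thetaS$ summand of $\partial_{\thetaS}\overline V$, which vanishes on the unperturbed orbit).

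Carrying this out, $\lambda t_\delta^{2/3}=\delta^2$ gives $t_\delta=\tfrac{\sqrt2}{3}\delta^3$; the synodical angle at the section is $\check\theta_0-t_\delta=\thetaS$, so $\check\theta_0=\thetaS+\tfrac{\sqrt2}{3}\delta^3$; and substituting $t\mapsto s$ in the two remaining summands $\RS\cos\thetaS$ and $\tfrac{\rS\sin\thetaS}{(1+\rS^2-2\rS\cos\thetaS)^{3/2}}$ of $\partial_{\thetaS}\overline V$, together with $\int_0^{t_\delta}=-\int_{t_\delta}^0$, reproduces verbatim the integral defining $\mathcal{I}_{S^+}^u$ evaluated at $\alpha=\check\theta_0$. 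This is \eqref{eqn: Value of the angular momentum of Wu(S+) at r=2 J}. Finally, \eqref{eqn: Theta S-} follows from the reversibility \eqref{eqn: Symmetry of the RPC3BP in synodical polar coordinates centered at P1 J}: its reversor exchanges forward and backward time and the circles $S^+\leftrightarrow S^-$, hence maps $W_\mu^u(S^+)$ to $W_\mu^s(S^-)$, while acting on $\overline\Sigma$ as $\thetaS\mapsto-\thetaS$, $\ThetaS\mapsto\ThetaS$.

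The delicate point — the step I expect to be the real work — is making the last computation rigorous near collision: the unperturbed orbit is itself singular at $t=0$ ($\RS\sim t^{-1/3}$, $\rS\sim t^{2/3}$), so one must justify convergence of the Melnikov integral (its integrand behaves like $s^{-1/3}$, which is integrable) and, above all, uniformity of the $\mathcal{O}(\mu^2)$ remainder in $\thetaS\in\mathds{T}$ and over the whole orbit arc down to collision. This is exactly what the McGehee regularization \eqref{eqn:McGehee map Collision J}–\eqref{eqn:polar change of coordinates for u,v into rho alpha J} is designed to control — in the regularized coordinates the manifold reaches $S^\pm$ at a regular point and the comparison with the $\mu=0$ flow is uniform up to $\rS=0$ — and it is the technical core of \cite{guardia2024oscillatorymotionsparabolicorbits}, $\S$ 2.5, which we invoke.
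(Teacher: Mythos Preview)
Your proposal is correct and follows the same approach as the proof: the paper itself does not prove this proposition but simply recalls it from \cite{guardia2024oscillatorymotionsparabolicorbits}, $\S$2.5, and your sketch is precisely the regular-perturbation (Melnikov) argument carried out there---integrating $\ThetaS'=\partial_{\thetaS}\overline V=\mathcal{O}(\mu)$ along the unperturbed rectilinear parabola $\check\gamma_h^+(t,\check\theta_0,0)$ from collision to the section $\rS=\delta^2$, with the McGehee regularization supplying the uniform control near $\rS=0$, and reversibility giving \eqref{eqn: Theta S-}. You have also correctly identified the only genuinely technical point (uniformity of the $\mathcal{O}(\mu^2)$ remainder down to collision) and the tool that handles it.
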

\begin{remark}
The sign of the radial velocity $\RS$ for the curve $\overline \Delta_{S^+}^u(\mu)$  is positive, and negative for $\overline \Delta_{S^-}^s(\mu)$. This prompts us to define domains in the section $\overline \Sigma$ in \eqref{eqn: SigmahSun}
\begin{equation}\label{eqn: sectionSun}
\begin{aligned}
    \overline \Sigma^> &= \left\{\left(\rS, \thetaS, \RS, \ThetaS\right)\colon  \rS = \delta^2, \overline{\mathcal H}_\mu(\delta^2,\thetaS,\RS,\ThetaS) =h,  \RS > 0\right\}\subset \overline\Sigma,\\  
    \overline \Sigma^< &= \left\{\left(\rS,\thetaS, \RS, \ThetaS\right)\colon \rS = \delta^2,\overline{\mathcal H}_\mu(\delta^2,\thetaS,\RS,\ThetaS) = h, \RS < 0\right\}\subset \overline \Sigma.
\end{aligned}
\end{equation}
where $\overline{\mathcal H}_\mu$ corresponds to the Hamiltonian \eqref{eqn: Hamiltonian function in rotating polar coordinates centered at S J}, so that $\overline \Delta_{S^+}^u(\mu) \subset \overline \Sigma^>$ and $\Delta_{S^-}^s(\mu) \subset \overline \Sigma^<$.
\end{remark}

\section{The invariant manifolds of infinity}\label{sec: The invariant manifolds of infinity}
To study the behavior 
of the infinity ``invariant set'' and the dynamics close to them, we introduce the so-called \textit{McGehee coordinates at infinity} (see for instance \cite{McGeheeInf}).
This will lead to the existence of the corresponding stable and unstable invariant manifolds.  In Section \ref{subsec: Parameterization of the invariant manifolds of infinity}, we provide a parameterization of these manifolds.

We express the Hamiltonian \eqref{eqn: Hamiltonian in synodical cartesian coordinates centered at CM J} in (synodical) polar coordinates centered at the center of mass $(\hat r, \hat \theta,\hat R,\hat \Theta)$, yielding
\begin{equation}\label{eqn: Hamiltonian Polar Rotating Coordinates centered at CM J}
    \hat{\mathcal H}_\mu(\hat r,\hat{\theta},\hat{R},\hat{\Theta}) = \frac{1}{2}\left(\hat{R}^2 + \frac{\hat{\Theta}^2}{\hat{r}^2}\right) - \frac {1}{\hat{r}} -\hat{\Theta} - \hat{V}(\hat{r},\hat{\theta};\mu),
\end{equation}
where
\begin{equation}\label{eqn: potential in polar rotating coordinates centered at CM J}
    \hat{V}(\hat{r},\hat{\theta};\mu) = \frac{1-\mu}{\left(\hat{r}^2 + 2\hat{r}\mu\cos\hat{\theta} + \mu^2\right)^{\frac{1}{2}}} + \frac{\mu}{\left(\hat{r}^2 - 2\hat{r}(1-\mu)\cos\hat{\theta} + (1-\mu)^2\right)^{\frac 1 2}} - \frac {1}{\hat{r}}.
\end{equation}
We consider the change of coordinates $\hat{r} = 2x^{-2}$, in which the parabolic infinity $\{\hat r = \infty, \hat R = 0\}$ becomes
\begin{equation}\label{eqn: Infinity set in mcgehee coordinates at infinity J}
    \Alpha = \left\{(x,\hat{\theta},\hat{R},\hat{\Theta}) \colon \mathds{R}^+\times \mathds{T}\times \mathds{R}^2\colon x = 0, \hat{R} = 0\right\},
\end{equation}
and the equations of motion associated to the Hamiltonian $\hat{\mathcal{H}}$ in \eqref{eqn: Hamiltonian Polar Rotating Coordinates centered at CM J} now read
\begin{equation}\label{eqn: Eq motion in McGehee coordinates of infinity J}
    \begin{aligned}
        \frac{dx}{dt} =-\frac{\hat{R}x^3}{4},\quad \frac{d\hat{\theta}}{dt} = \frac{\hat{\Theta}}{4}x^4 - 1,\quad \frac{d\hat{R}}{dt} = - \frac{x^4}{4} + \frac{\hat{\Theta}^2}{8}x^6 + \frac{x^3}{4}\partial_{x}\hat V(x,\hat{\theta};\mu),\quad
        \frac{d\hat{\Theta}}{dt} = \partial_{\hat{\theta}} \hat V(x,\hat{\theta};\mu),
    \end{aligned}
\end{equation}
where
\begin{equation*}
    \hat{V}(x,\hat{\theta};\mu) = \frac{x^2}{2}\left(\frac{1-\mu}{\left(1+x^2\mu\cos\hat{\theta} + \frac{x^4}{4}\mu^2\right)^{\frac 1 2}} + \frac{\mu}{\left(1- x^2(1-\mu)\cos\hat{\theta} + \frac{x^4}{4}(1-\mu)^2\right)^{\frac 1 2}}-1\right).
\end{equation*}
From \eqref{eqn: Eq motion in McGehee coordinates of infinity J}, one obtains that the manifold $\Alpha$ in \eqref{eqn: Infinity set in mcgehee coordinates at infinity J} is foliated by periodic orbits as $\Alpha= \underset{\hat{\Theta}_0 \in \mathds{R}}{\bigcup}\Alpha_{\hat{\Theta}_0}$ with
\begin{equation*}
    \Alpha_{\hat{\Theta}_0} = \left\{(x,\hat{\theta},\hat{R},\hat{\Theta}) 
    \colon x = 0, \hat{R} = 0, \hat{\Theta}=\hat{\Theta}_0, \, \hat \theta \in \mathds{T} \right\}.
\end{equation*}
In \cite{MR0362403} it was proven that these periodic orbits have stable and unstable manifolds, which we denote by $W_\mu^s(\Alpha_{\hat{\Theta}_0})$ and $W_\mu^u(\Alpha_{\hat{\Theta}_0})$ respectively. Moreover, these manifolds depend analytically on $\hat\Theta_0$. Note that the rates of convergence of the invariant manifolds $W_\mu^{s,u}(\Alpha_{\hat{\Theta}_0})$ are polynomial in $t$ and not exponential as in the case of  hyperbolic objects. For this reason, in \cite{MR3927089} and \cite{MR3455155}, the set $\Alpha$ in \eqref{eqn: Infinity set in mcgehee coordinates at infinity J} is referred to as a ``normally parabolic'' invariant manifold.

\subsection{Parameterization of the invariant manifolds of infinity}\label{subsec: Parameterization of the invariant manifolds of infinity}

For $\mu = 0$, Lemma \ref{lemma: Characterization of the keplerian orbits in terms of h and Theta} shows that the parabolic ejection and collision orbits with $\mathcal{J}$ lie in the plane $\{\hat H_0 = \hat h , \hat{\Theta}_0 = -\hat h\}$ for $\hat h \in [-\sqrt{2},\sqrt{2}]$ and $\hat\Theta_0 = \check \Theta_0$ (see \eqref{eqn:checkTheta-hatTheta}), where $\hat H_0$ corresponds to the Hamiltonian 
\eqref{eqn: Hamiltonian in cartesian synodical coordinates mu=0}. Lemma \ref{lemma: Reparameterization of time unperturbed solution} (and Remark \ref{remark: parabolic EC orbits with J}) analyzes these orbits and thus, the parabolic ejection trajectories $\check\gamma^{+}_c(t,\hat\Theta_0)$ in \eqref{eqn: parabolic ejection and collision mu=0} (defined for $t\geq t_c$ with $t_c$ as in \eqref{eqn: tc thetac hatTheta}) provide a parametrization in the non-rotating polar coordinates $(\check r,\check \theta,\check R,\check \Theta)$ (see \eqref{eqn: Change from sidereal cartesian to sidereal polar centered at CM}) of the stable manifold $W_0^{s}(\Alpha_{\hat\Theta_0})$. Analogously, the parabolic collision trajectories $\check\gamma^{-}_c(t,\hat\Theta_0)$ (defined for $t\leq-t_c$) parameterize the unstable manifold $W_0^{u}(\Alpha_{\hat\Theta_0})$.

The following lemma (whose proof is straightforward) gives a graph parameterization of the invariant manifolds $W_0^{s,u}(\Alpha_{\hat{\Theta}_0})$ in coordinates $(\hat r, \hat \theta,\hat R,\hat \Theta)$.
\begin{lemma}\label{lemma: Invariant manifolds of infinity as graphs for mu=0}
    For $\mu = 0$ and for any $\hat{\Theta}_0\in \left[-\sqrt{2},\sqrt{2}\right]$, the invariant manifolds of infinity $W_0^{s,u}(\Alpha_{\hat\Theta_0})$ can be written as graphs of the form
    \begin{equation}
        W_0^{s,u}(\Alpha_{\hat\Theta_0}) = \Bigg\{\left(\hat{r},\hat{\theta}, \hat{R}_0^{s,u}(\hat{r},\hat{\Theta}_0), \hat{\Theta}_0^{s,u}(\hat{r},\hat{\Theta}_0)\right) \colon  (\hat{r},\hat{\theta}) \in \left(\hat{\Theta}_0^2/2,+\infty\right)\times \mathds{T}\Bigg\},
    \end{equation}
    such that
    \begin{equation}\label{eqn: (R,Theta) infty mu=0}
    \begin{aligned}
        \hat{R}_0^s(\hat{r};\hat{\Theta}_0) &= \frac{1}{\hat{r}} \sqrt{2\hat{r} - \hat{\Theta}_0^2},&\quad \hat{\Theta}_0^s(\hat{r},\hat{\Theta}_0) &= \hat{\Theta}_0,\\
        \hat{R}_0^u(\hat{r};\hat{\Theta}_0) &= -\frac{1}{\hat{r}} \sqrt{2\hat{r} - \hat{\Theta}_0^2},&\quad \hat{\Theta}_0^u(\hat{r},\hat{\Theta}_0) &= \hat{\Theta}_0.\\
    \end{aligned}
    \end{equation}
\end{lemma}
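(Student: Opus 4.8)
The plan is to use the two classical first integrals of the Kepler problem written in rotating polar coordinates. For $\mu=0$ the Hamiltonian $\hat{\mathcal H}_0$ in \eqref{eqn: Hamiltonian Polar Rotating Coordinates centered at CM J} is independent of $\hat\theta$ (the potential $\hat V(\cdot,\cdot;0)$ vanishes identically), so the angular momentum $\hat\Theta$ is conserved along the flow. Since every orbit in $W_0^{s,u}(\Alpha_{\hat\Theta_0})$ is forward or backward asymptotic to the periodic orbit $\Alpha_{\hat\Theta_0}$ (see \eqref{eqn: Infinity set in mcgehee coordinates at infinity J}), along which $\hat\Theta\equiv\hat\Theta_0$, conservation forces $\hat\Theta\equiv\hat\Theta_0$ on the whole manifold; this already gives the second graph component $\hat\Theta_0^{s,u}(\hat r,\hat\Theta_0)=\hat\Theta_0$.

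Next I would introduce the non-rotating Keplerian energy $\mathrm H_0=\hat{\mathcal H}_0+\hat\Theta=\tfrac12\big(\hat R^2+\hat\Theta^2/\hat r^{2}\big)-\hat r^{-1}$, which is again a first integral for $\mu=0$. In the McGehee variable $x=\sqrt{2/\hat r}$ one has $\mathrm H_0=\tfrac12\hat R^2+\tfrac18\hat\Theta^2x^4-\tfrac14x^2$, a continuous function vanishing at $x=0,\hat R=0$; hence $\mathrm H_0\to 0$ as an orbit of $W_0^{s,u}(\Alpha_{\hat\Theta_0})$ approaches $\Alpha_{\hat\Theta_0}$, and by conservation $\mathrm H_0\equiv 0$ on both invariant manifolds. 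Solving $\tfrac12\big(\hat R^2+\hat\Theta_0^2/\hat r^2\big)=\hat r^{-1}$ for $\hat R$ yields $\hat R=\pm\hat r^{-1}\sqrt{2\hat r-\hat\Theta_0^2}$, real precisely for $\hat r\ge\hat\Theta_0^2/2$; note that $\hat r=\hat\Theta_0^2/2$ is exactly the perihelion distance of the zero-energy orbit of angular momentum $\hat\Theta_0$, consistent with the initial condition \eqref{eqn: initcondJupiter0}, which fixes the graph domain $(\hat\Theta_0^2/2,+\infty)\times\mathds T$.

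It then remains to pick the correct sign on each branch and to check that the two branches exhaust the manifolds. Since $\dot{\hat r}=\hat R$ and $\hat R$ can vanish only at $\hat r=\hat\Theta_0^2/2$, the radial velocity keeps a constant sign on each half of a parabola: on $W_0^s(\Alpha_{\hat\Theta_0})$ the radius increases to $+\infty$ in forward time, so $\hat R>0$ there (the outgoing halves), selecting the $+$ sign, and $\hat R<0$ on $W_0^u(\Alpha_{\hat\Theta_0})$. Equivalently one may invoke the identification stated just before the lemma—$\check\gamma_c^{\pm}(\cdot,\hat\Theta_0)$ in \eqref{eqn: parabolic ejection and collision mu=0} parameterize $W_0^{s,u}(\Alpha_{\hat\Theta_0})$—together with $\check R_h^{\pm}(t,\hat\Theta_0)\sim\pm|t|^{-1/3}$ from Corollary \ref{corollary: Behaviour at infinity of the unperturbed separatrix}; indeed, plugging the explicit solutions \eqref{eqn: Unperturbed separatrix}–\eqref{eqn: unperturbed sun} of Lemma \ref{lemma: Reparameterization of time unperturbed solution} and eliminating $t$ in favour of $\hat r=\check r$ gives \eqref{eqn: (R,Theta) infty mu=0} by a direct computation (using $\hat r=\check r$, $\hat R=\check R$, $\hat\Theta=\check\Theta$, which are unaffected by passing to the rotating frame). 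Finally, $W_0^s(\Alpha_{\hat\Theta_0})$ is foliated by these outgoing parabolic orbits as the asymptotic phase on $\Alpha_{\hat\Theta_0}$ varies over $\mathds T$, so $(\hat r,\hat\theta)\mapsto(\hat R_0^s(\hat r,\hat\Theta_0),\hat\Theta_0)$ sweeps it bijectively, making it the asserted graph, and likewise for $W_0^u(\Alpha_{\hat\Theta_0})$.

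I do not expect a genuine obstacle—the authors themselves flag the proof as straightforward. The only mildly delicate point is the last one, namely that the zero-energy, fixed-angular-momentum parabolic branch is \emph{exactly} (not merely contained in) the invariant manifold; this is immediate here since the two sets are connected two-dimensional smooth surfaces and every zero-energy parabola of angular momentum $\hat\Theta_0$ is asymptotic to $\Alpha_{\hat\Theta_0}$, hence lies in one of the two manifolds.
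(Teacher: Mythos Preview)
Your proposal is correct and is precisely the ``straightforward'' argument the paper alludes to without spelling out: use the two conserved quantities $\hat\Theta$ and $\mathrm H_0=\hat{\mathcal H}_0+\hat\Theta$ for $\mu=0$, impose the parabolic condition $\mathrm H_0=0$ together with $\hat\Theta=\hat\Theta_0$, and solve for $\hat R$ with the sign fixed by the asymptotic direction of $\hat r$. The paper gives no explicit proof, so there is nothing further to compare.
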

The following proposition shows that, for $\mu > 0$ small enough, the invariant manifolds $W_\mu^{s,u}(\Alpha_{\hat \Theta_0})$ can be extended to reach a neighborhood of $\mathcal J$, and provides a graph parameterization for them in coordinates $(\hat r, \hat \theta, \hat R,\hat \Theta)$.

\begin{proposition}\label{prop: Parameterization of the invariant manifolds of infinity}
 Fix  $\nu \in (0,\frac 1 3)$, $m > 0$, $\hat\Theta_0 \in [-\sqrt 2 + m , \sqrt 2- m]$ and $\kappa > 0$. There exists $\mu_0 > 0$ such that, for $0<\mu<\mu_0$, the invariant manifolds $W_\mu^{s,u}(\Alpha_{\hat{\Theta}_0})$ associated to the Hamiltonian $\mathcal{\hat H}_\mu$ in \eqref{eqn: Hamiltonian Polar Rotating Coordinates centered at CM J} can be written as graphs of the form
    \begin{equation}\label{eqn: invariant manifolds of infty as graphs in terms of hat theta}
        W_\mu^{s,u}(\Alpha_{\hat\Theta_0}) = \Bigg\{\left(\hat r, \hat \theta, \hat R_\infty^{s,u}(\hat r, \hat \theta,\hat\Theta_0), \hat{\Theta}_\infty^{s,u}(\hat r, \hat \theta, \hat \Theta_0)\right)\colon (\hat r, \hat \theta) \in [1-\mu+\kappa\mu^\nu,+\infty)\times \mathds{T}\Bigg\}
    \end{equation}
    such that
    \begin{equation}\label{eqn: (R,Theta) as graphs inf muneq0}
        \begin{aligned}
            \hat R_\infty^{s,u}(\hat r, \hat \theta,\hat\Theta_0) = \hat R_0^{s,u}(\hat r,\hat\Theta_0) + \mathcal{O}(\mu^{1-2\nu}),\qquad \hat \Theta_\infty^{s,u}(\hat r, \hat \theta,\hat\Theta_0) = \hat \Theta_0 + \mathcal{O}(\mu^{1-2\nu}),
        \end{aligned}
    \end{equation}
    where $\hat R_0^{s,u}(\hat r,\hat \Theta_0)$ are defined in \eqref{eqn: (R,Theta) infty mu=0}. Moreover
    \begin{equation}
        \partial_{\hat\theta} \hat R_\infty^{s,u}(\hat r, \hat \theta,\hat\Theta_0) =\mathcal{O}(\mu^{1-3\nu}), \quad  \partial_{\hat\theta} \hat \Theta_\infty^{s,u}(\hat r, \hat \theta,\hat\Theta_0) = \mathcal{O}(\mu^{1-3\nu}).
    \end{equation}
\end{proposition}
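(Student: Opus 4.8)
The plan is to use $\hat r$ as a time-like variable and to propagate the graph parameterization from a fixed section near infinity down to the sphere $\hat r=1-\mu+\kappa\mu^\nu$, controlling the drift from the $\mu=0$ solutions of Lemma~\ref{lemma: Invariant manifolds of infinity as graphs for mu=0} by a variation-of-constants estimate along characteristics (i.e.\ along the orbits that make up the manifold). I would only treat $W^s_\mu(\Alpha_{\hat\Theta_0})$ and deduce the unstable case from the reversibility $(\hat r,\hat\theta,\hat R,\hat\Theta;t)\mapsto(\hat r,-\hat\theta,-\hat R,\hat\Theta;-t)$ of the system associated to \eqref{eqn: Hamiltonian Polar Rotating Coordinates centered at CM J}, which exchanges $W^s_\mu(\Alpha_{\hat\Theta_0})$ with $W^u_\mu(\Alpha_{\hat\Theta_0})$ and turns the first line of \eqref{eqn: (R,Theta) infty mu=0} into the second.

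First I would set up the reduced system. Since $\hat V$ vanishes on $\{x=0\}$, the orbit $\Alpha_{\hat\Theta_0}$ — hence $W^s_\mu(\Alpha_{\hat\Theta_0})$ — lies in the energy level $\{\hat{\mathcal H}_\mu=-\hat\Theta_0\}$, on which one solves the energy relation for $\hat R$: writing $d_J=(\hat r^2-2\hat r(1-\mu)\cos\hat\theta+(1-\mu)^2)^{1/2}$ one has $\hat R^2=\tfrac2{\hat r}-\tfrac{\hat\Theta^2}{\hat r^2}+2\hat V(\hat r,\hat\theta;\mu)$, and on $W^s_\mu$ I take the positive root $\hat R=\mathcal R(\hat r,\hat\theta,\hat\Theta;\mu)$. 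Two elementary facts drive the estimates on $\{\hat r\ge1-\mu+\kappa\mu^\nu\}$: (a) $d_J\ge \hat r-(1-\mu)\ge\kappa\mu^\nu$, because $d_J^2\ge(\hat r-(1-\mu))^2$; and (b) since $|\hat\Theta_0|\le\sqrt2-m$ and, along the manifold, $\hat\Theta$ stays $o(1)$-close to $\hat\Theta_0$ while $\hat V=\mathcal O(\mu/d_J)=\mathcal O(\mu^{1-\nu})$, the function $\mathcal R$ is bounded above and below by positive constants depending only on $m$. Thus $\hat r$ is a legitimate reparameterization of time, and the flow restricted to the energy level becomes
\[
\frac{d\hat\theta}{d\hat r}=\frac{\hat\Theta/\hat r^2-1}{\mathcal R(\hat r,\hat\theta,\hat\Theta;\mu)},\qquad \frac{d\hat\Theta}{d\hat r}=\frac{\partial_{\hat\theta}\hat V(\hat r,\hat\theta;\mu)}{\mathcal R(\hat r,\hat\theta,\hat\Theta;\mu)},
\]
which at $\mu=0$ has $\hat V\equiv0$, $\mathcal R=\tfrac1{\hat r}\sqrt{2\hat r-\hat\Theta_0^2}$, $\hat\Theta\equiv\hat\Theta_0$, recovering \eqref{eqn: (R,Theta) infty mu=0}.

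Next I would fix a large constant $\hat r_0$ and invoke the analysis of the manifolds of infinity in \cite{MR0362403} (see also \cite{MR3927089,MR3455155,guardia2024oscillatorymotionsparabolicorbits}), together with smooth dependence on $\mu$, to get that for $\mu$ small $W^s_\mu(\Alpha_{\hat\Theta_0})\cap\{\hat r=\hat r_0\}$ is a $C^1$ graph over $\hat\theta\in\mathds T$ that is $\mathcal O(\mu)$-$C^1$-close to its $\mu=0$ counterpart. I would then propagate this graph by the flow of the reduced system from $\hat r=\hat r_0$ down to $\hat r=1-\mu+\kappa\mu^\nu$, a compact $\hat r$-interval. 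Comparing a solution $(\hat\theta(\hat r),\hat\Theta(\hat r))$ issuing from this graph with the solution on $W^s_0(\Alpha_{\hat\Theta_0})$ having the same value of $\hat\theta$ at $\hat r_0$, the difference solves a linear system whose forcing is the $\mu$-perturbation of the vector field; its Sun and rotating-frame parts are uniformly $\mathcal O(\mu)$, the only singular piece being the Jupiter term, for which $\partial_{\hat\theta}\hat V=\mathcal O(\mu/d_J^2)$ and $\hat V=\mathcal O(\mu/d_J)$ (using $|\sin\hat\theta|\lesssim d_J$ near $\hat\theta=0$). Along the orbit, in an $\mathcal O(\mu^\nu)$-window around its closest approach to Jupiter one has $d_J\asymp(d_{\min}^2+c^2(\hat r-\hat r_c)^2)^{1/2}$ with $d_{\min}\ge\kappa\mu^\nu$, so $\int\mu\,d_J^{-2}\,d\hat r=\mathcal O(\mu^{1-\nu})$ and $\int\mu\,d_J^{-1}\,d\hat r=\mathcal O(\mu|\log\mu|)$. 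A Gronwall estimate then gives $\hat\Theta(\hat r)-\hat\Theta_0=\mathcal O(\mu^{1-\nu})$, and feeding this back into $\hat R-\hat R_0^{s}=\big(-(\hat\Theta^2-\hat\Theta_0^2)/\hat r^2+2\hat V\big)/(\mathcal R+\hat R_0^{s})=\mathcal O(\mu^{1-\nu})=\mathcal O(\mu^{1-2\nu})$ yields \eqref{eqn: (R,Theta) as graphs inf muneq0}. That the image stays a graph over $\hat\theta$ is automatic, since $d\hat\theta/d\hat r=-1/\mathcal R+\mathcal O(\mu)\neq0$ keeps $\hat\theta$ monotone along the flow.

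Finally, for the $\hat\theta$-derivative bounds I would differentiate the variational system once more with respect to the $\hat\theta$-coordinate of the data: this brings in $\partial_{\hat\theta}^2\hat V=\mathcal O(\mu/d_J^3)$ and $\partial_{\hat\theta}\mathcal R=\mathcal O(\mu/d_J^2)$, and since $\partial_{\hat\theta}\hat\theta=1+\mathcal O(\mu^{1-\nu})$ stays bounded, repeating the integral estimate with $\int\mu\,d_J^{-3}\,d\hat r=\mathcal O(\mu^{1-2\nu})$ and Gronwall gives $\partial_{\hat\theta}\hat\Theta_\infty^{s},\partial_{\hat\theta}\hat R_\infty^{s}=\mathcal O(\mu^{1-2\nu})$, in particular the stated $\mathcal O(\mu^{1-3\nu})$. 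The hard part, I expect, is the passage near Jupiter: one must verify that $\mathcal R$ (hence $\hat R$) never vanishes on the region, so that the $\hat r$-reparameterization and the graph structure genuinely survive the close encounter, and one must bound the near-collision integrals with the correct powers of $\mu^\nu$ (and track how the $\mathcal O$-constants depend on $m$ and $\kappa$); the exponents $1-2\nu$ and $1-3\nu$ are positive precisely when $\nu\in(0,\tfrac13)$, which is exactly the hypothesis.
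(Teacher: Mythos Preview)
Your approach is correct but genuinely different from the paper's. The paper (Appendix~\ref{appendix: proposition 5.2}) exploits the Lagrangian structure of the manifolds to write them as gradients of generating functions and reduces the Hamilton--Jacobi equation, after reparameterizing by the unperturbed separatrix \eqref{eqn: orbits non-rotating mu=0}, to a semilinear PDE $(\partial_{\hat u}-\partial_{\hat v})\hat T_1=\mathcal F(\hat T_1)$ which it solves by a contraction in a Banach space of functions analytic on a complex strip $\mathds T_\sigma$ of width $\sigma\sim\mu^\nu$ in the angle variable; the exponent $1-2\nu$ arises from pairing the pointwise bound $|\hat V|\lesssim\mu^{1-\nu}$ of Lemma~\ref{lemma: Bound for the potential} with a $\mu^{-\nu}$ loss in summing the Fourier series over the narrowed strip, and the $1-3\nu$ from a subsequent Cauchy estimate. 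Your real-variable propagation with $\hat r$ as time and a Gronwall argument avoids this machinery entirely and, using only $d_J\ge\hat r-(1-\mu)$, actually delivers the sharper bounds $\mathcal O(\mu^{1-\nu})$ for the functions and $\mathcal O(\mu^{1-2\nu})$ for the $\hat\theta$-derivatives; what the paper's route buys is real-analyticity of the graphs and a uniform treatment of the semi-infinite range $\hat r\to\infty$ via the weighted norms $\|\cdot\|_a$, whereas yours gives only $C^1$ --- which, however, is all that is used downstream (Proposition~\ref{prop: Parameterization of the invariant manifolds of infinity in (r,theta,R,Theta)} and Theorem~\ref{thm: distance transversality}). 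One point in your sketch worth making explicit is the bootstrap: the lower bound on $\mathcal R$ uses that $\hat\Theta$ stays $o(1)$-close to $\hat\Theta_0$, which in turn relies on $\mathcal R\ne0$ to define the $\hat r$-flow; this is the standard continuation argument and is routine, but should be stated rather than left implicit.
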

The proof of this proposition is done in Appendix \ref{appendix: proposition 5.2}.

To compare the invariant manifolds of infinity with the ejection and collision curves $ \Lambda_{\mathcal J}^+(\mu)$ and $ \Lambda_{\mathcal J}^-(\mu)$ in \eqref{eqn: Curves ECO-section}, we express the invariant manifolds $W_\mu^{s,u}(\Alpha_{\hat \Theta_0})$ in \eqref{eqn: invariant manifolds of infty as graphs in terms of hat theta} into rotating polar coordinates centered at Jupiter (see \eqref{eqn: (q,p) - (r,theta,R,Theta)}) and we prove that they intersect the section $\Sigma_\nu$ in \eqref{eqn: Section Sigma (r,theta,R,Theta)}. 
\begin{proposition}\label{prop: Parameterization of the invariant manifolds of infinity in (r,theta,R,Theta)}
    Fix $\nu \in (0,\frac 1 3)$, $\hat \Theta_0 \in \left(\frac{1-\sqrt{3}}{2}, \frac{1+\sqrt{3}}{2}\right)$ and $h = - \hat \Theta_0$. Then there exists $\mu_0> 0$ such that, for $0<\mu<\mu_0$ , the invariant manifold $W_\mu^{u}(\Alpha_{\hat\Theta_0})$ intersects the section $ \Sigma_\nu^<$ in \eqref{eqn: Transverse sections} in a curve that can be written in coordinates $(r,\theta,R,\Theta)$ as a graph of the form
    \begin{equation}\label{eqn: curve unstable inf cap r=mudelta2}
    \begin{aligned}
    \Lambda_\infty^u(\hat \Theta_0, \mu) &= \left\{\left(r, \theta, R_\infty^u(\theta,\hat{\Theta}_0), \Theta_\infty^u(\theta,\hat{\Theta}_0)\right)\colon r= \mu^\nu, \theta \in \left(-\frac \pi 4, \frac \pi 4\right)\right\} = W_\mu^u(\Alpha_{\hat\Theta_0})\;\bigcap\;  \Sigma_\nu^<,
    \end{aligned}
    \end{equation}
    where $R_\infty^u$ and $\Theta_\infty^u$ are of the form
    \begin{alignat}{3}\label{eqn: R_infu and Theta_infu}
    &R_\infty^u(\theta,\hat{\Theta}_0) &&= f(\theta,\hat \Theta_0) + \mathcal{O}(\mu^\nu),\quad &&\partial_{\theta}R_\infty^u(\theta,\hat\Theta_0) = \partial_\theta f(\theta,\hat \Theta_0) + \mathcal{O}(\mu^{1-3\nu}),\\
    &\Theta_\infty^u(\theta,\hat{\Theta}_0) &&= \mu^\nu\partial_\theta f(\theta,\hat\Theta_0) + \mathcal{O}(\mu^{2\nu}),\quad &&\partial_{\theta}\Theta_\infty^u(\theta,\hat\Theta_0) =\mu^\nu\partial^2_{\theta}f(\theta,\hat\Theta_0) + \mathcal{O}(\mu^{1-2\nu}),
    \end{alignat}
    and $f(\theta,\hat\Theta_0) = -\cos\theta \sqrt{2-\hat{\Theta}_0^2}+ \sin\theta\left(\hat{\Theta}_0 - 1\right) < 0$.
    
    Moreover, due to the symmetry \eqref{eqn: Symmetry in synodical polar coordinates centered at J}, the curve $\Lambda_\infty^s(\hat\Theta_0,\mu) = W_\mu^s(\Alpha_{\hat\Theta_0})\cap \Sigma_\nu^>$ can be written as
    \begin{equation}\label{eqn: curve stable inf cap r=mudelta2}
        \begin{aligned}
        \Lambda_\infty^s(\hat \Theta_0, \mu) &= \left\{\left(r, \theta, R_\infty^s(\theta,\hat{\Theta}_0), \Theta_\infty^s(\theta,\hat{\Theta}_0)\right)\colon r= \mu^\nu, \theta \in \left(-\frac \pi 4, \frac \pi 4\right)\right\},
        \end{aligned}
    \end{equation}
    where
    \begin{equation}\label{eqn: R_infs and Theta_infs}
    \begin{aligned}
        R_\infty^s(\theta,\hat{\Theta}_0) =  -R_\infty^u(-\theta,\hat{\Theta}_0),\qquad \Theta_\infty^s(\theta,\hat{\Theta}_0) = \Theta_\infty^u(-\theta,\hat{\Theta}_0).
    \end{aligned}
    \end{equation}
\end{proposition}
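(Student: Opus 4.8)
The plan is to push forward the graph parameterization of $W_\mu^{u}(\Alpha_{\hat\Theta_0})$ from Proposition~\ref{prop: Parameterization of the invariant manifolds of infinity}, written in center-of-mass rotating polar coordinates $(\hat r,\hat\theta,\hat R,\hat\Theta)$, through the chain of changes of variables: from $(\hat r,\hat\theta,\hat R,\hat\Theta)$ to center-of-mass Cartesian $(\hat q,\hat p)$; then to Jupiter-centered Cartesian $(q,p)$ via the translation $q_1=\hat q_1-(1-\mu)$, $q_2=\hat q_2$, $p_1=\hat p_1$, $p_2=\hat p_2-(1-\mu)$ (see \eqref{eqn: Hamiltonian in cartesian coordinates centered at P2}); then to Jupiter-centered rotating polar $(r,\theta,R,\Theta)$ via $\Upsilon^{-1}$ in \eqref{eqn: (q,p) - (r,theta,R,Theta)}; and finally to intersect the resulting set with $\{r=\mu^\nu\}$. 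Note first that $\Alpha_{\hat\Theta_0}\subset\{\hat{\mathcal H}_\mu=-\hat\Theta_0\}$, so $W_\mu^u(\Alpha_{\hat\Theta_0})$ lies in $\{\hat{\mathcal H}_\mu=-\hat\Theta_0\}=\{\mathcal H_\mu=h\}$ with $h=-\hat\Theta_0$; hence the output of this procedure automatically sits in $\Sigma_\nu$ (see \eqref{eqn: Section Sigma (r,theta,R,Theta)}), and the task is to identify it as a graph, to decide whether it lands in $\Sigma_\nu^<$ or $\Sigma_\nu^>$, and to track the errors.

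First I would describe the set $\{r=\mu^\nu\}$ in the coordinates $(\hat r,\hat\theta)$. A point at distance $r=\mu^\nu$ from Jupiter with Jupiter-polar angle $\theta$ has $\hat q=(1-\mu+\mu^\nu\cos\theta,\ \mu^\nu\sin\theta)$, so $\hat r=1-\mu+\mu^\nu\cos\theta+\mathcal{O}(\mu^{2\nu})$ and $\hat\theta=\mu^\nu\sin\theta+\mathcal{O}(\mu^{2\nu})$, with $\partial_\theta\hat r,\ \partial_\theta\hat\theta=\mathcal{O}(\mu^\nu)$. Since $\cos\theta>\tfrac{\sqrt2}{2}$ for $\theta\in(-\pi/4,\pi/4)$, applying Proposition~\ref{prop: Parameterization of the invariant manifolds of infinity} with $\kappa=\tfrac12$ and with some $m>0$ such that $\hat\Theta_0\in[-\sqrt2+m,\sqrt2-m]$ (possible because $|\hat\Theta_0|<\tfrac{1+\sqrt3}{2}<\sqrt2$), all these points lie in its domain $\hat r\ge1-\mu+\kappa\mu^\nu$ once $\mu$ is small. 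Composing the $C^1$ map $\theta\mapsto(\hat r(\theta),\hat\theta(\theta))$ with the graph $(\hat R_\infty^u,\hat\Theta_\infty^u)$ over $(\hat r,\hat\theta)$ and transporting back to $(r,\theta,R,\Theta)$ produces a $C^1$ curve which, at $r=\mu^\nu$, is a graph over $\theta\in(-\pi/4,\pi/4)$ --- this is the form \eqref{eqn: curve unstable inf cap r=mudelta2}.

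Next I would extract the leading order. For $\mu=0$ (so $r=\mu^\nu\to0$) we get $\hat r\to1$, $\hat\theta\to0$, and by Lemma~\ref{lemma: Invariant manifolds of infinity as graphs for mu=0} $\hat R_0^u(1,\hat\Theta_0)=-\sqrt{2-\hat\Theta_0^2}$ and $\hat\Theta=\hat\Theta_0$; converting the momenta gives $p_1=\hat p_1\to-\sqrt{2-\hat\Theta_0^2}$ and $p_2=\hat p_2-(1-\mu)\to\hat\Theta_0-1$, hence $R=p_1\cos\theta+p_2\sin\theta\to f(\theta,\hat\Theta_0)$ and $\Theta=q_1p_2-q_2p_1=r(p_2\cos\theta-p_1\sin\theta)\to\mu^\nu\partial_\theta f(\theta,\hat\Theta_0)$, with $f$ as in the statement. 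Substituting the $\mathcal{O}(\mu^{1-2\nu})$ corrections of Proposition~\ref{prop: Parameterization of the invariant manifolds of infinity} together with the $\mathcal{O}(\mu^\nu)$ Taylor remainders of the coordinate change --- and, for the derivatives, the $\partial_{\hat\theta}$-bounds of that proposition and $\partial_\theta\hat r,\partial_\theta\hat\theta=\mathcal{O}(\mu^\nu)$ --- into the chain rule yields \eqref{eqn: R_infu and Theta_infu}. To see that the curve lies in $\Sigma_\nu^<$ rather than merely in $\Sigma_\nu$, one checks $f(\theta,\hat\Theta_0)<0$ on $[-\pi/4,\pi/4]$: writing $f=\cos\theta\bigl(-\sqrt{2-\hat\Theta_0^2}+\tan\theta\,(\hat\Theta_0-1)\bigr)$ and using $\cos\theta>0$, $|\tan\theta|\le1$ there, the sign is governed by $\tan\theta(\hat\Theta_0-1)\le|\hat\Theta_0-1|<\sqrt{2-\hat\Theta_0^2}$; the last inequality squares to $2\hat\Theta_0^2-2\hat\Theta_0-1<0$, i.e. to $\hat\Theta_0\in\bigl(\tfrac{1-\sqrt3}{2},\tfrac{1+\sqrt3}{2}\bigr)$, which is exactly the hypothesis. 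Hence $R_\infty^u<0$ for $\mu$ small, so $\Lambda_\infty^u(\hat\Theta_0,\mu)\subset\Sigma_\nu^<$.

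Finally, the stable manifold follows by reversibility. The involution \eqref{eqn: Symmetry in synodical polar coordinates centered at J}, $(r,\theta,R,\Theta;t)\mapsto(r,-\theta,-R,\Theta;-t)$, fixes $\Alpha_{\hat\Theta_0}$ (it preserves $\hat\Theta_0$), interchanges $W_\mu^u(\Alpha_{\hat\Theta_0})$ with $W_\mu^s(\Alpha_{\hat\Theta_0})$ and $\Sigma_\nu^<$ with $\Sigma_\nu^>$, giving \eqref{eqn: curve stable inf cap r=mudelta2} and \eqref{eqn: R_infs and Theta_infs} at once. I expect the main obstacle to be purely the bookkeeping of the error terms: propagating the estimates of Proposition~\ref{prop: Parameterization of the invariant manifolds of infinity}, especially its $\partial_{\hat\theta}$-estimates, through the successive coordinate changes to recover the exact powers of $\mu$ in \eqref{eqn: R_infu and Theta_infu}, using throughout $\nu<1/3$ to keep all error exponents positive.
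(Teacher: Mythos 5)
Your proposal is correct and follows essentially the same route as the paper's proof: locate the circle $\{r=\mu^\nu\}$ inside the domain of the graph parameterization of Proposition~\ref{prop: Parameterization of the invariant manifolds of infinity}, transport $(\hat R_\infty^u,\hat\Theta_\infty^u)$ through the change of coordinates \eqref{eqn: (R,Theta) - (hat R, hat Theta)} to extract $f(\theta,\hat\Theta_0)$ as the leading term, verify $f<0$ precisely under the hypothesis $\hat\Theta_0\in\bigl(\tfrac{1-\sqrt3}{2},\tfrac{1+\sqrt3}{2}\bigr)$ so the curve lands in $\Sigma_\nu^<$, and invoke the reversibility \eqref{eqn: Symmetry in synodical polar coordinates centered at J} for the stable manifold. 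The only cosmetic difference is that you use a single fixed $\kappa=\tfrac12$ (legitimate, since for $\theta\in(-\pi/4,\pi/4)$ one has $\hat r\ge 1-\mu+\tfrac12\mu^\nu$ and the domain of that proposition is a half-line in $\hat r$), whereas the paper inverts the relation between $\kappa$ and $\theta$ to build $\kappa(\theta,\mu)$; both yield the same estimates.
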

\begin{proof}
The first step is to identify, for a $\kappa>0$ on an interval  to be determined,  the points in $W_\mu^u(\Alpha_{\hat\Theta_0})\cap \{\hat r  = 1-\mu+\kappa\mu^\nu\}$ that also lie on the section $\Sigma_\nu$ defined in \eqref{eqn: Section Sigma (r,theta,R,Theta)}. We do it in cartesian coordinates $(q,p)$, introduced in \eqref{eqn: (q,p) - (r,theta,R,Theta)}, so we obtain two points (which depend on $\kappa$ and $\mu$) satisfying $\hat r = 1-\mu+\kappa\mu^\nu$ and $r = \mu^\nu$. Namely
\begin{equation*}
    \begin{aligned}
        q_1^2 + q_2^2 = r^2 = \mu^{2\nu},\quad \left(q_1+(1-\mu)\right)^2 + q_2^2 = \hat r^2 =  (1-\mu+\kappa\mu^\nu)^2.
    \end{aligned}
\end{equation*}
Solving for $(q_1,q_2)$ gives two points
\begin{equation*}
\begin{aligned}
    q_1(\kappa,\mu) = \mu^\nu\left(\kappa - \mu^\nu \frac{1-\kappa^2}{2(1-\mu)}\right),\quad q_2^\pm(\kappa,\mu) = \pm \mu^\nu \sqrt{1-\kappa^2} \cdot\sqrt{\frac{1-\mu+\kappa\mu^\nu}{1-\mu} - \mu^{2\nu}\frac{1-\kappa^2}{4(1-\mu)^2}},
\end{aligned}
\end{equation*}
whose arguments are respectively given by
\begin{equation*}
    \theta^\pm(\kappa,\mu) = \pm \arctan\left(\frac{\sqrt{1-\kappa^2}}{\kappa - \mu^\nu \frac{1-\kappa^2}{2(1-\mu)}}\sqrt{\frac{1-\mu+\kappa\mu^\nu}{1-\mu} - \mu^{2\nu}\frac{1-\kappa^2}{4(1-\mu)^2}}\right)
\end{equation*}
These arguments are well defined for any $\kappa \in\left(\kappa_0(\mu),1\right) \subset(0,1)$ with 
\[\kappa_0(\mu) := \frac{\sqrt{(1-\mu)^2+\mu^{2\nu}}- (1-\mu)}{\mu^\nu}=\frac{1}{2}\mu^\nu+\mathcal{O}(\mu^{3\nu}).\]
The derivative satisfies
\begin{equation*}
    \partial_{\kappa}\theta^\pm(\kappa,\mu)
    = \mp \frac{1-\mu+\kappa\mu^\nu}{\sqrt{1-\kappa^2}\sqrt{(1-\mu)^2 + \kappa\mu^\nu - \kappa\mu^{1+\nu} - \frac{\mu^{2\nu}}{4}(1-\kappa^2)}},
\end{equation*}
so that $\partial_\kappa\theta^+(\kappa,\mu)<0$ and $\partial_\kappa\theta^-(\kappa,\mu)>0$ for $\kappa\in(\kappa_0(\mu),1)$. Hence $\theta^+(\cdot,\mu)$ is a  diffeomorphism from $(\kappa_0(\mu),1)$ onto $\left(0,\frac{\pi}{2}\right)$, and $\theta^-(\cdot,\mu)$ is a diffeomorphism from $(\kappa_0(\mu),1)$ onto $\left(-\frac{\pi}{2},0\right)$.

Therefore, the Inverse Function Theorem yield $C^1$ inverse functions for $\theta^+(\kappa,\mu)$ and $\theta^-(\kappa,\mu)$, which we denote by $\kappa^+(\theta,\mu)$ and $\kappa^-(\theta,\mu)$, defined on $\left(0,\frac{\pi}{2}\right)$ and $\left(-\frac{\pi}{2},0\right)$, respectively. These functions satisfy $\lim_{\theta\to 0^{\pm }}\kappa^\pm(\theta,\mu)=1$ and $\lim_{\theta\to 0^{\pm }}\partial_\theta\kappa^\pm(\theta,\mu)=0$ and, therefore, one can define the smooth function
\begin{equation*}
    \kappa(\theta,\mu):=
    \begin{cases}
        \kappa^-(\theta,\mu), & \theta\in\left(-\frac{\pi}{2},0\right],\\[1mm]
        \kappa^+(\theta,\mu), & \theta\in\left[0,\frac{\pi}{2}\right).
    \end{cases}
\end{equation*}
From now on, since it is enough for our purposes,  we restrict to $\theta\in\left(-\frac{\pi}{4},\frac{\pi}{4}\right)$.

Now we consider the transformation from the coordinates $(r,\theta,R,\Theta)$ in \eqref{eqn: (q,p) - (r,theta,R,Theta)} to the ones centered at the center of mass $(\hat r,\hat \theta,\hat R,\hat \Theta)$ given by
\begin{equation}\label{eqn: (R,Theta) - (hat R, hat Theta)}
    \begin{aligned}
    \hat \theta(r,\theta) =& \arctan\left(\frac{r\sin\theta}{r\cos\theta+(1-\mu)}\right),\\
    R(r,\theta,\hat R,\hat \Theta) =&\, \hat{R} \cdot \frac{r+(1-\mu)\cos\theta}{\sqrt{r^2 + 2r(1-\mu)\cos\theta + (1-\mu)^2}} \\
    &+(1-\mu)\sin\theta\left(\frac{\hat{\Theta}}{r^2 + 2r(1-\mu)\cos\theta + (1-\mu)^2} -1\right),\\
    \Theta(r,\theta, \hat R,\hat \Theta) =&\, \hat \Theta \cdot\frac{r(r+(1-\mu)\cos\theta)}{r^2 + 2r(1-\mu)\cos\theta + (1-\mu)^2}\\
    &- r(1-\mu)\left(\hat{R}\frac{\sin\theta}{\sqrt{r^2 + 2r(1-\mu)\cos\theta + (1-\mu)^2}} +\cos\theta\right).
    \end{aligned}
\end{equation}
Relying on \eqref{eqn: (R,Theta) - (hat R, hat Theta)} and Proposition \ref{prop: Parameterization of the invariant manifolds of infinity} we have, for some fixed $m > 0$ (independent of $\mu)$ and
\[\hat\Theta_0\in \left[-\sqrt 2 +m,\sqrt 2 - m\right],\quad \hat r = 1-\mu+\kappa(\theta,\mu)\mu^\nu,\quad r=\mu^\nu,\] the following estimates
\begin{equation}\label{eqn: hat R, hat Theta inf}
\begin{aligned}
\hat R &= \hat R(\theta,\hat\Theta_0) = \hat R_\infty^u(1-\mu+\kappa(\theta,\mu)\mu^\nu,\hat\theta(\mu^\nu,\theta),\hat\Theta_0) \\
&= -\frac{\sqrt{2\left(1-\mu+\kappa(\theta,\mu)\mu^\nu\right)-\hat\Theta_0^2}}{1-\mu+\kappa(\theta,\mu)\mu^\nu} + \mathcal{O}_{C^1}(\mu^{1-2\nu}) =- \sqrt{2-\hat\Theta_0^2} + \mathcal{O}_{C^1}(\mu^\nu),\\
\hat \Theta &=\hat \Theta(\theta,\hat\Theta_0) = \hat \Theta_\infty^u\left(1-\mu+\kappa(\theta,\mu)\mu^\nu,\hat\theta(\mu^\nu,\theta),\hat\Theta_0\right) = \hat\Theta_0 + \mathcal{O}_{C^1}(\mu^{1-2\nu}),
\end{aligned}
\end{equation}
where we have used that $\mu^\nu \gg \mu^{1-2\nu}$ since $\nu \in (0,1/3)$. Computing both $\partial_{\theta}R(\mu^\nu,\theta,\hat R(\theta,\hat\Theta_0),\hat\Theta(\theta,\hat\Theta_0))$ and $\partial_{\theta}\Theta(\mu^\nu,\theta,\hat R(\theta,\hat\Theta_0),\hat\Theta(\theta,\hat\Theta_0))$ from \eqref{eqn: (R,Theta) - (hat R, hat Theta)} (relying on the estimates in \eqref{eqn: hat R, hat Theta inf}) yields \eqref{eqn: R_infu and Theta_infu}. Finally, we restrict the domain of $\hat\Theta_0$ to
\[ \hat\Theta_0 \in \left(\frac{1-\sqrt 3}{2},\frac{1+\sqrt 3}{2}\right)\subset \left[-\sqrt 2 + m, \sqrt 2 - m\right],\]
for $m = \frac{1}{200}$ so that $\cos\theta > \sin \theta$ and $\sqrt{2-\Theta_0^2 }> |\hat\Theta_0-1|$. Therefore the leading term for $R_\infty^u$ in \eqref{eqn: R_infu and Theta_infu} given by $f(\theta,\hat\Theta_0)= -\cos\theta\sqrt{2-\hat\Theta_0^2} + \sin\theta(\hat\Theta_0-1) < 0$, completing the proof.

\end{proof}

\section{The distance between the invariant manifolds}\label{sec: The distance between the invariant manifolds and EC orbits}
Once we have characterized the invariant manifolds of infinity and the ejection/collision orbits with $\mathcal J$, we analyze their intersections at a common section. To this end, we recall:
\begin{itemize}
    \item The curves $\Lambda_{\mathcal J}^\mp(\mu)$, provided by Proposition \ref{prop: Parameterization of the ECO orbits in (r,theta,R,Theta)}, are the intersection of the ejection and collision orbits $\mathcal J^\mp$ (see Definition \ref{def: EC-orbits}) with the section $\Sigma_\gamma$ in \eqref{eqn: Section Sigma (r,theta,R,Theta)} with $\gamma \in (\frac{3}{11},1)$. They admit a graph parameterization in polar coordinates $(\theta,\Theta)$, as shown in \eqref{eqn: Curves ECO-section}.
    \item The curves $\Lambda_\infty^{u,s}(\hat \Theta_0,\mu)$, provided by Proposition \ref{prop: Parameterization of the invariant manifolds of infinity in (r,theta,R,Theta)}, are the intersection of the invariant manifolds of infinity with the section $ \Sigma_\nu$ with $\nu \in \left(0,\frac 1 3\right)$. They admit a graph parameterization in polar coordinates $(\theta,\Theta)$, as shown in \eqref{eqn: curve unstable inf cap r=mudelta2} and \eqref{eqn: curve stable inf cap r=mudelta2}.
    \item Taking $\gamma = \nu\in \left(\frac{3}{11},\frac 1 3\right)$, both curves  $\Lambda_{\mathcal J}^\mp(\mu)$ and $\Lambda_\infty^{u,s}(\hat \Theta_0,\mu)$ belong to the same section $\Sigma_\nu$.
\end{itemize}
The following theorem provides the transverse intersection between the curves $\Lambda_\infty^s(\hat\Theta_0,\mu) \subset W_\mu^s(\Alpha_{\hat\Theta_0})$ and $ \Lambda_{\mathcal J}^{-}(\mu)\subset \mathcal J^-$ in $\Sigma_\nu^>$ (see Remark \ref{remark: Angle ECO-orbits} and Proposition \ref{prop: Parameterization of the invariant manifolds of infinity}) for adequate values of $\hat \Theta_0$. By the symmetry \eqref{eqn: Symmetry in synodical polar coordinates centered at J}, there will be also a transverse intersection between the curves ${\Lambda}_\infty^u(\hat\Theta_0,\mu)\subset W_\mu^u(\Alpha_{\hat\Theta_0})$ and $\Lambda_{\mathcal J}^{+}(\mu) \subset \mathcal J^+$ in $\Sigma_\nu^<$.
\begin{theorem}\label{thm: distance transversality}
    Fix $\hat \Theta_0 \in \left(\frac{1-\sqrt 3}{3},\frac{1+\sqrt{3}}{3}\right)$, $\nu \in \left(\frac{3}{11}, \frac 1 3\right)$ and consider the section $\Sigma_\nu \subset \{\mathcal H_\mu = h\}$ in \eqref{eqn: Section Sigma (r,theta,R,Theta)}. There exists $\mu_0> 0$ such that, for any $\mu \in (0,\mu_0)$ we have
    \begin{itemize}
        \item $\Lambda_\infty^s(\hat\Theta_0,\mu)$ and $\Lambda_{\mathcal J}^{-}(\mu)$ intersect transversally at a point
        \begin{equation}\label{eqn:p-Thm62}
             p_-(\hat{\Theta}_0,\mu) = \left(\theta_-(\hat{\Theta}_0,\mu),\Theta_\infty^s\left(\theta_-(\hat{\Theta}_0,\mu),\hat{\Theta}_0\right)\right) = \left(\theta_-(\hat{\Theta}_0,\mu), \Theta_{\mathcal J}^-\left(\theta_-(\hat{\Theta}_0,\mu)\right)\right),
        \end{equation}
        where $\theta_-(\hat{\Theta}_0,\mu) = \arctan\left(\frac{\hat{\Theta}_0-1}{\sqrt{2-\hat{\Theta}_0^2}}\right) +\mathcal{O}\left(\mu^{\frac{11\nu-3}{8}}\right) \in \left(-\frac \pi 4, \frac \pi 4\right)$.
        \item  ${\Lambda}_\infty^u(\hat\Theta_0,\mu)$ and $\Lambda_{\mathcal J}^{+}(\mu)$ intersect transversally at a point
        \begin{equation}\label{eqn:p+Thm62}
        p_+(\hat{\Theta}_0,\mu) = \left(\theta_+(\hat{\Theta}_0,\mu),\Theta_\infty^u\left(\theta_+(\hat{\Theta}_0,\mu),\hat{\Theta}_0\right)\right) = \left(\theta_+(\hat{\Theta}_0,\mu), \Theta_{\mathcal J}^+\left(\theta_+(\hat{\Theta}_0,\mu)\right)\right),
        \end{equation}
        where $\theta_+(\hat{\Theta}_0,\mu) = -\theta_-(\hat\Theta_0,\mu) = -\arctan\left(\frac{\hat{\Theta}_0 - 1}{\sqrt{2-\hat{\Theta}_0^2}}\right) +\mathcal{O}\left(\mu^{\frac{11\nu-3}{8}}\right)\in \left(-\frac \pi 4, \frac \pi 4\right).$
    \end{itemize}
\end{theorem}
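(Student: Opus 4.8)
The plan is to reduce the statement to a simple--zero problem for a scalar function on the two--dimensional section $\Sigma_\nu$. Take $\gamma=\nu\in\left(\tfrac{3}{11},\tfrac13\right)$, so that by \propref{prop: Parameterization of the ECO orbits in (r,theta,R,Theta)} and \remref{remark: Angle ECO-orbits} the curve $\Lambda_{\mathcal J}^-(\mu)\subset\Sigma_\nu^>$ is the graph $\Theta=\Theta_{\mathcal J}^-(\theta)$, and by \propref{prop: Parameterization of the invariant manifolds of infinity in (r,theta,R,Theta)} the curve $\Lambda_\infty^s(\hat\Theta_0,\mu)\subset\Sigma_\nu^>$ is the graph $\Theta=\Theta_\infty^s(\theta,\hat\Theta_0)$, $\theta\in\left(-\tfrac\pi4,\tfrac\pi4\right)$. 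First I would check that $\Sigma_\nu^>$ is a genuine $2$--manifold with coordinates $(\theta,\Theta)$ near the relevant point: since $\partial_R\mathcal H_\mu=R$ and $R$ will turn out to be bounded away from $0$ there, the energy relation $\mathcal H_\mu(\mu^\nu,\theta,R,\Theta)=h$ determines $R=R(\theta,\Theta)$ on the branch $R>0$. Hence the two curves meet at a point with abscissa $\theta$ precisely when $g_\mu(\theta):=\Theta_\infty^s(\theta,\hat\Theta_0)-\Theta_{\mathcal J}^-(\theta)=0$ (the $R$--coordinates then coincide automatically), and the intersection is transverse in $\Sigma_\nu^>$ precisely when $g_\mu'(\theta)\neq0$ there.

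Second, I would insert the expansions and identify the unperturbed problem. By \eqref{eqn: R_infs and Theta_infs} and \eqref{eqn: R_infu and Theta_infu},
\[
\Theta_\infty^s(\theta,\hat\Theta_0)=\Theta_\infty^u(-\theta,\hat\Theta_0)=\mu^\nu\,\partial_\theta f(-\theta,\hat\Theta_0)+\mathcal O(\mu^{2\nu}),\qquad \partial_\theta\Theta_\infty^s(\theta,\hat\Theta_0)=-\mu^\nu\,\partial_\theta^2 f(-\theta,\hat\Theta_0)+\mathcal O(\mu^{1-2\nu}),
\]
where $f(\theta,\hat\Theta_0)=-\cos\theta\sqrt{2-\hat\Theta_0^2}+\sin\theta(\hat\Theta_0-1)$, so that $\partial_\theta^2 f=-f$; and by \eqref{eqn: (R,Theta) in curves ECO-section} with $\gamma=\nu$ both $\Theta_{\mathcal J}^-$ and $\partial_\theta\Theta_{\mathcal J}^-$ are $\mathcal O\!\left(\mu^{\frac{19\nu-3}{8}},\mu\right)$. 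Using $\tfrac3{11}<\nu<\tfrac13$ (so that, after dividing by $\mu^\nu$, all these error terms are $\mathcal O(\mu^{\frac{11\nu-3}{8}})$ in the $C^0$ sense) one gets $\mu^{-\nu}g_\mu(\theta)=\partial_\theta f(-\theta,\hat\Theta_0)+\mathcal O\!\left(\mu^{\frac{11\nu-3}{8}}\right)$ at the level of values, while $\frac{d}{d\theta}\big[\partial_\theta f(-\theta,\hat\Theta_0)\big]=f(-\theta,\hat\Theta_0)$. Since $\hat\Theta_0\in\left(\tfrac{1-\sqrt3}{3},\tfrac{1+\sqrt3}{3}\right)\subset\left(\tfrac{1-\sqrt3}{2},\tfrac{1+\sqrt3}{2}\right)$ we have $\sqrt{2-\hat\Theta_0^2}>|\hat\Theta_0-1|$, hence $f(-\theta,\hat\Theta_0)<0$ for all $\theta\in\left[-\tfrac\pi4,\tfrac\pi4\right]$; therefore $\theta\mapsto\partial_\theta f(-\theta,\hat\Theta_0)$ is strictly decreasing there, positive at $-\tfrac\pi4$ and negative at $\tfrac\pi4$, with a unique zero $\theta^*=\arctan\!\left(\tfrac{\hat\Theta_0-1}{\sqrt{2-\hat\Theta_0^2}}\right)\in\left(-\tfrac\pi4,\tfrac\pi4\right)$, and the explicit values $\cos\theta^*=\sqrt{2-\hat\Theta_0^2}/\sqrt{3-2\hat\Theta_0}$, $\sin\theta^*=(\hat\Theta_0-1)/\sqrt{3-2\hat\Theta_0}$ yield the nondegeneracy $f(-\theta^*,\hat\Theta_0)=-\sqrt{3-2\hat\Theta_0}<0$.

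Third, I would close the argument perturbatively. For $\mu$ small, $\mu^{-\nu}g_\mu$ is $C^1$, is $\mathcal O\!\left(\mu^{\frac{11\nu-3}{8}}\right)$ at $\theta^*$, and stays strictly monotone on a fixed neighbourhood of $\theta^*$ (its derivative remains close to $f(-\theta^*,\hat\Theta_0)\neq0$, the additional error being $o(1)$ thanks to $\nu<\tfrac13$ and $\nu>\tfrac3{11}$); hence it has there a unique zero $\theta_-(\hat\Theta_0,\mu)=\theta^*+\mathcal O\!\left(\mu^{\frac{11\nu-3}{8}}\right)\in\left(-\tfrac\pi4,\tfrac\pi4\right)$ --- equivalently one may apply the implicit function theorem --- which is the point $p_-$; note $\tfrac{11\nu-3}{8}>0$. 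For transversality I would then differentiate, using $\partial_\theta^2 f=-f$ and continuity in $\theta$:
\[
g_\mu'(\theta_-)=-\mu^\nu\,\partial_\theta^2 f(-\theta_-,\hat\Theta_0)+\mathcal O(\mu^{1-2\nu})+\mathcal O\!\left(\mu^{\frac{19\nu-3}{8}}\right)=-\mu^\nu\sqrt{3-2\hat\Theta_0}\,\bigl(1+o(1)\bigr)\neq0,
\]
where $\mu^{1-2\nu}=o(\mu^\nu)$ uses $\nu<\tfrac13$, $\mu^{\frac{19\nu-3}{8}}=o(\mu^\nu)$ uses $\nu>\tfrac3{11}$, and $\mu^{\nu+\frac{11\nu-3}{8}}=o(\mu^\nu)$; the same computation gives $R=R_\infty^s(\theta_-,\hat\Theta_0)=-f(-\theta_-,\hat\Theta_0)+\mathcal O(\mu^\nu)=\sqrt{3-2\hat\Theta_0}+o(1)>0$, which justifies the coordinate choice $(\theta,\Theta)$ above. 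Finally, the second bullet is obtained from the first by the reversibility \eqref{eqn: Symmetry in synodical polar coordinates centered at J}: it is a diffeomorphism carrying $\Sigma_\nu^>$ onto $\Sigma_\nu^<$ and interchanging $W_\mu^s(\Alpha_{\hat\Theta_0})\leftrightarrow W_\mu^u(\Alpha_{\hat\Theta_0})$ and $\mathcal J^-\leftrightarrow\mathcal J^+$, hence $\Lambda_\infty^s\leftrightarrow\Lambda_\infty^u$ and $\Lambda_{\mathcal J}^-\leftrightarrow\Lambda_{\mathcal J}^+$ (consistently with \eqref{eqn: R_infs and Theta_infs} and \eqref{eqn: theta+-}), so it sends the transverse intersection $p_-$ to a transverse intersection $p_+$ with $\theta_+=-\theta_-$.

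The hard part is not the perturbative scheme --- the $\mu=0$ picture is explicit and nondegenerate --- but the bookkeeping of error sizes: one has to verify that every error term produced by \propref{prop: Parameterization of the ECO orbits in (r,theta,R,Theta)} (the $\mu^{\frac{19\nu-3}{8}}$ and $\mu$ in the ejection/collision curve) and by \propref{prop: Parameterization of the invariant manifolds of infinity in (r,theta,R,Theta)} (the $\mu^{2\nu}$ and the $C^1$ error $\mu^{1-2\nu}$ of the manifolds of infinity) is genuinely dominated by the $\mu^\nu$--sized leading part of $\Theta_\infty^{s,u}$; this is precisely what forces the window $\nu\in\left(\tfrac3{11},\tfrac13\right)$, with $\nu>\tfrac3{11}$ taking care of the ejection/collision errors and $\nu<\tfrac13$ of the $C^1$ error at infinity.
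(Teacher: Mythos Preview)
Your proposal is correct and follows essentially the same approach as the paper: both reduce to finding a nondegenerate zero of the scalar function $\mu^{-\nu}\bigl(\Theta_\infty^s(\theta,\hat\Theta_0)-\Theta_{\mathcal J}^-(\theta)\bigr)$, identify the leading term as $\cos\theta(\hat\Theta_0-1)-\sin\theta\sqrt{2-\hat\Theta_0^2}$ with simple zero $\theta_*=\arctan\!\bigl((\hat\Theta_0-1)/\sqrt{2-\hat\Theta_0^2}\bigr)$ and derivative $-\sqrt{3-2\hat\Theta_0}\neq0$ there, and conclude by the Implicit Function Theorem. The only presentational differences are that you justify the $(\theta,\Theta)$ chart on $\Sigma_\nu^>$ explicitly and invoke the reversibility \eqref{eqn: Symmetry in synodical polar coordinates centered at J} for the second bullet, whereas the paper writes both distances $d_\mp$ at once and then treats $d_-$.
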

\begin{proof}
We compute the distance between these curves using the $\Theta$-component in the transverse section $ \Sigma_\nu$ in \eqref{eqn: Section Sigma (r,theta,R,Theta)} for $h = -\hat\Theta_0$. Relying on \eqref{eqn: (R,Theta) in curves ECO-section}, \eqref{eqn: R_infu and Theta_infu} and \eqref{eqn: R_infs and Theta_infs}, the asymptotic formulas for the distances between $\Lambda_\infty^{s}(\hat{\Theta}_0,\mu), \Lambda_{\mathcal J}^{-}(\mu)\subset \Sigma_\nu^>$ (which we denote by $d_-$), and between $\Lambda_\infty^{u}(\hat{\Theta}_0,\mu), \Lambda_{\mathcal J}^{+}(\mu)\subset \Sigma_\nu^<$ (which we denote by $d_+$) are given by
\begin{equation}\label{eqn: approx distances}
    \begin{aligned}
         d_\mp(\theta,\hat{\Theta}_0) &= \mu^\nu\left(\cos\theta\left(\hat{\Theta}_0-1\right)\mp\sin\theta\sqrt{2-\hat{\Theta}_0^2}\right)+ \mathcal{O}\left(\mu^{\frac{19\nu-3}{8}}\right),\\
         d_\mp{}'(\theta,\hat{\Theta}_0) &= \mu^\nu\left(-\sin\theta\left(\hat{\Theta}_0-1\right)\mp\cos\theta\sqrt{2-\hat{\Theta}_0^2}\right) + \mathcal{O}\left(\mu^{\frac{19\nu-3}{8}}, \mu^{1-2\nu}\right),
    \end{aligned}
\end{equation}
for $\theta \in \left(-\frac{\pi}{4},\frac{\pi}{4}\right)$. Therefore, finding a transverse intersection between $ \Lambda_\infty^s(\hat\Theta_0,\mu)$ and $ \Lambda_{\mathcal J}^{-}(\mu)$ is equivalent to find a non-degenerate zero of $ d_-(\theta,\hat{\Theta}_0)$. We write
\[\mathcal{F}_-(\theta,\mu^\nu) = \mu^{-\nu}\cdot  d_-(\theta,\hat{\Theta}_0) = \cos\theta\left(\hat{\Theta}_0-1\right)-\sin\theta\sqrt{2-\hat{\Theta}_0^2} + \mathcal{O}\left(\mu^{\frac{11\nu-3}{8}}\right).\]
For $\hat\Theta_0\in  \left(\frac{1-\sqrt 3}{3},\frac{1+\sqrt{3}}{3}\right)$, the angle $\theta_* := \arctan\left(\frac{\hat\Theta_0-1}{\sqrt{2-\hat\Theta_0^2}}\right)\in\left(-\frac{\pi}{4},\frac{\pi}{4}\right)$ satisfies
\[
\mathcal F_-(\theta_*,0)=0,\qquad
\partial_\theta \mathcal F_-(\theta_*,0)
=  -\sqrt{(\hat\Theta_0-1)^2 + (2-\hat\Theta_0^2)} \;\neq\; 0,
\]
and the Implicit Function Theorem gives the result. 
\end{proof}

\section{Proof of Theorem \ref{thm: ejection-collision orbits}}\label{sec: Proof of ECO SJ}
We divide the proof of Theorem \ref{thm: ejection-collision orbits} as follows. In Section \ref{subsec: first part of theorem eco} we prove the existence of an unbounded sequence of orbits in $\mathcal J^- \cap \mathcal J^+$ (see Definition \ref{def: EC-orbits})
and we obtain the existence of large ejection-collision orbits $\mathcal J^\mp \cap \mathcal{S}^\pm$. Then, in Section \ref{subsec: second part of thm eco} we prove the existence of ballistic ejection-collision orbits between the primaries (see Remark \ref{def: large-ballistic} for the definition of large and ballistic ejection-collision orbits).

\subsection{Large ejection-collision orbits }\label{subsec: first part of theorem eco}

This section proves the first two statements of Theorem \ref{thm: ejection-collision orbits}. The proof 
requires a global analysis of the dynamics (including close passages to infinity), for which the polar coordinates $(r,\theta,R,\Theta)$, used for the local analysis close to Jupiter, are no longer adequate. Instead, we consider (rotating) polar coordinates centered at the center of mass $(\hat r, \hat \theta,\hat R, \hat \Theta)$, in which Proposition \ref{prop: Parameterization of the invariant manifolds of infinity} provides an explicit description of the invariant manifolds of infinity on the following ``outer'' sections. For fixed $\hat r_0 > 1$, we define
\begin{equation}\label{eqn: Sections hatr big}
    \begin{aligned}
        \hat \Sigma_{\hat r_0}^> &= \left\{(\hat r, \hat \theta,\hat R,\hat \Theta)\colon \hat r = \hat r_0, \hat{\mathcal H}_\mu(\hat r_0,\hat \theta,\hat R,\hat \Theta) = h, \hat R > 0\right\},\\
        \hat \Sigma_{\hat r_0}^< &= \left\{(\hat r, \hat \theta,\hat R,\hat \Theta)\colon \hat r = \hat r_0, \hat{\mathcal H}_\mu(\hat r_0,\hat \theta,\hat R,\hat \Theta) = h, \hat R < 0\right\},
    \end{aligned}
\end{equation}
where $\hat{\mathcal H}_\mu$ is the Hamiltonian defined in \eqref{eqn: Hamiltonian Polar Rotating Coordinates centered at CM J}. Then, the intersections $W_\mu^{s}(\Alpha_{\hat \Theta_0}) \cap \hat \Sigma_{\hat r_0}^>$ and $W_\mu^{u}(\Alpha_{\hat \Theta_0}) \cap \hat \Sigma_{\hat r_0}^<$ are curves parameterized as graphs in terms of $\hat \theta \in \mathds{T}$. 

We fix $h=-\hat\Theta_0\in \left(-\frac{1+\sqrt 3}{3},\frac{\sqrt 3-1}{3}\right)$, $\nu \in \left(\frac{3}{11},\frac 1 3\right)$ and $\mu \in (0,\mu_0)$ so that Theorem \ref{thm: distance transversality} holds. Denote by $\hat\Phi_\mu$ the flow of the equations of motion associated to the Hamiltonian $\hat{\mathcal H}_\mu$ in \eqref{eqn: Hamiltonian Polar Rotating Coordinates centered at CM J}. Since $W_\mu^{s}(\Alpha_{\hat\Theta_0})$ and the ejection orbits $\mathcal J^{-}$ are invariant (and analogously $W_\mu^{u}(\Alpha_{\hat\Theta_0})$ and the collision orbits $\mathcal J^{+}$), the transverse intersections $p_-(\hat\Theta_0,\mu)\in \Sigma_\nu^>$ and $p_+(\hat\Theta_0,\mu)\in \Sigma_\nu^<$ (defined in \eqref{eqn:p-Thm62} and \eqref{eqn:p+Thm62}) are sent by $\hat\Phi_\mu$ to two transverse intersections at $\hat \Sigma_{\hat r_0}^>$ and $\hat \Sigma_{\hat r_0}^<$ respectively. We denote by 
\begin{equation}\label{eqn: points Sigmahat}
\hat p_{\mathcal J}^- = (\hat \theta_{\mathcal J}^-, \hat \Theta_{\mathcal J}^-) \in \hat \Sigma_{\hat r_0}^>,\qquad \hat p_{\mathcal J}^+ = (\hat \theta_{\mathcal J}^+, \hat \Theta_{\mathcal J}^+) \in \hat \Sigma_{\hat r_0}^<,
\end{equation}
the corresponding transverse intersections of $W_\mu^{s}(\Alpha_{\hat\Theta_0})$ and $\mathcal J^-$ at $\hat \Sigma_{\hat r_0}^>$ and $W_\mu^{u}(\Alpha_{\hat\Theta_0})$ and $\mathcal J^+$ at $\hat \Sigma_{\hat r_0}^<$. 

We consider the following annulus
\begin{equation}\label{eqn: rectR}
\hat{\mathcal R} = \left\{(\hat\theta,\hat\Theta)\in \hat{\Sigma}_{\hat r_0}^>\colon \hat \Theta_\infty^s(\hat \theta)-\varepsilon<\hat \Theta<\hat \Theta_\infty^s(\hat\theta)\right\},
\end{equation}
for $\varepsilon > 0$ small enough, where  $\hat \Theta_\infty^s(\hat \theta) := \hat \Theta_\infty^s(\hat r_0,\hat \theta,\hat\Theta_0)$ is defined in \eqref{eqn: invariant manifolds of infty as graphs in terms of hat theta}. Then, we define the Poincar\'e map
\begin{equation}\label{eqn: Poincare map infty}
    \begin{aligned}
        \hat{\mathcal{P}}\colon \hat{\mathcal R} \subset  \hat\Sigma^>_{\hat r_0} &\to  \hat \Sigma^<_{\hat r_0}\\
         \hat p = (\hat \theta,\hat \Theta) &\mapsto \hat{\mathcal{P}}( \hat p) = \hat \Phi_\mu\left(t_\mu( \hat p),\left(\hat r_0, \hat \theta, \hat R(\hat \theta,\hat \Theta;h,\mu),\hat \Theta\right)\right),
    \end{aligned}
\end{equation}
where $\hat R(\hat \theta,\hat \Theta;h,\mu)$ is the radial velocity, which can also be computed from \eqref{eqn: Hamiltonian Polar Rotating Coordinates centered at CM J}, and $t_\mu(\hat p)$ is the time needed for the orbit with initial condition at $\hat p\in \hat{\mathcal R}$ to reach $\hat \Sigma^<_{\hat r_0}$. By construction, $t_\mu(\hat p)$ is well defined and finite for $ \hat p\in \hat{\mathcal R}$ (but becomes unbounded as $ \hat p$ gets closer to the invariant manifold $W_\mu^s(\Alpha_{\hat \Theta_0})\cap \hat \Sigma_{\hat r_0}^>$). 

\paragraph{Step 1. Proof of first statement of Theorem \ref{thm: ejection-collision orbits}:}
For $\varepsilon> 0$ small enough (independent of $\mu$), denote by
%
\[\hat {\mathcal R}_{\mathcal J} = \left\{(\hat \theta,\hat \Theta)\in \hat{\mathcal{R}}\colon |\hat \theta - \hat \theta^-_{\mathcal J}|<\varepsilon \right\}\subset \hat{\mathcal R}\]
and consider the $C^1$-curve
\[\hat \gamma^{-}_{\mathcal J} := \mathcal J^- \cap \hat{\mathcal R}_{\mathcal J}\]
%
which intersects transversally $ W_\mu^s(\Alpha_{\hat\Theta_0})\cap \hat \Sigma_{\hat r_0}^> $ at $\hat p^-_{\mathcal J}$, defined in \eqref{eqn: points Sigmahat} (see Figure \ref{fig:Transition map close to infinity}).
Since the points of the curve $\hat \gamma^{-}_{\mathcal J}$ are close to the point $ \hat p^-_{\mathcal J}$, they are close to $ W_\mu^s(\Alpha_{\hat\Theta_0})\cap \hat \Sigma_{\hat r_0}^> $. Hence, the study of the image $\hat{\mathcal{P}}(\hat \gamma^{-}_{\mathcal J})$, where $\hat{\mathcal P}$ is the Poincar\'e map defined in \eqref{eqn: Poincare map infty}, is given by the analysis of the dynamics ``close'' to $\Alpha_{\hat{\Theta}_0}$. Thus, one can easily adapt the approach done by Moser for the Sitnikov problem in \cite{moser2001stable} (see also \cite{MR3455155}) to this case, and prove that the image $\hat{\mathcal{P}}(\hat \gamma^{-}_{\mathcal J})$ ``spirals'' towards $ W_\mu^u(\Alpha_{\hat\Theta_0})\cap \hat \Sigma_{\hat r_0}^<$ (see Figure \ref{fig:Transition map close to infinity}).
In particular, we have
\begin{equation}\label{eqn: Spiral J}
W_\mu^u(\Alpha_{\hat\Theta_0})\cap \hat \Sigma_{\hat r_0}^< \subset \textrm{Cl}\left(\hat {\mathcal{P}}(\hat \gamma^{-}_{\mathcal J})\right).
\end{equation}
Consider the $C^1$-curve
\[\hat \gamma^{+}_{\mathcal J} := \hat{\mathcal V}_{\mathcal J} \cap \mathcal J^+,\]
where $\hat{\mathcal V}_{\mathcal J}$ is a sufficiently small neighborhood of $\hat p_{\mathcal J}^+$ in \eqref{eqn: points Sigmahat}.  Then, there exists a sequence $ \hat q_k \in \hat \gamma^{+}_{\mathcal J}\cap \hat{\mathcal{P}}(\hat \gamma^{-}_{\mathcal J})$ such that $\underset{k\to+\infty}{\lim}  \hat q_k =\hat p_{\mathcal J}^+$. In particular, the closer is $ \hat q_k$ to $W_\mu^u(\Alpha_{\hat\Theta_0})\cap \hat \Sigma_{\hat r_0}^< $, the larger is the time $t_\mu(\hat p_k)$, where $\hat p_k$ is such that $\hat{\mathcal{P}}( \hat p_k) = \hat q_k$. Since $\hat \gamma^{+}_{\mathcal J}  \subset \mathcal J^+$ and $\hat{\mathcal{P}}( \hat \gamma^{-}_{\mathcal J}) \subset \mathcal{J}^-$, the points $\hat q_k\in \mathcal{J}^- \cap \mathcal{J}^+$, completing the first statement of Theorem \ref{thm: ejection-collision orbits}.

\paragraph{Step 2. Proof of second statement of Theorem \ref{thm: ejection-collision orbits}:}

To prove the second statement, we use Theorem 1.7 of \cite{guardia2024oscillatorymotionsparabolicorbits} where is shown that, for sufficiently small values of $h=-\hat{\Theta}_0$ of order $\mu$ (and therefore satisfying the conditions of Theorem \ref{thm: distance transversality}), there exist transverse intersections  $W_\mu^s(\Alpha_{\hat{\Theta}_0})\transv W_\mu^u(S^+)$ and $W_\mu^u(\Alpha_{\hat{\Theta}_0}) \transv W_\mu^s(S^-)$ (where $W_\mu^{s,u}(S^\pm)$ are defined in \eqref{eqn: definition of the invariant manifolds of collision in rotating polar coordinates centered at P1 J}) at the sections $\overline \Sigma^>$ and $\overline \Sigma^<$ in \eqref{eqn: sectionSun} respectively. By the invariance of $W_\mu^{s,u}(\Alpha_{\hat{\Theta}_0})$ and $W_\mu^{s,u}(S^\pm)$, these transverse intersections are sent by $\hat \Phi_\mu$ to two transverse intersections at the sections $ \hat \Sigma_{\hat r_0}^>$ and $\hat \Sigma_{\hat r_0}^<$ in \eqref{eqn: Sections hatr big} respectively.

Denote by $ \hat p^-_{\mathcal S} = (\hat \theta^-_{\mathcal S},\hat \Theta^-_{\mathcal S}) \in  \hat \Sigma_{\hat r_0}^>$ the point of (transverse) intersection between $W_\mu^s(\Alpha_{\hat\Theta_0})$ and $W_{\mu}^u(S^+)$ with the section $ \hat \Sigma_{\hat r_0}^>$ (see Figure \ref{fig:Transition map close to infinity}). For $\varepsilon>0$ small enough (independent of $\mu$), denote by 
\[\hat{\mathcal R}_{\mathcal S}=\left\{(\hat \theta,\hat \Theta)\in \hat{\mathcal{R}}\colon |\hat \theta-\hat \theta^-_{\mathcal{S}}|< \varepsilon\right\}\subset \hat{\mathcal R}\]
and consider the $C^1$-curve
\[ \hat \gamma_{\mathcal S}^{-} :=  W_\mu^u(S^+)\cap \hat{\mathcal R}_{\mathcal S} \subset \mathcal S^-,\]
where $\mathcal S^-$ are the ejection orbits from the Sun (see Definition \ref{def: EC-orbits}). Following the same argument as in \eqref{eqn: Spiral J} we have
\[W_\mu^u(\Alpha_{\hat \Theta_0})\cap \hat \Sigma_{\hat r_0}^<\subset \mathrm{Cl}\left(\hat{\mathcal{P}}(\hat \gamma_{\mathcal S}^{-})\right),\]
where $\hat{\mathcal P}$ is the Poincar\'e map \eqref{eqn: Poincare map infty}.

\begin{figure}[ht]
    \centering
    \begin{tikzpicture}
    \node (fig1) at (-3,0){
        \begin{overpic}[width=0.5\textwidth]{Corona_Circular.png}
            \put(9,86){\small $ \hat\Sigma_{\hat r_0}^>$}
            \put(22,86){\color{orange} \small $\hat p^-_{\mathcal S}$}
            \put(82,77){\color{darkgray} \small $\hat p_{\mathcal J}^-$}
            \put(30,72){\color{orange} \small $\hat{\mathcal R}_{\mathcal S}$}
            \put(51,70){\color{green} \small $\hat{\mathcal R}$}
            \put(70,66){\color{darkgray}\small $\hat{\mathcal R}_{\mathcal J}$}
            \put(35,87){\color{red}\small $W_\mu^s(\Alpha_{\hat\Theta_0})\cap  \hat\Sigma_{\hat r_0}^>$}
             \put(29,79){\color{orange}\small $ \hat\gamma_{\mathcal S}^-$}
            \put(79,69){\color{darkgray}\small $ \hat \gamma_{\mathcal J}^-$}
        \end{overpic}
    };
    \node (fig2) at (7,0){
        \begin{overpic}[width=0.5\textwidth]{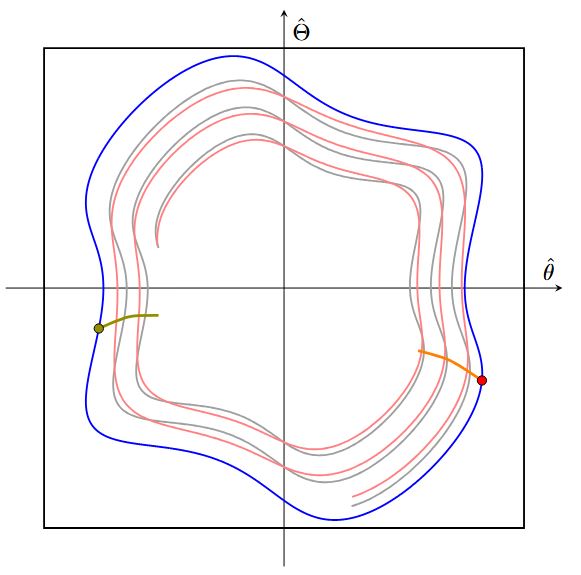}
            \put(10,85){\small $\hat{\Sigma}_{\hat r_0}^<$}
            \put(10,41){\color{olive} \small $\hat p_{\mathcal J}^+$}
            \put(28,44){\color{olive} \small $ \hat \gamma_{\mathcal J}^+$}
            \put(84,33){\color{orange} \small $ \hat p_{\mathcal S}^+$}
            \put(63,40){\color{orange}\small $\hat \gamma_{\mathcal S}^+$}
             \put (33,66){\color{red!50} \small $\hat{\mathcal P}(\hat \gamma_{\mathcal S}^-)$}
            \put (59,63) {\color{gray!70}\small $\hat{\mathcal P}(\hat \gamma_{\mathcal J}^-)$}
            \put(52,85){\color{blue} \small $W_\mu^u(\Alpha_{\hat\Theta_0})\cap  \hat{\Sigma}_{\hat r_0}^<$}
        \end{overpic}
        \label{fig:hierarchy-b}
    };
    \ \draw[->, thick] (fig1.east) -- (fig2.west) node[midway, below] {$\hat{\mathcal P}$};
    \end{tikzpicture}
    \caption{Representation of the ``spiraling effect'' of the transition map $\hat{\mathcal{P}}$ on the curves $\hat \gamma^{-}_{\mathcal S,\mathcal J}\subset  \hat{\Sigma}_{\hat r_0}^>$ and their transverse intersections with the curves $ \hat \gamma^{+}_{\mathcal S,\mathcal J}\subset \hat{\Sigma}_{\hat r_0}^<$.}
    \label{fig:Transition map close to infinity}
\end{figure}

Denote by $ \hat p^+_{\mathcal S} \in \hat\Sigma_{\hat r_0}^<$ the point of (transverse) intersection of $W_\mu^u(\Alpha_{\hat\Theta_0})$ and $W_\mu^s(S^-)$ with the section $\hat \Sigma_{\hat r_0}^<$ defined in \eqref{eqn: Sections hatr big} and consider the $C^1$-curve 
\[\hat  \gamma^{+}_{\mathcal S} := W_\mu^s(S^-) \cap \hat{\mathcal V}_{S}\subset \mathcal S^+,\]
where $\mathcal S^+$ are the collision orbits to the Sun (see Definition \ref{def: EC-orbits}) and $\hat{\mathcal V}_{S}$ is a sufficiently small neighborhood of $ \hat p^+_{\mathcal S}$. Therefore, there exists
\begin{itemize}
    \item A sequence $ \hat x_k \in  \hat \gamma^{+}_{\mathcal J}\cap \hat{\mathcal{P}}(\hat\gamma^{-}_{\mathcal S})$ such that $\underset{k\to+\infty}{\lim} \hat x_k = \hat p_{\mathcal J}^+$ in \eqref{eqn: points Sigmahat}. In particular, the closer is $ \hat x_k$ to $W_\mu^u(\Alpha_{\hat \Theta_0})\cap \hat \Sigma_{\hat r_0}^<$, the larger is the time $t_\mu(\hat p_k)$, where $ \hat p_k$ is such that $\hat{\mathcal{P}}(\hat p_k) = \hat x_k$. Since $\hat{\mathcal{P}}( \hat \gamma^{-}_{\mathcal S})\subset W_\mu^u(S^+)$, the points $\hat x_k \in W_\mu^u(S^+)\cap \mathcal{J}^+$ and, therefore, they belong to $\mathcal{S}^- \cap \mathcal{J}^+$.
    
    \item A sequence $ \hat y_k \in \hat \gamma^{+}_{\mathcal S}\cap \hat{\mathcal{P}}(\hat \gamma^{-}_{\mathcal J})$ such that $\underset{k\to+\infty}{\lim}  \hat y_k =  \hat p^+_{\mathcal S}$. In particular, the closer is $ \hat y_k$ to $ W_\mu^u(\Alpha_{\hat \Theta_0})\cap \hat \Sigma_{\hat r_0}^<$, the larger is the time $t_\mu(\hat p_k)$, where $ \hat p_k$ is such that $\hat{\mathcal{P}}(\hat p_k) = \hat y_k$. Since $\hat{\mathcal{P}}(\hat \gamma^{-}_{\mathcal J})\subset \mathcal{J}^-$, the points $ \hat y_k \in \mathcal{J}^-\cap W_\mu^u(S^-)$ and, therefore, they belong to $\mathcal{J}^-\cap \mathcal{S}^+$.
\end{itemize}
\subsection{
Ballistic ejection-collision orbits}\label{subsec: second part of thm eco}
This section is devoted to prove the last item of Theorem \ref{thm: ejection-collision orbits}, that is, the existence of a ballistic orbit (see Remark \ref{def: large-ballistic}) in $\mathcal S^+ \cap \mathcal J^-$ (by the symmetry \eqref{eqn: Symmetry of the RPC3BP in synodical polar coordinates centered at P1 J}, also a ballistic orbit in $\mathcal J^+\cap \mathcal S^-$) satisfying \eqref{eqn: SJ bounded}. To this end, we compare the ejection orbits $\mathcal J^-$  with the curve $\overline \Delta_{S^-}^s \subset \mathcal S^+\subset  \overline\Sigma^<$ in \eqref{eqn: definition of the intersection of the invariant manifolds of collision with delta2 as graphs J} (and between the collision orbits $\mathcal J^+$ and the curve $\overline \Delta_{S^+}^u \subset \mathcal S^-\subset \overline \Sigma^>$). For a fixed $h = \mathcal O(\mu)$, the following proposition provides a parameterization of the intersection between the ejection orbits $\mathcal J^-$ with the section $\overline \Sigma^<$ in \eqref{eqn: sectionSun} (and between the collision orbits $\mathcal J^+$ with the section $\overline \Sigma^>$).

\begin{proposition}\label{prop: parameterization invariant manifolds with section close to S}
Consider the rotating polar coordinates centered at the Sun $(\rS,  \thetaS,  \RS,  \ThetaS)$ defined in \eqref{eqn: Change from synodical cartesian to synodical polar centered at P1 J} and the parameter $\delta_0>0$  given in Proposition \ref{proposition: Perturbed invariant manifolds of collision in synodical polar coordinates J}. Fix $a,b$ with $a <b \in \mathds{R}$ and $ h_0 \in[a,b]$. Then, for $0<\delta<\delta_0$, there exists $\mu_0 > 0$ such that, for any $0<\mu<\mu_0$ and $h = \mu h_0$, the set of ejection orbits $\mathcal J^-$ intersects the section $\overline \Sigma^<$ in \eqref{eqn: sectionSun} in a curve written as a graph with respect to $ \thetaS$ as
\begin{equation}\label{eqn: Curve intersection Ejection-overline r=delta2}
     \overline \Delta_\mathcal{J}^{-}(\mu) = \left\{\left(\rS, \thetaS,  \RS_{\mathcal J}^-( \thetaS,\delta,\mu), \ThetaS_{\mathcal J}^-(\thetaS,\delta,\mu)\right)\colon  \rS = \delta^2, \thetaS \in \left(\thetaS_0-\delta^4,\thetaS_0+\delta^4\right)\right\}
\end{equation}
where $\thetaS_0 = -\frac{\sqrt 2}{3}(1-\delta^3)$ and $ \ThetaS_{\mathcal J}^-$ is a $C^1$-function satisfying
\begin{equation}\label{eqn: (R,Theta) Ejection-overline r=delta2}
    \begin{aligned}
         \ThetaS_{\mathcal J}^-(\thetaS,\delta,\mu) = \frac{\delta}{\sqrt 2}\left(\thetaS -\thetaS_0\right) + \mathcal{O}(\delta^5),\quad \partial_{\thetaS}\ThetaS_{\mathcal J}^-(\thetaS,\delta,\mu) = \frac{\delta}{\sqrt 2}\left(1+\mathcal O(\delta)\right).
    \end{aligned}
\end{equation}
Due to symmetry \eqref{eqn: Symmetry of the RPC3BP in synodical polar coordinates centered at P1 J}, the collision set of orbits $\mathcal J^+$ intersect the section $\overline \Sigma^>$ in \eqref{eqn: sectionSun} in a curve written as a graph with respect to $ \thetaS$ as
\begin{equation}\label{eqn: Curve intersection Collision-overline r=delta2}
     \overline \Delta_\mathcal{J}^{+}(\mu) = \left\{\left(\rS, \thetaS,  \RS_{\mathcal J}^+( \thetaS,\mu), \ThetaS_{\mathcal J}^+(\theta,\mu)\right)\colon  \rS = \delta^2, \theta \in\left(-\thetaS_0-\delta^4,-\thetaS_0+\delta^4\right)\right\}
\end{equation}
such that $\ThetaS_{\mathcal J}^+( \thetaS,\mu) = \ThetaS_{\mathcal J}^-( -\thetaS,\mu).$
\end{proposition}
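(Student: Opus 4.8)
The plan is to continue the ejection manifold $\mathcal J^-$ from the neighbourhood of Jupiter, where \propref{prop: Parameterization of the ECO orbits in (r,theta,R,Theta)} gives its trace $\Lambda_{\mathcal J}^-(\mu)$ on $\Sigma_\gamma$, across to the section $\overline\Sigma^<$ near the Sun, controlling the orbits in between by comparison with the explicit unperturbed parabolic arcs of \lemref{lemma: Reparameterization of time unperturbed solution}. \emph{Step 1 (data near Jupiter).} For $h=\mu h_0$ one has $\xi^{-1}=(2h+3)^{1/2}+\mathcal O(\mu)=\sqrt3+\mathcal O(\mu)$, so by \eqref{eqn: ECO-points (q,p)} and the inverse of the translation introduced before \eqref{eqn: Hamiltonian in cartesian coordinates centered at P2} (so $\hat q=q+(1-\mu,0)$, $\hat p=p+(0,1-\mu)$) the curve $\Lambda_{\mathcal J}^-(\mu)$, in coordinates centred at the centre of mass, tends as $\mu\to0$ to $\{\hat q=(1,0),\ \hat p=(\sqrt3\cos\theta,\,1+\sqrt3\sin\theta):\theta\in\mathds T\}$. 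On this set $\hat H_0=0$ and $\hat\Theta=1+\sqrt3\sin\theta$; since $\hat\Theta$ equals the Keplerian angular momentum about the Sun and $\mathrm H_0=\hat H_0+\hat\Theta$, the orbits are near-parabolic Keplerian arcs about the Sun with angular momentum $\check\Theta_0(\theta):=1+\sqrt3\sin\theta$, small precisely near $\theta_0$ with $\sin\theta_0=-1/\sqrt3$, $\cos\theta_0<0$ (so the arc heads inward), where $\theta\mapsto\check\Theta_0(\theta)$ is a local diffeomorphism.

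\emph{Step 2 (transit to the Sun).} I flow $\Lambda_{\mathcal J}^-(\mu)$ forward by the field of $H_\mu$. Jupiter's attraction during the escape from the $\mu^\gamma$-ball perturbs the orbit by $\mathcal O_{C^1}(\mu^{1-\gamma})$, and afterwards, while $\overline r\ge\delta^2$, the field of $H_\mu$ is an $\mathcal O_{C^1}(\mu)$-perturbation of the Kepler field about the Sun. Performing the comparison in the McGehee-regularised variables of \secref{sec: local analysis close to collision with S} (so the transit to $\{\overline r=\delta^2\}$ has regularised duration $\mathcal O(|\log\delta|)$ and the regularised field has derivatives bounded by a power of $1/\delta$), a Gronwall estimate with $\mu_0$ small \emph{depending on $\delta$} shows the orbit stays $C^1$-close, with its first crossing time and up to a $\delta$-dependent factor times $\mu^{1-\gamma}$, to the parabolic arc $\check\gamma_c^-(\cdot,\check\Theta_0)$ of \remref{remark: parabolic EC orbits with J}–\eqref{eqn: parabolic ejection and collision mu=0} with $\check\Theta_0=\check\Theta_0(\theta)$ (the non-parabolicity $\mathrm H_0=\check\Theta_0$ adding only $\mathcal O(\check\Theta_0^2)$ on $\delta^2\le\check r\le1$). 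By \eqref{eqn: Unperturbed separatrix} with $w=-\sqrt{2\delta^2-\check\Theta_0^2}$, this arc meets $\{\check r=\delta^2\}$ inbound at $t=\tfrac{w}{6}(w^2+3\check\Theta_0^2)=-\tfrac{\sqrt2}{3}\delta^3+\mathcal O(\delta\check\Theta_0^2)$, with $\check R=2w/(w^2+\check\Theta_0^2)=-\sqrt2/\delta+\mathcal O(\check\Theta_0^2\delta^{-3})$, $\check\Theta=\check\Theta_0$, and, expanding the complex logarithm in \eqref{eqn: Unperturbed separatrix} together with $\check\theta_c(\check\Theta_0)$ from \eqref{eqn: tc thetac hatTheta}, $\check\theta=-\check\theta_c(\check\Theta_0)+\pi+\tfrac{\sqrt2}{\delta}\check\Theta_0+\mathcal O(\check\Theta_0)$. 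Passing to the synodical polar coordinates centred at the Sun \eqref{eqn: Change from synodical cartesian to synodical polar centered at P1 J} ($\overline r=\check r$, $\overline\theta=\check\theta-t$, $\overline R=\check R$, $\overline\Theta=\check\Theta$, all up to $\mathcal O(\mu)$) gives on $\{\overline r=\delta^2\}\cap\{\overline R<0\}$
\[
\overline\theta=\overline\theta_0+\tfrac{\sqrt2}{\delta}\,\check\Theta_0\bigl(1+\mathcal O(\delta)\bigr),\qquad \overline\Theta=\check\Theta_0,\qquad \overline\theta_0:=-\tfrac{\sqrt2}{3}(1-\delta^3).
\]

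\emph{Step 3 (inversion and the collision case).} For $\check\Theta_0$ in an interval of size $\asymp\delta^5$, the first relation is, by the Inverse Function Theorem, a $C^1$ diffeomorphism onto $(\overline\theta_0-\delta^4,\overline\theta_0+\delta^4)$ with derivative $\tfrac{\sqrt2}{\delta}(1+\mathcal O(\delta))$; composing its inverse with $\overline\Theta=\check\Theta_0$ gives the graph \eqref{eqn: Curve intersection Ejection-overline r=delta2}–\eqref{eqn: (R,Theta) Ejection-overline r=delta2}, namely $\overline\Theta_{\mathcal J}^-(\overline\theta,\delta,\mu)=\tfrac{\delta}{\sqrt2}(\overline\theta-\overline\theta_0)+\mathcal O(\delta^5)$ with $\partial_{\overline\theta}\overline\Theta_{\mathcal J}^-=\tfrac{\delta}{\sqrt2}(1+\mathcal O(\delta))$, the $\mu$-dependent corrections being absorbed into the error for $\mu_0=\mu_0(\delta)$ small; by Step 1 this $\overline\theta$-window is genuinely traced by $\mathcal J^-$. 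Finally, \eqref{eqn: Curve intersection Collision-overline r=delta2} follows from the time-reversing symmetry \eqref{eqn: Symmetry of the RPC3BP in synodical polar coordinates centered at P1 J}, which swaps $\mathcal J^-\leftrightarrow\mathcal J^+$ and $\overline\Sigma^<\leftrightarrow\overline\Sigma^>$ and sends $(\overline\theta,\overline\Theta)\mapsto(-\overline\theta,\overline\Theta)$.

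The main obstacle is Step 2: the comparison orbit passes close to the Sun, where the Kepler field is singular, so a Gronwall estimate in the original variables would amplify errors by $e^{C/\delta^3}$; using the McGehee regularisation near the Sun tames this to a polynomial factor in $1/\delta$, but one must still take $\mu_0$ depending on $\delta$ and keep every error term as an explicit power of $\mu$ and $\delta$. The only other subtlety is computational: choosing the branch of the complex logarithm in \eqref{eqn: Unperturbed separatrix} and the value of $\check\theta_c(\check\Theta_0)$ in \eqref{eqn: tc thetac hatTheta} so as to pin down the constant $\sqrt2/\delta$, hence the slope $\tfrac{\delta}{\sqrt2}$ and the exact $\overline\theta_0$.
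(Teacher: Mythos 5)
Your proposal is correct and follows the same overall architecture as the paper's proof in Appendix~C: transport the near-Jupiter trace of $\mathcal J^-$ to center-of-mass/Sun coordinates, identify the limiting curve $\{\hat q=(1,0),\ \hat p=(\sqrt3\cos\theta,1+\sqrt3\sin\theta)\}$ with angular momentum $1+\sqrt3\sin\theta$ vanishing at $\theta_*=\arcsin(1/\sqrt3)+\pi$, flow to $\{\overline r=\delta^2\}$ by comparison with the unperturbed Kepler flow, and invert $\overline\theta=\overline\theta_0+\tfrac{\sqrt2}{\delta}\check\Theta_0(1+\mathcal O(\delta))$ to obtain the slope $\delta/\sqrt2$. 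Where you differ is in the transit estimate: the paper linearizes around the single radial ballistic arc $\check\gamma_c^-(\cdot,0)$ and reads the coefficient $\sqrt2/\delta$ off the explicitly integrated variational matrix $\partial_{\hat x}\hat\Phi_0(t_0^*,\hat x_0(\theta_*))$ (its $(\hat\theta,\hat\Theta)$ entry $\sqrt2(1-3t/\sqrt2)^{-1/3}$), whereas you compare each orbit with the parabolic arc of matching angular momentum and extract the same coefficient from the expansion of the complex logarithm in \eqref{eqn: Unperturbed separatrix}. The two are equivalent; yours makes the geometric origin of the $\sqrt2/\delta$ shearing more transparent, while the paper's variational-equation route packages all the $\theta$-dependence into one matrix and avoids juggling a $\check\Theta_0$-dependent family of reference orbits. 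Two minor corrections: the ``main obstacle'' you flag is not one, since the transit is stopped at $\overline r=\delta^2$ with $\delta$ fixed, so $|D\hat F_0|$ is bounded by a $\delta$-dependent constant on the whole transit region and a plain Gronwall estimate in the unregularized variables (with $\mu_0=\mu_0(\delta)$, as the paper takes $\mu_0\ll\delta^4$) already suffices — no McGehee regularization is needed here; and the non-parabolicity contributes $\mathcal O(\check\Theta_0)$ rather than $\mathcal O(\check\Theta_0^2)$, since the Kepler energy equals $\mu h_0+\hat\Theta\approx\check\Theta_0$, which is still harmless because $\check\Theta_0=\mathcal O(\delta^5)$ on the relevant window.
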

The proof of this proposition is done in Appendix \ref{appendix: parameterization of invariant manifolds with section close to S}. 

Relying on this result 
and the Implicit Function Theorem, the following theorem provides the transverse intersection between the curves $\overline \Delta_{S^-}^{s}(\mu)$ and $\overline \Delta_{\mathcal{J}}^{-}(\mu)$ and the curves $\overline \Delta_{S^+}^{u}(\mu)$ and $\overline \Delta_{\mathcal{J}}^{+}(\mu)$ (see \eqref{eqn: definition of the intersection of the invariant manifolds of collision with delta2 as graphs J}, \eqref{eqn: Curve intersection Ejection-overline r=delta2} and \eqref{eqn: Curve intersection Collision-overline r=delta2} respectively). This proves the last statement of Theorem \ref{thm: ejection-collision orbits}.

\begin{theorem}\label{thm: Tranverse intersection S-J r=delta2}
Under the same assumptions as in Proposition \ref{prop: parameterization invariant manifolds with section close to S}, we have 
\begin{itemize}
    \item  $\overline \Delta_{S^-}^{s}(\mu)$ and $\overline \Delta_{\mathcal{J}}^{-}(\mu)$ intersect transversally at a point
    \[\pS_-= \left(\thetaS_-(\delta,\mu), \ThetaS_{S^-}^s(\thetaS_-(\delta,\mu))\right) = \left(\thetaS_-(\delta,\mu), \ThetaS_{\mathcal J}^-(\thetaS_-(\delta,\mu))\right),\]
    where $\thetaS_-(\delta,\mu) = \thetaS_0 + \mathcal{O}\left(\delta^4\right)$.
    \item $\overline \Delta_{S^+}^{u}(\mu)$ and $\overline \Delta_{\mathcal{J}}^{+}(\mu)$ intersect transversally at a point $\pS_+$ given by the symmetry \eqref{eqn: Symmetry of the RPC3BP in synodical polar coordinates centered at P1 J}.
    \item Both points $\pS_-$ and $\pS_+$ belong to the surface $\overline{\mathcal H}_\mu(\delta^2,\thetaS,\RS,\ThetaS) = h$, where $\overline{\mathcal H}_\mu$ is the Hamiltonian \eqref{eqn: Hamiltonian function in rotating polar coordinates centered at S J}.
\end{itemize}
\end{theorem}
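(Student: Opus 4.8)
The plan is to reduce the theorem to a one–variable root–finding problem inside the two–dimensional section $\overline\Sigma^<$ of \eqref{eqn: sectionSun} and to close it with the Intermediate Value Theorem together with a monotonicity argument. Note that $\overline\Sigma^<=\{\rS=\delta^2,\ \overline{\mathcal H}_\mu=h,\ \RS<0\}$ is a $2$–dimensional submanifold on which $(\thetaS,\ThetaS)$ are global coordinates, since $\RS<0$ is then uniquely determined by the relation $\overline{\mathcal H}_\mu(\delta^2,\thetaS,\RS,\ThetaS)=h$. By Proposition \ref{proposition: Perturbed invariant manifolds of collision in synodical polar coordinates J} (together with \eqref{eqn: Theta S-}) the curve $\overline\Delta_{S^-}^s(\mu)$ is the graph $\{\ThetaS=\ThetaS_{S^-}^s(\thetaS,\delta,\mu)\colon\thetaS\in\mathds T\}$ with $\ThetaS_{S^-}^s$ and $\partial_{\thetaS}\ThetaS_{S^-}^s$ both $\mathcal O(\mu)$ uniformly in $\thetaS$; by Proposition \ref{prop: parameterization invariant manifolds with section close to S} the curve $\overline\Delta_{\mathcal J}^-(\mu)$ is the graph $\{\ThetaS=\ThetaS_{\mathcal J}^-(\thetaS,\delta,\mu)\colon\thetaS\in(\thetaS_0-\delta^4,\thetaS_0+\delta^4)\}$ with $\ThetaS_{\mathcal J}^-=\tfrac{\delta}{\sqrt2}(\thetaS-\thetaS_0)+\mathcal O(\delta^5)$ and $\partial_{\thetaS}\ThetaS_{\mathcal J}^-=\tfrac{\delta}{\sqrt2}\bigl(1+\mathcal O(\delta)\bigr)$. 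A transverse intersection $\pS_-$ of the two curves inside $\overline\Sigma^<$ is therefore exactly a zero $\thetaS_-\in(\thetaS_0-\delta^4,\thetaS_0+\delta^4)$ of
\[
d(\thetaS):=\ThetaS_{S^-}^s(\thetaS,\delta,\mu)-\ThetaS_{\mathcal J}^-(\thetaS,\delta,\mu)
\]
at which $d'(\thetaS_-)\neq0$; and the third bullet is then automatic, since any point common to the two graphs lies in $\overline\Sigma^<\subset\{\overline{\mathcal H}_\mu=h\}$ (the two $\RS$–components agreeing there by uniqueness in the energy relation).

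First I would prove that $d$ is strictly monotone on the whole interval. Subtracting the two expansions,
\[
d'(\thetaS)=-\frac{\delta}{\sqrt2}\bigl(1+\mathcal O(\delta)\bigr)+\mathcal O(\mu);
\]
choosing first $\delta_0$ and then $\mu_0=\mu_0(\delta)$ small enough (so that in particular $\mu\ll\delta$) gives $d'(\thetaS)\le-\tfrac{\delta}{2\sqrt2}<0$ for every $\thetaS\in(\thetaS_0-\delta^4,\thetaS_0+\delta^4)$. Hence $d$ is strictly decreasing: it has at most one zero, and at any zero $d'\neq0$, which is precisely the transversality of $\overline\Delta_{S^-}^s(\mu)$ and $\overline\Delta_{\mathcal J}^-(\mu)$ as curves in the $2$–plane $\overline\Sigma^<$.

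Next I would obtain a sign change at the endpoints. At the centre, $d(\thetaS_0)=\ThetaS_{S^-}^s(\thetaS_0,\delta,\mu)-\ThetaS_{\mathcal J}^-(\thetaS_0,\delta,\mu)$; the first term is $\mathcal O(\mu)$ and hence of absolute value $\le\tfrac{\delta^5}{8\sqrt2}$ once $\mu_0$ is taken small relative to $\delta^5$, while $\ThetaS_{\mathcal J}^-(\thetaS_0,\delta,\mu)$ is controlled by Proposition \ref{prop: parameterization invariant manifolds with section close to S} and, for $\delta$ small, is dominated by the $\sim\delta^5$ variation of $\ThetaS_{\mathcal J}^-$ across the interval; together this gives $|d(\thetaS_0)|<\tfrac{\delta^5}{4\sqrt2}$. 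Integrating $d'\le-\tfrac{\delta}{2\sqrt2}$ then yields $d(\thetaS_0+\delta^4)\le d(\thetaS_0)-\tfrac{\delta^5}{2\sqrt2}<0<d(\thetaS_0)+\tfrac{\delta^5}{2\sqrt2}\le d(\thetaS_0-\delta^4)$, so the Intermediate Value Theorem produces a zero $\thetaS_-\in(\thetaS_0-\delta^4,\thetaS_0+\delta^4)$, unique by monotonicity; the bounds $|d(\thetaS_0)|\lesssim\delta^5$ and $|d'|\gtrsim\delta$ give $\thetaS_-=\thetaS_0+\mathcal O(\delta^4)$. Taking $\pS_-=\bigl(\thetaS_-,\ThetaS_{S^-}^s(\thetaS_-,\delta,\mu)\bigr)=\bigl(\thetaS_-,\ThetaS_{\mathcal J}^-(\thetaS_-,\delta,\mu)\bigr)$ gives the first bullet. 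For the second bullet I would apply the reversing symmetry \eqref{eqn: Symmetry of the RPC3BP in synodical polar coordinates centered at P1 J}, which preserves $\overline{\mathcal H}_\mu$ and, by \eqref{eqn: Theta S-} and the last identity of Proposition \ref{prop: parameterization invariant manifolds with section close to S}, maps $\overline\Delta_{S^-}^s(\mu)$ and $\overline\Delta_{\mathcal J}^-(\mu)$ onto $\overline\Delta_{S^+}^u(\mu)$ and $\overline\Delta_{\mathcal J}^+(\mu)$ (sending $\overline\Sigma^<$ to $\overline\Sigma^>$); being a diffeomorphism it carries the transverse crossing $\pS_-$ to a transverse crossing $\pS_+$ with $\thetaS$–coordinate $-\thetaS_-$, on the same level $\{\overline{\mathcal H}_\mu=h\}$.

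The only step I expect to be delicate is the quantitative comparison at $\thetaS=\thetaS_0$: one needs the leading zero of $\ThetaS_{\mathcal J}^-$ to sit strictly inside $(\thetaS_0-\delta^4,\thetaS_0+\delta^4)$, i.e.\ the $\mathcal O(\delta^5)$ remainder of $\ThetaS_{\mathcal J}^-$ at $\thetaS_0$ to be genuinely dominated by the $\sim\delta^5$ oscillation of $\ThetaS_{\mathcal J}^-$ over that interval. This is not re-derived here — it is part of the content of Proposition \ref{prop: parameterization invariant manifolds with section close to S} (proved in the appendix) and is simply invoked. A secondary bookkeeping issue is the nested smallness $\mu\ll\delta^5\ll\delta\ll1$ needed so that all error terms in $d$ and $d'$ are negligible against the explicit leading terms; this dictates the order in which $\delta_0$ and then $\mu_0(\delta)$ are fixed. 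Everything else — the two graph parameterizations, the monotonicity, the Intermediate Value Theorem step, and the symmetry argument — is routine.
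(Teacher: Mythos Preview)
Your proposal is correct and follows essentially the same route as the paper: both reduce the problem to finding a non-degenerate zero of the one-variable function $d(\thetaS)=\ThetaS_{\mathcal J}^-(\thetaS,\delta,\mu)-\ThetaS_{S^-}^s(\thetaS,\delta,\mu)$ on the section $\overline\Sigma^<$, use the symmetry \eqref{eqn: Symmetry of the RPC3BP in synodical polar coordinates centered at P1 J} for the second bullet, and observe that the third bullet is automatic. The only cosmetic difference is that the paper rescales, writing $\mathcal F(\thetaS,\delta,\mu)=\sqrt2\,\delta^{-1}d(\thetaS)=\thetaS-\thetaS_0+\mathcal O(\delta^4)$, and then invokes the Implicit Function Theorem at $(\thetaS,\delta,\mu)=(-\tfrac{\sqrt2}{3},0,0)$, whereas you argue directly via monotonicity and the Intermediate Value Theorem; the IFT formulation has the mild advantage that it absorbs the ``delicate constant'' you flag (since it only needs $\partial_{\thetaS}\mathcal F\neq0$, not an explicit comparison of the $\mathcal O(\delta^5)$ remainder with the interval width), but the two arguments are otherwise interchangeable.
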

\begin{proof}
Relying on \eqref{eqn: Theta S-} and \eqref{eqn: (R,Theta) Ejection-overline r=delta2}, we compute an asymptotic formula for the distance between the curves $\overline \Delta_{\mathcal{J}}^{-}(\mu)$ and $\overline \Delta_{S^-}^{s}(\mu)$ at the section $\overline \Sigma^<$ (defined in \eqref{eqn: sectionSun}) using the $\overline \Theta$-component, obtaining 
\begin{equation}\label{eqn: Estimates distance S-J}
    \begin{aligned}
        \overline d_-(\thetaS,\delta,\mu) = \frac{\delta}{\sqrt 2}\left(\thetaS -\thetaS_0\right) + \mathcal{O}\left(\delta^5\right),\quad \partial_{\thetaS}\overline d_-(\thetaS,\delta,\mu) = \frac{\delta}{\sqrt 2}\left(1 + \mathcal{O}(\delta)\right)
    \end{aligned}
\end{equation}
for $\thetaS \in \left(\thetaS_0-\delta^4,\thetaS_0+\delta^4\right)$, where $\thetaS_0$ is given in Proposition~\ref{prop: parameterization invariant manifolds with section close to S}.    

Finding a transverse intersection between $\overline \Delta_{\mathcal{J}}^{-}(\mu)$ and $\overline \Delta_{S^-}^{s}(\mu)$ in $\overline \Sigma^<$ is reduced to find a non-degenerate zero of $\overline d_-(\thetaS,\delta,\mu)$. Namely, we look for a non-degenerate zero of the function
\[\mathcal F(\thetaS,\delta,\mu) = \sqrt 2\delta^{-1}\overline d_-(\thetaS,\delta,\mu) = \thetaS+\frac{\sqrt 2}{3}(1-\delta^3) + \mathcal{O}(\delta^4).\]
Since $\mathcal F\left(\frac{\sqrt 2}{3},0,0\right) = 0$ and $\partial_{\thetaS} \mathcal F\left(\frac{\sqrt 2}{3},0,0\right)= 1 \neq 0$, the Implicit Function Theorem gives the result.
%
\end{proof}

\section{Proof of Theorems \ref{thm:chaosJ} and \ref{thm: Existence of parabolic, oscillatory and periodic orbits J}}\label{sec: Triple intersection}
We  fix $\nu = \gamma \in \left(\frac{3}{11}, \frac 1 3\right)$ so that the curves $ \Lambda_{\mathcal{J}}^{\mp}(\mu)$ and $ \Lambda_{\infty}^{u,s}(\hat \Theta_0,\mu)$, defined in \eqref{eqn: Curves ECO-section},  \eqref{eqn: curve unstable inf cap r=mudelta2} and \eqref{eqn: curve stable inf cap r=mudelta2} respectively, belong to the same section $ \Sigma_\nu$ in \eqref{eqn: Section Sigma (r,theta,R,Theta)}. 

The key step in the proof of Theorems \ref{thm:chaosJ} and \ref{thm: Existence of parabolic, oscillatory and periodic orbits J} is Proposition \ref{prop: triple intersection J} below. To state it, we define the \emph{triple intersection} of $W_\mu^u(\Alpha_{\hat\Theta_0}), W_\mu^s(\Alpha_{\hat\Theta_0})$ and $\mathcal J^-$.

\begin{definition}\label{def:tripleintersectionJ}
    We say that the invariant manifolds $W_\mu^u(\Alpha_{\hat\Theta_0}), W_\mu^s(\Alpha_{\hat\Theta_0})$ and the ejection orbits $\mathcal J^-$ have a triple intersection at $p_-^*=(\theta_-^*,\Theta_-^*)\in \Sigma_\nu^>$ in \eqref{eqn: Transverse sections} if
    \begin{itemize}
        \item $\Lambda_\infty^s(\hat\Theta_0,\mu)$ and $\Lambda_{\mathcal J}^-(\mu)$ intersect at $p_-^*=(\theta_-^*,\Theta_-^*) \in \Sigma_\nu^>$.
        \item $\Lambda_\infty^u(\hat\Theta_0,\mu)$ and $\Lambda_{\mathcal J}^+(\mu)$ intersect at $p_+^*=(\theta_+^*,\Theta_+^*)\in \Sigma_\nu^<$.
        \item The composition 
        \begin{equation}\label{eqn: composition triple intersection}
        f:=\Upsilon^{-1} \circ \psi\circ  \Xi^{-1}\circ \mathbf{f} \circ \Xi \circ \psi^{-1} \circ \Upsilon \colon \Sigma_\nu^< \to \Sigma_\nu^>
        \end{equation}
        maps the point $p_+^*$ to $p_-^*$. Here $\psi$, $\Xi$ and $\Upsilon$ are the diffeomorphisms defined in \eqref{eqn: Change of coordinates Levi-Civita}, \eqref{eqn: straightening (z,w)-(s,u)} and \eqref{eqn: (q,p) - (r,theta,R,Theta)} respectively, and the transition map $\mathbf f$ is given in Proposition \ref{prop: transition map close to collision}.

    \end{itemize}
    Moreover, if $f$ maps a $C^1$-curve in $ \Lambda_\infty^u(\hat\Theta_0,\mu)$ to a $C^1$-curve $\Gamma_\infty^{u,>} \subset \Sigma_\nu^>$ containing  $p_-^*$, then we say that the triple intersection is transverse if  the curves  $\Gamma_\infty^{u,>}$, $\Lambda_\infty^{s}(\hat\Theta_0,\mu)$ and $\Lambda_{\mathcal J}^-(\mu)$ intersect pairwise transversally at $p_-^*$.

\end{definition}

Next proposition proves the existence of a transverse triple intersection between the invariant manifolds $W_\mu^u(\Alpha_{\hat\Theta_0})$ and $W_\mu^s(\Alpha_{\hat\Theta_0})$ and the ejection orbits $\mathcal J^-$ at a suitable energy level.

\begin{proposition}\label{prop: triple intersection J}
    For any $\nu \in \left(\frac{3}{11},\frac 1 3\right),$ there exists $\mu_0 > 0$ such that, for any $\mu\in(0,\mu_0)$, there exists $\hat{\Theta}_0^* = \hat{\Theta}_0^* (\mu) =1+\mathcal{O}\left(\mu^{\frac{11\nu-3}{8}}\right)$ such that, for $h= h^*(\mu) = -\hat{\Theta}_0^*(\mu)$, the following statements hold.
    \begin{itemize}
    \item The invariant manifolds $W_\mu^u(\Alpha_{\hat\Theta_0}), W_\mu^s(\Alpha_{\hat\Theta_0})$ and the ejection orbits $\mathcal J^-$ have a transverse triple intersection at $p_-^*=(\theta_-^*,\Theta_-^*)\in \Sigma_\nu^>$ (in the sense of Definition \ref{def:tripleintersectionJ}) where
    \begin{equation*}
        \theta_-^* =\mathcal{O}\left(\mu^{\frac{11\nu-3}{8}}\right),\quad \Theta_-^* = \Theta_\infty^s(\theta_-^*,\hat{\Theta}_0^*(\mu)) = \Theta_{\mathcal J}^-(\theta_-^*),
    \end{equation*}
    where $\Theta_{\mathcal J}^-(\theta)$ and $\Theta_\infty^s(\theta,\hat\Theta_0)$ are defined in \eqref{eqn: (R,Theta) in curves ECO-section} and \eqref{eqn: R_infs and Theta_infs} respectively.
    
    \item For fixed $\iota > 0$ small enough independent of $\mu$, denote by $B_\iota(p_-^*) \subset  \Sigma_\nu^>$ a $\iota$-neighborhood of $p_-^*$. Let
        \begin{equation}\label{eqn: transversecurves}
        \begin{aligned}
            \Gamma_\infty^{s,>}:= \Lambda_\infty^s(\hat\Theta_0^*(\mu),\mu) \cap B_\iota(p_-^*),\quad \Gamma_{\mathcal J}^{-}:= \Lambda_{\mathcal J}^{-}(\mu) \cap B_\iota(p_-^*),
        \end{aligned}
        \end{equation}
        be the $C^1$-curves intersecting pairwise transversally at $p_-^*$ with $\Gamma_\infty^{u,>}$ (see Definition \ref{def:tripleintersectionJ}). Denote by 
        \begin{equation}\label{eqn: Curves as graphs of theta in coordinates (theta,Theta)}
        \begin{aligned}
        \gamma^{s,>}_\infty &:= \left\{\left(\theta,\Theta^{s}_\infty(\theta,\hat{\Theta}_0^*(\mu))\right)\colon \theta \in (\theta_-^* - \iota,\theta_-^*+\iota)\right\} \subset\Sigma_\nu^>,\\
        \gamma^{-}_{\mathcal{J}} &:= \left\{\left(\theta,\Theta^{-}_{\mathcal J}(\theta,\hat{\Theta}_0^*(\mu))\right)\colon \theta \in (\theta_-^* - \iota,\theta_-^*+\iota)\right\} \subset \Sigma_\nu^>,\\
        \gamma^{u,>}_\infty &:= \left\{\left(\theta, \Theta^{u,>}_\infty(\theta,\hat{\Theta}_0^*(\mu))\right)\colon \theta \in (\theta_-^* - \iota,\theta_-^*+\iota)\right\} \subset \Sigma_\nu^>,
        \end{aligned}
        \end{equation}
        their corresponding $C^1$ graph parameterizations in coordinates $(\theta,\Theta)$. Then, the angles
        \begin{equation}\label{eqn: angles trasnversality}
            A  = \measuredangle (\left(\gamma_{\infty}^{u,>}\right)'(\theta_-^*),  \left(\gamma_{\infty}^{s,>}\right)'(\theta_-^*)),\quad B  = \measuredangle ( \left(\gamma_{\infty}^{u,>}\right)'(\theta_-^*), \left(\gamma_{\mathcal J}^{-}\right)'(\theta_-^*))
        \end{equation}
        (taken in $[-\pi/2,\pi/2]$) satisfy
        \begin{equation}\label{eqn: Inequality angles}
         -\frac{\pi}{2}<  A <  B < 0,
        \end{equation}
        leading to the configuration depicted in Figure \ref{fig: trans intersection}.
    \end{itemize} 
\end{proposition}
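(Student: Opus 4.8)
The strategy is to combine the two transversality theorems already available (Theorem \ref{thm: distance transversality}, giving the pairwise transverse intersections $\Lambda_\infty^s \transv \Lambda_{\mathcal J}^-$ and $\Lambda_\infty^u \transv \Lambda_{\mathcal J}^+$ for every $\hat\Theta_0$ in the relevant interval) with the transition map near collision (Proposition \ref{prop: transition map close to collision}), and then to adjust the free parameter $\hat\Theta_0$ so that the two intersection points $p_-(\hat\Theta_0,\mu)$ and $p_+(\hat\Theta_0,\mu)$ get linked by the composition $f$ in \eqref{eqn: composition triple intersection}. First I would set up a one-dimensional reduction: by Theorem \ref{thm: distance transversality} the points $p_\pm(\hat\Theta_0,\mu)$ depend smoothly on $\hat\Theta_0$, with $\theta_\pm(\hat\Theta_0,\mu) = \pm\arctan\!\left(\tfrac{\hat\Theta_0-1}{\sqrt{2-\hat\Theta_0^2}}\right) + \mathcal O(\mu^{\frac{11\nu-3}{8}})$; in particular at $\hat\Theta_0 = 1$ one has $\theta_\pm = \mathcal O(\mu^{\frac{11\nu-3}{8}})$, so both points sit near $\theta = 0$. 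I would then push $p_+(\hat\Theta_0,\mu)$ through $f$: using the expansions \eqref{eqn: ECO-points (q,p)}, \eqref{eqn: sb ub}, \eqref{eqn: curve gammaout} and \eqref{eqn: (R,Theta) in curves ECO-section} the map $\mathbf f$ acts, to leading order, as a near-identity on the angular variable of $\partial\mathbf B_\gamma$ (recall the collision circle $\mathcal J_h$ is parameterized by $\beta$, and \eqref{eqn: theta+-} already records $\theta^+ = \theta^- + \mathcal O(\mu^{\cdots})$), so $f$ sends the $\theta$-coordinate $\theta_+$ to $-\theta_+ + \mathcal O(\mu^{\cdots}) = \theta_- + \mathcal O(\mu^{\cdots})$. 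Define $g(\hat\Theta_0,\mu)$ to be the $\theta$-coordinate of $f(p_+(\hat\Theta_0,\mu))$ minus the $\theta$-coordinate $\theta_-(\hat\Theta_0,\mu)$ of $p_-(\hat\Theta_0,\mu)$; this is a $C^1$ function of $\hat\Theta_0$, equal to $\mathcal O(\mu^{\frac{11\nu-3}{8}})$ at $\hat\Theta_0 = 1$ (since both contributions are of that order there), and with $\partial_{\hat\Theta_0} g = \partial_{\hat\Theta_0}\theta_+ - \partial_{\hat\Theta_0}\theta_- + \mathcal O(\mu^{\cdots}) = -2\,\partial_{\hat\Theta_0}\arctan\!\left(\tfrac{\hat\Theta_0-1}{\sqrt{2-\hat\Theta_0^2}}\right) + \mathcal O(\mu^{\cdots}) \neq 0$ near $\hat\Theta_0 = 1$. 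The Implicit Function Theorem then produces $\hat\Theta_0^*(\mu) = 1 + \mathcal O(\mu^{\frac{11\nu-3}{8}})$ with $g(\hat\Theta_0^*(\mu),\mu) = 0$; matching the $\Theta$-coordinates is automatic once the $\theta$-coordinates agree, because both $p_-$ and $f(p_+)$ lie on the single graph $\Lambda_{\mathcal J}^-(\mu) = \Lambda_{\mathcal J}^-$ (the ejection orbits $\mathcal J^-$ are invariant, and $f(p_+) \in \mathcal J^-$ since $p_+ \in \mathcal J^+$ and $f$ is the collision-to-ejection transition). This yields the transverse triple intersection point $p_-^* = (\theta_-^*,\Theta_-^*)$ with $\theta_-^* = \mathcal O(\mu^{\frac{11\nu-3}{8}})$ and $\Theta_-^* = \Theta_\infty^s(\theta_-^*,\hat\Theta_0^*) = \Theta_{\mathcal J}^-(\theta_-^*)$, giving the first bullet.

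For the second bullet I would compute the three slopes at $p_-^*$ explicitly. The curves $\gamma_\infty^{s,>}$ and $\gamma_{\mathcal J}^-$ are already graphs in coordinates $(\theta,\Theta)$: differentiating \eqref{eqn: R_infs and Theta_infs} (together with \eqref{eqn: R_infu and Theta_infu}) gives $\partial_\theta\Theta_\infty^s(\theta_-^*,\hat\Theta_0^*) = -\mu^\nu\partial^2_\theta f(-\theta_-^*,\hat\Theta_0^*) + \mathcal O(\mu^{\cdots})$, and since $\partial_\theta^2 f(\theta,\hat\Theta_0) = -f(\theta,\hat\Theta_0)$ and $f(0,1) = -1 < 0$, at $\hat\Theta_0^* \approx 1$, $\theta_-^* \approx 0$ this slope is $\approx -\mu^\nu < 0$; differentiating \eqref{eqn: (R,Theta) in curves ECO-section} shows $\partial_\theta\Theta_{\mathcal J}^-$ is $\mathcal O(\mu^{\frac{19\gamma-3}{8}},\mu)$, i.e.\ much smaller in absolute value than $\mu^\nu$ (for $\nu$ in the stated range). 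To get the slope of $\gamma_\infty^{u,>} = f(\Lambda_\infty^u \cap B)$ I would transport the graph $\Lambda_\infty^u(\hat\Theta_0^*,\mu)$ through $f$: by Proposition \ref{prop: transition map close to collision} and the chain rule, $f$ maps the curve transverse to $\Lambda_{\mathcal J}^+$ to a curve transverse to $\Lambda_{\mathcal J}^-$, and the leading-order action of $\mathbf f$ on $\boldsymbol\Sigma^+$ is the orientation-preserving map $(s,u)\mapsto(\tfrac{|u|}{|s|}s,\tfrac{|s|}{|u|}u)$ of \eqref{eqn: curve gammaout}; tracking this through $\Xi,\psi,\Upsilon$ and using that $\Lambda_\infty^u$ has $\theta$-slope $\partial_\theta\Theta_\infty^u(-\theta_-^*,\hat\Theta_0^*) = \mu^\nu\partial_\theta^2 f(-\theta_-^*,\hat\Theta_0^*) + \mathcal O(\mu^{\cdots}) = -\mu^\nu f(-\theta_-^*,\hat\Theta_0^*) + \mathcal O(\mu^{\cdots}) \approx +\mu^\nu > 0$ gives that $\gamma_\infty^{u,>}$ has $\theta$-slope $\approx +\mu^\nu > 0$, opposite in sign to that of $\gamma_\infty^{s,>}$. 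Converting slopes to angles in $[-\pi/2,\pi/2]$: with $(\gamma_\infty^{u,>})'$ pointing along a positive slope $\approx\mu^\nu$, the relative angle to $(\gamma_\infty^{s,>})'$ (slope $\approx -\mu^\nu$) is $A \approx -2\mu^\nu \in (-\pi/2,0)$, while the relative angle to $(\gamma_{\mathcal J}^-)'$ (slope $o(\mu^\nu)$, in fact of order $\mu^{\frac{19\nu-3}{8}}$ or $\mu$) is $B \approx -\mu^\nu + o(\mu^\nu) \in (-\pi/2,0)$; since $|A| > |B|$ and both are negative, $-\pi/2 < A < B < 0$, which is \eqref{eqn: Inequality angles}.

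\textbf{Main obstacle.} The delicate point is not the existence of $\hat\Theta_0^*$ — that is a routine Implicit Function Theorem argument once the $C^1$-dependence of $p_\pm$ on $\hat\Theta_0$ from Theorem \ref{thm: distance transversality} and the $C^1$-regularity of $f$ from Proposition \ref{prop: transition map close to collision} are in hand — but rather the careful bookkeeping of error exponents needed to establish the strict inequality $A < B$. One must verify that the $\theta$-slope of $\gamma_{\mathcal J}^-$ (order $\mu^{\frac{19\nu-3}{8}}$ and $\mu$) is genuinely dominated by that of $\gamma_\infty^{s,>}$ and $\gamma_\infty^{u,>}$ (order $\mu^\nu$), which requires $\mu^\nu \gg \mu^{\frac{19\nu-3}{8}}$, i.e.\ $\nu < \tfrac{19\nu-3}{8}$, equivalently $\nu > \tfrac{3}{11}$ — exactly the lower end of the assumed range — and also $\mu^\nu \gg \mu$, i.e.\ $\nu < 1$, which is automatic; so the hypothesis $\nu \in (\tfrac{3}{11},\tfrac13)$ is used precisely here. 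A secondary technical nuisance is propagating the leading-order form \eqref{eqn: curve gammaout} of $\mathbf f$ through the three coordinate changes $\Upsilon,\psi,\Xi$ while keeping enough control on first derivatives to read off the sign of the transported slope; this is where one leans on the explicit expansions in Propositions \ref{prop: Parameterization of the ECO orbits in (r,theta,R,Theta)} and \ref{prop: transition map close to collision} and on the fact, recorded in \eqref{eqn: theta+-}, that the angular reparameterization induced by passing from $\Lambda_{\mathcal J}^+$ to $\Lambda_{\mathcal J}^-$ through collision is a near-identity up to the controlled error $\mathcal O(\mu^{\frac{11\gamma-3}{8}},\mu^{1-\gamma})$.
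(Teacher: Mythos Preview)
Your approach is essentially the paper's: set up $g(\hat\Theta_0,\mu)=[\theta\text{-coord of }f(p_+)]-\theta_-(\hat\Theta_0,\mu)$, apply the Implicit Function Theorem near $\hat\Theta_0=1$, then compute the three slopes at $p_-^*$ (the paper isolates the transported slope $\Theta_\infty^{u,>}{}'(\theta_-)$ as a separate lemma with the full coordinate-chase through $\Upsilon,\psi,\Xi,\mathbf f$ carried out in an appendix, arriving at exactly your $+\mu^\nu$ and hence $\sin A\approx -2\mu^\nu$, $\sin B\approx -\mu^\nu$).

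One slip to fix: you say ``$f$ sends the $\theta$-coordinate $\theta_+$ to $-\theta_+ + \mathcal O(\mu^{\cdots})$'', but \eqref{eqn: theta+-} (which you cite correctly as recording a near-identity) gives $f(\theta_+)=\theta_+ + \mathcal O(\mu^{\cdots})$, not $-\theta_+$. With your sign the function $g$ would be $-\theta_+-\theta_-+\mathcal O(\mu^{\cdots})$, which vanishes identically to leading order (since $\theta_+=-\theta_-$) and the IFT would not bite. Your derivative formula $\partial_{\hat\Theta_0}g=\partial_{\hat\Theta_0}\theta_+-\partial_{\hat\Theta_0}\theta_-=-2\,\partial_{\hat\Theta_0}\arctan(\cdots)$ is the correct one and already presupposes $g=\theta_+-\theta_-+\mathcal O(\mu^{\cdots})$, so the error is only in that one sentence.
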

The proof of this proposition is divided into two parts. In Section \ref{subsec: existence triple intersection} we prove that the triple intersection $p_-^*$ 
exists and in Section \ref{subsec: transversality and ordering triple intersection} we establish its transversality as well as the ordering given by \eqref{eqn: Inequality angles}. Now we prove Theorems \ref{thm:chaosJ} and \ref{thm: Existence of parabolic, oscillatory and periodic orbits J} relying on Proposition \ref{prop: triple intersection J}.

\begin{proof}[Proof of Theorems \ref{thm:chaosJ} and \ref{thm: Existence of parabolic, oscillatory and periodic orbits J}]
    At $h = h^*$, Proposition \ref{prop: triple intersection J} ensures that there exists a triple transverse intersection (in the sense of Definition \ref{def:tripleintersectionJ}) between the invariant manifolds of infinity and with the family of ejection orbits $\mathcal J^-$. In particular, this also means that there exist transverse intersections between the stable manifold of infinity and the family of ejection orbits $\mathcal J^-$ at $p_-^*\in \Sigma_\nu^>$, and between the unstable manifold of infinity and the family of collision orbits $\mathcal J^+$ at $p_+^*\in\Sigma_\nu^<$, that guarantees the existence of a parabolic ejection/collision orbit.
    \begin{figure}[h!]
            \centering
            \begin{overpic}[scale=0.6]{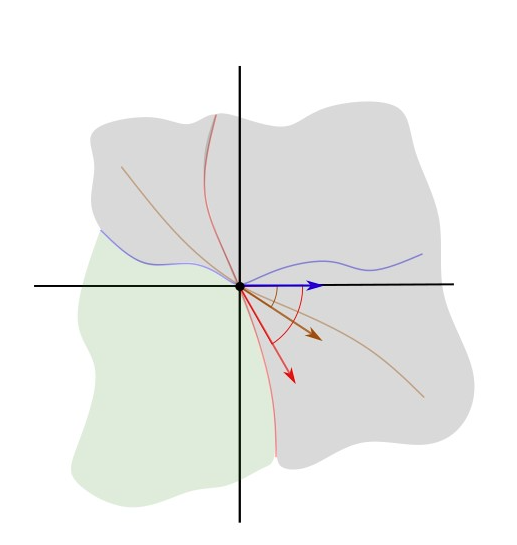}
                \put (10,85){$\Sigma_\nu^>$}
                \put (80,44) {$\theta$}
                \put (45,85) {$\Theta$}
                \put (39,44) {$p_-^*$}
                \put (58,44) {\color{blue} \footnotesize $ \gamma_\infty^{u,>}{}'(\theta_-^*)$}
                \put (52,26) {\color{red} \footnotesize $\gamma_{\infty}^{s,>}{}'(\theta_-^*)$}
                \put (55,34)  {\color{RawSienna} \footnotesize $ \gamma_{\mathcal J}^{-}{}'(\theta_-^*)$}
                \put (51,44)  {\color{brown} \footnotesize $B$}
                \put (52,37) {\color{red}\footnotesize $A$}
            \end{overpic}
            \caption{Representation of the ``ordering of the intersection'' given by the angles $A$ and $B$ in \eqref{eqn: angles trasnversality}. The region shaded in green represents the set $\mathcal D$ where the forward and backward return maps to the section $ \Sigma_\nu^>$ are well defined. That is,  the region where the construction made by Moser in \cite{moser2001stable} is performed.}
            \label{fig: trans intersection}
        \end{figure}
    
    Then, one can adapt the construction done by Moser in \cite{moser2001stable} to our setting (see also \cite{MR3455155,MR0573346}) to prove the existence of a hyperbolic set (at a suitable Poincar\'e section transverse to the flow) whose dynamics is conjugated to the shift of infinite symbols and which accumulates to the invariant manifolds of infinity. This leads to the existence of any combination of  hyperbolic, parabolic and oscillatory orbits in forward and backward time, and the existence of periodic orbits arbitrarily close to infinity (at the energy level $h^*$). The classical Moser approach relies on the existence of a transverse homoclinic point of the invariant manifolds at infinity, which is only present for the regularized flow associated to the equations of motion in \eqref{eqn: Eq motion in coordinates (s,u)} but not in original coordinates, since the homoclinic orbit goes through collision. Nevertheless, the Moser approach can be easily adapted to this setting. We follow the approach in \cite{guardia2024oscillatorymotionsparabolicorbits}, adapted to Levi-Civita coordinates. That is, we fix $\iota>0$ small and consider the points of intersection $p_-^*$,  $p_+^*$. Then, we define
\begin{itemize}
    \item $ D^>$ as the  points in $ \Sigma_\nu^>$, $\iota$-close to $p_-^*$, whose forward orbit hits $ \Sigma_\nu^<$.
    \item  $ D^<$ as the points in $ \Sigma_\nu^<$, $\iota$-close to $p_+^*$, whose backward orbit hits $ \Sigma_\nu^>$.
\end{itemize}
To characterize the domains $ D^>$ and $ D^<$, consider first two $\iota$-neighborhoods of the intersection points $p_-^*$ and $p_+^*$ in $ \Sigma_\nu$, which we denote by $B_\iota(p_-^*)\subset  \Sigma_\nu^>$ and $B_\iota(p_+^*)\subset  \Sigma_\nu^<$ respectively. We denote by 
\begin{equation*}
\begin{aligned}
\Lambda_\infty^{s,u}(\mu) &:= \Lambda_\infty^{s,u}(\hat{\Theta}_0^* (\mu),\mu), \quad R_\infty^{s,u}(\theta) := R_\infty^{s,u}(\theta,\hat{\Theta}_0^* (\mu)),\quad \Theta_\infty^{s,u}(\theta) := \Theta_\infty^{s,u}(\theta,\hat{\Theta}_0^* (\mu))
\end{aligned}
\end{equation*}
The curve $ \Lambda_\infty^s(\mu)$ intersects $B_\iota(p_-^*)$, dividing it into two connected open regions. By definition, $ \Lambda_\infty^s(\mu)$ corresponds to points whose forward orbits are parabolic. As a result, one of these regions contains points whose forward orbits escape to infinity with positive speed, that is hyperbolic orbits, while the other region contains points whose forward orbits return and intersect the section $ \Sigma_\nu^<$. The latter region, corresponding to the domain $ D^>$, is defined as points with radial momentum below the curve $ \Lambda_\infty^s(\mu)$. Namely
\begin{equation*}
     D^> = \left\{(\theta,\Theta) \in B_\iota(p_-^*) \colon R(\mu^\gamma,\theta,\Theta;h^*) < R_\infty^{s}(\theta)\right\} \subset  \Sigma_\nu^>,
\end{equation*}
where $R(\mu^\gamma,\theta,\Theta;h^*)$ is obtained from the Hamiltonian \eqref{eqn: Hamiltonian in polar coordinates centered at J} and $R_\infty^{s}(\theta) > 0$  is defined in \eqref{eqn: R_infs and Theta_infs}.

Similarly, the curve $ \Lambda_\infty^u(\mu)$  separates $B_\iota(p_+^*)$ into two connected components within $B_\iota(p_+^*)$. In this case, the domain $ D^<$ corresponds to the points with radial momentum smaller (in absolute value) than those on $ \Lambda_\infty^u(\mu)$. That is,
\begin{equation*}
     D^< = \left\{(\theta,\Theta) \in B_\iota(p_+^*) \colon R(\mu^\gamma,\theta,\Theta;h^*) > R_\infty^u(\theta)\right\}\subset  \Sigma_\nu^<,
\end{equation*}
where $R_\infty^{u}(\theta) < 0$ is defined in \eqref{eqn: R_infu and Theta_infu}.

As the points in $ D^<$ are close  to $p_+^*$, one can map this domain to $ \Sigma_\nu^>$ by means of the map $f$ in \eqref{eqn: composition triple intersection}. This result gives the domain $\tilde D^<$, defined as
\begin{equation*}
\tilde D^< = \left\{(\theta,R) \in B_\iota(p_-^*) \colon R < R_\infty^{u,>}(\theta)\right\},
\end{equation*}
where $R_\infty^{u,>}(\theta)$ corresponds to the image of $R_\infty^u(\theta)$ under $f$.

Alternatively, by the conservation of the Hamiltonian \eqref{eqn: Hamiltonian in polar coordinates centered at J}, the domains $ D^>$ and $\tilde D^<$ can be also defined as
\begin{equation*}
\begin{aligned}
 D^> = \left\{(\theta,\Theta) \in B_\iota(p_-^*) \colon \Theta < \Theta_\infty^{s}(\theta)\right\}\subset  \Sigma_\nu^>,\quad \tilde D^< = \left\{(\theta,\Theta) \in B_\iota(p_-^*) \colon \Theta < \Theta_\infty^{u,>}(\theta)\right\}\subset  \Sigma_\nu^>,
\end{aligned}
\end{equation*}
where $\Theta_\infty^{u,>}(\theta):=\Theta_\infty^{u,>}(\theta,\hat\Theta_0^*(\mu))$ is introduced in \eqref{eqn: Curves as graphs of theta in coordinates (theta,Theta)}.

Recall that Proposition \ref{prop: triple intersection J} gives the ``ordering'' of the invariant manifolds at the triple intersection (see \eqref{eqn: angles trasnversality} and \eqref{eqn: Inequality angles}). Hence the transition of the points from $ D^<$ to $\tilde D^<$ has not gone through collision with Jupiter. Therefore, the domain $\mathcal{ D}= D^> \cap \tilde D^<$ contains points in $ \Sigma_\nu^>$, $\iota$-close to $p_-^*$, whose backward and forward orbit hit $ \Sigma_\nu^>$.  Hence, proceeding as in \cite{guardia2024oscillatorymotionsparabolicorbits} ($\S$ 6, pp. 39--40), one can complete the proof. 
\end{proof}

\subsection{Existence of the triple intersection}\label{subsec: existence triple intersection}
Fix  $h = -\hat{\Theta}_0 \in \left(-\frac{1+\sqrt 3}{3},\frac{\sqrt 3-1}{3}\right)$, $\nu \in \left(\frac{3}{11},\frac 1 3\right)$ and $\mu > 0$ small enough so that Theorem \ref{thm: distance transversality} holds and consider the point
\begin{equation}\label{eqn: point p<}
    p_+ = p_+(\hat{\Theta}_0,\mu)= \left(\theta_+(\hat{\Theta}_0,\mu), \Theta_+(\hat{\Theta}_0,\mu)\right) \in  \Lambda_{\mathcal J}^{+}(\mu)\transv  \Lambda_\infty^u(\hat \Theta_0,\mu) \in \Sigma_\nu^<,
\end{equation}
with $\Theta_+(\hat{\Theta}_0,\mu) = \Theta_\infty^u(\theta_+(\hat{\Theta}_0,\mu), \hat{\Theta}_0) =  \Theta_{\mathcal J}^+(\theta_+(\hat{\Theta}_0,\mu))$, where $\Theta_\infty^u(\theta,\hat\Theta_0)$ and $\Theta_{\mathcal J}^+(\theta)$ are defined in \eqref{eqn: R_infu and Theta_infu} and \eqref{eqn: (R,Theta) in curves ECO-section} respectively.

Since the changes of coordinates $\psi$, $\Xi$ and $\Upsilon$ (defined in \eqref{eqn: Change of coordinates Levi-Civita}, \eqref{eqn: straightening (z,w)-(s,u)} and \eqref{eqn: (q,p) - (r,theta,R,Theta)}, respectively) are diffeomorphisms, we can apply Proposition \ref{prop: transition map close to collision} so the map $f$ in \eqref{eqn: composition triple intersection} is well defined and 
\[p_-^u = \left(\theta_-^u,\Theta_-^u\right) = f\circ p_+ \in  \Lambda_{\mathcal J}^{-}(\mu)\subset \Sigma_\nu^>.\] 
Therefore $ \Theta_-^u =  \Theta_{\mathcal{J}}^-(\theta_-^u)$ (see \eqref{eqn: (R,Theta) in curves ECO-section}) and  $p_-^u \in  \Lambda_{\mathcal J}^{-}(\mu)\transv W_\mu^u(\Alpha_{\hat \Theta_0})$. The $\theta$-component of both points can be related from \eqref{eqn: theta+-}  as
\[\theta_-^u = \theta_-^u(\hat\Theta_0,\mu) = \theta_+(\hat\Theta_0,\mu) + \mathcal{O}\left(\mu^{\frac{11\nu-3}{8}}\right)\]
since $\nu < \frac 1 3$. To guarantee the triple intersection we need that $p_-^u = p_-(\hat{\Theta}_0,\mu) \in  \Lambda_{\mathcal J}^{-}(\mu)\transv  \Lambda_\infty^s(\hat \Theta_0,\mu)$ defined in \eqref{eqn:p-Thm62}. Namely, we look for $\hat{\Theta}_0$ such that
\begin{equation}\label{eqn: triple intersection}
    \theta_-^u(\hat{\Theta}_0,\mu)= \theta_-(\hat{\Theta}_0,\mu)
\end{equation}
which is equivalent to solve
\begin{equation}\label{eqn: implicit equation intersection}
    2\arctan\left(\frac{\hat{\Theta}_0-1}{\sqrt{2-\hat{\Theta}_0^2}}\right) + \mathcal{O}\left(\mu^{\frac{11\nu-3}{8}}\right) = 0.
\end{equation}
%
The Implicit Function Theorem ensures that there exists $\mu_0 > 0$ such that, for all $\mu \in (0,\mu_0)$, there exists a unique $\hat{\Theta}_0^*(\mu)$ with $\hat{\Theta}_0^*(0) = 1$ satisfying \eqref{eqn: implicit equation intersection} (and therefore \eqref{eqn: triple intersection}). Since $\hat{\Theta}_0^*(\mu) \in \left(\frac{1-\sqrt 3}{3},\frac{1+\sqrt 3}{3}\right)$, the point
\begin{equation}\label{eqn: point triple intersection}
    p_-^* = \left(\theta_-(\hat\Theta_0^*(\mu),\mu), \Theta_-(\hat\Theta_0^*(\mu),\mu)\right),\quad \Theta_-(\hat\Theta_0^*(\mu),\mu) =\Theta_{\mathcal J}^-(\theta_-(\hat\Theta_0^*(\mu),\mu)) 
\end{equation}
belongs to $W_\mu^u(\Alpha_{\hat{\Theta}_0})\cap  \Lambda_{\infty}^s(\hat \Theta_0,\mu)\cap {\Lambda}_{\mathcal J}^{-}(\mu)$ at
\begin{equation}\label{eqn: energy triple intersection}
    h^* = h(\mu) = -\hat{\Theta}_0^*(\mu) = -1+\mathcal{O}\left(\mu^{\frac{11\nu-3}{8}}\right).
\end{equation}
%
\subsection{Transversality and ordering of the triple intersection}\label{subsec: transversality and ordering triple intersection}
From now on we consider $h^*$ as in \eqref{eqn: energy triple intersection} and use the expressions for $\theta_-$ and $\theta_+$ given in Theorem \ref{thm: distance transversality}. We denote by
\begin{equation}\label{eqn: notation transversality}
    \theta_\mp = \theta_\mp(\hat\Theta_0^*(\mu),\mu) = \mathcal{O}\left(\mu^{\frac{11\nu-3}{8}}\right),\quad d_\mp(\theta)=d_\mp(\theta, \hat{\Theta}_0^*(\mu)),\quad
    \Theta_\infty^{u,s}(\theta) = \Theta_\infty^{u,s}(\theta,\hat{\Theta}_0^*(\mu)),
\end{equation}
%
where $\theta_\mp(\hat\Theta_0^*(\mu),\mu)$ are defined in \eqref{eqn: point p<} and \eqref{eqn: point triple intersection}, the distances are introduced in \eqref{eqn: approx distances} and $\Theta_\infty^{u,s}(\theta,\hat\Theta_0^*(\mu))$ are defined in \eqref{eqn: R_infu and Theta_infu} and \eqref{eqn: R_infs and Theta_infs}.
%
%
The following lemma (whose proof is done in Appendix \ref{appendix: From Levi-Civita coordinates to synodical polar coordinates}) proves the existence of a $C^1$-curve $\Gamma_\infty^{u,>}\subset \Sigma_\nu^>$ related to $\Lambda_\infty^u(\hat{\Theta}_0^*(\mu),\mu)\subset \Sigma_\nu^<$ via the map $f$ in \eqref{eqn: composition triple intersection} and provides $C^1$ estimates for its graph parameterization in coordinates $(\theta,\Theta)$.

\begin{lemma}\label{lemma: tangent vector at Sigma delta,h as a graph}
Fix $\iota > 0$ small and denote by $B_\iota(p_+)$ a $\iota$-neighborhood of the point $p_+$ in \eqref{eqn: point p<}. Consider the $C^1$-curve given by 
\begin{equation}\label{eqn: curve gammainf<}
\Gamma_\infty^{u,<}:=\Lambda_\infty^u(\hat{\Theta}_0^*(\mu),\mu)\cap B_\iota(p_+) \subset \Sigma_\nu^<.
\end{equation}
Then the map $f$ in \eqref{eqn: composition triple intersection} maps the curve $\Gamma_\infty^{u,<}$ into a $C^1$-curve of the form
\begin{equation}
    \Gamma_\infty^{u,>} := f\circ \Gamma_\infty^{u,<} \subset \Sigma_\nu^>
\end{equation}
which admits the graph parameterization $\gamma_\infty^{u,>}(\theta) = (\theta,\Theta_\infty^{u,>}(\theta,\hat{\Theta}_0^*(\mu)))$ defined in \eqref{eqn: Curves as graphs of theta in coordinates (theta,Theta)}. For $\theta \in (\theta_- - \iota,\theta_-+\iota)$, the following estimates hold:
\begin{equation}\label{eqn: Expression of Thetau> infty through f}
    \begin{aligned}
      \Theta_\infty^{u,>}(\theta,\hat{\Theta}_0^*(\mu)) =\mu^{\nu}\left(\sin\theta + \mathcal{O}\left(\mu^{\frac{11\nu-3}{8}}\right)\right),\quad \Theta_\infty^{u,>}{}'(\theta_-,\hat{\Theta}_0^*(\mu)) =  \mu^{\nu}\left(1+ \mathcal{O}\left(\mu^{\frac{11\nu-3}{8}},\mu^{1-3\nu}\right)\right).
    \end{aligned}
\end{equation}
\end{lemma}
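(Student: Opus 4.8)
The plan is to track the curve $\Lambda_\infty^u(\hat\Theta_0^*(\mu),\mu)\subset\Sigma_\nu^<$ through the explicit chain of diffeomorphisms $\Upsilon$, $\psi^{-1}$, $\Xi$, then through the transition map $\mathbf f$ of Proposition \ref{prop: transition map close to collision}, and finally back through $\Xi^{-1}$, $\psi$, $\Upsilon^{-1}$, keeping $C^1$ control at every stage. The starting data is supplied by Proposition \ref{prop: Parameterization of the invariant manifolds of infinity in (r,theta,R,Theta)}: near $p_+$ the curve $\Lambda_\infty^u(\hat\Theta_0^*(\mu),\mu)$ is the graph $R=R_\infty^u(\theta,\hat\Theta_0^*(\mu))$, $\Theta=\Theta_\infty^u(\theta,\hat\Theta_0^*(\mu))$ with $R_\infty^u(\theta,\hat\Theta_0)=f(\theta,\hat\Theta_0)+\mathcal O(\mu^\nu)$ and $\Theta_\infty^u(\theta,\hat\Theta_0)=\mu^\nu\partial_\theta f(\theta,\hat\Theta_0)+\mathcal O(\mu^{2\nu})$; since $\hat\Theta_0^*(\mu)=1+\mathcal O(\mu^{(11\nu-3)/8})$ we have $f(\theta,\hat\Theta_0^*(\mu))=-\cos\theta+\mathcal O(\mu^{(11\nu-3)/8})$ and $\partial_\theta f(\theta,\hat\Theta_0^*(\mu))=\sin\theta+\mathcal O(\mu^{(11\nu-3)/8})$. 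First I would rewrite this curve in the $(q,p)$ coordinates of \eqref{eqn: Hamiltonian in cartesian coordinates centered at P2} via $\Upsilon$, obtaining $q\sim\mu^\nu(\cos\theta,\sin\theta)$ and $p$ given by the leading radial term plus a small angular correction of size $\mu^\nu$; then push it into Levi-Civita variables $(z,w)$ and the straightening coordinates $(s,u)$ exactly as in the proof of Proposition \ref{prop: Parameterization of the ECO orbits in (r,theta,R,Theta)} — the incoming curve lands in $\boldsymbol\Sigma^+\subset\partial\mathbf B_\gamma$ near $\mathbf p_+=\Xi\circ\psi^{-1}\circ\Upsilon(p_+)$, transverse to $\mathbf\Lambda_{\mathcal J}^+(\mu)$.

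The next and central step is to apply Proposition \ref{prop: transition map close to collision}: writing the pushed curve as $\boldsymbol\gamma_{\mathrm{in}}(\beta)=(s_{\mathrm{in}}(\beta),u_{\mathrm{in}}(\beta))$ with $|s_{\mathrm{in}}(\beta)|=\mathcal O(\mu^{\gamma/2})\ge|u_{\mathrm{in}}(\beta)|$, the map $\mathbf f$ sends it to the curve $\boldsymbol\gamma_{\mathrm{out}}$ of \eqref{eqn: curve gammaout} inside $\boldsymbol\Sigma^-$, transverse to $\mathbf\Lambda_{\mathcal J}^-(\mu)$ at $\mathbf p_-=\mathbf f(\mathbf p_+)$. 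Because $\Lambda_\infty^u(\hat\Theta_0^*(\mu),\mu)\subset W_\mu^u(\Alpha_{\hat\Theta_0^*})$ is not an ejection/collision fiber but a full transverse slice, the relative size of $|u_{\mathrm{in}}|$ to $|s_{\mathrm{in}}|$ varies across the curve, which is precisely what makes $\Gamma_\infty^{u,>}$ a graph over $\theta$ distinct from $\Lambda_{\mathcal J}^-(\mu)$. I would then translate $\boldsymbol\gamma_{\mathrm{out}}$ back via $\Xi^{-1}$, $\psi$, $\Upsilon^{-1}$, using that $\partial\mathbf B_\gamma$ corresponds to $r=\mu^\gamma=\mu^\nu$ and that the angular variable $\theta$ is recovered, as in \eqref{eqn: (r,theta) components ECO-curves}, as $\theta=2\beta+\mathcal O(\mu^{(11\gamma-3)/8})$ (hence also $\theta=\theta_-^u+\mathcal O(\cdots)$); the $\Theta$-component is read off from the transformation $\Upsilon^{-1}$ applied to $(q^-(\beta),\mathbf f\text{-image of }p^-(\beta))$, whose leading term is $\mu^\nu\sin\theta$, giving \eqref{eqn: Expression of Thetau> infty through f}. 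The derivative estimate follows by differentiating the composed graph, keeping track that $\partial_\theta$ acting on the $\mathcal O(\mu^{2\nu})$-type remainders picks up at worst a factor $\mu^{-\nu}$ (as in \eqref{eqn: R_infu and Theta_infu} where $\partial_\theta\Theta_\infty^u=\mu^\nu\partial_\theta^2 f+\mathcal O(\mu^{1-2\nu})$), and that $\partial_\theta f(\theta,\hat\Theta_0^*(\mu))=\sin\theta+\mathcal O(\mu^{(11\nu-3)/8})$, $\partial_\theta^2 f=\cos\theta+\mathcal O(\mu^{(11\nu-3)/8})$; evaluating at $\theta=\theta_-=\mathcal O(\mu^{(11\nu-3)/8})$ yields $\Theta_\infty^{u,>}{}'(\theta_-,\hat\Theta_0^*(\mu))=\mu^\nu(1+\mathcal O(\mu^{(11\nu-3)/8},\mu^{1-3\nu}))$.

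The main obstacle I anticipate is bookkeeping the $C^1$ error terms through the transition map $\mathbf f$, whose leading part $(s,u)\mapsto(\tfrac{|u|}{|s|}s,\tfrac{|s|}{|u|}u)$ is only conditionally regular: the factor $|s_{\mathrm{in}}(\beta)|/|u_{\mathrm{in}}(\beta)|$ can become large where $|u_{\mathrm{in}}|$ is small, so one must verify that on the relevant $\iota$-neighborhood (i.e. near $\mathbf p_+$, which is an \emph{interior} point of $\boldsymbol\Sigma^+$ away from the locus $|s|=|u|$) the ratio stays bounded and the $\mathcal O_2(|s_{\mathrm{in}}|)$ correction in \eqref{eqn: curve gammaout} is genuinely $\mathcal O(\mu^\gamma)$ in $C^1$. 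Granting that — which is exactly the content of Lemma \ref{lemma: transition map close to J} and the surrounding analysis of \cite{MR1335057}, already invoked in Proposition \ref{prop: transition map close to collision} — the remaining computations are the same bilipschitz bookkeeping carried out in the proofs of Propositions \ref{prop: general ECO-orbits intersect section (z,w)} and \ref{prop: Parameterization of the ECO orbits in (r,theta,R,Theta)}, so the error exponents $\mu^{(11\nu-3)/8}$ and $\mu^{1-3\nu}$ propagate verbatim. The conclusion is then exactly the two displayed estimates in \eqref{eqn: Expression of Thetau> infty through f}.
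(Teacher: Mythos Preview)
Your proposal is correct and takes essentially the same approach as the paper's proof (Appendix~\ref{appendix: From Levi-Civita coordinates to synodical polar coordinates}): explicit coordinate-tracking of $\Lambda_\infty^u$ through $\Upsilon$, $\psi^{-1}$, $\Xi$, the transition formula \eqref{eqn: curve gammaout} of Proposition~\ref{prop: transition map close to collision}, and back, culminating in $\Theta_\infty^{u,>}=q_1p_2-q_2p_1=\mu^\nu\sin\theta+\cdots$. The regularity concern you raise about the factor $|s_{\mathrm{in}}|/|u_{\mathrm{in}}|$ near $\mathbf p_+$ is handled in the paper exactly as you anticipate --- the apparent singularity in $\tfrac{|s|}{|u|}u$ is removable since $u/|u|$ has a well-defined limiting direction along the curve, and the $\mathcal O_2(|s_{\mathrm{in}}|)$ correction is genuinely $\mathcal O(\mu^\nu)$.
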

Denote by $\Theta_\infty^{u,>}{}'(\theta_-):=\Theta_\infty^{u,>}{}'(\theta_-,\hat{\Theta}_0^*(\mu))$. Relying on \eqref{eqn: Curves as graphs of theta in coordinates (theta,Theta)}, the transversality condition of the triple intersection (see Definition \ref{def:tripleintersectionJ}) is guaranteed once we prove that
\begin{equation}\label{eqn:ineq}
    \begin{aligned}
        \Theta_{\mathcal{J}}^-{}'(\theta_-) - \Theta_\infty^s{}'(\theta_-) \neq 0,\qquad \Theta_{\mathcal J}^-{}'(\theta_-) - \Theta_\infty^{u,>}{}'(\theta_-) \neq 0,\qquad \Theta_\infty^{s}{}'(\theta_-) - \Theta_\infty^{u,>}{}'(\theta_-) \neq 0.
    \end{aligned}
\end{equation}
Since $\nu\in\left(\frac{3}{11},\frac 1 3\right)$, by \eqref{eqn: (R,Theta) in curves ECO-section}, \eqref{eqn: R_infs and Theta_infs}, \eqref{eqn: energy triple intersection}, \eqref{eqn: notation transversality} and \eqref{eqn: Expression of Thetau> infty through f} we have
\begin{equation*}
\begin{aligned}
    \Theta_{\mathcal J}^-{}'(\theta_{-}) &= \mathcal{O}\left(\mu^{\frac{19\nu-3}{8}}\right),\\
    \Theta_{\infty}^s{}'(\theta_-) &= - \Theta_\infty^{u}{}'(-\theta_-) = -\mu^{\nu}\left(\sin\theta_-\left(\hat \Theta_0^* - 1\right) + \cos\theta_-\sqrt{2-\hat\Theta_0^*{}^2}\right) + \mathcal{O}\left(\mu^{1-2\nu}\right)\\
    &= - \mu^{\nu}\left(1+ \mathcal{O}\left(\mu^{\frac{11\nu-3}{8}},\mu^{1-3\nu}\right)\right).
\end{aligned}
\end{equation*}
Therefore, we rewrite \eqref{eqn:ineq} as
\begin{equation}\label{eqn: difference derivative transversality}
\begin{aligned}
    \Theta_{\mathcal{J}}^-{}'(\theta_-) - \Theta_\infty^s{}'(\theta_-)&=   \mu^{\nu}\left(1+ \mathcal{O}\left(\mu^{\frac{11\nu-3}{8}},\mu^{1-3\nu}\right)\right) \neq 0,\\
    \Theta_{\mathcal J}^-{}'(\theta_-) - \Theta_\infty^{u,>}{}'(\theta_-) &= -\mu^{\nu}\left(1+ \mathcal{O}\left(\mu^{\frac{11\nu-3}{8}},\mu^{1-3\nu}\right)\right) \neq 0,\\
     \Theta_\infty^{s}{}'(\theta_-) - \Theta_\infty^{u,>}{}'(\theta_-) &=  -2\mu^{\nu}\left(1+ \mathcal{O}\left(\mu^{\frac{11\nu-3}{8}},\mu^{1-3\nu}\right)\right)\neq 0.
\end{aligned}
\end{equation}
Following \eqref{eqn: Curves as graphs of theta in coordinates (theta,Theta)} and \eqref{eqn: difference derivative transversality}, we compute the angles in \eqref{eqn: angles trasnversality} as follows
\begin{equation*}
\begin{aligned}
    \sin A =& \frac{\Theta_\infty^s{}'(\theta_-) - \Theta_\infty^{u,>}{}'(\theta_-)}{\sqrt{\left(1+ \left(\Theta_\infty^s{}'(\theta_-)\right)^2\right)\left(1+ \left(\Theta_\infty^{u,>}{}'(\theta_-)\right)^2\right)}} = -2\mu^{\nu}\left(1+ \mathcal{O}\left(\mu^{\frac{11\nu-3}{8}},\mu^{1-3\nu}\right)\right),\\
    \sin B =& \frac{\Theta_{\mathcal J}^-{}'(\theta_-) - \Theta_\infty^{u,>}{}'(\theta_-)}{\sqrt{\left(1+ \left(\Theta_{\mathcal J}^-{}'(\theta_-)\right)^2\right)\left(1+ \left(\Theta_\infty^{u,>}{}'(\theta_-)\right)^2\right)}} = -\mu^{\nu}\left(1+ \mathcal{O}\left(\mu^{\frac{11\nu-3}{8}},\mu^{1-3\nu}\right)\right),
\end{aligned}
\end{equation*}
leading to \eqref{eqn: Inequality angles}. See Figure \ref{fig: trans intersection} to see the relative position of the curves.

\section*{Declarations}


\begin{itemize}
\item \textbf{Funding}: 
This work was partially supported by the grant PID-2021-122954NB-100 funded by MCIN/AEI/10.13039/501100011033 and ``ERDF A way of making Europe''. 

M.Guardia has been supported by the European Research Council (ERC) under the
European Union’s Horizon 2020 research and innovation programme (grant agreement
No. 757802). M.Guardia was also supported by the Catalan Institution for Research and
Advanced Studies via an ICREA Academia Prize 2018 and 2023.

J.Lamas has been supported by grant 2021 FI\_B 00117 under the European Social Fund.

Tere M.Seara has been supported by the Catalan Institution for Research and Advanced Studies via an ICREA Academia Prize 2018.

This work was also supported by the Spanish State Research Agency through the Severo Ochoa and Mar\'ia de Maeztu Program for Centers and Units of Excellence in R\&D(CEX2020-001084-M).

\item \textbf{Competing interests}: The authors have no competing interests to declare that are relevant to the content of this article. All authors certify that they have no affiliations with or involvement in any organization or entity with
any financial interest or non-financial interest in the subject matter or materials discussed in this manuscript.
The authors have no financial or proprietary interests in any material discussed in this article. Indirectly, the
authors hope to gain reputation in the scientific community via the publication of this article.
\item \textbf{Data availability} : The authors declare that all the data supporting the findings of this study are available within the paper.
\end{itemize}

\appendix

\section{Proof of Proposition \ref{prop: general ECO-orbits intersect section (z,w)}}\label{appendix: Proof of proposition 3.2}
We write the equations of motion \eqref{eqn: Eq motion in coordinates (s,u)} as
\begin{equation*}
    \begin{aligned}
        s' &= -s + Q^s(s,u),\quad u' &= u + Q^u(s,u).
    \end{aligned}
\end{equation*}
where $Q^s(s,u)$ and $Q^u(s,u)$ are analytic in $\Xi(\mathrm B)$, defined in \eqref{eqn: straightening (z,w)-(s,u)}. They satisfy 
\begin{alignat}{2}
    Q^s(s,u) =&\ s \mathcal O_2(s,u),\quad Q^u(s,u) &&= u \mathcal O_2(s,u),\\
    \partial_{s}Q^s(s,u) =&\ \mathcal{O}_2(s,u), \quad \partial_{s}Q^u(s,u) &&= u\mathcal{O}(s,u),\label{eqn: Estimate H}\\
    \partial_{u}Q^s(s,u) =&\ s\mathcal{O}(s,u), \quad \partial_{u}Q^u(s,u) &&= \mathcal{O}_2(s,u).
\end{alignat}
We provide the proof for the ejection orbit $(s_\mu(\tau,\beta), u_\mu(\tau,\beta))$ (with $\mathrm{Re}(\tau) >  0$) since the proof for the collision orbit is analogous. To simplify notation, throughout the proof we denote by $C >0$ any constant independent of $\mu$ and by
\begin{equation}\label{eqn: notationA}
\begin{aligned}
\tau &:= \tau(\beta),&\quad (s(\tau),u(\tau)) &:= (s_\mu(\tau,\beta),u_\mu(\tau,\beta)),\\
\tau_{\mathrm{lin}} &:=\tau_{\mathrm{lin}}(\beta),&\quad (s_{\mathrm{lin}}(\tau),u_{\mathrm{lin}}(\tau)) &:= ( s_{\mathrm{lin}}(\tau,\beta), u_{\mathrm{lin}}(\tau,\beta)),
\end{aligned}
\end{equation}
where $\tau_{\mathrm{lin}}(\beta)$ and $( s_{\mathrm{lin}}(\tau,\beta), u_{\mathrm{lin}}(\tau,\beta))$ are defined in \eqref{eqn: Time lineal Levi-Civita} and \eqref{eqn: slin ulin} respectively for $\beta \in \mathds{T}_{\sigma_0}$ in \eqref{eqn: complexdomainbeta}.

The initial condition $ (s_0,u_0):= (s_\mu(0,\beta),u_\mu(0,\beta))$ in \eqref{eqn: Collision in coord (s,u)} is of the form
\begin{equation}\label{eqn: Init condition (s0,u0)}
    \begin{aligned}
        s_{01} &= -\frac{\xi\mu^{\frac 1 2}}{\sqrt 2} \cos\beta + \mathcal{O}_{\frac 5 2}(\mu),&\quad s_{02} &= -\frac{\xi\mu^{\frac 1 2}}{\sqrt 2} \sin\beta + \mathcal{O}_{\frac 5 2}(\mu),\\
        u_{01} &= \frac{\xi \mu^{\frac 1 2}}{\sqrt 2}\cos\beta + \mathcal{O}_{\frac 5 2}(\mu),&\quad u_{02} &= \frac{\xi\mu^{\frac 1 2}}{\sqrt 2}\sin\beta +\mathcal{O}_{\frac 5 2}(\mu),
    \end{aligned}
\end{equation}
with $\beta \in \mathds{T}_{\sigma_0}$ and therefore satisfies the uniform estimate
\begin{equation}\label{eqn: Estimate (s0,u0)}
    |s_{0i}| \leq C\mu^{\frac 1 2},\quad |u_{0i}| \leq C\mu^{\frac 1 2},\quad i=1,2.
\end{equation}
Fix $\eta \in (0,3/2]$ and $a_0 \in (0,1)$. Since $|\operatorname{Im}(\tau_{\mathrm{lin}})| = \mathcal O(\mu^{1-\gamma})$ for $\mu > 0$ small enough (see \eqref{eqn: Time lineal Levi-Civita}), we can always assume that $|\operatorname{Im}(\tau_{\mathrm{lin}})|< \frac{a_0}{2}$. Let $\mathrm B$ be the ball from Proposition \ref{prop: Straightening invariant manifolds collision LC} (where the estimates in \eqref{eqn: Estimate H} are still valid). Denote by 
\[T^* = \sup \left\{ T \in (0,+\infty)\colon (s(\tau),u(\tau)) \in \mathrm B \text{ for all  } \tau \in \mathbf S_T\right\},\]
where $\mathbf S_T$ is the complex strip
\[\mathbf{S}_T = \left\{\tau \in \mathds{C}\colon \mathrm{Re}(\tau) \in [0,T],\; |\mathrm{Im}(\tau)| \le a_0\right\}.\]
We study the evolution of the orbit $(s(\tau),u(\tau))$ for $\tau \in \mathbf D$ with
\begin{equation}\label{eqn: Deta}
\mathbf D =  \left\{\tau \in \mathds{C}\colon \mathrm{Re}(\tau)\in[0,\eta \min \{ \mathrm{Re}(\tau_{\mathrm{lin}}),T^*\}],\; |\mathrm{Im}(\tau)|\leq a_0\right\},
\end{equation}
Applying the variation of constants formula we have for $\tau \in \mathbf D$
\begin{equation}\label{eqn: sol su}
    s(\tau) = e^{-\tau}(s_{0}+\tilde s(\tau)), \quad u(\tau) = e^{\tau}(u_{0}+\tilde u(\tau)),
\end{equation}
with $(\tilde s(\tau), \tilde u(\tau))$ satisfying
\begin{equation}\label{eqn: Variation of constants}
    \begin{aligned}
        \tilde s(\tau) &= (\tilde s_1(\tau), \tilde s_2(\tau))^\top = \int_0^{\tau} e^z Q^s\left(e^{-z}(s_0+\tilde s(z)), e^z(u_0+\tilde u(z))\right)dz,\\
        \tilde u(\tau) &= (\tilde u_1(\tau),\tilde u_2(\tau))^\top = \int_0^{\tau} e^{-z} Q^u\left(e^{-z}(s_0+\tilde s(z)), e^z(u_0+\tilde u(z))\right)dz.
    \end{aligned}
\end{equation}
We analyze the evolution of $\tilde s(\tau)$, $\tilde u(\tau)$ for $\tau\in \mathbf D$ by a fixed point argument. For fixed $(s_0,u_0)$ in \eqref{eqn: Init condition (s0,u0)}, we denote by $\chi$ the Banach space of analytic functions in $\accentset{\circ}{\mathbf D}$ which extends continuously to $\partial \mathbf D$. Hence, the space $\chi^4$ is a Banach space under the norm 
\[ \|f\|_{\chi^4}= \|(f_1^s, f_2^s,f_1^u,f_2^u)\|_{\chi^4} = \max\left(\|f_1^s\|,\|f_2^s\|,\|f_1^u\|,\|f_2^u\|\right),\]
where $\|\cdot\|$ denotes the supremum norm in $\mathbf D$. We define the following operator acting on $\chi^4$
\begin{equation}\label{eqn: Operator F fixed point}
    \begin{aligned}
        \mathcal F\colon \chi^4 &\to \chi^4\\
        f=(f^s,f^u) &\mapsto \mathcal F(f) = \left(\mathcal F^s(f),\mathcal F^u(f)\right),
    \end{aligned}
\end{equation}
such that
\begin{equation*}
    \begin{aligned}
        \mathcal F^s(f)(z)&=\int_0^{\tau} e^z Q^s\left(e^{-z}(s_0+f^s(z)), e^z(u_0+f^u(z))\right)dz,\\
        \mathcal F^u(f)(z) &= \int_0^{\tau} e^{-z} Q^u\left(e^{-z}(s_0+f^s(z)), e^z(u_0+f^u(z))\right)dz.
    \end{aligned}
\end{equation*}
The solutions $(\tilde s(\tau), \tilde u (\tau))$ defined in \eqref{eqn: Variation of constants} are fixed points of $\mathcal F$.

A fixed point theorem argument allows us to provide estimates for $\tilde s(\tau)$ and $\tilde u(\tau)$. Relying on \eqref{eqn: Estimate H} and \eqref{eqn: Estimate (s0,u0)} we obtain the following estimates
\begin{equation*}
    \begin{aligned}
        \|\mathcal F^s(0)\| &= \underset{\tau \in \mathbf D}{\sup} \left(\left|\int_0^{\tau} e^z Q^s(e^{-z}s_0,e^{z}u_0) dz\right|\right)\\
        &\leq\underset{\tau \in \mathbf D}{\sup} \left(\int_0^\tau\left(C\mu^{\frac 3 2} \left|e^{-z}+e^z\right|^2\right)dz\right) \leq C \mu^{\frac 3 2} e^{2\eta\mathrm{Re}(\tau_{\mathrm{lin}})} \leq C \mu^{\frac{3}{2}-\eta+\eta\gamma}, \\
        \|\mathcal F^u(0)\|&= \underset{\tau \in \mathbf D}{\sup} \left(\left|\int_0^{\tau} e^{-z} Q^u(e^{-z}s_0,e^{z}u_0) dz\right|\right)\\
        &\leq \underset{\tau \in \mathbf D}{\sup} \left(\int_0^{\tau} \left(C \mu^{\frac 3 2} \left|e^{-z}+e^z\right|^2\right)dz\right)\leq C \mu^{\frac 3 2} e^{2\eta\mathrm{Re}(\tau_{\mathrm{lin}})} \leq C \mu^{\frac{3}{2}-\eta+\eta\gamma}.\\
    \end{aligned}
\end{equation*}
Therefore
\begin{equation}\label{eqn: epsilon(n)}
    \|\mathcal F(0)\|_{\chi^4} \leq C \mu^{\frac{3}{2}-\eta+\eta\gamma} = \varepsilon(\eta).
\end{equation}
We impose $\eta\in \left(1,\frac 3 2\right)$ and $\varepsilon(\eta) \ll |s_0| = |u_0| = \mathcal{O}(\mu^{\frac 1 2})$ satisfying \eqref{eqn: Estimate (s0,u0)}. That is, we look for $\eta,\gamma$ such that
\begin{equation}\label{eqn: cond eps}
1-\eta+\eta\gamma > 0.
\end{equation}
This is achieved for $\gamma \in \left(1-\frac{1}{\eta},1\right)$. 

We define the ball $B_{2\varepsilon} = \left\{f \in \chi^4\colon \|f\|_{\chi^4} < 2\varepsilon(\eta)\right\}$ and we prove that the operator $\mathcal F$ is Lipschitz in $B_{2\varepsilon}$. Consider $f,g\in B_{2\varepsilon}$. For $t\in [0,1]$ the function 
\[h^t(\tau) = \left(h_s^t(\tau),h_u^t(\tau)\right) = t\cdot \left(f_s^t(\tau),f_u^t(\tau)\right) + (1-t)\cdot \left(g_s^t(\tau),g_u^t(\tau)\right)\]
satisfies, for $\tau \in \mathbf D$,
\[|s_{0}+h_{s}^t| \leq C\mu^{\frac 1 2}, \quad |u_{0}+h_{u}^t| \leq C\mu^{\frac 1 2}.\]
Therefore, relying on \eqref{eqn: Estimate H} we obtain 
\begin{equation*}
    \begin{aligned}
        \big|Q&^s\left(e^{-\tau}(s_0+f^s), e^{\tau}(u_0+f^u)\right)-Q^s\left(e^{-\tau}(s_0+g^s), e^{\tau}(u_0+g^u)\right)\big|\\
        \leq& \int_0^1 \left(\left|\partial_{s}Q^s\left(e^{-\tau}(s_0+h_s^t), e^{\tau}(u_0+h_u^t)\right)\right|\cdot e^{-\tau}|f^s-g^s|\right.\\
        &\left.+ \left|\partial_{u}Q^s\left(e^{-\tau}(s_0+h_s^t), e^{\tau}(u_0+h_u^t)\right)\right|\cdot e^{\tau}|f^u-g^u|\right) dt \leq C e^{\tau}\mu\|f-g\|_{\chi^4},\\
        &\\
        \big|Q&^u\left(e^{-\tau}(s_0+f^s), e^{\tau}(u_0+f^u)\right)-Q^u\left(e^{-\tau}(s_0+g^s), e^{\tau}(u_0+g^u)\right)\big|\\
        \leq& \int_0^1 \left(\left|\partial_{s}Q^u\left(e^{-\tau}(s_0+h_s^t), e^{\tau}(u_0+h_u^t)\right)\right|\cdot e^{-\tau}|f^s-g^s|\right.\\
        &+\left. \left|\partial_{u}Q^u\left(e^{-\tau}(s_0+h_s^t), e^{\tau}(u_0+h_u^t)\right)\right|\cdot e^{\tau}|f^u-g^u|\right) dt \leq C e^{3\tau}\mu\|f-g\|_{\chi^4}.
    \end{aligned}
\end{equation*}
Hence we obtain the following estimate for the norms $\|\mathcal F^s(f) - \mathcal F^s(g)\|$ and $\|\mathcal F^u(f) - \mathcal F^u(g)\|$
\begin{equation*}
    \begin{aligned}
        \|\mathcal F^{s,u}(f) - \mathcal F^{s,u}(g)\| &\leq\underset{\tau \in \mathbf D}{\sup} \left(\int_0^{\tau} e^{\pm z} |Q^{s,u}(f)-Q^{s,u}(g)|dz\right)\leq C \mu e^{2\eta\mathrm{Re}(\tau_{\mathrm{lin}})}\|f-g\|_{\chi^4} \\
        &\leq C \mu^{1-\eta+\eta\gamma}\|f-g\|_{\chi^4},\\
    \end{aligned}
\end{equation*}
so that
\[\|\mathcal F(f) - \mathcal F(g)\|_{\chi^4} \leq C \mu^{1-\eta+\eta\gamma}\|f-g\|_{\chi^4}\]
corresponds to the Lipschitz constant. Hence, for $\eta \in \left(1,\frac 3 2\right)$, $\gamma \in \left(1-\frac{1}{\eta},1\right)$ and $\mu > 0$ small enough, the operator $\mathcal F$ is contractive and satisfies $\mathcal F(B_{2\varepsilon}) \subset B_{2\varepsilon}$. Then the Banach fixed point theorem ensures that there exists a unique fixed point in $B_{2\varepsilon}$ for the operator $\mathcal F$ in \eqref{eqn: Operator F fixed point}, which we denote by $(\tilde s(\tau), \tilde u(\tau))$. It satisfies, for $\tau \in \mathbf D$ in \eqref{eqn: Deta}, the following estimates 
\begin{equation}\label{eqn: Estimates tilde s and tilde u}
    \begin{aligned}
        |\tilde s(\tau)| \leq  \|(\tilde s,\tilde u)\|_{\chi^4} < 2\varepsilon(\eta),\quad |\tilde u(\tau)|  \leq \|(\tilde s,\tilde u)\|_{\chi^4}< 2\varepsilon(\eta).
    \end{aligned}
\end{equation}
Hence, for $\eta \in \left(1,\frac 3 2\right)$, $\gamma \in \left(1-\frac{1}{\eta},1\right)$ and $\tau \in \mathbf D$,
\begin{equation}\label{eqn: estimate (s,u) - lineal}
    \begin{aligned}
        s(\tau) = e^{-\tau} \left(s_0 + \mathcal{O}(\varepsilon(\eta))\right),\qquad u(\tau) = e^{\tau}\left( u_0 + \mathcal{O}(\varepsilon(\eta))\right).
    \end{aligned}
\end{equation}
Fix $\eta = \frac{11}{8} \in \left(1, \frac 3 2\right)$ and $\gamma \in \left(1-\frac{1}{\eta},1\right) = \left(\frac{3}{11},1\right)$. Then $\varepsilon(\eta)$ in \eqref{eqn: epsilon(n)} becomes
\begin{equation}\label{eqn: epsfixedpoint}
\varepsilon := \varepsilon\left(\frac{11}{8}\right) = \mathcal O\left(\mu^{\frac{1+11\gamma}{8}}\right).
\end{equation}
Recall that we have simplified the notation and dropped the dependencies on $\beta$ (see \eqref{eqn: notationA}). The next step is to look for $\tau \in \mathbf D$ satisfying $(s(\tau), u(\tau))\in \partial \mathbf B_\gamma$ defined in \eqref{eqn: neighborhood of collision delta mu (s,u)}. Namely, we look for a zero of a function of the form
\begin{equation}\label{eqn: F}
F(\tau) = (s_1(\tau)+u_1(\tau))^2 + (s_2(\tau)+u_2(\tau))^2+ \mathcal{O}_6(s(\tau),u(\tau)) - \mu^{\gamma}.
\end{equation}
We use a fixed point argument. We write $s(\tau)$ and $u(\tau)$ in \eqref{eqn: sol su} as
\[s(\tau) = s_{\mathrm{lin}}(\tau) + e^{-\tau}\tilde s(\tau),\quad u(\tau) = u_{\mathrm{lin}}(\tau) + e^{\tau}\tilde u(\tau),\]
where $s_{\mathrm{lin}}(\tau)$ and $u_{\mathrm{lin}}(\tau)$ are defined in \eqref{eqn: slin ulin}. Let $\tau(z):=\tau_{\mathrm{lin}}+z$, where $\tau_{\mathrm{lin}}$ is defined in \eqref{eqn: Time lineal Levi-Civita} and set
\[
\mathfrak{L}(z):=s_{\mathrm{lin}}(\tau(z))+u_{\mathrm{lin}}(\tau(z)),
\qquad
\mathfrak{R}(z):=e^{-\tau(z)}\,\tilde s(\tau(z))+e^{\tau(z)}\,\tilde u(\tau(z)).
\]
We denote by $(\cdot\mid\cdot)$ the complex bilinear form on $\mathds C^2$
defined for $u=(u_1,u_2)$ and $v=(v_1,v_2)$ by $(u\mid v):=u_1 v_1 + u_2 v_2$. Then by the triangle inequality and Cauchy-Schwarz we have
\begin{equation}\label{eqn: Bound (u|v)}
|(u\mid v)|=\big|u_1 v_1+u_2 v_2\big|
\le |u_1||v_1|+|u_2||v_2|
\le |u|\,|v|.
\end{equation}
We write $F(\tau(z)) = F(\tau_{\mathrm{lin}}+ z) = h_0(z) + h_1(z)$ in \eqref{eqn: F} with
\begin{equation}\label{eqn: h0 and h1}
    \begin{aligned}
        h_0(z) =&\ \bil{\mathfrak{L}(z)}{\mathfrak{L}(z)} - \mu^\gamma,\\
        h_1(z) =&\ 2\bil{\mathfrak{L}(z)}{\mathfrak{R}(z)} + \bil{\mathfrak{R}(z)}{\mathfrak{R}(z)} +\mathcal{O}_6(s(\tau(z)),u(\tau(z))).
    \end{aligned}
\end{equation}
Relying on \eqref{eqn: slin ulin} we obtain
\begin{equation}\label{eqn: h0z'}
\begin{aligned}
    h_0'(z) =&\ 2 \bil{\mathfrak{L}'(z)}{\mathfrak{L}(z)} = 2 \left[\bil{u_\mathrm{lin}(\tau(z))}{u_\mathrm{lin}(\tau(z))} - \bil{s_\mathrm{lin}(\tau(z))}{s_\mathrm{lin}(\tau(z))}\right],\\
    h_1'(z) =&\ 2\bil{\mathfrak L'(z)}{\mathfrak R(z)} + 2\bil{\mathfrak L(z)}{\mathfrak R'(z)} + 2\bil{\mathfrak R(z)}{\mathfrak R'(z)} +\mathcal{O}_6(s(\tau(z)),u(\tau(z))).
\end{aligned}
\end{equation}
We define the operator 
\begin{equation*}
    \mathcal W(z) = z- \frac{h_0(z) + h_1(z)}{h_0'(0)},
\end{equation*}
where $h_0'(0)$ can be computed from Lemma \ref{lemma: Linear ECO-orbits intersect section (z,w)} as
\begin{equation}\label{eqn: h0'(0)}
    \begin{aligned}
        h_0'(0) &= 2 \left[\bil{u_\mathrm{lin}(\tau_{\mathrm{lin}})}{u_\mathrm{lin}(\tau_{\mathrm{lin}})} - \bil{s_\mathrm{lin}(\tau_{\mathrm{lin}})}{s_\mathrm{lin}(\tau_{\mathrm{lin}})}\right]= 2\mu^\gamma+ \mathcal{O}(\mu) \neq 0
    \end{aligned}
\end{equation}
for $\beta \in \mathds{T}_{\sigma_0}$ in \eqref{eqn: complexdomainbeta}. Fixed points of this operator correspond to zeroes of $F$ in \eqref{eqn: F}. 

Relying on Lemma \ref{lemma: Linear ECO-orbits intersect section (z,w)} and equations \eqref{eqn: Estimates tilde s and tilde u} and \eqref{eqn: Bound (u|v)} we obtain the following estimates for the terms of $h_1(0)$ in \eqref{eqn: h0 and h1}
\begin{equation*}
    \begin{aligned}
        |2\bil{\mathfrak{L}(0)}{\mathfrak{R}(0)} | \leq 2|\mathfrak{L}(0)|\cdot |\mathfrak{R}(0)| &\leq C \mu^{\frac \gamma 2}\cdot \mu^{\frac{\gamma-1}{2}}\varepsilon = C \mu^{\gamma -\frac 1 2}\varepsilon,\\
        |\bil{\mathfrak{R}(0)}{\mathfrak R(0)}| + \mathcal{O}_6(s(\tau_{\mathrm{lin}}),u(\tau_{\mathrm{lin}})) &\le C\mu^{\gamma-1}\varepsilon^2,
    \end{aligned}
\end{equation*}
where $\varepsilon$ is defined in \eqref{eqn: epsfixedpoint}. Therefore we have
\begin{equation}\label{eqn: D}
\begin{aligned}
    |\mathcal W(0)| =& \frac{|h_1(0)|}{|h_0'(0)|}\leq C\mu^{-\gamma}\left(\mu^{\gamma -\frac 1 2}\varepsilon + \mu^{\gamma-1}\varepsilon^2\right) \leq C \mu^{-\frac 1 2}\varepsilon \leq C\mu^{\frac{11\gamma-3}{8}} = D_\mu,
\end{aligned}
\end{equation}
which is small since $\gamma \in \left(\frac{3}{11},1\right).$

We define the ball $B_{2D} = \left\{z\in \mathds{C}\colon |z| \leq  2D_\mu\right\}$ (and therefore $\tau(z)=\tau_{\mathrm{lin}} + z \in \mathbf D$ in \eqref{eqn: Deta}). We prove that the operator $\mathcal W$ is Lipschitz in $B_{2D}$. We consider $z_1,z_2\in B_{2D}$ so that 
%
\[|\mathcal W(z_2)-\mathcal W(z_1)| \leq \underset{z\in B_{2D}}{\sup}|\mathcal W'(z)|\cdot |z_2-z_1|\]
with
\begin{equation}\label{eqn: W'}
    \mathcal W'(z) = 1 - \frac{h_0'(z) + h_1'(z)}{h_0'(0)},
\end{equation}
where $h_0'(z)$ and $h_1'(z)$ are defined in \eqref{eqn: h0z'}. Relying on \eqref{eqn: slin ulin}, for $z\in B_{2D}$ we can expand $h_0'(z)$ as
\begin{equation}
\begin{aligned}
    h_0'(z) &= 2\left(\bil{u_\mathrm{lin}(\tau(z))}{u_\mathrm{lin}(\tau(z))} - \bil{s_\mathrm{lin}(\tau(z))}{s_\mathrm{lin}(\tau(z))}\right) \\
    &= 2\left(\bil{u_\mathrm{lin}(\tau_{\mathrm{lin}}}{u_{\mathrm{lin}}(\tau_\mathrm{lin})}\cdot e^{2z}-\bil{s_{\mathrm{lin}}(\tau_{\mathrm{lin}})}{s_{\mathrm{lin}}(\tau_{\mathrm{lin}})}\cdot e^{-2z}\right) \\
    &= h_0'(0) + 4z\left(\bil{u_\mathrm{lin}(\tau_{\mathrm{lin}}}{u_{\mathrm{lin}}(\tau_\mathrm{lin})}+\bil{s_{\mathrm{lin}}(\tau_{\mathrm{lin}})}{s_{\mathrm{lin}}(\tau_{\mathrm{lin}})}\right) + \mathcal{O}_2(z).
\end{aligned}
\end{equation}
Hence from Lemma \ref{lemma: Linear ECO-orbits intersect section (z,w)} we obtain
\begin{equation}\label{eqn: Estimate h_0'(s) - h_0'(0)}
    |h_0'(z)-h_0'(0)| \leq C\mu^\gamma D_\mu,
\end{equation}
where $D_\mu$ is defined in \eqref{eqn: D}.

The estimate of the derivative $h_1'(z)$ in \eqref{eqn: h0z'} requires some estimates for the derivatives $\tilde s_i'(z)$ and $\tilde u_i'(z)$, which can be computed from \eqref{eqn: Variation of constants} as
\begin{equation*}
\begin{aligned}
    \tilde s'(z) = e^{\tau_{\mathrm{lin}}+z}  Q^s(\tilde s(z), \tilde u(z)),\quad \tilde u'(z) =e^{-(\tau_{\mathrm{lin}}+z)}Q^u(\tilde s(z), \tilde u(z)),
\end{aligned}
\end{equation*} 
where $Q^s,Q^u$ are defined in \eqref{eqn: Estimate H}. Relying on \eqref{eqn: Time lineal Levi-Civita} and \eqref{eqn: Estimate (s0,u0)} we obtain for $z \in B_{2D}$
\begin{equation}\label{eqn: Estimates x',y'}
    \begin{aligned}
        |e^{-(\tau_{\mathrm{lin}}+z)}\tilde s'(z)|\leq \left|s(z) + u(z)\right|^2 \leq C\mu^\gamma,\quad |e^{\tau_{\mathrm{lin}}+z}\tilde u'(z)|\leq \left|s(z) + u(z)\right|^2 \leq C\mu^\gamma.
    \end{aligned}
\end{equation}
From \eqref{eqn: Time lineal Levi-Civita}, \eqref{eqn: Estimates tilde s and tilde u}, \eqref{eqn: Bound (u|v)} and \eqref{eqn: Estimates x',y'} we obtain, for $z\in B_{2D}$, the following estimates for the terms of $h_1'(z)$ in \eqref{eqn: h0z'}
\begin{equation*}
\begin{aligned}
\Big|2\left[\bil{\mathfrak L'(z)}{\mathfrak R(z)} + \bil{\mathfrak L(z)}{\mathfrak R'(z)}\right] \Big| &\leq 2\left(|\mathfrak{L}'(z)|\cdot |\mathfrak{R}(z)| + |\mathfrak{L}(z)|\cdot|\mathfrak{R}'(z)|\right)\\
&\leq\ C\mu^{\frac{\gamma}{2}} \cdot \mu^{\frac{\gamma-1}{2}}\varepsilon + C\mu^{\frac{\gamma}{2}}\cdot \left(\mu^{\frac{\gamma-1}{2}}\varepsilon+ 2K\mu^\gamma\right) \\
&\leq C \mu^{\gamma -\frac{1}{2}}\varepsilon \leq C \mu^\gamma \mu^{\frac{11\gamma-3}{8}},\\
\Big|2 \bil{\mathfrak R(z)}{\mathfrak R'(z)} +\mathcal{O}_6(s(\tau_{\mathrm{lin}} + z),u(\tau_{\mathrm{lin}} + z))\Big|
&\leq\ C\left(\mu^{\frac{\gamma-1}{2}}\varepsilon + 2K\mu^\gamma\right) \cdot \mu^{\frac{\gamma-1}{2}}\varepsilon \\
&\leq C \mu^{\gamma-1}\varepsilon^2 \leq C \mu^\gamma \mu^{\frac{11\gamma-3}{8}},
\end{aligned}
\end{equation*}
where $\varepsilon$ is defined in \eqref{eqn: epsfixedpoint}. Therefore we have
\begin{equation*}
    |h_1'(z)| \leq C\mu^\gamma\mu^{\frac{11\gamma-3}{8}},
\end{equation*}
which is small since $\gamma \in \left(\frac{3}{11},1\right)$. Combining this estimate with the ones in \eqref{eqn: h0'(0)} and \eqref{eqn: Estimate h_0'(s) - h_0'(0)} we obtain
\begin{equation*}
\begin{aligned}
    |\mathcal W'(z)| &= \left|1 - \frac{h_0'(0) + h_0'(z) - h_0'(0) + h_1'(z)}{h_0'(0)}\right| \leq \frac{1}{h_0'(0)}\left(\left|h_0'(z)-h_0'(0)\right| + |h_1'(z)|\right) \\
    &\leq C\mu^{-\gamma}\left( C\mu^\gamma D_\mu+ C\mu^\gamma \mu^{\frac{11\gamma-3}{8}}\right) \leq C\mu^{\frac{11\gamma-3}{8}},
\end{aligned}
\end{equation*}
where $D_\mu$ is defined in \eqref{eqn: D}. Hence, the operator $\mathcal W$ is contractive and satisfies $\mathcal W(B_{2D}) \subset B_{2D}$. Then, the Banach fixed point theorem ensures that there exists a unique fixed point in $B_{2D}$ for the operator $\mathcal W$, which we denote by $ z_*$, so the time $\tau_* = \tau_{\mathrm{lin}} +  z_*$ corresponds to a zero of $F$ in \eqref{eqn: F}. Using Cauchy's estimates we have for $\beta \in \mathds{T}_{\sigma_0/2}$
\[|z_*'(\beta)|\leq 2\frac{|z_*(\beta)|}{\sigma_0} \leq \frac{4}{\sigma_0} D_\mu \leq C \mu^{\frac{11\gamma-3}{8}},\]
which leads to \eqref{eqn: Estimate tilde T}.
Evaluating \eqref{eqn: estimate (s,u) - lineal} at $\tau = \tau_*$ leads to  \eqref{eqn: Intersect eco (s,u)-Sigma} and \eqref{eqn: sb ub}, completing the proof.

\section{Proof of Proposition \ref{prop: Parameterization of the invariant manifolds of infinity}}\label{appendix: proposition 5.2}
The proof of Proposition \ref{prop: Parameterization of the invariant manifolds of infinity} is based on the approach proposed in \cite{lochak2003splitting, sauzin2001new}. That is, we take advantage of the fact that the invariant manifolds of infinity are Lagrangian and therefore they can be locally parameterized as graphs of generating functions, which are solutions of the so-called Hamilton-Jacobi equation. 

Hence, we consider the Hamilton-Jacobi equation associated to Hamiltonian \eqref{eqn: Hamiltonian Polar Rotating Coordinates centered at CM J} and we look for functions $\hat{S}^{s,u}(\hat{r},\hat{\theta};\mu,\hat{\Theta}_0)$ such that
\[(\hat{R},\hat{\Theta})= \left(\partial_{\hat{r}}\hat{S}(\hat{r},\hat{\theta};\mu,\hat{\Theta}_0), \partial_{\hat{\theta}}\hat{S}(\hat{r},\hat{\theta};\mu,\hat{\Theta}_0)\right)\]
parameterize the invariant manifolds as a graph. Then, the Hamilton-Jacobi equation reads
\begin{equation}\label{eqn: General HJ}
\hat{\mathcal{H}}_\mu(\hat{r},\hat{\theta},\partial_{\hat{r}}\hat{S},\partial_{\hat{\theta}}\hat{S}) = - \hat{\Theta}_0.
\end{equation}
Recall that we put $-\hat{\Theta}_0$ in the right hand side since it corresponds to the  the level of energy of the periodic orbit $\Alpha_{\hat \Theta_0}$.

For the unperturbed Hamiltonian, that is, considering $\mu = 0$ (see \eqref{eqn: potential in polar rotating coordinates centered at CM J}), this equation simply reads
\[\frac 1 2 \left(\partial_{\hat{r}}\hat{S}^2 + \frac{\partial_{\hat{\theta}}\hat{S}^2}{\hat{r}^2}\right) - \partial_{\hat{\theta}}\hat{S} - \frac{1}{\hat{r}} = - \hat{\Theta}_0.\]
It has two solutions of the form
\begin{equation}\label{eqn: S0}
\hat{S}_0^{s,u}(\hat{r},\hat{\theta}; \hat{\Theta}_0) = \hat{f}^{s,u}(\hat{r};\hat{\Theta}_0) + \hat{\Theta}_0\hat{\theta}
\end{equation}
where $\hat{f}^{s,u}(\hat{r};\hat{\Theta}_0)$ satisfy $\partial_{\hat{r}}\hat{f}^{s,u}(\hat{r};\hat{\Theta}_0) = \hat R_0^{s,u}(\hat r, \hat \Theta_0)$ and $\hat R_0^{s,u}(\hat r, \hat \Theta_0)$ are defined in \eqref{eqn: (R,Theta) infty mu=0}.

We look for solutions of \eqref{eqn: General HJ} close to \eqref{eqn: S0}. We write $\hat{S}^{s,u} = \hat{S}_0^{s,u} + \hat{S}_1^{s,u}$. Then the equation for $\hat{S}_1^{s,u}$ becomes
\[\partial_{\hat{r}}\hat{f} \partial_{\hat{r}}\hat{S}_1 + \frac 1 2 \partial_{\hat{r}}\hat{S}_1^2 +\left(\frac{\hat{\Theta}_0}{\hat{r}^2} - 1\right)\partial_{\hat{\theta}}\hat{S}_1 + \frac{1}{2\hat{r}^2}\partial_{\hat{\theta}}\hat{S}_1^2 - \hat{V}(\hat{r},\hat{\theta};\mu) = 0.\]
To look for solutions of this equation we reparameterize the variables $(\hat r,\hat \theta)$ through the unperturbed separatrix in \eqref{eqn: orbits non-rotating mu=0}. Namely, we consider the changes
\begin{equation}\label{eqn: Reparameterization (u,v)}
(\hat{r},\hat{\theta}) = \left(\check{r}_h^\pm(\hat{u},\hat{\Theta}_0), \check{\theta}_h^\pm(\hat{u},\hat v, \hat{\Theta}_0)\right).
\end{equation}
%
We omit the symbols $\pm$ to simplify notation. We define the generating functions
\begin{equation}\label{eqn: Generating function hat(T)}
\hat{T}^{s,u}(\hat{u},\hat{v};\mu,\hat{\Theta}_0) = \hat{S}^{s,u}(\check{r}_h(\hat{u},\hat{\Theta}_0), \check{\theta}_h(\hat{u},\hat v, \hat{\Theta}_0);\mu,\hat{\Theta}_0),
\end{equation}
which can be written as $\hat{T}^{s,u}=\hat{T}^{s,u}_0 + \hat{T}^{s,u}_1$ where
\begin{equation*}
\begin{aligned}
\hat{T}_0(\hat{u},\hat{v};\hat{\Theta}_0) &=\hat{S}_0(\check{r}_h(\hat{u},\hat{\Theta}_0), \check{\theta}_h(\hat{u},\hat v, \hat{\Theta}_0);\hat{\Theta}_0),\\
\hat{T}_1(\hat{u},\hat{v};\mu,\hat{\Theta}_0) &= \hat{S}_1(\check{r}_h(\hat{u},\hat{\Theta}_0), \check{\theta}_h(\hat{u},\hat v, \hat{\Theta}_0);\mu,\hat{\Theta}_0).
\end{aligned}
\end{equation*}
Equation \eqref{eqn: General HJ} now reads
\begin{equation}\label{eqn: HJ T1}
\partial_{\hat{u}}\hat{T}_1 - \partial_{\hat{v}}\hat{T}_1 + \frac{1}{\check R_h}\left(\partial_{\hat{u}}\hat{T}_1 - \frac{\hat{\Theta}_0}{\check{r}_h^2}\partial_{\hat{v}}\hat{T}_1\right)^2 + \frac{1}{2\check{r}_h^2}\partial_{\hat{v}}\hat{T}_1^2 - \hat{V}(\check{r}_h(\hat u,\hat \Theta_0), \check{\theta}_h(\hat{u},\hat v, \hat{\Theta}_0);\mu)=0,
\end{equation}
where $\check r_h,\check \theta_h, \check R_h$ are defined in \eqref{eqn: orbits non-rotating mu=0} and $\hat V$ is defined in \eqref{eqn: potential in polar rotating coordinates centered at CM J}.

%
%
The change of variables \eqref{eqn: Reparameterization (u,v)} implies that we are looking for parameterizations of the stable and unstable invariant manifolds of the form
\begin{equation}
\begin{aligned}
\hat{r} &= \check{r}_h^\pm(\hat{u},\hat{\Theta}_0),\\
\hat{\theta} &= \check{\theta}_h^\pm(\hat{u},\hat v, \hat{\Theta}_0),\\
\hat{R} &= \check{R}_h^\pm(\hat u,\hat \Theta_0) + \frac{1}{\check R_h^\pm(\hat u,\hat\Theta_0)}\left(\partial_{\hat{u}}\hat{T}_1^{s,u} - \frac{\hat{\Theta}_0}{\check{r}_h^\pm(\hat{u},\hat{\Theta}_0)^2}\partial_{\hat{v}}\hat{T}_1^{s,u}\right),\\
\hat{\Theta} &= \hat{\Theta}_0 + \partial_{\hat{v}}\hat{T}_1^{s,u},
\end{aligned}
\end{equation}
where $\hat{T}_1^{s,u}$ are solutions of equation \eqref{eqn: HJ T1} with asymptotic boundary conditions for the unstable manifold
\begin{equation}\label{eqn: asympt}
\begin{aligned}
\underset{\hat{u}\to -\infty}{\lim}  \frac{\partial_{\hat{u}}\hat{T}_1^u(\hat{u},\hat{v};\mu,\hat{\Theta}_0)}{\check R_h^-(\hat u,\hat \Theta_0)} = 0,\quad \underset{\hat{u}\to -\infty}{\lim} \partial_{\hat{v}}\hat{T}_1^u(\hat{u},\hat{v};\mu,\hat{\Theta}_0) = 0,
\end{aligned}
\end{equation}
and analogous ones for the stable manifolds taking $\hat{u} \to +\infty$. The symmetry
\begin{equation}\label{eqn: symmetry potential hat(V)}
\hat{V}(\hat{r},-\hat{\theta}) = \hat{V}(\hat{r},\hat{\theta})
\end{equation}
of $\hat V$ in \eqref{eqn: potential in polar rotating coordinates centered at CM J} implies that if $\hat T_1(\hat u,\hat v)$ solves \eqref{eqn: HJ T1}, then $-\hat T_1(-\hat u,-\hat v)$ also solves it. Imposing the asymptotic condition \eqref{eqn: asympt} for $\hat T_1^u$ (and the reverse for $\hat T_1^s$) yields
\begin{equation}\label{eqn: symmetry T1}
\hat T_1^s(\hat u,\hat v)=-\hat T_1^u(-\hat u,-\hat v).
\end{equation}
Therefore, proving the existence of the unstable manifold implies existence of the stable one.


\begin{proposition}\label{prop: param u v inf}
Fix $m > 0$, $\hat \Theta_0 \in [-\sqrt 2 +m, \sqrt 2 -m]$ and $\kappa > 0$. Then, for $\nu\in \left(0,\frac 1 3\right)$, there exists $\mu_0>0$ such that, for $0 < \mu < \mu_0$, the unstable manifold $W_\mu^u(\Alpha_{\hat{\Theta}_0})$ of the system associated to Hamiltonian $\hat{\mathcal{H}}$ in \eqref{eqn: Hamiltonian Polar Rotating Coordinates centered at CM J} can be written as
\[W_\mu^u(\Alpha_{\hat{\Theta}_0}) = \left\{(\hat{r},\hat{\theta},\hat{R},\hat{\Theta}) = \hat {\mathcal Y}^u(\hat{u},\hat{v},\hat{\Theta}_0)\colon (\hat{u},\hat{v}) \in \hat{D}\times \mathds{T}_{\sigma}\right\}\]
where $\hat D$ and $\mathds{T}_{\sigma}$ are defined as
\begin{equation}\label{eqn: domain parameters (hat u,hat v)}
    \begin{aligned}
        \hat{D} &=  \left\{\hat u \in \mathds{C}\colon \mathrm{Re}(\hat u) \leq - \frac 1 3\sqrt{2(1-\mu+\kappa\mu^\nu) - \hat \Theta_0^2}\cdot\left(1-\mu+\kappa\mu^\nu + \hat\Theta_0^2 \right)\right\},\\
        \mathds{T}_{\sigma} &= \left\{\hat{v}\in \mathds{C}/ (2\pi\mathds{Z})\colon \left|\operatorname{Im}(\hat v)\right|<  \sigma - \frac 1 2 \log\left(1 - \frac{2\hat\Theta_0}{\hat \Theta_0 + \sqrt{2(1-\mu+\kappa\mu^\nu)}}\right)\right\},
    \end{aligned}
\end{equation}
with 
\begin{equation}\label{eqn: sigmatori}
\sigma = \frac 1 4 \log\left(\frac{1-\mu+\kappa\mu^\nu}{1-\mu}\right).
\end{equation}
Here $ \hat {\mathcal Y}^u$ is a real-analytic function of the form
\[ \hat {\mathcal Y}^u(\hat{u},\hat{v},\hat{\Theta}_0) = \left(\hat{r}(\hat{u},\hat{\Theta}_0),\hat{\theta}(\hat{u},\hat{v},\hat{\Theta}_0),\hat{R}(\hat{u},\hat{v},\hat{\Theta}_0), \hat{\Theta}(\hat{u},\hat{v},\hat{\Theta}_0)\right)\]
such that
\begin{equation}\label{eqn: Parameterization of Wu(mu)}
    \begin{aligned}
        \hat{r}(\hat{u},\hat{\Theta}_0) &= \check{r}_h^-(\hat{u},\hat{\Theta}_0),\\
        \hat{\theta}(\hat{u},\hat{v},\hat{\Theta}_0) &= \check\theta_h^-(\hat u,\hat v,\hat\Theta_0),\\
        \hat{R}(\hat{u},\hat{v},\hat{\Theta}_0) &= \check R_h^-(\hat u, \hat\Theta_0) + \mathcal{O}(\mu^{1-2\nu}),\\
        \hat{\Theta}(\hat{u},\hat{v},\hat{\Theta}_0) &= \hat{\Theta}_0 + \mathcal{O}(\mu^{1-2\nu}),
    \end{aligned}
\end{equation}
where $\check{r}_h^-(\hat{u},\hat{\Theta}_0)$, $\check{\theta}_h^-(\hat u,\hat v,\hat{\Theta}_0)$ and $\check R_h^-(\hat u, \hat \Theta_0)$  are defined in \eqref{eqn: orbits non-rotating mu=0}. Moreover
\begin{equation}\label{eqn: Estimates of the derivatives of the parameterization of infty}
    \begin{aligned}
        \partial_{\hat v} \hat R(\hat u, \hat v, \hat{\Theta}_0) =  \partial_{\hat v} \hat \Theta(\hat u, \hat v, \hat{\Theta}_0) = \mathcal{O}(\mu^{1-3\nu}).
    \end{aligned}
\end{equation}
\end{proposition}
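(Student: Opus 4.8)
The plan is to solve the Hamilton--Jacobi equation \eqref{eqn: HJ T1} for the generating function $\hat T_1^u$ of $W_\mu^u(\Alpha_{\hat\Theta_0})$ by a contraction argument on a space of analytic functions on $\hat D\times\mathds{T}_\sigma$, and then to read off the parameterization $\hat{\mathcal Y}^u$ and the estimates \eqref{eqn: Parameterization of Wu(mu)}--\eqref{eqn: Estimates of the derivatives of the parameterization of infty} from the formulas displayed just before \eqref{eqn: asympt}. As a preliminary step I would check that the reparameterization $(\hat r,\hat\theta)=(\check r_h^-(\hat u,\hat\Theta_0),\check\theta_h^-(\hat u,\hat v,\hat\Theta_0))$ of \eqref{eqn: orbits non-rotating mu=0} maps $\hat D\times\mathds{T}_\sigma$ biholomorphically onto a complex neighborhood of the region $\{\hat r\ge 1-\mu+\kappa\mu^\nu\}\times\mathds{T}$ on which the invariant manifold is to be described; this uses that the complex singularities of $\check r_h^-$ (the collisions with the Sun, all located on $\mathrm{Re}\,\hat u=0$) lie outside $\hat D$, whose points satisfy $\mathrm{Re}\,\hat u\le\hat u_*<0$.

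Next, I would invert the linear transport operator $\partial_{\hat u}-\partial_{\hat v}$ of \eqref{eqn: HJ T1} along its characteristics $\{\hat u+\hat v=\mathrm{const}\}$, integrating from $\hat u\to-\infty$ so as to impose the asymptotic conditions \eqref{eqn: asympt} that single out the \emph{unstable} manifold. This turns \eqref{eqn: HJ T1} into the fixed--point equation $\hat T_1^u=\mathcal F[\hat T_1^u]$, with
\begin{equation*}
\mathcal F[\varphi](\hat u,\hat v)=\int_{-\infty}^{0}\Big(\hat V\big(\check r_h^-(\hat u+\sigma,\hat\Theta_0),\check\theta_h^-(\hat u+\sigma,\hat v-\sigma,\hat\Theta_0);\mu\big)-\mathcal Q[\varphi](\hat u+\sigma,\hat v-\sigma)\Big)\,d\sigma,
\end{equation*}
where $\mathcal Q[\varphi]=(\check R_h^-)^{-1}\big(\partial_{\hat u}\varphi-\hat\Theta_0(\check r_h^-)^{-2}\partial_{\hat v}\varphi\big)^2+\tfrac12(\check r_h^-)^{-2}(\partial_{\hat v}\varphi)^2$ collects the two quadratic terms of \eqref{eqn: HJ T1}. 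I would run the iteration in the Banach space of analytic functions on $\hat D\times\mathds{T}_\sigma$ equipped with a weighted norm measuring $(\check R_h^-)^{-1}\partial_{\hat u}\varphi$ and $\partial_{\hat v}\varphi$ (the ``momentum'' components that \eqref{eqn: asympt} forces to decay; $\varphi$ itself is allowed to grow mildly in $\hat u$, as in \cite{lochak2003splitting,sauzin2001new}). The two quantitative inputs are (i) $\|\mathcal F[0]\|=\mathcal O(\mu^{1-2\nu})$ and (ii) a contraction estimate for $\mathcal F$ on the ball of radius $C\mu^{1-2\nu}$.

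For (i), the Sun part of $\hat V$ in \eqref{eqn: potential in polar rotating coordinates centered at CM J} is an $\mathcal O(\mu)$ quantity with fast--decaying derivatives and is harmless; the delicate term is the Jupiter part $\mu\,(\hat r^2-2\hat r(1-\mu)\cos\hat\theta+(1-\mu)^2)^{-1/2}$, which is singular along Jupiter's collision set. One must show that the image of $\hat D\times\mathds{T}_\sigma$ under $(\hat u,\hat v)\mapsto(\check r_h^-,\check\theta_h^-)$ stays at distance $\gtrsim\mu^\nu$ from the zero set of $\hat r^2-2\hat r(1-\mu)\cos\hat\theta+(1-\mu)^2$: this is exactly what the cut $\hat r\ge 1-\mu+\kappa\mu^\nu$, together with the logarithmically corrected width of $\mathds{T}_\sigma$ in \eqref{eqn: domain parameters (hat u,hat v)}--\eqref{eqn: sigmatori}, is engineered to guarantee. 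Granting it, $\partial\hat V=\mathcal O(\mu^{1-3\nu})$ on the $\mathcal O(\mu^\nu)$--long portion of each characteristic that runs near Jupiter's orbit and $\mathcal O(\mu)$ elsewhere, so integration yields $\|\mathcal F[0]\|=\mathcal O(\mu^{1-2\nu})$. For (ii), $\mathcal Q$ is quadratic in the first derivatives, so $\mathcal Q[\varphi]-\mathcal Q[\psi]$ has size $(\|\varphi\|+\|\psi\|)\,\|\varphi-\psi\|$ up to coefficients that integrate along characteristics to $\mathcal O(1)$; the only place Cauchy estimates enter is the $\hat v$--direction (on the strip $\mathds{T}_\sigma$, of width $\sigma=\mathcal O(\mu^\nu)$), where they cost a factor $\sigma^{-1}$. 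Hence the Lipschitz constant of $\mathcal F$ on the ball is $\mathcal O(\mu^{1-2\nu}/\sigma)=\mathcal O(\mu^{1-3\nu})$, which is $o(1)$ precisely because $\nu<\tfrac13$, and the Banach fixed point theorem applies.

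Finally, substituting the fixed point $\hat T_1^u$ into $\hat R=\check R_h^-+(\check R_h^-)^{-1}\big(\partial_{\hat u}\hat T_1^u-\hat\Theta_0(\check r_h^-)^{-2}\partial_{\hat v}\hat T_1^u\big)$ and $\hat\Theta=\hat\Theta_0+\partial_{\hat v}\hat T_1^u$ gives \eqref{eqn: Parameterization of Wu(mu)} directly from the $\mathcal O(\mu^{1-2\nu})$ norm bound, while one further Cauchy estimate in $\hat v$ on $\mathds{T}_\sigma$ gives $\partial_{\hat v}\hat R,\partial_{\hat v}\hat\Theta=\mathcal O(\mu^{1-2\nu}/\sigma)=\mathcal O(\mu^{1-3\nu})$, i.e. \eqref{eqn: Estimates of the derivatives of the parameterization of infty}. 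The stable manifold $W_\mu^s(\Alpha_{\hat\Theta_0})$ then comes for free from the reversibility relation $\hat T_1^s(\hat u,\hat v)=-\hat T_1^u(-\hat u,-\hat v)$ of \eqref{eqn: symmetry T1}. I expect the main obstacle to be input (i): turning the complex-analytic control of the Jupiter part of $\hat V$ along the complexified unperturbed separatrix into a rigorous distance-to-collision estimate, which is precisely what forces the intricate shape of the domain $\mathds{T}_\sigma$ in \eqref{eqn: domain parameters (hat u,hat v)}.
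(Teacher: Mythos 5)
Your proposal follows essentially the same route as the paper: the Hamilton--Jacobi equation for the generating function, inversion of $\partial_{\hat u}-\partial_{\hat v}$ by integration along characteristics from $\hat u=-\infty$ with the asymptotic conditions \eqref{eqn: asympt}, a fixed point argument in a space of analytic functions on $\hat D\times\mathds{T}_\sigma$, the key bound on $\hat V$ coming from the distance $\gtrsim\mu^\nu$ to Jupiter on the complexified domain, the final Cauchy estimate in $\hat v$ for \eqref{eqn: Estimates of the derivatives of the parameterization of infty}, and the symmetry \eqref{eqn: symmetry T1} for the stable manifold. The only (harmless) difference is bookkeeping: the paper charges the loss of $\mu^{-\nu}$ in $\|\Tilde{\mathcal F}(0)\|$ to the reduction of the Fourier-analyticity strip from $\sigma_1$ to $\sigma_0$, and gets a Lipschitz constant $\mathcal O(\mu^{1-2\nu})$ with no further Cauchy estimate because the norm $\llfloor\cdot\rrfloor_{a,\sigma}$ already tracks first derivatives and $\mathcal G$ is smoothing (Lemma \ref{lemma: Bounds of G}), whereas you charge it to a Cauchy estimate in $\hat v$ and obtain the slightly worse, but for $\nu<\tfrac13$ still sufficient, constant $\mathcal O(\mu^{1-3\nu})$.
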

To prove this proposition, we rewrite \eqref{eqn: HJ T1} as 
\begin{equation}\label{eqn: L = F}
\mathcal{L}(\hat{T}_1) = \mathcal{F}(\hat{T}_1)
\end{equation}
where
\begin{equation}\label{eqn: Operator F}
\begin{aligned}
\mathcal{L}(f) &= \partial_{\hat{u}} f - \partial_{\hat{v}} f,\\
\mathcal{F}(\hat{T}_1) &= -\frac{1}{\check R_h}\left(\partial_{\hat{u}}\hat{T}_1 - \frac{\hat{\Theta}_0}{\check{r}_h^2}\partial_{\hat{v}}\hat{T}_1\right)^2 - \frac{1}{2\check{r}_h^2}\partial_{\hat{v}}\hat{T}_1^2 + \hat{V}(\check{r}_h(\hat u,\hat \Theta_0), \check{\theta}_h(\hat u, \hat v,\hat\Theta_0);\mu).
\end{aligned}
\end{equation}
The rest of the proof is devoted to finding a solution of \eqref{eqn: L = F} with asymptotic conditions \eqref{eqn: asympt}. To this end, we consider the domain $\hat D \times \mathds{T}_{\sigma_1}$ in \eqref{eqn: domain parameters (hat u,hat v)} for $\sigma_1 = 3\sigma = \frac 3 4\log\left(\frac{1-\mu+\kappa\mu^\nu}{1-\mu}\right)$.
\begin{remark}\label{remark: lower bound u}
    From \eqref{eqn: wt}, \eqref{eqn: orbits non-rotating mu=0} and for $\hat\Theta_0 \in \left[-\sqrt 2 + m,\sqrt 2 -m\right]$ we have that
    \[0 \notin \hat D\quad \text{and}\quad |\check{r}_h^{-1}(\hat u)| \leq C\]
    for $\hat u \in \hat D$ in \eqref{eqn: domain parameters (hat u,hat v)}, where $C > 0$ is an adequate constant (that depends on $m$).
\end{remark}
The next lemma  gives the estimates for the potential $\hat{V}$ in \eqref{eqn: potential in polar rotating coordinates centered at CM J} in the domain $\hat{D} \times \mathds{T}_{\sigma_1}$.
\begin{lemma}\label{lemma: Bound for the potential}
There is $C > 0$ such that for any $(\hat u, \hat v) \in \hat D \times \mathds{T}_{\sigma_1}$, the following bound is satisfied 
\begin{equation}\label{eqn: Bound for V}
    |\hat{V}(\check{r}_h(\hat u,\hat \Theta_0), \check{\theta}_h(\hat u,\hat v,\hat\Theta_0);\mu)| \leq C\mu^{1-\nu}.
\end{equation}
In particular, for $|\hat{u}|\to \infty$,
\begin{equation}\label{eqn: Bound V u to inf}
    |\hat{V}(\check{r}_h(\hat u,\hat \Theta_0),  \check{\theta}_h(\hat u,\hat v,\hat\Theta_0);\mu)|\leq C \frac{\mu}{|\hat{u}|^{\frac 4 3}}.
\end{equation}
\end{lemma}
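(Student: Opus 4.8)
The starting point is the exact cancellation of the monopole term in the multipole expansion of the two-body potential about the centre of mass. Writing $d_S=(\hat r^2+2\hat r\mu\cos\hat\theta+\mu^2)^{1/2}$ and $d_J=(\hat r^2-2\hat r(1-\mu)\cos\hat\theta+(1-\mu)^2)^{1/2}$ for the (complexified) Asteroid--Sun and Asteroid--Jupiter distances, the identity $\tfrac{1-\mu}{\hat r}+\tfrac{\mu}{\hat r}-\tfrac1{\hat r}=0$ gives
\begin{equation*}
\hat V=\hat V_S+\hat V_J,\qquad \hat V_S:=(1-\mu)\Bigl(\tfrac1{d_S}-\tfrac1{\hat r}\Bigr),\qquad \hat V_J:=\mu\Bigl(\tfrac1{d_J}-\tfrac1{\hat r}\Bigr),
\end{equation*}
everything evaluated along $\hat r=\check r_h^-(\hat u,\hat\Theta_0)$, $\hat\theta=\check\theta_h^-(\hat u,\hat v,\hat\Theta_0)$. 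By Remark \ref{remark: lower bound u} one has $|\hat r|\ge c$ on $\hat D$, and, using the explicit form of $\check\theta_h^-$ together with Corollary \ref{corollary: Behaviour at infinity of the unperturbed separatrix}, the quantity $\hat\theta-\hat v$ stays bounded on $\hat D$, hence $|\cos\hat\theta|\le C$; moreover $\check r_h^-\sim|\hat u|^{2/3}$ as $|\hat u|\to\infty$, so $|\hat r|^2\ge c|\hat u|^{4/3}$ throughout $\hat D$.

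The estimate then splits into a far-field regime and a near-Jupiter regime. If $|\hat u|$ (equivalently $|\hat r|$) is large, both $d_S$ and $d_J$ equal $\hat r(1+O(1/|\hat r|))$, and, inserting this and the identities $\hat r^2-d_S^2=-2\hat r\mu\cos\hat\theta-\mu^2$, $\hat r^2-d_J^2=2\hat r(1-\mu)\cos\hat\theta-(1-\mu)^2$ into $\tfrac1d-\tfrac1{\hat r}=\tfrac{\hat r^2-d^2}{\hat r\,d\,(\hat r+d)}$, one gets $|\hat V_S|,|\hat V_J|\le C\mu/|\hat r|^2\le C\mu/|\hat u|^{4/3}$, which already yields both \eqref{eqn: Bound for V} and \eqref{eqn: Bound V u to inf} in this regime. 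If instead $|\hat u|$ is bounded then $|\hat r|$ is bounded above as well; for $\hat V_S$ the Asteroid--Sun distance satisfies $d_S=\hat r+O(\mu)$ with $|\hat r|\ge c$, so $|\hat V_S|\le C\mu$, while for $\hat V_J$ the crude bound $|\hat V_J|\le \mu|d_J|^{-1}+\mu|\hat r|^{-1}$ together with the lower bound $|d_J|\ge c\mu^\nu$ (established below) gives $|\hat V_J|\le C\mu^{1-\nu}$. Summing the two contributions gives \eqref{eqn: Bound for V}, and since $|\hat u|\to\infty$ falls in the far-field regime, \eqref{eqn: Bound V u to inf} holds as well.

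It remains to prove the Asteroid--Jupiter lower bound $|d_J|\ge c\mu^\nu$ on the bounded-$|\hat u|$ part of $\hat D\times\mathds{T}_{\sigma_1}$, which is the only genuinely delicate point. The reparameterisation of Lemma \ref{lemma: Reparameterization of time unperturbed solution} gives the factorisation $\hat r\,e^{\pm i\hat\theta}=-\tfrac12 e^{\pm i\hat v}\bigl(w(\hat u)\mp i\hat\Theta_0\bigr)^2$, so that
\begin{equation*}
\begin{aligned}
d_J^2&=\bigl(\hat r e^{i\hat\theta}-(1-\mu)\bigr)\bigl(\hat r e^{-i\hat\theta}-(1-\mu)\bigr)\\
&=\Bigl({-\tfrac12}e^{i\hat v}(w-i\hat\Theta_0)^2-(1-\mu)\Bigr)\Bigl({-\tfrac12}e^{-i\hat v}(w+i\hat\Theta_0)^2-(1-\mu)\Bigr)
\end{aligned}
\end{equation*}
becomes a product of two explicit factors. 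On the real section $\hat D$ forces $|\hat r|=\tfrac12|w^2+\hat\Theta_0^2|\ge 1-\mu+\kappa\mu^\nu$, so along the real parabolic orbit each factor stays at distance $\ge\kappa\mu^\nu$ from zero (the real Asteroid--Jupiter distance vanishes exactly on the boundary of the slightly larger domain, and that crossing is transverse). The point is then to propagate this real-axis margin to a uniform lower bound of the same order $\mu^\nu$ after extending $\hat u$ into $\{\mathrm{Re}\,\hat u\le-c\}$ and $\hat v$ into the strip $\mathds{T}_{\sigma_1}$: the finely tuned widths in \eqref{eqn: domain parameters (hat u,hat v)}--\eqref{eqn: sigmatori} are precisely calibrated so that $|e^{\pm i\hat v}|$ and $|w(\hat u)\mp i\hat\Theta_0|$ do not move either factor of $d_J^2$ closer than $c\mu^\nu$ to $0$. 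I expect this bookkeeping on the complex domain — a direct estimate of the two factors above in terms of $w(\hat u)$ and $e^{\pm i\hat v}$ — to be the main obstacle; everything else reduces to elementary Kepler-type bounds.
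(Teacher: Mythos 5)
Your decomposition and the two regimes you set up mirror the paper's proof: the Sun contribution is $O(\mu)$ by monopole cancellation (the paper bounds $\frac{1-\mu}{d_S}-\frac1{\hat r}$ rather than $(1-\mu)(\frac1{d_S}-\frac1{\hat r})$, an immaterial difference), and the far-field computation you carry out via $\frac1d-\frac1{\hat r}=\frac{\hat r^2-d^2}{\hat r d(\hat r+d)}$ is correct and actually more explicit than the paper's one-line appeal to Corollary \ref{corollary: Behaviour at infinity of the unperturbed separatrix} for \eqref{eqn: Bound V u to inf}. The problem is the step you yourself flag as ``the main obstacle'': the lower bound $|d_J|\ge c\mu^\nu$ on the complexified domain. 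You do not prove it — you only assert that the domain widths are ``precisely calibrated'' and defer the bookkeeping. But this bound is the entire non-trivial content of the lemma; it is the only reason the right-hand side of \eqref{eqn: Bound for V} is $\mu^{1-\nu}$ rather than $\mu$, and without it the proof is not complete.

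The paper closes this step in two lines, and by a simpler route than the one you sketch. Writing $d_J^2=(1-\mu)^2(\tilde r-e^{i\hat\theta})(\tilde r-e^{-i\hat\theta})$ with $\tilde r=\hat r/(1-\mu)$, it uses only the reverse triangle inequality $|\tilde r-e^{\pm i\hat\theta}|\ge|\tilde r|-|e^{\pm i\hat\theta}|$ together with two facts built into the definitions \eqref{eqn: domain parameters (hat u,hat v)}: on $\hat D$ one has $|\tilde r|\ge 1+\frac{\kappa\mu^\nu}{1-\mu}$, while the strip width $\sigma_1=\frac34\log\bigl(\frac{1-\mu+\kappa\mu^\nu}{1-\mu}\bigr)$ (the correction term in $\mathds{T}_{\sigma}$ absorbs the imaginary part contributed by $\check\theta_h$) gives $|e^{\pm i\hat\theta}|\le 1+\frac34\frac{\kappa\mu^\nu}{1-\mu}+\mathcal O(\mu^{2\nu})$. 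The mismatch between the coefficients $1$ and $\frac34$ leaves a margin $\frac{\kappa\mu^\nu}{4(1-\mu)}+\mathcal O(\mu^{2\nu})$ in each factor, hence $|d_J|\ge c\mu^\nu$. Your proposed further factorization of each factor via $\hat r e^{\pm i\hat\theta}=-\frac12 e^{\pm i\hat v}(w(\hat u)\mp i\hat\Theta_0)^2$ is unnecessary for this: once you control $|\tilde r|$ from below and $|e^{\pm i\hat\theta}|$ from above separately, no finer cancellation analysis between $w(\hat u)$ and $e^{\pm i\hat v}$ is needed. If you complete your argument this way, the rest of your proof stands.
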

\begin{proof}[Proof of Lemma \ref{lemma: Bound for the potential}]
We denote by $\hat r := \check r_h(\hat u, \hat \Theta_0)$ and $\hat \theta := \hat v + \check \theta_h(\hat u,\hat \Theta_0)$. The estimate of the potential $\hat V$ in \eqref{eqn: potential in polar rotating coordinates centered at CM J} can be split as follows
\[\left|\hat V(\hat r,\hat \theta;\mu)\right| \leq \left|\hat V_{\mathcal S}(\hat r, \hat \theta;\mu)- \hat V_{\mathrm{CM}}(\hat r)\right|+\left|\hat V_{\mathcal J}(\hat r, \hat \theta;\mu)\right|\]
where
\begin{equation*}
    \begin{aligned}
        \hat V_{\mathcal S}(\hat r, \hat \theta;\mu) =& \frac{1-\mu}{\left(\hat r^2 + 2\hat r\mu\cos\hat \theta + \mu^2\right)^{\frac 1 2}},\quad \hat V_{\mathrm{CM}}(\hat r) = \frac{1}{\hat r},\\
        \hat V_{\mathcal J}(\hat r,\hat \theta;\mu) =& \frac{\mu}{\left(\hat r^2 - 2\hat r(1-\mu)\cos\hat \theta + (1-\mu)^2\right)^{\frac 12}}.   
    \end{aligned}
\end{equation*}
The domain $\hat D$ in \eqref{eqn: domain parameters (hat u,hat v)} is considered so that $|\hat r| \geq 1-\mu + \kappa\mu^\nu$, meaning that we are far from both the primary $\mathcal S$ and the center of mass. Moreover, for $\mu = 0$ we have that $\hat V_{\mathcal S}(\hat r,\hat \theta;0) - \hat V_{\mathrm{CM}}(\hat r) = 0$. Therefore we have 
\[\left|\hat V_{\mathcal S}(\hat r, \hat \theta;\mu)- \hat V_{\mathrm{CM}}(\hat r)\right|  \lesssim \mu.\]
We estimate $ \left|\hat V_\mathcal J(\hat r, \hat \theta;\mu)\right|$ by looking for a lower bound for $|\hat r^2 - 2\hat r(1-\mu)\cos\hat \theta + (1-\mu)^2|$ on $\hat D \times \mathds T_{\sigma_1}$. We denote by $\tilde r := (1-\mu)^{-1}\hat r$ so that
\[\left|\hat r^2 - 2\hat r(1-\mu)\cos\hat \theta + (1-\mu)^2\right| = (1-\mu)\left|\tilde r - e^{i\hat \theta}\right|\cdot\left|\tilde r - e^{-i\hat\theta}\right|.\]
Since
\[|\tilde r| \geq 1 + \frac{\kappa\mu^\nu}{1-\mu},\quad |e^{\pm i\hat\theta}| \leq \exp\left(\frac{3}{4}\log\left(1+\frac{\kappa\mu^\nu}{1-\mu}\right)\right) = 1+\frac{3}{4}\frac{\kappa\mu^\nu}{1-\mu} + \mathcal{O}(\mu^{2\nu}),\quad \forall (\hat u,\hat v)\in \hat D\times \mathds{T}_{\sigma_1},\]
we obtain
\[\left|\tilde r - e^{\pm i\hat \theta}\right| \geq |\tilde r| - |e^{\pm i\hat\theta}| \geq 1 + \frac{\kappa\mu^\nu}{1-\mu} - \left(1+\frac 3 4\frac{\kappa\mu^\nu}{1-\mu} + \mathcal{O}(\mu^{2\nu})\right) \geq \frac{\kappa\mu^\nu}{4(1-\mu)} + \mathcal{O}(\mu^{2\nu}),\]
which leads to \eqref{eqn: Bound for V}.

Finally, the estimate \eqref{eqn: Bound V u to inf} is a direct  consequence of Corollary \ref{corollary: Behaviour at infinity of the unperturbed separatrix} and the expansion of $\hat V$ in $\hat r^{-1}$ for large $\hat r$.
\end{proof}
As a consequence of this Lemma, the operator $\mathcal F$ in \eqref{eqn: Operator F} is well-defined. The next step is to set up a fixed point argument to solve equation \eqref{eqn: L = F} in a suitable Banach space for functions defined on $\hat{D} \times \mathds{T}_{\sigma_1}$. For a big enough constant $K > 0$, we define the following norm for $f\colon \hat{D}\to \mathds{C}$,
\[\|f\|_a = \underset{\substack{\hat{u}\in \hat{D}\\|\hat u| > K}}{\sup} |\hat{u}^a f(\hat{u})| + \underset{\substack{\hat u \in \hat D\\ |\hat{u}|\leq K}}{\sup}|f(\hat u)|.\]
Then, for $\sigma > 0$ and functions $f\colon \hat{D} \times \mathds{T}_\sigma \to \mathds{C}$, we define
\[\|f\|_{a,\sigma} = \underset{k\in \mathds{Z}}{\sum} \|f^{[k]}\|_a e^{|k|\sigma},\]
and the function space
\[\mathcal{Z}_{a,\sigma} = \left\{ f\colon \hat{D}\times \mathds{T}_\sigma \to \mathds{C},\text{ real-analytic}, \|f\|_{a,\sigma} < \infty\right\}.\]
It can be checked that it is a Banach space for any fixed $a\geq 0$. Moreover, since equation \eqref{eqn: L = F} involves the derivatives $\partial_{\hat{u}}\hat{T}_1$ and $\partial_{\hat{v}}\hat{T}_1$, we need a Banach space that controls at the same time the norms of a function and its derivatives. To this end we define the norm
\[\llfloor f\rrfloor_{a,\sigma}= \|f\|_{a,\sigma} + \|\partial_{\hat{u}}f\|_{a+1,\sigma} + \|\partial_{\hat{v}}f\|_{a+1,\sigma}\]
and the corresponding Banach space
\begin{equation}\label{eqn: Space TildeZ}
\Tilde{\mathcal{Z}}_{a,\sigma} = \left\{f\colon \hat{D} \times \mathds{T}_\sigma \to \mathds{C}, \text{ real-analytic }, \llfloor f\rrfloor_{a,\sigma}< \infty\right\}.
\end{equation}
Next lemma, whose proof is analogous to the one of Lemma 5.6 of \cite{MR3455155}, gives properties about the following inverse of $\mathcal L$
\begin{equation}\label{eqn: Operator G}
\mathcal{G}(f)(\hat{u},\hat{v}) = \int_{-\infty}^0 f(\hat{u}+s,\hat{v}-s)\;ds
\end{equation}
acting on the Banach space $\mathcal{Z}_{a,\sigma}$.
\begin{lemma}\label{lemma: Bounds of G}
    The operator $\mathcal{G}$ defined in \eqref{eqn: Operator G}, when considered acting on the space $\mathcal{Z}_{a,\sigma}$, satisfies the following properties.
    \begin{itemize}
        \item For any $a > 1$, $\mathcal{G}\colon \mathcal{Z}_{a,\sigma}\to \mathcal{Z}_{a-1,\sigma}$ is well-defined and linear continuous. Moreover, $\mathcal{L} \circ \mathcal{G} = \mathrm{Id}$.
        \item If $f \in \mathcal{Z}_{a,\sigma}$ for some $a > 1$, then
        \[\|\mathcal{G}(f)\|_{a-1,\sigma}  \lesssim  K \|f\|_{a,\sigma}.\]
        \item If $f\in \mathcal{Z}_{a,\sigma}$ for some $a \geq 1$, then $\partial_{\hat{u}}\mathcal{G}(f), \partial_{\hat{v}}\mathcal{G}(f) \in \mathcal{Z}_{a,\sigma}$ and
        \[\|\partial_{\hat{u}}\mathcal{G}(f)\|_{a,\sigma}  \lesssim \|f\|_{a,\sigma}, \quad \|\partial_{\hat{v}}\mathcal{G}(f)\|_{a,\sigma} \lesssim  \|f\|_{a,\sigma}.\]
        \item From the previous statements, one can conclude that if $f\in \mathcal{Z}_{a,\sigma}$ for some $a > 1$, then $\mathcal{G}(f)\in \Tilde{\mathcal{Z}}_{a-1,\sigma}$ and
        \[\llfloor\mathcal{G}(f)\rrfloor_{a-1,\sigma} \lesssim   \|f\|_{a,\sigma}.\]
    \end{itemize}
\end{lemma}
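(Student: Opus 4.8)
The plan is to follow the scheme of \cite{MR3455155} (the analogue there being Lemma~5.6), adapting it to the domain $\hat D\times\mathds{T}_\sigma$ of \eqref{eqn: domain parameters (hat u,hat v)} and to the weighted norms $\|\cdot\|_{a,\sigma}$, $\llfloor\cdot\rrfloor_{a,\sigma}$ introduced above. The starting point is that $\mathcal L=\partial_{\hat u}-\partial_{\hat v}$ has characteristics $s\mapsto(\hat u+s,\hat v-s)$ and that $\mathcal G$ is exactly integration along them from $-\infty$. First I would check that these characteristics never leave $\hat D\times\mathds{T}_\sigma$ for $s\le 0$: the shift is by a real number, so $\mathrm{Im}(\hat v-s)=\mathrm{Im}(\hat v)$ and the strip $\mathds{T}_\sigma$ is preserved, while $\mathrm{Re}(\hat u+s)\le\mathrm{Re}(\hat u)<0$ keeps us inside the half--plane $\hat D$ (this is precisely why $\hat D$ is taken to be an honest left half--plane). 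Convergence of the improper integral is then immediate from $|f(\hat u,\hat v)|\le\|f\|_{a,\sigma}\,|\hat u|^{-a}$ with $a>1$ (and Remark~\ref{remark: lower bound u}). The identities $\mathcal L\circ\mathcal G=\mathrm{Id}$ and $\mathcal G\circ\mathcal L=\mathrm{Id}$ are the fundamental theorem of calculus along characteristics: from $\mathcal G f(\hat u+t,\hat v-t)=\int_{-\infty}^{t}f(\hat u+\sigma,\hat v-\sigma)\,d\sigma$ one differentiates at $t=0$ to get $\mathcal L(\mathcal G f)=f$, and $\mathcal G(\mathcal L f)(\hat u,\hat v)=\int_{-\infty}^{0}\frac{d}{ds}f(\hat u+s,\hat v-s)\,ds=f(\hat u,\hat v)$ because the boundary term at $s=-\infty$ vanishes by the decay.

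The core estimate is mode by mode. Expanding in the Fourier variable $\hat v$ one has $\mathcal G(f)^{[k]}(\hat u)=\int_{-\infty}^{0}f^{[k]}(\hat u+s)\,e^{-iks}\,ds$, hence $|\mathcal G(f)^{[k]}(\hat u)|\le\int_{-\infty}^{0}|f^{[k]}(\hat u+s)|\,ds$ since $|e^{-iks}|=1$ for real $s$. I would then use the elementary geometric bound $|\hat u+s|\gtrsim |\hat u|+|s|$, valid for $\hat u\in\hat D$ and $s\le 0$ (because $\mathrm{Re}(\hat u)<0$, so $|\mathrm{Re}(\hat u+s)|=|\mathrm{Re}(\hat u)|+|s|$ and $|\mathrm{Re}(\hat u)|+|\mathrm{Im}(\hat u)|\ge|\hat u|$). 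For $|\hat u|>K$ this already puts $|\hat u+s|>K$ for all $s\le 0$, so one uses only the decay $|f^{[k]}(\hat u+s)|\le\|f^{[k]}\|_a|\hat u+s|^{-a}$ and computes $\int_{-\infty}^0(|\hat u|+|s|)^{-a}\,ds=\frac{|\hat u|^{1-a}}{a-1}$; for $|\hat u|\le K$ one splits the integral at $|\hat u+s|=K$, the near part having length $\lesssim K$ and being bounded by $\|f^{[k]}\|_a$. Both cases give, uniformly in $k$, the bound $\|\mathcal G(f)^{[k]}\|_{a-1}\le CK\,\|f^{[k]}\|_a$; multiplying by $e^{|k|\sigma}$ and summing over $k\in\mathds{Z}$ yields the second item, $\|\mathcal G(f)\|_{a-1,\sigma}\le CK\|f\|_{a,\sigma}$, and with it the well--definedness, linearity and continuity of $\mathcal G\colon\mathcal Z_{a,\sigma}\to\mathcal Z_{a-1,\sigma}$ asserted in the first item.

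For the derivative bounds I would use the commutation identities obtained by differentiating under the integral (equivalently, by integrating by parts in $s$):
\[
\partial_{\hat u}\mathcal G(f)=f+\mathcal G(\partial_{\hat v}f),\qquad
\partial_{\hat v}\mathcal G(f)=\mathcal G(\partial_{\hat u}f)-f,
\]
combined with Cauchy estimates for the analytic function $f$: Cauchy in the strip gives $\|\partial_{\hat v}f\|_{a,\sigma'}\lesssim(\sigma-\sigma')^{-1}\|f\|_{a,\sigma}$ (the polynomial decay exponent is unchanged), while Cauchy in the $\hat u$--direction gives $\|\partial_{\hat u}f\|_{a+1,\sigma}\lesssim\|f\|_{a,\sigma}$ (one power of $|\hat u|$ is gained). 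Feeding these into the estimate of the previous paragraph controls $\mathcal G(\partial_{\hat v}f)$ and $\mathcal G(\partial_{\hat u}f)$, hence $\partial_{\hat u}\mathcal G(f)$ and $\partial_{\hat v}\mathcal G(f)$, in $\mathcal Z_{a-1,\sigma'}$ with norm $\lesssim\|f\|_{a,\sigma}$; the mild loss of strip width is harmless because in the proof of Proposition~\ref{prop: Parameterization of the invariant manifolds of infinity} one actually works on $\mathds{T}_{\sigma_1}$ with $\sigma_1=3\sigma$, so there is ample room. The last item is then just the sum of the estimates of the second and third items, repackaged into the norm $\llfloor\cdot\rrfloor_{a-1,\sigma}$ of \eqref{eqn: Space TildeZ}.

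The routine parts are the Fourier expansion and the explicit one--dimensional integrals. The part that needs care is the uniform--in--$k$ bookkeeping of the two scales at once — the polynomial weight $a$ and the Fourier weight $e^{|k|\sigma}$ — together with the invariance check near $\partial\hat D$, which (as noted) is exactly why $\hat D$ is chosen to be a half--plane rather than a bounded strip. Beyond the value of the constants, the only difference with the corresponding result in \cite{MR3455155} is the presence of the factor $1/\check R_h$ in $\mathcal F$ (see \eqref{eqn: Operator F}) and the precise shape of $\hat D$ and $\mathds{T}_\sigma$, neither of which affects the structure of the argument.
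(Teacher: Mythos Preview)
The paper gives no proof beyond pointing to Lemma~5.6 of \cite{MR3455155}, which is precisely the reference you invoke, so your overall approach is the intended one. Your treatment of the first two items --- invariance of $\hat D\times\mathds T_\sigma$ under the characteristic shift $s\mapsto(\hat u+s,\hat v-s)$, convergence of the improper integral, $\mathcal L\circ\mathcal G=\mathrm{Id}$, and the mode-by-mode bound $\|\mathcal G(f)^{[k]}\|_{a-1}\lesssim K\|f^{[k]}\|_a$ via $|\hat u+s|\gtrsim |\hat u|+|s|$ --- is correct and standard.

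For the third item there is a slip. The Cauchy-in-$\hat u$ estimate $\|\partial_{\hat u}f\|_{a+1,\sigma}\lesssim\|f\|_{a,\sigma}$ you invoke needs a disk of radius comparable to $|\hat u|$ around $\hat u$ to remain in $\hat D$; since $\hat D$ in \eqref{eqn: domain parameters (hat u,hat v)} is a left half-plane, points on the boundary line with large imaginary part have no such room, and the estimate fails there. Moreover, even taking it for granted, you land the derivatives only in $\mathcal Z_{a-1,\sigma'}$, strictly weaker than the $\mathcal Z_{a,\sigma}$ the lemma asserts. The cleaner route is to use $\mathcal L\circ\mathcal G=\mathrm{Id}$ to write $\partial_{\hat u}\mathcal G(f)=f+\partial_{\hat v}\mathcal G(f)$, so that only $\partial_{\hat v}\mathcal G(f)$ needs to be bounded; mode by mode this is $(\partial_{\hat v}\mathcal G f)^{[k]}=ik\,(\mathcal G f)^{[k]}$, and the factor $|k|$ is handled by exploiting the oscillation of $e^{-iks}$ (e.g.\ shifting the $s$-contour into the half-plane $\mathrm{sgn}(k)\,\mathrm{Im}(s)<0$, which keeps $\hat u+s\in\hat D$ because the half-plane is invariant under imaginary translation) rather than by differentiating $f$. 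This recovers the third item with no loss in $a$ or $\sigma$, and the fourth follows. Your remark that any loss would be harmless for the application in Proposition~\ref{prop: Fixed point} is correct, but the lemma as stated is sharper than what your sketch actually proves.
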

We look for a fixed point in the space $\Tilde{\mathcal{Z}}_{1/3,\sigma}$  of the operator
\begin{equation}\label{eqn: Operator Ftilde}
\Tilde{\mathcal{F}} = \mathcal{G}\circ \mathcal{F}
\end{equation}
where $\mathcal{F}$ and $\mathcal{G}$ are the operators defined in \eqref{eqn: Operator F} and \eqref{eqn: Operator G}, respectively. Proposition \ref{prop: param u v inf} is a straightforward consequence of the following proposition.
\begin{proposition}\label{prop: Fixed point}
    Fix $\kappa > 0$ and $\sigma_0= \frac{2}{3}\sigma_1 = 2\sigma = \frac 1 2 \log\left(\frac{1-\mu+\kappa\mu^\nu}{1-\mu}\right)$ in \eqref{eqn: sigmatori}. There exists a constant $b_0 > 0$ such that, for $\mu > 0$ small enough, the operator $\Tilde{\mathcal{F}}$ in \eqref{eqn: Operator Ftilde} has a fixed point $\hat{T}_1 \in B(b_0\mu^{1-2\nu}) \subset \Tilde{\mathcal{Z}}_{1/3,\sigma_0}$. 
\end{proposition}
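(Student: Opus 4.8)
The plan is to solve the fixed point equation $\hat T_1 = \tilde{\mathcal F}(\hat T_1) = (\mathcal G \circ \mathcal F)(\hat T_1)$ by the Banach fixed point theorem in the ball $B(b_0\mu^{1-2\nu}) \subset \tilde{\mathcal Z}_{1/3,\sigma_0}$. The two ingredients are: (i) that $\tilde{\mathcal F}$ maps this ball into itself, for which the key estimate is $\llfloor \tilde{\mathcal F}(0)\rrfloor_{1/3,\sigma_0} \lesssim \mu^{1-2\nu}$; and (ii) that $\tilde{\mathcal F}$ is a contraction on this ball with a small Lipschitz constant. By Lemma \ref{lemma: Bounds of G}, since $\mathcal G$ gains one power of $|\hat u|$-decay and loses nothing in the $\sigma$-width (mapping $\mathcal Z_{a,\sigma}$ to $\tilde{\mathcal Z}_{a-1,\sigma}$ continuously), it suffices to control $\mathcal F$ as a map into $\mathcal Z_{4/3,\sigma_0}$. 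Note the strips are arranged with room to spare: $\mathcal F$ is estimated on $\hat D\times\mathds T_{\sigma_1}$ with $\sigma_1 = 3\sigma$, and $\mathcal G$ outputs a function on $\hat D\times\mathds T_{\sigma_0}$ with $\sigma_0 = 2\sigma$, so Cauchy estimates in $\hat v$ are available with a uniform loss controlled by $\sigma$ (which is $\gtrsim \mu^\nu$).

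First I would record the estimate on the inhomogeneous term. Here $\tilde{\mathcal F}(0) = \mathcal G(\mathcal F(0))$ and $\mathcal F(0) = \hat V(\check r_h(\hat u,\hat\Theta_0), \check\theta_h(\hat u,\hat v,\hat\Theta_0);\mu)$, so by Lemma \ref{lemma: Bound for the potential} we have both the uniform bound $|\mathcal F(0)|\leq C\mu^{1-\nu}$ on the compact part $|\hat u|\leq K$ and the decay bound $|\mathcal F(0)|\leq C\mu|\hat u|^{-4/3}$ for large $|\hat u|$. These two together give $\|\mathcal F(0)\|_{4/3,\sigma_1}\lesssim \mu^{1-\nu}$ (the extra factor $\mu^{-\nu}$ on the compact part is harmless since it is absorbed into the $|\hat u|\le K$ part of the norm, but one must be a little careful: actually the relevant gain comes from combining with $\mathcal G$, which brings in a factor $K$, cf. the second bullet of Lemma \ref{lemma: Bounds of G}). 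Applying Lemma \ref{lemma: Bounds of G} then yields $\llfloor \tilde{\mathcal F}(0)\rrfloor_{1/3,\sigma_0}\lesssim \mu^{1-\nu}$, which is $\leq \tfrac12 b_0\mu^{1-2\nu}$ for $\mu$ small since $\mu^{1-\nu}\ll\mu^{1-2\nu}$.

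Next I would estimate the difference $\mathcal F(f) - \mathcal F(g)$ for $f,g\in B(b_0\mu^{1-2\nu})$. The nonlinear part of $\mathcal F$ is quadratic in $\partial_{\hat u}\hat T_1$ and $\partial_{\hat v}\hat T_1$ with coefficients $1/\check R_h$ and $1/\check r_h^2$; on $\hat D$ the functions $\check r_h^{-1}$ and $\check R_h^{-1}$ are bounded (Remark \ref{remark: lower bound u}), and moreover $\check R_h^{-1}$ and $\check r_h^{-2}$ carry decay in $|\hat u|^{-1/3}$ and $|\hat u|^{-2/3}$ respectively (Corollary \ref{corollary: Behaviour at infinity of the unperturbed separatrix}), which is exactly what feeds the weighted norm. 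Expanding the quadratic difference as a sum of terms each containing one factor $\partial(f-g)$ and one factor $\partial f$ or $\partial g$, using that $\|\partial f\|_{4/3,\sigma_1}, \|\partial g\|_{4/3,\sigma_1}\lesssim \mu^{1-2\nu}$ (this is where the $\llfloor\cdot\rrfloor$ norm controlling derivatives at weight $a+1 = 4/3$ is used), one gets $\|\mathcal F(f)-\mathcal F(g)\|_{4/3,\sigma_1}\lesssim \mu^{1-2\nu}\,\llfloor f-g\rrfloor_{1/3,\sigma_1}$; composing with $\mathcal G$ (Lemma \ref{lemma: Bounds of G}) gives $\llfloor\tilde{\mathcal F}(f)-\tilde{\mathcal F}(g)\rrfloor_{1/3,\sigma_0}\lesssim \mu^{1-2\nu}\,\llfloor f-g\rrfloor_{1/3,\sigma_0}$, hence a contraction for $\mu$ small. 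The Banach fixed point theorem then produces a unique $\hat T_1\in B(b_0\mu^{1-2\nu})$ with $\hat T_1 = \tilde{\mathcal F}(\hat T_1)$; since $\mathcal L\circ\mathcal G = \mathrm{Id}$, this $\hat T_1$ solves \eqref{eqn: L = F}, and the asymptotic conditions \eqref{eqn: asympt} hold because $\mathcal G(f)$ decays as $|\hat u|\to-\infty$ by construction. Reading off the parameterization \eqref{eqn: Parameterization of Wu(mu)} from the formulas for $\hat R,\hat\Theta$ in terms of $\partial_{\hat u}\hat T_1,\partial_{\hat v}\hat T_1$, with the bounds $\|\partial\hat T_1\|_{4/3,\sigma_0}\lesssim \mu^{1-2\nu}$ and $\|\partial_{\hat v}\hat T_1\|\lesssim\mu^{1-2\nu}$ promoted by a Cauchy estimate in $\hat v$ (shrinking from $\sigma_0$ to $\sigma$, costing $\sigma^{-1}\sim\mu^{-\nu}$) to $\|\partial_{\hat v}^2\hat T_1\|\lesssim\mu^{1-3\nu}$, yields \eqref{eqn: Parameterization of Wu(mu)} and \eqref{eqn: Estimates of the derivatives of the parameterization of infty}. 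The stable manifold follows from the symmetry \eqref{eqn: symmetry T1}.

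I expect the main obstacle to be the bookkeeping of the three nested widths $\sigma < \sigma_0 = 2\sigma < \sigma_1 = 3\sigma$ together with the weights $a$: one must check that every application of a Cauchy estimate in $\hat v$ (needed to bound $\partial_{\hat v}$ of the quadratic terms, and to upgrade to the second-derivative estimate \eqref{eqn: Estimates of the derivatives of the parameterization of infty}) stays inside a strip where the relevant norm is finite, and that the factor $\sigma^{-1}\sim\mu^{-\nu}$ lost in each such estimate is compensated by the gain $\mu^{1-2\nu}$ from the source term — this is precisely why the hypothesis $\nu\in(0,1/3)$ is needed, so that $\mu^{1-2\nu}\ll\mu^\nu$ and $\mu^{1-3\nu}\to 0$. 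A secondary delicate point is that $\hat v$ enters $\mathcal F$ only through $\check\theta_h(\hat u,\hat\Theta_0) + \hat v$ inside $\hat V$, so the $\hat v$-dependence of $\mathcal F(\hat T_1)$ comes both from this explicit shift and from $\hat T_1$ itself; one should verify $\mathcal F$ maps $\tilde{\mathcal Z}_{1/3,\sigma_1}$-functions to $\mathcal Z_{4/3,\sigma_1}$-functions, i.e. that composing with the analytic shift does not shrink the strip below $\sigma_1$ — which holds by the definition of $\mathds T_\sigma$ in \eqref{eqn: domain parameters (hat u,hat v)}, designed exactly so that $\hat v + \check\theta_h(\hat u,\hat\Theta_0)$ lands in the analyticity domain of $\hat V$.
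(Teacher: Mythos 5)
Your overall strategy is the paper's: a Banach fixed point argument for $\Tilde{\mathcal{F}}=\mathcal{G}\circ\mathcal{F}$ in $B(b_0\mu^{1-2\nu})\subset\Tilde{\mathcal{Z}}_{1/3,\sigma_0}$, with \lemref{lemma: Bound for the potential} controlling the source term, \lemref{lemma: Bounds of G} transferring estimates through $\mathcal{G}$, and a quadratic-difference estimate for the contraction. There is, however, one step that fails as written: the claim $\|\mathcal{F}(0)\|_{4/3,\sigma_1}\lesssim\mu^{1-\nu}$, from which you deduce $\llfloor\Tilde{\mathcal{F}}(0)\rrfloor_{1/3,\sigma_0}\lesssim\mu^{1-\nu}$ with room to spare. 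The norm $\|\cdot\|_{a,\sigma}$ is the weighted $\ell^1$ sum of Fourier coefficients $\sum_k\|f^{[k]}\|_a e^{|k|\sigma}$, and the sup bound $|\hat V|\lesssim\mu^{1-\nu}$ on the strip of width $\sigma_1$ only yields $\|\hat V^{[k]}\|_{4/3}\lesssim\mu^{1-\nu}e^{-|k|\sigma_1}$; summing against $e^{|k|\sigma_1}$ on the \emph{same} strip diverges. One must shrink to $\sigma_0=\frac{2}{3}\sigma_1$, and the resulting geometric sum $\sum_k e^{-|k|(\sigma_1-\sigma_0)}\sim(\sigma_1-\sigma_0)^{-1}=\sigma^{-1}\sim\mu^{-\nu}$ costs exactly one factor of $\mu^{-\nu}$, so the correct bound is $\|\mathcal{F}(0)\|_{4/3,\sigma_0}\lesssim\mu^{1-2\nu}$ (this is \eqref{eqn: Bound Fourier norm V} in the paper). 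That is precisely why the ball radius is $b_0\mu^{1-2\nu}$ with no margin; the exponent $1-\nu$ you state is unobtainable by this method. Since the corrected bound still fits in the ball, the argument closes, but the step needs repair.

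A secondary point: the contraction estimate should be carried out entirely in the $\|\cdot\|_{4/3,\sigma_0}$ and $\llfloor\cdot\rrfloor_{1/3,\sigma_0}$ norms. The iterates $f,g$ live only on the strip of width $\sigma_0$, so your intermediate bound $\|\mathcal{F}(f)-\mathcal{F}(g)\|_{4/3,\sigma_1}\lesssim\mu^{1-2\nu}\llfloor f-g\rrfloor_{1/3,\sigma_1}$ is not available; no strip-shrinking is needed here anyway, because the quadratic terms are products of functions already lying in the $\sigma_0$-Fourier algebra, with the $|\hat u|$-weights closing exactly as you describe. With these two corrections your proof coincides with the paper's.
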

\begin{proof}[Proof of Proposition \ref{prop: Fixed point}]
We bound $\Tilde{\mathcal{F}}(0) = \mathcal{G}\circ \mathcal{F}(0)$. By the definition of $\mathcal{F}$ in \eqref{eqn: Operator F} we have that
\begin{equation}\label{eqn: F(0)}
    \mathcal{F}(0) = \hat{V}(\check{r}_h(\hat u,\hat \Theta_0), \check{\theta}_h(\hat u,\hat v,\hat\Theta_0);\mu),
\end{equation}
with $\hat V$ is defined in \eqref{eqn: potential in polar rotating coordinates centered at CM J}. It satisfies
\begin{equation}\label{eqn: Bound Fourier norm V}
\|\hat V\|_{4/3,\sigma_0} = \underset{k\in\mathds{Z}}{\sum} \|\hat V^{[k]}\|_{4/3}\cdot e^{|k|\sigma_0}\leq  \|\hat V\|_{4/3,\sigma_1} \underset{k\in\mathds{Z}}{\sum}e^{-|k|(\sigma_1-\sigma_0)} \lesssim \mu^{1-2\nu},
\end{equation}
where we have used that $\hat{V}$ is an analytic function on $\hat D\times \mathds{T}_{\sigma_1}$ in \eqref{eqn: domain parameters (hat u,hat v)} so that $\|\hat V^{[k]}\|_{4/3} \leq \|\hat V\|_{4/3,\sigma_1}\cdot e^{-|k|\sigma_1}$. Therefore  $\mathcal{F}(0) \in \mathcal{Z}_{4/3,\sigma_0}$ and $\|\mathcal{F}(0)\|_{4/3,\sigma_0} \leq K\mu^{1-2\nu}$ for some adequate constant $K > 0$. Then, applying the last statement of Lemma \ref{lemma: Bounds of G}, there exists a constant $b_0 > 0$ such that
\[\llfloor\mathcal{F}(0)\rrfloor_{1/3,\sigma_0}\leq \frac{b_0}{2}\mu^{1-2\nu}.\]
We show that the operator $\Tilde{\mathcal{F}}$ is contractive in the ball $B(b_0\mu^{1-2\nu})\subset \Tilde{\mathcal{Z}}_{1/3,\sigma_0}$. Let $f_1,f_2\in B(b_0\mu^{1-2\nu})$. Using the last statement of Lemma \ref{lemma: Bounds of G} we have that
\[\llfloor\Tilde{\mathcal{F}}(f_2) - \Tilde{\mathcal{F}}(f_1)\rrfloor_{1/3,\sigma_0} \lesssim  \|\mathcal{F}(f_2)-\mathcal{F}(f_1)\|_{4/3,\sigma_0}.\]
We bound the right hand side of this formula. We write it as
\begin{equation*}
    \begin{aligned}
        \mathcal{F}(f_2)-\mathcal{F}(f_1) 
        =& - \frac{1}{\check R_h}\left(\partial_{\hat u}f_2+\partial_{\hat u}f_1 - \frac{2\hat{\Theta}_0}{\check r_h^2}\partial_{\hat v}f_2\right)\left(\partial_{\hat u}f_2 - \partial_{\hat u}f_1\right) \\
        &+\left(\frac{2\hat{\Theta}_0}{\check R_h \check r_h^2}\partial_{\hat u}f_1 - \left(\frac{\hat{\Theta}_0^2}{\check R_h\check r_h^4} + \frac{1}{2\check r_h^2}\right)(\partial_{\hat v}f_2+\partial_{\hat v}f_1)\right)\left(\partial_{\hat v}f_2 - \partial_{\hat v}f_1\right).
    \end{aligned}
\end{equation*}
Using Remark \ref{remark: lower bound u}, the behavior of $\check{r}_h$ from Corollary \ref{corollary: Behaviour at infinity of the unperturbed separatrix} and the fact that $f_1,f_2\in B(b_0\mu^{1-2\nu})$ we obtain
\[\|\mathcal{F}(f_2)-\mathcal{F}(f_1)\|_{4/3,\sigma_0}  \lesssim  \mu^{1-2\nu}\llfloor f_2-f_1\rrfloor_{1/3,\sigma_0}.\]
Therefore
\[\llfloor\Tilde{\mathcal{F}}(f_2)-\Tilde{\mathcal{F}}(f_1)\rrfloor_{1/3,\sigma_0}  \lesssim  \|\mathcal{F}(f_2)-\mathcal{F}(f_1)\|_{4/3,\sigma_0}  \lesssim \mu^{1-2\nu}\llfloor f_2-f_1\rrfloor_{1/3,\sigma_0},\]
and thus, the operator $\mathcal{\tilde F}$ is well-defined and it is contractive. Hence, it has a unique fixed point $\hat T_1 \in B(b_0\mu^{1-2\nu})\subset \Tilde{\mathcal{Z}}_{1/3,\sigma_0}$. This leads to the parameterization \eqref{eqn: Parameterization of Wu(mu)} of $W_\mu^u(\Alpha_{\hat \Theta_0})$. The estimates \eqref{eqn: Estimates of the derivatives of the parameterization of infty} are obtained from the Cauchy's estimates on $\mathds{T}_\sigma$ defined in \eqref{eqn: domain parameters (hat u,hat v)}, whose minimum distance to $\partial \mathds{T}_{\sigma}$ is $\frac 1 4 \log \left(\frac{1-\mu+\kappa\mu^\nu}{1-\mu}\right)> \frac {1}{8\kappa} \mu^{\nu}$. 
\end{proof}

\section{Proof of Proposition \ref{prop: parameterization invariant manifolds with section close to S}}\label{appendix: parameterization of invariant manifolds with section close to S}
We provide the proof for the ejection orbits. One can deduce an analogous result for the collision ones using that the system is reversible.

Fix $\gamma \in \left(\frac{3}{11},\frac 1 3\right)$. The proof relies on extending of the ejection curve $ \Lambda_{\mathcal J}^-(\mu)$ in \eqref{eqn: Curves ECO-section} (which belongs to the section $ \Sigma_\gamma$ in \eqref{eqn: Section Sigma (r,theta,R,Theta)}) to the section $\overline \Sigma$ in \eqref{eqn: SigmahSun}. The argument is carried out using the polar coordinates centered at the center of mass $(\hat r, \hat \theta, \hat R, \hat \Theta)$, and is structured as follows:
\begin{enumerate}
    \item We express the curve $ \Lambda_{\mathcal J}^-(\mu) \subset\Sigma_\gamma$ in coordinates $(\hat r, \hat \theta,\hat R, \hat \Theta)$. We denote this curve as $\hat \Lambda_{\mathcal J}^-(\mu)$.
    \item We express the target section $\overline \Sigma$ in coordinates $(\hat r, \hat \theta, \hat R,\hat \Theta)$, and denote it as $\hat \Sigma$.
    \item We extend the curve $\hat \Lambda_{\mathcal J}^-(\mu)$ by the flow associated to the Hamiltonian \eqref{eqn: Hamiltonian Polar Rotating Coordinates centered at CM J} to reach a curve in the section $\hat \Sigma$. This is done in two stages: first we consider the flow for the unperturbed Hamiltonian ($\mu = 0$) and then we study the perturbed flow for $\mu > 0$ small enough.
    \item We express the image curve in polar coordinates $(\rS,\thetaS,\RS,\ThetaS)$ (see \eqref{eqn: Change from synodical cartesian to synodical polar centered at P1 J}) and we parameterize it as graph in terms of $\thetaS$, yielding $\eqref{eqn: Curve intersection Ejection-overline r=delta2}$ and completing the proof.
\end{enumerate}
\paragraph{Step 1. Coordinate transformation of the curve $ \Lambda_{\mathcal J}^-(\mu)$:}
In coordinates $(\hat r, \hat \theta, \hat R,\hat \Theta)$, the ejection curve $ \Lambda_{\mathcal J}^{-}(\mu)$ becomes
 \begin{equation}\label{eqn: Ejection Curve in coordinates CM}
         \hat \Lambda_{\mathcal J}^{-} (\mu) =\left\{\left(\hat r_{\mathcal J}^-( \theta;\mu), \hat \theta_{\mathcal J}^-(\theta;\mu) , \hat R_{\mathcal{J}}^-(\theta;\mu), \hat \Theta_{\mathcal J}^-(\theta;\mu)\right)\colon \theta \in \mathds{T}\right\},
 \end{equation}
where
\begin{equation}\label{eqn: Parameterization of the ejection curve in coordinates centered at CM}
    \begin{aligned}
        \hat r_{\mathcal J}^-(\theta;\mu) =& \sqrt{\mu^{2\gamma}+ 2\mu^\gamma\cos\theta(1-\mu) + (1-\mu)^2},\\
        \hat \theta_{\mathcal J}^-(\theta;\mu) =& \arctan\left(\frac{\mu^\gamma\sin\theta}{\mu^\gamma\cos\theta + (1-\mu)}\right),\\
        \hat R_{\mathcal J}^-(\theta;\mu) =&\;  R_{\mathcal J}^-(\theta;\xi,\mu)\cos(\theta-\hat\theta_{\mathcal J}^-(\theta;\mu)) - \frac{\Theta_{\mathcal J}^-(\theta;\xi,\mu)}{\mu^\gamma}\sin(\theta-\hat\theta_{\mathcal J}^-(\theta;\mu)) + \sin\hat\theta_{\mathcal J}^-(\theta;\mu)(1-\mu),\\
        \hat \Theta_{\mathcal J}^-(\theta;\mu) =&\; \hat r_{\mathcal J}^-(\theta;\mu)  R_{\mathcal J}^-(\theta;\xi,\mu)\sin(\theta-\hat\theta_{\mathcal J}^-(\theta;\mu)) + \frac{\hat r_{\mathcal J}^-(\theta;\mu)}{\mu^\gamma}\Theta_{\mathcal J}^-(\theta;\xi,\mu)\cos(\theta-\hat \theta_{\mathcal J}^-(\theta;\mu)) \\
        &+ \hat r_{\mathcal J}^-(\theta;\mu)\cos\hat \theta_{\mathcal J}^-(\theta;\mu)(1-\mu).
    \end{aligned}
\end{equation}
Since $h = \mathcal{O}(\mu)$, we obtain the following estimate for the value of $\xi$ in \eqref{eqn: Energy xi} 
\[\xi = (2h+3)^{-\frac 1 2} + \mathcal{O}(\mu) = \frac{1}{\sqrt 3} + \mathcal{O}(\mu).\]
We substitute this estimate in both $( R_{\mathcal J}^-, \Theta_{\mathcal J}^-)$ defined in \eqref{eqn: (R,Theta) in curves ECO-section} to obtain
\begin{equation}\label{eqn: Estimates on the parameterization of the ejection curve in CM}
    \begin{aligned}
        \hat r_{\mathcal J}^-(\theta, \upsilon) =&\; 1+ \upsilon \cos\theta + \mathcal{O}_2(\upsilon),\qquad \hat \theta_{\mathcal J}^-(\theta,\upsilon) =\;  \upsilon\sin\theta + \mathcal{O}_2(\upsilon),\\
        \hat R_{\mathcal J}^-(\theta,\upsilon,\sigma) =&\; \sqrt 3\cos\theta+ \mathcal{O}(\upsilon,\sigma),\qquad \hat \Theta_{\mathcal J}^-(\theta,\upsilon,\sigma) =\; 1+\sqrt{3}\sin\theta +  \mathcal{O}(\upsilon,\sigma),
    \end{aligned}
\end{equation}
where
\begin{equation}\label{eqn: notation upsilon, sigma}
    \sigma := \sigma(\mu) = \mu^{\frac{11\gamma-3}{8}},\quad \upsilon := \upsilon(\mu) = \mu^\gamma
\end{equation}
are defined to simplify notation. Note that $\sigma \gg \upsilon$ since $\gamma \in \left(\frac{3}{11},\frac{1}{3}\right) < 1$.

\paragraph{Step 2. Coordinate transformation of the section $\overline \Sigma$:}
We consider the transformation from coordinates $(\rS,\thetaS,\RS,\ThetaS)$ to coordinates $(\hat r, \hat \theta,\hat R,\hat \Theta)$
\begin{equation}\label{eqn: Change S CM polar}
    \begin{aligned}
         \rS(\hat r, \hat \theta,\mu)&= \sqrt{\hat r^2 + 2\hat r \mu\cos\hat\theta + \mu^2},\qquad \thetaS (\hat r, \hat \theta,\mu) = \arctan\left(\frac{\hat r\sin\hat \theta}{\hat r\cos\hat \theta + \mu}\right),\\
         \RS(\hat r, \hat \theta,\hat R,\hat \Theta,\mu) &= \hat R \frac{ \rS(\hat r,\hat \theta)-\mu\cos\thetaS(\hat r,\hat \theta)}{\hat r} - \mu \hat\Theta\frac{ \rS(\hat r,\hat \theta)\sin\thetaS(\hat r, \hat \theta)}{ \hat r},\\
        \ThetaS(\hat r, \hat \theta,\hat R,\hat \Theta,\mu) &= \mu\hat R\frac{ \rS(\hat r,\hat \theta)\sin\thetaS(\hat r, \hat \theta)}{\hat r}+\hat\Theta\frac{ \rS(\hat r,\hat \theta)\left(\rS(\hat r,\hat \theta)-\mu\cos\thetaS(\hat r,\hat \theta)\right)}{\hat r^2}.
    \end{aligned}
\end{equation}
This transformation satisfies
\begin{equation}\label{eqn: Estimation polar coordinates S-CM}
    \begin{aligned}
        \rS(\hat r, \hat \theta,\mu)  &= \hat r + \mu\cos\hat \theta + \mathcal{O}\left(\frac{\mu^2}{\hat r}\right),\quad \thetaS(\hat r, \hat \theta,\mu)  = \hat \theta - \mu \frac{\sin \hat \theta}{\hat r} + \mathcal{O}_2 \left(\frac{\mu}{\hat r}\right),\\
        \RS(\hat r, \hat \theta,\hat R,\hat \Theta,\mu) &= \hat R -\mu\hat\Theta\sin\hat\theta + \mathcal{O}\left(\frac{\mu^2}{\hat r^2}\right),\\
        \ThetaS(\hat r, \hat \theta,\hat R,\hat \Theta,\mu) &= \hat \Theta\left(1 + \frac{\mu\cos\hat \theta}{\hat r}\right) + \mu \hat R \sin\hat \theta + \mathcal{O}_2(\mu).
    \end{aligned}
\end{equation}
We define the function $\mathcal F(\hat r, \hat \theta,\mu) =\rS(\hat r, \hat \theta,\mu) - \delta^2$ and look for its zeroes. It satisfies
\[\mathcal F(\delta^2,\hat\theta,0) = 0,\quad \partial_{\hat r}\mathcal F(\delta^2,\hat\theta,0) = 1 \neq 0.\]
Then, the Implicit Function Theorem ensures that there exists $\mu_0 > 0$ such that, for any $0<\mu<\mu_0$ and $\hat \theta \in \mathds{T}$, there is $\hat r = \hat P(\hat \theta,\mu)$ with $\hat P(\hat \theta,0) = \delta^2$ such that $\rS(\hat P(\hat \theta,\mu),\hat\theta;\mu) = \delta^2$. Moreover, we have
\begin{equation}\label{eqn:hat r=delta2}
    \hat P(\hat\theta,\mu) = \delta^2 + \mathcal{O}(\mu).
\end{equation}
Hence, in coordinates $(\hat r, \hat \theta, \hat R,\hat \Theta)$, the section $\overline \Sigma$ in \eqref{eqn: SigmahSun} becomes
\begin{equation}\label{eqn: Section hatr = delta2}
    \hat \Sigma = \left\{(\hat r,\hat \theta,\hat R,\hat \Theta)\colon \hat r = \hat P(\hat \theta,\mu), \mathcal{\hat H}_\mu(\hat P(\hat \theta,\mu), \hat \theta, \hat R,\hat \Theta) = h\right\}.
\end{equation}
where $\hat{\mathcal H}_\mu$ is the Hamiltonian \eqref{eqn: Hamiltonian Polar Rotating Coordinates centered at CM J}.

\paragraph{Step 3. Extension by the flow:}
We extend the ejection curve $\hat \Lambda_{\mathcal J}^-(\mu)$ in \eqref{eqn: Parameterization of the ejection curve in coordinates centered at CM} by the flow associated to the Hamiltonian \eqref{eqn: Hamiltonian Polar Rotating Coordinates centered at CM J} to the section $\hat \Sigma$ defined in \eqref{eqn: Section hatr = delta2}. To carry out this extension, denote by $\hat F$ the vector field associated to the Hamiltonian \eqref{eqn: Hamiltonian Polar Rotating Coordinates centered at CM J}, which can be written as
\[\hat F(\hat r, \hat \theta, \hat R,\hat \Theta;\mu) = \hat F_0(\hat r, \hat R, \hat \Theta) + \hat F_1(\hat r, \hat \theta;\mu)\]
where
\begin{equation}\label{eqn: hatF0 and hatF1}
    \begin{aligned}
        \hat F_0(\hat r, \hat R,\hat \Theta) = \left(\hat R, \frac{\hat \Theta}{\hat r^2}-1,\frac{\hat \Theta^2}{\hat r^3}- \frac{1}{\hat r^2},0\right)^T,\qquad \hat F_1(\hat r, \hat \theta;\mu) = \left(0,0,\partial_{\hat r}\hat V(\hat r, \hat \theta,\mu), \partial_{\hat \theta}\hat V(\hat r, \hat \theta,\mu)\right)^T,
    \end{aligned}
\end{equation}
and $\hat V(\hat r, \hat \theta,\mu)$ is the potential \eqref{eqn: potential in polar rotating coordinates centered at CM J}, which satisfies $\hat V(\hat r,\hat \theta,0) = 0$.

The vector field $\hat{F}$ has multiple singularities located at $\hat{r}=0$, $(\hat{r},\hat{\theta}) = (\mu,\pi)$ and $(\hat{r},\hat{\theta}) = (1-\mu,0)$ which correspond to the positions of the center of mass and the primaries $\mathcal S$ and $\mathcal J$ respectively. However, the extension we consider avoids these singularities, and thus the vector field $\hat F$ remains regular throughout the associated flow.

We first perform the extension for $\hat F_0$. To this end we recall Remark \ref{remark: parabolic EC orbits with J} (the case $\check \Theta_0 = 0$) and consider, in non-rotating coordinates $(\check r, \check \theta, \check R,\check \Theta)$, the ballistic trajectory $\check \gamma_c^-(t,0) \in \mathcal S^+ \cap \mathcal J^-$ in \eqref{eqn: parabolic ejection and collision mu=0}, defined for $t\in[-t_c,0)$ where $t_c = \frac{\sqrt 2}{3}$ is given in \eqref{eqn: tc thetac hatTheta} (see Remark \ref{def: large-ballistic} for the definition of ballistic trajectory). Since for $\mu = 0$ the Hamiltonian $\check{\mathcal H}_0$ in \eqref{eqn:Hamiltonian rotating polar coordinates for mu = 0 J} is autonomous, without loss of generality we consider the initial condition $\check \gamma_c^-(0,0) \in \mathcal J$ so that $\check \gamma_c^-(t,0) \underset{t\to t_c^-}{ \longrightarrow} \mathcal S$.

We denote by $\hat \Phi_0(t,x_0) := \left(\hat \Phi_0^{\hat r}, \hat \Phi_0^{\hat \theta}, \hat \Phi_0^{\hat R}, \hat \Phi_0^{\hat \Theta}\right)$ the flow associated to the Hamiltonian $\hat{\mathcal{H}}_0$ with initial condition $x_0 = (\hat r_0, \hat \theta_0, \hat R_0, 0)$ (where $\hat R_0 = \hat R_0(\hat r_0,\hat \Theta_0)$ is obtained through the conservation of the Hamiltonian $\hat{\mathcal H}_0$). Then we have
\begin{equation}\label{eqn: Flow 2BP Theta= 0}
    \hat\Phi_0^{\hat r}(t,x_0) = \left(\hat r_0^{\frac 3 2}-\frac{3t}{\sqrt 2}\right)^{\frac 2 3},\quad \hat\Phi_0^{\hat \theta}(t,x_0) = \hat \theta_0-t,\quad \hat\Phi_0^{\hat R}(t,x_0) = -\sqrt{\frac{2}{\hat\Phi_0^{\hat r}(t,x_0)}}, \quad \hat\Phi_0^{\hat \Theta}(t,x_0) = 0.
\end{equation}
We compute the time $t_0$ such that $\hat \Phi_0^{\hat r}(t_0,x_0)\in \hat \Sigma$ in \eqref{eqn: Section hatr = delta2}, yielding
\begin{equation}\label{eqn: Time Flow 2BP Theta = 0}
t_0 = t_0(\hat r_0) = \frac{\sqrt 2}{3}\left(\hat r_0^{\frac 3 2}-\delta^3\right).
\end{equation}
Building on these results, the following lemma provides estimates for both the flow $\hat \Phi_0$ and the time required for initial conditions in $\hat \Lambda_{\mathcal J}^-(\mu)$, defined in \eqref{eqn: Ejection Curve in coordinates CM}, to reach the section $\hat \Sigma$ in \eqref{eqn: Section hatr = delta2}.
\begin{lemma}[Extension by the flow $\hat \Phi_0$]\label{lemma: Extension phi0}
Let $\delta_0 > 0$ be the parameter given in Proposition \ref{proposition: Perturbed invariant manifolds of collision in synodical polar coordinates J} and fix $\gamma \in \left(\frac{3}{11},\frac 1 3\right)$. For  $0<\delta<\delta_0$, consider the time interval
\begin{equation}\label{eqn: interval t Phi0}
\mathcal T= \left[0,\frac{\sqrt 2}{3}\left(1-\frac{\delta^3}
{2}\right)\right].
\end{equation}
Then, there exist $\mu_0,\omega_0 > 0$ such that, for $(\mu,\omega) \in (0,\mu_0)\times (0,\omega_0)$ and any initial condition $\hat x \in \hat \Gamma_{\mathcal J}^-(\upsilon,\sigma)$ (with $\upsilon,\sigma$ given in \eqref{eqn: notation upsilon, sigma}) defined as
\begin{equation}\label{eqn: reduced curve init cond}
    \hat \Gamma_{\mathcal J}^-(\upsilon,\sigma) = \left\{\left(\hat r_{\mathcal J}^-(\theta;\upsilon,\sigma), \hat \theta_{\mathcal J}^-(\theta;\upsilon,\sigma) , \hat R_{\mathcal{J}}^-(\theta;\upsilon,\sigma), \hat \Theta_{\mathcal J}^-(\theta;\upsilon,\sigma)\right)\colon \theta \in \left(\theta_*-\omega,\theta_*+\omega\right)\right\}\subset \hat \Lambda_{\mathcal J}^-(\mu),
\end{equation}
where $\theta_* = \arcsin\left(\frac{1}{\sqrt 3}\right) + \pi$ , the flow $\hat \Phi_0$ satisfies the following estimates
\begin{equation}\label{eqn: Estimate flow Phi_0(t,x)}
\begin{aligned}
    \hat \Phi_0(t,\hat x) = \hat \Phi_0 (t,\hat x_0(\theta_*)) + \mathcal{O}(\upsilon,\sigma,\omega),\quad \partial_{\hat x}\hat \Phi_0(t,\hat x) =   \partial_{\hat x}\hat \Phi_0 (t,\hat x_0(\theta_*)) + \mathcal{O}(\upsilon,\sigma,\omega),
\end{aligned}
\end{equation}
for $t\in \mathcal T$.

Moreover, the time needed for the flow $\hat \Phi_0$ starting at $\hat x \in \hat \Gamma_{\mathcal J}^-(\upsilon,\sigma)$ to reach the section $\hat \Sigma$ in \eqref{eqn: Section hatr = delta2} satisfies
\begin{equation}\label{eqn: Estimate time for Phi_0(t,x)}
\begin{aligned}
      t(\theta,\upsilon,\sigma)= \frac{\sqrt 2}{3}\left(1-\delta^3\right) + \mathcal{O}(\upsilon,\sigma,\omega) \in \mathcal T,\qquad 
      \partial_{\theta}t(\theta,\upsilon,\sigma) = \frac{3}{5\sqrt 2}+ \mathcal{O}(\upsilon,\sigma,\omega).
\end{aligned}
\end{equation}
\end{lemma}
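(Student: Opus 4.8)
The plan is to prove Lemma \ref{lemma: Extension phi0} by reducing everything to the explicit unperturbed Kepler flow $\hat\Phi_0$ with zero angular momentum, written down in \eqref{eqn: Flow 2BP Theta= 0}, and then using continuity of the flow in initial conditions. First I would fix a reference initial condition. The point $\hat x_0(\theta_*)$ is the image of the angle $\theta_*=\arcsin(1/\sqrt 3)+\pi$ under the limiting (as $\mu\to 0$, i.e.\ $\upsilon,\sigma\to 0$) parameterization of the ejection curve; by the estimates \eqref{eqn: Estimates on the parameterization of the ejection curve in CM} this limit is $(\hat r,\hat\theta,\hat R,\hat\Theta)=(1,0,\sqrt 3\cos\theta_*,1+\sqrt 3\sin\theta_*)$. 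The key point is that $\theta_*$ is chosen precisely so that $1+\sqrt 3\sin\theta_*=0$, hence $\hat\Theta=0$ there, so the reference orbit lies in the invariant plane $\{\hat\Theta=0\}$ and is governed exactly by \eqref{eqn: Flow 2BP Theta= 0}. This is the ballistic orbit $\check\gamma_c^-(t,0)$ of Remark \ref{remark: parabolic EC orbits with J}: it starts at $\hat r=1$ (on $\mathcal J$) and reaches $\mathcal S$ at $t_c=\sqrt2/3$. On the interval $\mathcal T=[0,\tfrac{\sqrt2}{3}(1-\tfrac{\delta^3}{2})]$, which stops short of the collision time $\sqrt2/3$, the reference orbit stays in a fixed compact region bounded away from $\hat r=0$, from $(\hat r,\hat\theta)=(\mu,\pi)$, and from $(\hat r,\hat\theta)=(1-\mu,0)$, so $\hat F=\hat F_0+\hat F_1$ is smooth along it uniformly in small $\mu$.

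Next I would set up the comparison. Points $\hat x\in\hat\Gamma_{\mathcal J}^-(\upsilon,\sigma)$ differ from $\hat x_0(\theta_*)$ by $\mathcal O(\upsilon,\sigma,\omega)$: the $\theta$-dependence contributes $\mathcal O(\omega)$ since $|\theta-\theta_*|<\omega$, and the $\mu$-dependence of the parameterization contributes $\mathcal O(\upsilon,\sigma)$ by \eqref{eqn: Estimates on the parameterization of the ejection curve in CM} and \eqref{eqn: notation upsilon, sigma}. Because on a fixed compact time interval $\mathcal T$ the vector field $\hat F_0$ is $C^\infty$ with bounded derivatives in a neighborhood of the reference trajectory, Gronwall's inequality (equivalently, smooth dependence of solutions of ODEs on initial conditions) gives $\hat\Phi_0(t,\hat x)=\hat\Phi_0(t,\hat x_0(\theta_*))+\mathcal O(\upsilon,\sigma,\omega)$ uniformly for $t\in\mathcal T$, and the same argument applied to the variational equation gives the estimate on $\partial_{\hat x}\hat\Phi_0$; this is exactly \eqref{eqn: Estimate flow Phi_0(t,x)}. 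For the crossing time I would argue as follows. Define $g(t,\theta,\upsilon,\sigma)=\hat\Phi_0^{\hat r}(t,\hat x)-\hat P(\hat\Phi_0^{\hat\theta}(t,\hat x),\mu)$ using the implicit section function $\hat P$ from \eqref{eqn:hat r=delta2}; at $\theta=\theta_*$, $\upsilon=\sigma=0$ this has a simple zero at $t=t_0(1)=\tfrac{\sqrt2}{3}(1-\delta^3)$ (from \eqref{eqn: Time Flow 2BP Theta = 0} with $\hat r_0=1$), which lies in the interior of $\mathcal T$ provided $\delta<\delta_0$ is small, and $\partial_t g\ne 0$ there because $\hat\Phi_0^{\hat r}$ is strictly decreasing along the ballistic orbit (its radial velocity $\hat\Phi_0^{\hat R}=-\sqrt{2/\hat r}<0$) while $\hat P$ depends only weakly on $\hat\theta$. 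The Implicit Function Theorem then yields a $C^1$ crossing time $t(\theta,\upsilon,\sigma)$ with $t(\theta_*,0,0)=\tfrac{\sqrt2}{3}(1-\delta^3)$, and differentiating the implicit relation gives $\partial_\theta t=\tfrac{3}{5\sqrt2}+\mathcal O(\upsilon,\sigma,\omega)$ — here the value $3/(5\sqrt2)$ comes from plugging $\hat\Phi_0^{\hat r}=\delta^2$, $\hat\Phi_0^{\hat R}=-\sqrt{2/\delta^2}$, $\partial_\theta\hat r_{\mathcal J}^-=0$ (leading order, since the leading $\hat r$-component of the ejection curve is $1+\upsilon\cos\theta$, whose $\theta$-derivative is $\mathcal O(\upsilon)$) and $\partial_\theta\hat\Phi_0^{\hat\theta}|_{t=0}=\partial_\theta\hat\theta_{\mathcal J}^-=\mathcal O(\upsilon)$ into the derivative formula and simplifying; I would verify this numerical constant in the write-up. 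This establishes \eqref{eqn: Estimate time for Phi_0(t,x)}.

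The main obstacle, and the thing I would be most careful about, is ensuring that the extension genuinely avoids all three singularities of $\hat F$ uniformly in $\mu$ and that the various small parameters are correctly ordered. The reference orbit starts at $\mathcal J$ ($\hat r=1$ in CM coordinates, i.e.\ essentially at Jupiter's location $1-\mu$) and ends near $\mathcal S$ ($\hat r\approx\mu$, the Sun), so one must check that truncating the time interval at $\tfrac{\sqrt2}{3}(1-\tfrac{\delta^3}{2})$ keeps $\hat\Phi_0^{\hat r}\ge c\delta^2$ for a uniform $c>0$, and — crucially — that the $\mathcal O(\upsilon,\sigma,\omega)$ perturbation does not push the orbit into the region where $\hat\theta$ is near $\pi$ (the Sun) at the wrong radius; this is handled by observing that along the ballistic orbit $\hat\theta$ decreases monotonically from $0$ and stays in a controlled arc, and $\hat r$ stays $\ge c\delta^2$ while $\hat\theta\ne 0$. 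A secondary technical point is that for the \emph{perturbed} flow one would invoke that $\hat F_1=\mathcal O(\mu)$ in the relevant region (Lemma \ref{lemma: Bound for the potential} and the remark that $\hat V(\hat r,\hat\theta,0)=0$), but that step belongs to the subsequent lemma extending by $\hat\Phi_\mu$; here the statement is purely about $\hat\Phi_0$, so the only perturbative smallness is in the \emph{initial condition}, not in the vector field, which keeps the argument clean. Finally, one must double-check the precise value of the constant $3/(5\sqrt2)$ in $\partial_\theta t$: it is the kind of bookkeeping where a missing factor from the chain rule through $\hat P$ or through the relation $\hat\Phi_0^{\hat r}(t)=(\hat r_0^{3/2}-3t/\sqrt2)^{2/3}$ could slip in, so I would recompute it carefully from \eqref{eqn: Flow 2BP Theta= 0}–\eqref{eqn: Time Flow 2BP Theta = 0} before committing it to the statement.
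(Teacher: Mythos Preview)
Your proposal is correct and follows the same route as the paper: fix the reference point $\hat x_0(\theta_*)=(1,0,-\sqrt 2,0)$ where $\hat\Theta$ vanishes, use the explicit zero-angular-momentum Kepler flow \eqref{eqn: Flow 2BP Theta= 0}, invoke smooth dependence on initial conditions for \eqref{eqn: Estimate flow Phi_0(t,x)}, and apply the Implicit Function Theorem to $\mathcal G(t,\theta,\upsilon,\sigma)=\hat\Phi_0^{\hat r}-\delta^2$ to obtain the crossing time and its $\theta$-derivative. The paper's only additional ingredient is an explicit closed-form solution of the variational equation (the matrix \eqref{eqn: partialxPhi0}), which shows that the constant $\tfrac{3}{5\sqrt 2}$ actually arises from the $\hat R$-variation of the initial condition, $\partial_\theta\hat R_{\mathcal J}^-(\theta_*,0,0)=-\sqrt 3\sin\theta_*=1$, propagated through the $(1,3)$-entry $\tfrac{3}{5\delta}$ of $\partial_{\hat x}\hat\Phi_0^{\hat r}$ at $t=t_0^*$ and divided by $-\hat\Phi_0^{\hat R}=\sqrt 2/\delta$; your sketch correctly discards the $\hat r$- and $\hat\theta$-variations as $\mathcal O(\upsilon)$ but does not identify this $\hat R$-contribution as the source of the constant, so that is the piece to fill in when you do the bookkeeping.
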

\begin{proof}[Proof of Lemma \ref{lemma: Extension phi0}]
We compute the time needed for an arbitrary point $\hat x_0\in\hat \Lambda_{\mathcal J}^{-}(0)$ defined in \eqref{eqn: Ejection Curve in coordinates CM} to reach the section $\hat \Sigma$ in \eqref{eqn: Section hatr = delta2} under the flow $\hat \Phi_0$ in \eqref{eqn: Flow 2BP Theta= 0}. Relying on \eqref{eqn: Estimates on the parameterization of the ejection curve in CM}, such point $\hat x_0$ has the expression
\begin{equation*}
    \begin{aligned}
        \hat x_0(\theta_0) &= \left(\hat r_{\mathcal J}^-(\theta_0,0), \hat \theta_{\mathcal J}^-(\theta_0,0),  \hat R_{\mathcal J}^-(\theta_0,0,0), \hat\Theta_{\mathcal J}^-(\theta_0,0,0)\right)\\
        &= \left(1,0,\sqrt 3\cos\theta_0,1+\sqrt 3\sin\theta_0\right).
    \end{aligned}
\end{equation*}
To apply the expression in \eqref{eqn: Flow 2BP Theta= 0}, the $\hat \Theta$-component of $\hat x_0(\theta_0)$ must be zero. That is, we look for $\theta_0 \in \mathds{T}$ satisfying $\hat R_{\mathcal J}^-(\theta_0,0,0) < 0$ and $\hat \Theta_{\mathcal J}^-(\theta_0,0,0)= 0$ in \eqref{eqn: Estimates on the parameterization of the ejection curve in CM}, or equivalently 
\[\cos\theta_0 <0,\quad 1+\sqrt 3\sin\theta_0 = 0,\]
which yields
\begin{equation}\label{eqn: theta_*}
\theta_* = \arcsin\left(\frac{1}{\sqrt 3}\right) + \pi
\end{equation}
so that 
\begin{equation}\label{eqn: Init condition ejection curve Theta=0}
\hat x_0(\theta_*) = \left(1,0, \sqrt 3\cos\theta_*,0\right) = (1,0,-\sqrt 2,0)\in \mathcal J.
\end{equation}
In this case, the time needed to reach the section $\hat \Sigma$ in \eqref{eqn: Section hatr = delta2} is computed from \eqref{eqn: Time Flow 2BP Theta = 0} as 
\begin{equation}\label{eqn: time t01}
    t_0^* = \frac{\sqrt 2}{3}\left(1-\delta^3\right).
\end{equation}
We extend the previous analysis to points in $\hat \Gamma_{\mathcal{J}}^{-}(\upsilon,\sigma)$, defined in \eqref{eqn: reduced curve init cond}. 
Recall that $\hat \Gamma_{\mathcal{J}}^{-}(0,0)$ corresponds to the point $x_0(\theta_*)$ defined in \eqref{eqn: Init condition ejection curve Theta=0}. For any initial condition 
\[\hat x(\theta,\upsilon,\sigma)=\left(\hat r_{\mathcal J}^-(\theta,\upsilon), \hat \theta_{\mathcal J}^-(\theta,\upsilon) , \hat R_{\mathcal{J}}^-(\theta,\upsilon,\sigma), \hat \Theta_{\mathcal J}^-(\theta,\upsilon,\sigma)\right)\in \hat \Gamma_{\mathcal J}^{-}(\upsilon,\sigma)\]
we consider the flow $\hat \Phi_0(t,\hat x(\theta,\upsilon,\sigma))$ for $t \in \mathcal T$ (defined in \eqref{eqn: interval t Phi0}) and we expand it with respect to the initial condition $\hat x(\theta_*,0,0) = \hat x_0(\theta_*)$ in \eqref{eqn: Init condition ejection curve Theta=0}, yielding
\begin{equation}\label{eqn: Expansion Phi_0 in init cond}
\begin{aligned}
\hat \Phi_0(t,\hat x(\theta,\upsilon,\sigma)) =&\ \hat \Phi_0(t,\hat x_0(\theta_*)) + \partial_{\hat x}\hat \Phi_0(t,\hat x_0(\theta_*))\cdot (\hat x(\theta,\upsilon,\sigma)-\hat x_0(\theta_*))\\
&+\mathcal{O}_2\left(\|\hat x(\theta,\upsilon,\sigma)-\hat x_0(\theta_*)\|\right)
\end{aligned}
\end{equation}
where, relying on \eqref{eqn: Estimates on the parameterization of the ejection curve in CM} and \eqref{eqn: Init condition ejection curve Theta=0}, the difference $\hat x(\theta,\upsilon,\sigma)-\hat x_0(\theta_*)$ is given by
\begin{equation}\label{eqn: Difference init condition}
\begin{aligned}
\hat x(\theta,\upsilon,\sigma)-\hat x_0(\theta_*) =  \begin{pmatrix}\hat r_{\mathcal{J}}^-(\theta,\upsilon)\\ \hat \theta_{\mathcal{J}}^-(\theta,\upsilon) \\ \hat R_{\mathcal{J}}^-(\theta,\upsilon,\sigma)\\ \hat \Theta_{\mathcal J}^-(\theta,\upsilon,\sigma)\end{pmatrix} - \begin{pmatrix}1\\ 0 \\ - \sqrt 2 \\0\end{pmatrix} = \begin{pmatrix}\\ \upsilon\cos\theta + \mathcal{O}_2(\upsilon)\\ \upsilon\sin\theta + \mathcal{O}_2(\upsilon) \\ \sqrt 2+\sqrt{3}\cos\theta  + \mathcal{O}(\upsilon,\sigma)\\ 1+\sqrt 3 \sin\theta +\mathcal{O}(\upsilon,\sigma)\end{pmatrix} = \mathcal{O}(\upsilon,\sigma,\omega)
\end{aligned}
\end{equation}
for $\theta\in (\theta_*-\omega,\theta_*+\omega)$, and $\partial_x \hat \Phi_0(t,\hat x_0(\theta_*))$ is the solution of the variational equations whose motion is expressed as
\begin{equation}\label{eqn:vareq}
    \begin{cases}
    \partial_{\hat x}\hat\Phi_0(t,\hat x_0(\theta_*))' = D\hat F_0\left(\hat \Phi_0\left(t,\hat x(\theta_*,0,0)\right)\right)\cdot \partial_{\hat x}\hat\Phi_0(t,\hat x_0(\theta_*))\\
    \partial_{\hat x}\hat\Phi_0(0,\hat x_0(\theta_*)) = \mathrm{Id},
    \end{cases}
\end{equation}
where $\hat F_0$ is defined in \eqref{eqn: hatF0 and hatF1} so that
\begin{equation}\label{eqn: DhatF0}
D\hat F_0(\hat\Phi_0(t,\hat x_0(\theta_*)) = \begin{pmatrix}0&0&1&0\\0&0&0&\frac{1}{\left(\hat \Phi_0^{\hat r}(t,\hat x_0(\theta_*))\right)^2}\\\frac{2}{\left(\hat \Phi_0^{\hat r}(t,\hat x_0(\theta_*))\right)^3} & 0 & 0& 0\\ 0&0&0&0\end{pmatrix}=\begin{pmatrix}0&0&1&0\\0&0&0&\frac{1}{\left(1-\frac{3t}{\sqrt 2}\right)^{\frac 4 3}}\\\frac{2}{\left(1-\frac{3t}{\sqrt 2}\right)^2} & 0 & 0& 0\\ 0&0&0&0\end{pmatrix}.
\end{equation}
Equation \eqref{eqn:vareq} is integrable, and the solutions are bounded for $t \in \mathcal T$ defined in \eqref{eqn: interval t Phi0}. They are given by
\begin{equation}\label{eqn: partialxPhi0}
    \partial_{\hat x}\hat \Phi_0(t,\hat x_0(\theta_*)) = \begin{pmatrix}\frac{4}{5\left(1-\frac{3t}{\sqrt 2}\right)^{\frac 1 3}} + \frac 1 5\left(1-\frac{3t}{\sqrt 2}\right)^{\frac 4 3} & 0 & \frac{3}{5\left(1-\frac{3t}{\sqrt 2}\right)^{\frac 1 3}} - \frac{3}{5}\left(1-\frac{3t}{\sqrt 2}\right)^{\frac 4 3} & 0\\ 0 & 1 & 0 & \frac{\sqrt 2}{\left(1-\frac{3t}{\sqrt 2}\right)^{\frac 1 3}}\\  \frac{4}{15\left(1-\frac{3t}{\sqrt 2}\right)^{\frac 1 3}} - \frac{4}{15}\left(1-\frac{3t}{\sqrt 2}\right)^{\frac 4 3} & 0 & \frac{1}{5\left(1-\frac{3t}{\sqrt 2}\right)^{\frac 4 3}} + \frac{12}{15}\left(1-\frac{3t}{\sqrt 2}\right)^{\frac 1 3} & 0\\
    0 & 0 &0 & 1\end{pmatrix},
\end{equation}
yielding \eqref{eqn: Estimate flow Phi_0(t,x)}.

We compute the time $t$ satisfying $\hat \Phi_0(t,\hat x(\theta,\upsilon,\sigma)) \in \hat \Sigma$, defined in \eqref{eqn: Section hatr = delta2}.  Namely, we find a zero of a function of the form
\[\mathcal G(t,\theta,\upsilon,\sigma) = \hat \Phi_0^{\hat r}(t,\hat x(\theta,\upsilon,\sigma)) - \delta^2.\]
Note that we have taken $\mathcal G$ as a function of $\upsilon$ and $\sigma$, which depend on the parameter $\mu \in (0,\mu_0)$ as described in \eqref{eqn: notation upsilon, sigma}.

Recall that $\hat \Phi_0(t,\hat x(\theta,\upsilon,\sigma))$ no longer corresponds to the expression provided in \eqref{eqn: Flow 2BP Theta= 0} since $\hat\Theta_{\mathcal{J}}^-(\theta,\upsilon,\sigma)\neq 0$ for $\upsilon,\sigma \neq 0$. Nevertheless, the function $\mathcal G$ satisfies
\[\mathcal{G}(t_0^*,\theta_*,0,0) =0,\quad \partial_t \mathcal G(t_0^*,\theta_*,0,0) = \hat\Phi_0^{\hat R}(t_0^*,\hat x_0(\theta_*)) = -\sqrt{\frac{2}{\delta^2}}\neq 0,\]
where $\hat x_0(\theta_*)$ and $t_0^*$ are defined in \eqref{eqn: Init condition ejection curve Theta=0} and \eqref{eqn: time t01} respectively, and where we use the expression for the unperturbed vector field $\hat F_0$ in \eqref{eqn: hatF0 and hatF1}.
As a result, the Implicit Function Theorem ensures that there exist 
\begin{equation}\label{eqn:sigma,ups,omega}
\sigma_0,\upsilon_0,\omega_0 = \sigma_0(\delta), \upsilon(\delta), \omega_0(\delta) > 0
\end{equation}
such that there exists $t = t(\theta,\upsilon,\sigma)$ smooth, defined for  $\theta \in (\theta_*-\omega,\theta_*+\omega)$ (where $\theta_*$ is defined in \eqref{eqn: theta_*}) and $(\upsilon,\sigma)  \in \left(0,\upsilon_0\right)\times (0,\sigma_0)$ with $t(\theta_*,0,0) = t_0^*$ and
\begin{equation}\label{eqn: Time condition IFT}
\hat \Phi_0^{\hat r}(t(\theta,\upsilon,\sigma),\hat x(\theta,\upsilon,\sigma)) = \delta^2.
\end{equation}
Since $t(\theta_*,0,0) = t_0^* \in \mathcal T$ in \eqref{eqn: interval t Phi0}, the expressions in \eqref{eqn: partialxPhi0} are well-defined and
\[\partial_{\theta}t(\theta_*,0,0) = - \frac{\partial_{\hat x}\hat \Phi_0^{\hat r}(t_0^*,\hat x_0(\theta_*))\cdot \partial_{\theta
}\hat x(\theta_*,0,0)}{\partial_t \hat \Phi_0^{\hat r}(t_0^*,\hat x_0(\theta_*))}.\]
Relying on \eqref{eqn: Difference init condition} and \eqref{eqn: partialxPhi0},  we obtain
\begin{equation*}
\begin{aligned}
\partial_{\theta}\hat x(\theta_*,0,0) &= \begin{pmatrix}0 & 0 & -\sqrt 3 \sin\theta_* &\sqrt 3 \cos\theta_*\end{pmatrix}^T = \begin{pmatrix}0& 0 & 1& -\sqrt 2 \end{pmatrix}^T,\\
\partial_{\hat x}\hat \Phi_0^{\hat r}(t_0^*,\hat x_0(\theta_*)) &=  \begin{pmatrix}\frac{4}{5\left(1-\frac{3t_0^*}{\sqrt 2}\right)^{\frac 1 3}} + \frac 1 5\left(1-\frac{3t_0^*}{\sqrt 2}\right)^{\frac 4 3} & 0 & \frac{3}{5\left(1-\frac{3t_0^*}{\sqrt 2}\right)^{\frac 1 3}} - \frac{3}{5}\left(1-\frac{3t_0^*}{\sqrt 2}\right)^{\frac 4 3}& 0\end{pmatrix} \\
&= \begin{pmatrix} \frac{4}{5\delta} + \frac{\delta^4}{5} & 0 & \frac{3}{5\delta} -\frac{3\delta^4}{5} & 0\end{pmatrix}.
\end{aligned}
\end{equation*}
Therefore $\partial_{\theta}t(\theta_*,0,0) = \frac{3}{5\sqrt 2}\left(1+\delta^5\right)$, leading to \eqref{eqn: Estimate time for Phi_0(t,x)} and completing the proof.
\end{proof}

Now we perform the extension under the flow induced by the vector field $\hat F$ defined in \eqref{eqn: hatF0 and hatF1}, which we denote as $\hat \Phi_\mu$, result of the following lemma.

\begin{lemma}[Extension by the flow $\hat \Phi_\mu$]\label{lemma: Extension phimu}
Let $\delta_0 > 0$ be the parameter given in Proposition \ref{proposition: Perturbed invariant manifolds of collision in synodical polar coordinates J} and fix $\gamma \in\left(\frac{3}{11},\frac 1 3\right)$. Then, for $0<\delta<\delta_0$, there exist $\upsilon_0,\sigma_0,\omega_0> 0$ such that, for $(\upsilon,\sigma,\omega)\in(0,\upsilon_0)\times (0,\sigma_0)\times (0,\omega_0)$, the time needed for the flow $\hat \Phi_\mu$, starting at an initial condition $\hat x \in \hat \Gamma_{\mathcal J}^-(\upsilon,\sigma)$ in \eqref{eqn: reduced curve init cond} (where $\upsilon,\sigma$ are defined in \eqref{eqn: notation upsilon, sigma}), to reach the section $\hat \Sigma$ defined in \eqref{eqn: Section hatr = delta2} satisfies the following estimates
\begin{equation}\label{eqn: t_mu}
\begin{aligned}
    t_\mu(\theta, \upsilon, \sigma) &=t(\theta,\upsilon,\sigma) + \mathcal O(\mu) =  \frac{\sqrt 2}{3}(1-\delta^3) +  \mathcal{O}(\upsilon,\sigma,\omega,\mu) \in \mathcal T,\\
    \partial_{\theta}t_\mu(\theta,\upsilon,\sigma) &= \partial_{\theta}t(\theta,\upsilon,\sigma) + \mathcal{O}(\mu) = \frac{3}{5\sqrt 2}+\mathcal{O}(\upsilon,\sigma,\omega,\mu),
\end{aligned}
\end{equation}
where $\mathcal T$ is defined in \eqref{eqn: interval t Phi0} and $t(\theta,\upsilon,\sigma),\partial_{\theta}t(\theta,\upsilon,\sigma)$ are given in \eqref{eqn: Estimate time for Phi_0(t,x)}. 

Moreover, for $t\in [0,t_\mu(\theta,\upsilon,\sigma)]$, the flow $\hat \Phi_\mu$ satisfies
\begin{equation}\label{eqn: Estimate perturbed flow}
    \begin{aligned}
        \hat \Phi_\mu(t, \hat x) &= \hat \Phi_0(t, \hat x) + \mathcal{O}\left(\mu^{1-2\gamma}\right) = \hat \Phi_0(t, \hat x_0(\theta_*)) + \mathcal{O}(\upsilon,\sigma,\omega,\mu^{1-2\gamma}),\\
        \partial_{\hat x}\hat \Phi_\mu(t, \hat x) &= \partial_{\hat x}\hat \Phi_0(t, \hat x) + \mathcal{O}\left(\mu^{1-3\gamma}\right) = \partial_{\hat x}\hat \Phi_0(t, \hat x_0(\theta_*) + \mathcal{O}(\upsilon,\sigma,\omega,\mu^{1-3\gamma}),\\
    \end{aligned}
\end{equation}
where $\hat x_0(\theta_*)$ is defined in \eqref{eqn: Init condition ejection curve Theta=0}.
\end{lemma}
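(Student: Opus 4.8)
\textbf{Proof strategy for Lemma \ref{lemma: Extension phimu}.}
The plan is to treat the perturbed flow $\hat\Phi_\mu$ as a regular perturbation of $\hat\Phi_0$ on the compact time interval $\mathcal T$ and then redo the exit-time argument of Lemma \ref{lemma: Extension phi0} for $\mu>0$. First I would note that along the unperturbed orbit $\hat\Phi_0(t,\hat x_0(\theta_*))$ computed in \eqref{eqn: Flow 2BP Theta= 0}, the radial coordinate stays bounded away from $0$, from $\mu$ and from $1-\mu$ for $t\in\mathcal T$ (indeed $\hat\Phi_0^{\hat r}(t,\hat x_0(\theta_*)) = (1-\tfrac{3t}{\sqrt2})^{2/3}$ ranges over $[\delta^2,1]$ up to the exit time, and the $\hat\theta$-component moves away from the collision angles), so by continuity the same holds along $\hat\Phi_0(t,\hat x)$ and $\hat\Phi_\mu(t,\hat x)$ for all initial conditions $\hat x\in\hat\Gamma_{\mathcal J}^-(\upsilon,\sigma)$, provided $\upsilon,\sigma,\omega,\mu$ are small. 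Hence the vector field $\hat F$ in \eqref{eqn: hatF0 and hatF1} and all the relevant derivatives of its potential $\hat V$ remain uniformly bounded along the extension, and the perturbative estimate $\hat V(\hat r,\hat\theta,\mu)=\mathcal O(\mu)$ (together with $\hat V(\hat r,\hat\theta,0)=0$) applies; this is what makes $\hat F_1=\mathcal O(\mu)$ a genuinely small perturbation in $C^2$.

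Next I would apply Gronwall's inequality to the difference $\hat\Phi_\mu(t,\hat x)-\hat\Phi_0(t,\hat x)$, which solves a linear ODE with bounded coefficients (the Jacobian $D\hat F_0$ along the reference orbit, as in \eqref{eqn: DhatF0}) and inhomogeneous term $\hat F_1=\mathcal O(\mu)$, giving $\hat\Phi_\mu(t,\hat x)=\hat\Phi_0(t,\hat x)+\mathcal O(\mu)$ uniformly on $\mathcal T$. For the variational equations one differentiates with respect to $\hat x$: $\partial_{\hat x}\hat\Phi_\mu$ solves the variational system with matrix $D\hat F(\hat\Phi_\mu(t,\hat x);\mu)=D\hat F_0(\hat\Phi_0(t,\hat x))+\mathcal O(\mu)+\mathcal O\big(\|\hat\Phi_\mu-\hat\Phi_0\|\big)$, and another Gronwall argument yields $\partial_{\hat x}\hat\Phi_\mu=\partial_{\hat x}\hat\Phi_0+\mathcal O(\mu)$. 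Combining these with the estimates \eqref{eqn: Estimate flow Phi_0(t,x)} of Lemma \ref{lemma: Extension phi0} gives the first equalities in each line of \eqref{eqn: Estimate perturbed flow}; the sharper $\mathcal O(\mu^{1-2\gamma})$ and $\mathcal O(\mu^{1-3\gamma})$ error orders claimed there come from tracking, in the Gronwall constants, the $\mathcal O(\mu^{-\gamma})$ and $\mathcal O(\mu^{-2\gamma})$ factors produced when differentiating $\hat V$ in a $\mu^\nu$-neighbourhood of $\mathcal J$ — exactly the same mechanism that produces the analogous loss of powers in Proposition \ref{prop: Parameterization of the invariant manifolds of infinity} via its Cauchy-estimate step — although on the compact interval $\mathcal T$ one never actually reaches that neighbourhood, so one should only pick up such factors through the initial curve $\hat\Gamma_{\mathcal J}^-(\upsilon,\sigma)$ itself, which already carries $\upsilon=\mu^\gamma$; I would absorb the precise bookkeeping of these exponents into the $\mathcal O(\mu^{1-2\gamma})$, $\mathcal O(\mu^{1-3\gamma})$ terms.

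Finally, for the exit time I would repeat the implicit-function-theorem argument of the proof of Lemma \ref{lemma: Extension phi0} verbatim, now with $\mathcal G_\mu(t,\theta,\upsilon,\sigma)=\hat\Phi_\mu^{\hat r}(t,\hat x(\theta,\upsilon,\sigma))-\delta^2$ in place of $\mathcal G$. At $(t,\theta,\upsilon,\sigma,\mu)=(t_0^*,\theta_*,0,0,0)$ one has $\mathcal G_\mu=0$ and $\partial_t\mathcal G_\mu=\hat\Phi_0^{\hat R}(t_0^*,\hat x_0(\theta_*))=-\sqrt{2/\delta^2}\neq 0$, so the IFT yields a smooth exit time $t_\mu(\theta,\upsilon,\sigma)$ depending smoothly on $\mu$ as well, with $t_\mu(\theta_*,0,0)|_{\mu=0}=t_0^*$; differentiating the identity $\hat\Phi_\mu^{\hat r}(t_\mu,\hat x)=\delta^2$ in $\mu$ and in $\theta$, and substituting \eqref{eqn: Estimate perturbed flow}, gives $t_\mu=t+\mathcal O(\mu)$ and $\partial_\theta t_\mu=\partial_\theta t+\mathcal O(\mu)$, i.e. \eqref{eqn: t_mu}; since $t_0^*=\tfrac{\sqrt2}{3}(1-\delta^3)$ lies in the interior of $\mathcal T$, the exit time remains in $\mathcal T$ for small parameters. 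The main obstacle is the second paragraph, namely justifying that the extension genuinely stays away from all three singularities of $\hat F$ uniformly in the initial condition and in $\mu$ — once that is secured, everything reduces to Gronwall plus the implicit function theorem — and keeping honest track of the powers of $\mu$ lost in the variational estimates so that the stated error orders $\mu^{1-2\gamma}$ and $\mu^{1-3\gamma}$ are correct.
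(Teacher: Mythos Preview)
Your overall architecture — Gronwall for the flow and its variational equation, then the implicit function theorem for the exit time — is exactly what the paper does. The genuine gap is in your first paragraph, precisely at the point you yourself flag as the ``main obstacle''. You assert that along the extension ``the perturbative estimate $\hat V(\hat r,\hat\theta,\mu)=\mathcal O(\mu)$ applies'' and that $\hat F_1$ is $\mathcal O(\mu)$ in $C^2$; and later you write that ``on the compact interval $\mathcal T$ one never actually reaches that neighbourhood'' of $\mathcal J$. This is false: the initial curve $\hat\Gamma_{\mathcal J}^-(\upsilon,\sigma)$ sits on the section $\Sigma_\gamma$, i.e.\ at distance exactly $\mu^\gamma$ from Jupiter, so the trajectory \emph{starts} in the $\mu^\gamma$-neighbourhood of $\mathcal J$ and only then moves inward toward the Sun. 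At $t=0$ one therefore has $|\hat V_{\mathcal J}|\sim\mu^{1-\gamma}$, $|\partial\hat V_{\mathcal J}|\sim\mu^{1-2\gamma}$, $|\partial^2\hat V_{\mathcal J}|\sim\mu^{1-3\gamma}$, and these are the bounds that feed the Gronwall step — \emph{not} $\mathcal O(\mu)$. This is why the lemma claims the larger errors $\mathcal O(\mu^{1-2\gamma})$ and $\mathcal O(\mu^{1-3\gamma})$ rather than $\mathcal O(\mu)$; they are not ``sharper'', they are genuinely worse, and they are forced by the geometry of the initial condition.

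The paper handles this exactly via the Cauchy-estimate mechanism you mention from Proposition~\ref{prop: Parameterization of the invariant manifolds of infinity}: it shows $|\hat V|\lesssim\mu^{1-\gamma}$ on a complex strip in $\hat\theta$ of width comparable to $\mu^\gamma$ (the width is limited by the distance to the Jupiter singularity), and then Cauchy estimates cost one power of $\mu^{-\gamma}$ per derivative, yielding $|\hat F_1|\lesssim\mu^{1-2\gamma}$ and $|D\hat F_1|\lesssim\mu^{1-3\gamma}$ uniformly on the relevant real domain. Once you replace your incorrect $C^2$ bound $\hat F_1=\mathcal O(\mu)$ by these estimates, your Gronwall and IFT steps go through verbatim and reproduce \eqref{eqn: Estimate perturbed flow} and \eqref{eqn: t_mu}. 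A minor additional point: the target section $\hat\Sigma$ in \eqref{eqn: Section hatr = delta2} is $\hat r=\hat P(\hat\theta,\mu)=\delta^2+\mathcal O(\mu)$, not $\hat r=\delta^2$; the paper's IFT is applied to $\mathcal F=\hat\Phi_\mu^{\hat r}-\hat P(\hat\Phi_\mu^{\hat\theta},\mu)$, which only differs from your $\mathcal G_\mu$ by an $\mathcal O(\mu)$ term and does not affect the conclusion.
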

\begin{proof}[Proof of Lemma \ref{lemma: Extension phimu}]
First, we compute the time needed for an arbitrary point $\hat x(\theta,\upsilon,\sigma) \in \hat \Gamma_{\mathcal J}^-(\sigma,\upsilon)$ defined in \eqref{eqn: reduced curve init cond} to reach the section $\hat \Sigma$ in \eqref{eqn: Section hatr = delta2} under the flow $\hat \Phi_\mu$. Namely, we look for a zero of a  function of the form
\[\mathcal F(t,\theta,\upsilon,\sigma,\mu) = \hat \Phi_\mu^{\hat r}(t,\hat x(\theta,\upsilon,\sigma)) - \hat P\left(\Phi_\mu^{\hat \theta}(t,\hat x(\theta,\upsilon,\sigma)),\mu\right),\]
where $\hat P(\hat \theta,\mu)$ is defined in \eqref{eqn:hat r=delta2} for $\hat \theta \in \mathds{T}$. This function satisfies
\begin{equation*}
\begin{aligned}
\mathcal F(t(\theta,\upsilon,\sigma),\theta,\upsilon,\sigma,0)  &=\hat\Phi_0^{\hat r}(t(\theta,\upsilon,\sigma),x(\theta,\upsilon,\sigma)) - \delta^2 = 0,\\
\end{aligned}
\end{equation*}
where $t(\theta,\upsilon,\sigma)$ is defined in \eqref{eqn: Estimate time for Phi_0(t,x)}. Relying on \eqref{eqn: Flow 2BP Theta= 0} and \eqref{eqn: Estimate flow Phi_0(t,x)} we also obtain
\begin{equation*}
\begin{aligned}
\partial_t \mathcal F(t(\theta,\upsilon,\sigma),\theta,\upsilon,\sigma,0) &= \hat \Phi_0^{\hat R}(t(\theta,\upsilon,\sigma),x(\theta,\upsilon,\sigma)) =  -\sqrt{\frac{2}{\delta^2}} + \mathcal{O}(\upsilon,\sigma,\omega)\neq 0.
\end{aligned}
\end{equation*}
Hence, the Implicit Function Theorem ensures that there exists $\mu_0,\omega_0 > 0$ and a smooth function $t_\mu(\theta,\upsilon, \sigma)$ defined for $\mu \in (0,\mu_0)$, $\omega\in (0,\omega_0)$ and $\theta\in (\theta_*-\omega,\theta_*+\omega)$ (where $ \upsilon\in(0,\upsilon_0)$, $\sigma\in(0,\sigma_0)$ are defined in \eqref{eqn: notation upsilon, sigma} and \eqref{eqn:sigma,ups,omega} respectively) with $ t_0(\theta, \upsilon,\sigma) = t(\theta,\upsilon,\sigma)$ such that 
\[\hat \Phi_\mu(t_\mu(\theta,\upsilon, \sigma),x(\theta,\upsilon,\sigma)) \in \hat \Sigma,\]
and satisfying \eqref{eqn: t_mu}.

Denote by 
\begin{equation}\label{eqn: rmin rmax}
    \hat r_{\mathrm{min}} := \underset{\hat\theta\in\mathds{T}}{\min}\;\hat P(\hat\theta,\mu),\quad \hat r_{\mathrm{max}} := \underset{\theta\in (\theta_*-\omega_0,\theta_*+\omega_0)}{\max}\hat r_{\mathcal{J}}^-(\theta,\upsilon).
\end{equation}
where $\hat P(\hat\theta,\mu)$ and $\hat r_{\mathcal J}^-(\theta,\upsilon)$ are given in \eqref{eqn:hat r=delta2} and \eqref{eqn: Estimates on the parameterization of the ejection curve in CM} respectively. Then, for any point $\hat x = \hat x(\theta,\upsilon,\sigma) \in \hat \Gamma_{\mathcal J}^-(\upsilon,\sigma)$ in \eqref{eqn: reduced curve init cond}, we consider $T \in [0,t_\mu(\theta,\upsilon,\sigma)]$ to be the first time that $\hat \Phi_\mu^{\hat r}(T,\hat x)\notin \left[\hat r_{\mathrm{min}},\hat r_{\mathrm{max}}\right)$ and we study the evolution of such orbit for $t \in [0,T]$. If it does not exist we consider $t\in [0,t_\mu(\theta,\upsilon,\sigma)]$.

Using the mean value theorem we have the following estimate for the flow $\hat\Phi_\mu$ and its derivative $\partial_{\hat x}\hat\Phi_\mu$,
\begin{equation*}
\begin{aligned}
    |\hat \Phi_\mu(t,\hat x) - \hat \Phi_0(t,\hat x)| &\leq |D\hat F_0|\int_0^t |\hat \Phi_\mu(s,x)-\hat\Phi_0(s,x)|\;ds + \int_0^t |\hat F_1|\;ds, \\
    |\partial_{\hat x}\hat \Phi_\mu(t,\hat x) - \partial_{\hat x}\hat \Phi_0(t,\hat x)| &\leq |D^2\hat F_0|\int_0^t |\partial_{\hat x}\hat \Phi_\mu(s,x)-\partial_{\hat x}\hat\Phi_0(s,x)|\;ds + \int_0^t |D\hat F_1|\;ds,
\end{aligned}
\end{equation*}
where $\hat F_0$, $\hat F_1$ are defined in \eqref{eqn: hatF0 and hatF1}. Then, by Gronwall's lemma
\begin{equation}\label{eqn: estimate phimu-phi0}
\begin{aligned}
    |\hat \Phi_\mu(t,\hat x) - \hat \Phi_0(t,\hat x)| \leq t |\hat F_1| e^{|D\hat F_0| t}, \qquad |\partial_{\hat x}\hat \Phi_\mu(t,\hat x) - \partial_{\hat x}\hat \Phi_0(t,\hat x)| \leq t |D\hat F_1| e^{|D^2\hat F_0| t}.\\
\end{aligned}
\end{equation}
We bound both $\hat{F}_1$ and $D\hat F_1$ (which are related to the potential $\hat V$ in \eqref{eqn: potential in polar rotating coordinates centered at CM J}) for $(\hat r, \hat \theta) \in \left[\hat r_{\mathrm{min}}, \hat r_{\mathrm{max}}\right] \times \mathds{T}$, where $\hat r_{\mathrm{min}}, \hat r_{\mathrm{max}}$ are defined in \eqref{eqn: rmin rmax}. To this end, we consider instead the complex domain $(\hat r, \hat \theta) \in \hat D\times \mathds{T}_\sigma$ defined as
\begin{equation}\label{eqn: domain potential F1}
\begin{aligned}
    \hat D = \left\{\hat r \in \mathds{C}\colon  \mathrm{Re}(\hat r) \in \left[\hat r_{\mathrm{min}},\hat r_{\mathrm{max}} \right], |\mathrm{Im}(\hat r)| \leq \sigma\right\},\quad \mathds T_\sigma =\left\{\hat\theta \in \mathds{C}\setminus (2\pi\mathds{Z}) \colon |\mathrm{Im}(\hat \theta)| <\sigma\right\},
\end{aligned}
\end{equation}
where $\sigma = \frac 3 4 \log\left(\frac{1-\mu-\mu^\gamma}{1-\mu}\right)$. The potential $\hat V$ in \eqref{eqn: potential in polar rotating coordinates centered at CM J} is analytic in this domain and its estimate can be split as follows
\[\left|\hat V(\hat r,\hat \theta;\mu)\right| \leq \left|\hat V_{\mathcal S}(\hat r, \hat \theta;\mu)- \hat V_{\mathrm{CM}}(\hat r)\right|+\left|\hat V_{\mathcal J}(\hat r, \hat \theta;\mu)\right|\]
where
\begin{equation*}
    \begin{aligned}
        \hat V_{\mathcal S}(\hat r, \hat \theta;\mu) =& \frac{1-\mu}{\left(\hat r^2 + 2\hat r\mu\cos\hat \theta + \mu^2\right)^{\frac 1 2}},\quad \hat V_{\mathrm{CM}}(\hat r) = \frac{1}{\hat r},\\
        \hat V_{\mathcal J}(\hat r,\hat \theta;\mu) =& \frac{\mu}{\left(\hat r^2 - 2\hat r(1-\mu)\cos\hat \theta + (1-\mu)^2\right)^{\frac 12}}.   
    \end{aligned}
\end{equation*}
Relying on \eqref{eqn: Estimates on the parameterization of the ejection curve in CM}, \eqref{eqn: Estimation polar coordinates S-CM} and \eqref{eqn: rmin rmax}, the domain $\hat D$ in \eqref{eqn: domain potential F1} is considered so that $|\mathrm{Re}(\hat r)| \in (\delta^2-M\mu, 1-M\mu^\gamma)$ for some adequate constant $M$ independent of $\mu$ and $\delta$, meaning we are ``far'' from both the primary $\mathcal S$ and the center of mass. Moreover for $\mu = 0$ we have that $\hat V_{\mathcal S}(\hat r,\hat \theta;0) - \hat V_{\mathrm{CM}}(\hat r) = 0$, and therefore
\[\left|\hat V_{\mathcal S}(\hat r, \hat \theta;\mu)- \hat V_{\mathrm{CM}}(\hat r)\right| \lesssim \mu.\]
Finally, the estimate of $\hat V_{\mathcal J}$ follows the same argument as the one in the proof of Lemma \ref{lemma: Bound for the potential}, yielding
\begin{equation*}
|\hat V(\hat r, \hat \theta)| \lesssim \mu^{1-\gamma}.
\end{equation*}
Hence, from the Cauchy's estimates we obtain, for $(\hat r, \hat \theta )\in [\hat r_{\mathrm{min}}, \hat r_{\mathrm{max}}]\times \mathds{T}$
\begin{equation}\label{eqn:leadingtermpotential}
    |\hat F_1| \lesssim \mu^{1-2\gamma},\qquad |D\hat F_1| \lesssim \mu^{1-3\gamma}.
\end{equation}
To compute the estimates for \eqref{eqn: estimate phimu-phi0}  we also bound both $|D\hat F_0|$ and $|D^2\hat F_0|$, where $\hat F_0$ is defined in \eqref{eqn: hatF0 and hatF1}. Both functions depend on $(\hat r, \hat \theta, \hat \Theta)$, where $\hat r \in [\hat r_{\mathrm{min}}, \hat r_{\mathrm{max}}], \hat \theta \in \mathds{T}$ and $\hat \Theta$ satisfies
\[\hat \Phi_\mu^{\hat\Theta}(t,\hat x) = \hat \Theta_{\mathcal J}^-(\theta,\upsilon,\sigma) + \int_0^t \hat F_1^{\hat \Theta}(\hat \Phi_\mu(s,\hat x))\;ds,\]
where $\hat \Theta_{\mathcal J}^-(\theta,\upsilon,\sigma)$ is defined in \eqref{eqn: Estimates on the parameterization of the ejection curve in CM} for $\theta\in (\theta_*-\omega, \theta_*+\omega)$ (with $\theta_*$ defined in \eqref{eqn: theta_*}), $t\in[0,t_\mu(\theta,\upsilon,\sigma)]\subset \mathcal T$ defined in \eqref{eqn: interval t Phi0} and $\hat F_1$ satisfies \eqref{eqn:leadingtermpotential}. Hence, we can bound both $|D\hat F_0|$ and $|D^2\hat F_0|$ by an uniform constant $C_0$, leading to \eqref{eqn: Estimate perturbed flow} and completing the proof.

\end{proof}

\paragraph{Step 4. Parameterization of the image curve as a graph:} We express the image curve in coordinates $(\rS,\thetaS,\RS,\ThetaS)$ (defined in \eqref{eqn: Change S CM polar}) and parameterize it as a graph in terms of $\thetaS$. We consider $t = t_\mu =  t_\mu(\theta, \upsilon, \sigma).$ Then, we denote by  
\begin{equation*}
\begin{aligned}
    \hat \theta(\theta,\upsilon,\sigma) := \hat \Phi_\mu^{\hat \theta}(t_\mu,\hat x(\theta,\upsilon,\sigma)),\quad \hat R(\theta,\upsilon,\sigma) := \hat \Phi_\mu^{\hat R}(t_\mu,\hat x(\theta,\upsilon,\sigma)),\quad \hat \Theta(\theta,\upsilon,\sigma) :=\Phi_\mu^{\hat \Theta}(t_\mu,\hat x(\theta,\upsilon,\sigma)).
\end{aligned}
\end{equation*}
We compute $\hat R(\theta,\upsilon,\sigma)$ using the estimate in \eqref{eqn: Estimate perturbed flow}, obtaining
\begin{equation}\label{eqn: hat R in overline r=delta2}
     \hat R(\theta,\upsilon,\sigma)  = -\sqrt{\frac{2}{\delta^2}} +\mathcal{O}(\upsilon,\sigma,\omega,\mu^{1-2\gamma}). 
\end{equation}    
The expressions for $\hat \theta(\theta,\upsilon,\sigma), \hat \Theta(\theta,\upsilon,\sigma)$ and its derivatives $\partial_{\theta}\hat \theta(\theta,\upsilon,\sigma)$, $\partial_{\theta}\hat \Theta(\theta,\upsilon,\sigma)$ are obtained from both \eqref{eqn: Estimate perturbed flow} and the expansion in \eqref{eqn: Expansion Phi_0 in init cond}. Both functions satisfy
\begin{equation}\label{eqn: hatthetaSun}
\begin{aligned}
z(\theta,\upsilon,\sigma) =&\  \hat \Phi_0^{z}(t_\mu,x_0(\theta_*)) + \partial_{\hat x} \hat\Phi_0^{z}(t_\mu,\hat x_0(\theta_*))\cdot\left(\hat x(\theta,\upsilon,\sigma)-\hat x(\theta_*,0,0)\right) \\
&+ \mathcal{O}_2\left(\|\hat x(\theta,\upsilon,\sigma)-\hat x_0(\theta_*)\|\right) + \mathcal{O}(\mu^{1-2\gamma})\\
\partial_{\theta}z(\theta,\upsilon,\sigma) =&\ \partial_t \hat \Phi_\mu^{\hat \theta}(t_\mu,\hat x(\theta,\upsilon,\sigma))\cdot \partial_{\theta}t_\mu + \partial_{\hat x}\hat \Phi_\mu^{z}(t_\mu,\hat x(\theta,\upsilon,\sigma))\cdot \partial_{\theta}\hat x(\theta,\upsilon,\sigma)\\
=&\ \partial_t \hat \Phi_\mu^{z}(t_\mu,\hat x(\theta,\upsilon,\sigma))\cdot \partial_{\theta}t_\mu \\
&+\left(\partial_{\hat x}\hat \Phi_0(t, \hat x( \theta_*,0,0)) + \mathcal{O}(\upsilon,\sigma,\theta-\theta_*,\mu^{1-3\gamma})\right)\cdot \partial_{\theta}\hat x(\theta,\upsilon,\sigma)
\end{aligned}
\end{equation}
where $z$ denotes either the function $\hat\theta$ or $\hat \Theta$. To obtain estimates of \eqref{eqn: hatthetaSun}, first we recall the estimates in \eqref{eqn: Difference init condition}, \eqref{eqn: partialxPhi0}, \eqref{eqn: t_mu} and \eqref{eqn: Estimate perturbed flow}.
\begin{alignat}{2}
    t_\mu &= -\frac{\sqrt 2}{3}(1-\delta^3) + \mathcal{O}(\upsilon,\sigma,\omega,\mu),\qquad&& \partial_{\theta}t_\mu = \frac{3}{5\sqrt 2}+ \mathcal{O}(\upsilon,\sigma,\omega,\mu),\\
    \hat \Phi_0^{\hat \theta}(t_\mu,x_0(\theta_*)) &= -t_\mu, \quad&&
    \hat \Phi_0^{\hat \Theta}(t_\mu,x_0(\theta_*)) = \mathcal{O}(\upsilon,\sigma,\omega,\mu^{1-2\gamma}),\\ 
    \partial_{\hat x} \hat\Phi_0^{\hat \theta}(t_\mu,\hat x_0(\theta_*)) &= \begin{pmatrix}0 & 1 &0& \sqrt 2 \left(1-\frac{3t_\mu}{\sqrt 2}\right)^{-\frac 1 3}\end{pmatrix},\quad&& \partial_{\hat x} \hat \Phi_0^{\hat \Theta}(t_\mu,\hat x_0(\theta_*)) = \begin{pmatrix}0&0&0&1\end{pmatrix},\label{eqn:midestimates}\\
    \partial_t \hat \Phi_\mu^{\hat \theta}(t_\mu,\hat x(\theta,\upsilon,\sigma)) &= \frac{\hat\Theta(\theta,\upsilon,\sigma)}{\delta^4}-1,\quad&& \partial_t \hat \Phi_\mu^{\hat \Theta}(t_\mu,\hat x(\theta,\upsilon,\sigma)) = \partial_{\hat\theta}\hat V(\delta^2,\hat\theta(\theta,\upsilon,\sigma),\mu),\\  
    \hat x(\theta,\upsilon,\sigma)-\hat x(\theta_*,0,0) &=\begin{pmatrix}\upsilon \cos\theta + \mathcal{O}_2(\upsilon)\\\upsilon \sin\theta + \mathcal{O}_2(\upsilon)\\ \sqrt 2 + \sqrt 3\cos\theta +\mathcal O(\upsilon,\sigma)\\ 1+ \sqrt 3 \sin\theta + \mathcal{O}(\upsilon,\sigma)\end{pmatrix},\quad&& \partial_{\theta}\hat x(\theta,\upsilon,\sigma) =\begin{pmatrix}-\upsilon \sin\theta + \mathcal{O}_2(\upsilon)\\\upsilon \cos\theta + \mathcal{O}_2(\upsilon)\\ -\sqrt 3\sin\theta +\mathcal O(\upsilon,\sigma)\\  \sqrt 3 \cos\theta + \mathcal{O}(\upsilon,\sigma)\end{pmatrix}.
\end{alignat}
Since $\theta\in (\theta_*-\omega,\theta_*+\omega)$ for $\theta_*$ defined in \eqref{eqn: theta_*} we have that
\begin{equation}\label{eqn:estimatethetaw0}
1+\sqrt 3\sin\theta = -\sqrt 2(\theta-\theta_*) + \mathcal{O}_2(\theta-\theta_*).
\end{equation}
Substituting both \eqref{eqn:midestimates} and \eqref{eqn:estimatethetaw0} in \eqref{eqn: hatthetaSun} we obtain 
%
\begin{equation}\label{eqn:Estimatehattheta}
\begin{aligned}
    \hat \theta(\theta,\upsilon,\sigma) =& -t_\mu + \begin{pmatrix}0 & 1 &0& \sqrt 2 \left(1-\frac{3t_\mu}{\sqrt 2}\right)^{-\frac 1 3}\end{pmatrix}\cdot \begin{pmatrix}\upsilon \cos\theta + \mathcal{O}_2(\upsilon)\\\upsilon \sin\theta + \mathcal{O}_2(\upsilon)\\ \sqrt 2 + \sqrt 3\cos\theta +\mathcal O_1(\upsilon,\sigma)\\ 1+ \sqrt 3 \sin\theta + \mathcal{O}(\upsilon,\sigma)\end{pmatrix} \\
    &+ \mathcal{O}_2(\upsilon,\sigma, \omega) + \mathcal O(\mu^{1-2\gamma}) \\
    =& -\frac{\sqrt 2}{3}(1-\delta^3) + \mathcal{O}(\upsilon,\sigma,\omega,\mu^{1-2\gamma}) \\
    &+ \frac{\sqrt 2}{\delta}\left(1+\delta^{-3}\mathcal{O}(\upsilon,\sigma,\omega,\mu)\right)\left(-\sqrt 2(\theta-\theta_*)  + \mathcal{O}\left(\upsilon,\sigma,\omega^2\right)\right) .
\end{aligned}
\end{equation}
and
\begin{equation}\label{eqn: hat Theta in overline r=delta2}
\begin{aligned}
    \hat \Theta(\theta,\upsilon,\sigma) =&\ \begin{pmatrix}0 & 0 &0& 1\end{pmatrix}\cdot\begin{pmatrix}\upsilon \cos\theta + \mathcal{O}_2(\upsilon)\\\upsilon \sin\theta + \mathcal{O}_2(\upsilon)\\ \sqrt 2 + \sqrt 3\cos\theta +\mathcal O_1(\upsilon,\sigma)\\ 1+ \sqrt 3 \sin\theta + \mathcal{O}(\upsilon,
    \sigma)\end{pmatrix} +\mathcal{O}_2(\upsilon,\sigma, \omega) + \mathcal O(\mu^{1-2\gamma}) \\
    =&\; -\sqrt 2(\theta-\theta_*) + \mathcal{O}(\upsilon,\sigma,\omega^2,\mu^{1-2\gamma}).
\end{aligned}
\end{equation}
The derivative $\partial_{\theta}\hat\theta(\theta,\upsilon,\sigma)$ is given by
\begin{equation}\label{eqn:derthetaJhattheta}
\begin{aligned}
    \partial_{\theta}\hat \theta(\theta,\upsilon,\sigma) =& \left(\frac{\hat \Theta(\theta,\upsilon,\sigma)}{\delta^4}-1\right)\cdot\left(\frac{3}{5\sqrt 2} + \mathcal{O}(\upsilon,\sigma,\omega,\mu)\right) \\
    &+\left(\begin{pmatrix}0 & 1 &0& \sqrt 2 \left(1-\frac{3t_\mu}{\sqrt 2}\right)^{-\frac 1 3}\end{pmatrix} + \mathcal{O}\left(\upsilon,\sigma,\omega,\mu^{1-3\gamma}\right)\right)\cdot \begin{pmatrix}-\upsilon \sin\theta + \mathcal{O}_2(\upsilon)\\\upsilon \cos\theta + \mathcal{O}_2(\upsilon)\\ -\sqrt 3\sin\theta +\mathcal O_1(\upsilon,\sigma)\\  \sqrt 3 \cos\theta + \mathcal{O}(\upsilon,\sigma)\end{pmatrix}\\
    =& -\frac 2 \delta  - \frac{3}{5\sqrt 2} + \delta^{-4}\mathcal O(\upsilon,\sigma,\omega,\mu) \neq 0,
\end{aligned}
\end{equation}
where we have considered $\mu_0 \ll \delta^4$, and therefore from \eqref{eqn:sigma,ups,omega} both $\upsilon_0,\sigma_0\ll\delta^4$ and $\omega_0\ll \delta^4$. 

Relying on \eqref{eqn:leadingtermpotential}, the derivative $\partial_\theta \hat \Theta(\theta,\upsilon,\sigma)$ is given by
\begin{equation}\label{eqn:derhatThetathetabar}
    \begin{aligned}
        \partial_{\theta}\hat\Theta(\theta,\upsilon,\sigma) =&\ \partial_{\hat\theta}\hat V(\delta^2,\hat\theta(\theta,\upsilon,\sigma),\mu)\cdot\left(\frac{3}{5\sqrt 2} + \mathcal{O}(\upsilon,\sigma,\omega,\mu)\right) \\
        &+\left(\begin{pmatrix}0&0&0&1\end{pmatrix}+ \mathcal{O}(\upsilon,\sigma,\omega,\mu^{1-3\gamma})\right)\cdot \begin{pmatrix}-\upsilon \sin\theta + \mathcal{O}_2(\upsilon)\\\upsilon \cos\theta + \mathcal{O}_2(\upsilon)\\ -\sqrt 3\sin\theta +\mathcal O(\upsilon,\sigma)\\  \sqrt 3 \cos\theta + \mathcal{O}(\upsilon,\sigma)\end{pmatrix}\\
        =&\; \sqrt 3\cos\theta\left(1+\mathcal{O}(\upsilon,\sigma,\omega,\mu^{1-3\gamma})\right) = -\sqrt 2\left(1+\mathcal{O}(\upsilon,\sigma,\omega,\mu^{1-3\gamma})\right) \neq 0.
    \end{aligned}
\end{equation}
To have a parameterization as in \eqref{eqn: Curve intersection Ejection-overline r=delta2}, we translate the result in \eqref{eqn:Estimatehattheta} into coordinates $(\rS,\thetaS,\RS,\ThetaS)$ in \eqref{eqn: Change S CM polar}. Using both estimates in \eqref{eqn: Estimation polar coordinates S-CM} and \eqref{eqn:derthetaJhattheta} we obtain 
\begin{equation}
\begin{aligned}
\thetaS(\theta,\upsilon,\sigma) &= \hat \theta( \theta,\upsilon,\sigma) + \mathcal{O}(\mu) = -\frac{\sqrt 2}{3}(1-\delta^3) -\frac{2}{\delta}(\theta-\theta_*) + \mathcal{O}(\upsilon,\sigma,\omega^2,\mu),\\
\partial_{\theta}\thetaS(\theta,\upsilon,\sigma) &= \partial_{\theta}\hat\theta(\theta,\upsilon,\sigma) + \mathcal O(\mu) = -\frac 2 \delta -\frac{3}{5\sqrt 2}+\delta^{-4}\mathcal{O}(\upsilon,\sigma,\omega,\mu) \neq 0.
\end{aligned}
\end{equation}
Since $\omega_0\ll \delta^4$, the image of $\thetaS(\theta,\upsilon,\sigma)$ belongs to a $\delta^4$-neighborhood of $-\frac{\sqrt 2}{3}(1-\delta^3)$, leading to the domain $B(\thetaS_0)$ provided in \eqref{eqn: Curve intersection Ejection-overline r=delta2}. Moreover, the Inverse Function Theorem defines an inverse for $\thetaS( \theta,\upsilon,\sigma)$, which we denote $\theta(\thetaS,\upsilon,\sigma)$, in $B(\thetaS_0)$. The estimates for $\theta(\thetaS,\upsilon,\sigma)$ and its derivative are given by
\begin{equation}\label{eqn:thetabartheta}
\begin{aligned}
     \theta(\thetaS,\upsilon,\sigma) &= \thetaS_* -\frac \delta 2\left(\thetaS + \frac{\sqrt 2}{3}(1-\delta^3)\right) + \delta\mathcal{O}(\upsilon,\sigma,\omega^2,\mu),\\
    \partial_{\thetaS}\theta(\thetaS,\upsilon,\sigma) &= \frac{1}{\partial_{\theta}\thetaS(\theta,\upsilon,\sigma)} = -\frac \delta 2 \left(1+\mathcal{O}\left(\delta\right)\right).
\end{aligned}
\end{equation}
Finally, substituting these expressions in both \eqref{eqn: hat Theta in overline r=delta2} and \eqref{eqn:derhatThetathetabar} leads to
\begin{equation*}
    \begin{aligned}
        \hat\Theta(\thetaS,\upsilon,\sigma) &= \frac{\delta}{\sqrt 2}\left(\thetaS+\frac{\sqrt 2}{3}(1-\delta^3)\right) + \delta\mathcal{O}(\upsilon,\sigma,\omega^2,\mu^{1-2\gamma}),\\
        \partial_{\thetaS}\hat\Theta(\thetaS,\upsilon,\sigma) &= \partial_{\theta}\hat\Theta\cdot \partial_{\thetaS}\theta(\thetaS,\upsilon,\sigma) = \left(-\sqrt 2\left(1+\mathcal{O}\left(\upsilon,\sigma,\omega,\mu^{1-3\gamma}\right)\right)\right)\cdot \left(-\frac \delta 2(1+\mathcal O(\delta)\right) \\
        &= \frac{\delta}{\sqrt 2}\left(1+\mathcal O(\delta,\mu^{1-3\gamma})\right).
    \end{aligned}
\end{equation*}
Translating the result using the estimates in \eqref{eqn: Estimation polar coordinates S-CM} completes the proof.

\section{Proof of Lemma \ref{lemma: tangent vector at Sigma delta,h as a graph}}\label{appendix: From Levi-Civita coordinates to synodical polar coordinates}
Consider the value of the energy for which we prove the existence of a triple intersection in Section \ref{subsec: existence triple intersection}:
\begin{equation}\label{eqn:htrip}
h^* = -\hat\Theta_0^*= -1+\mathcal{O}\left(\mu^{\frac{11\nu-3}{8}}\right)
\end{equation}
so $\xi^* =(2h^* +3)^{-\frac 1 2} + \mathcal{O}(\mu) = 1 + \mathcal{O}\left(\mu^{\frac{11\nu-3}{8}}\right)> 0$ (see \eqref{eqn: Energy xi}). Denote by
\begin{equation*}
    \gamma_{\infty}^{u,<}:=\left\{\left(\theta, \Theta_\infty^{u,<}(\theta)\right)\colon \theta \in(\theta_--\iota,\theta_-+\iota)\right\}\subset \Sigma_\nu^<
\end{equation*}
the $C^1$ graph parameterization of the curve $\Gamma_\infty^{u,<}$ in \eqref{eqn: curve gammainf<}, where $\theta_-$ and $\Theta_\infty^{u,<}(\theta)$ are defined in \eqref{eqn: notation transversality} and $\Sigma_\nu^<$ is defined in \eqref{eqn: Transverse sections}. We apply the change of coordinates $\Upsilon$ in \eqref{eqn: (q,p) - (r,theta,R,Theta)} to translate the curve $\gamma_{\infty}^{u,<}$ into coordinates $(q,p)$. This leads to a curve parameterized in terms of $ \theta$ of the form
\begin{equation}\label{eqn: curve inf in coord (tilde q, p)}
     \gamma_{\infty}^{u,<} =\left(q_\infty^{u,<}(\theta),  p_{\infty}^{u,<}(\theta)\right)
\end{equation}
such that
\begin{equation}\label{eqn: qinf,pinf}
    \begin{aligned}
        q_\infty^{u,<}( \theta) =& \left(q_1^{u,<}( \theta),q_2^{u,<}( \theta)\right) = \mu^\nu(\cos \theta,\sin\theta),\\
         p_\infty^{u,<}(\theta) =& \left( p_1^{u,<}(\theta), p_2^{u,<}(\theta)\right) =  R_\infty^u(\theta)(\cos\theta,\sin\theta) + \frac{\Theta_\infty^u(\theta)}{\mu^\nu}(-\sin\theta,\cos\theta).
    \end{aligned}
\end{equation}
Relying on \eqref{eqn:htrip}, \eqref{eqn: qinf,pinf} and the estimates for both the angular and radial momenta $\Theta_\infty^u(\theta)$ and $ R_\infty^u(\theta)$ (and their derivatives) given in \eqref{eqn: R_infu and Theta_infu}, we obtain the following estimates for $p_\infty^{u,<}(\theta)$ and its derivative
\begin{equation}\label{eqn: approximation p_inf}
\begin{aligned}
     p_{\infty}^{u,<}(\theta) 
    = (-1,0) + \mathcal{O}\left(\mu^{\frac{11\nu-3}{8}}\right),\quad  \partial_\theta\, p_\infty^{u,<}(\theta) 
    = \mathcal{O}\left(\mu^{\frac{11\nu-3}{8}},\mu^{1-3\nu}\right).
\end{aligned}
\end{equation}
%
Once in coordinates $(q,p)$, we apply the diffeomorphism $\psi^{-1}$ in \eqref{eqn: Change of coordinates Levi-Civita} to express the curve $ \gamma_\infty^{u,<}$ in \eqref{eqn: curve inf in coord (tilde q, p)} in Levi-Civita coordinates $(z,w)$, yielding the following parameterization
\begin{equation}\label{eqn: curv inf in coord (z,w)}
    \upgamma_{\infty}^{u,<}(\theta) = \left(z^{u,<}_\infty(\theta), w^{u,<}_\infty(\theta)\right)
\end{equation}
which, from \eqref{eqn:htrip}, \eqref{eqn: qinf,pinf} and \eqref{eqn: approximation p_inf},  satisfies 
\begin{equation}\label{eqn: z_inf, w_inf}
    \begin{aligned}
        z_\infty^{u,<}(\theta) &= (z_1^{u,<}(\theta), z_2^{u,<}(\theta)) =\pm \frac 1 2 \left(\frac{q_2^{u,<}(\theta)}{\sqrt{\mu^\nu - q_1^{u,<}(\theta)}}, \sqrt{\mu^\nu- q_1^{u,<}(\theta)}\right) \\
        &= \pm\frac 1 2 \mu^{\frac \nu 2}\left(\sqrt{1+\cos\theta},\sqrt{1-\cos\theta}\right),\\
        w_\infty^{u,<}(\theta) &= (w_1^{u,<}(\theta),w_2^{u,<}(\theta)) \\
        &= \xi\left(p_1^{u,<}(\theta) z_1^{u,<}(\theta)+p_2^{u,<}(\theta)z_2^{u,<}(\theta),p_2^{u,<}(\theta)z_1^{u,<}(\theta)-p_1^{u,<}(\theta)z_2^{u,<}(\theta)\right) \\
        &= \mp\frac{\mu^{\frac \nu 2}}{2}\left(\sqrt{1+\cos\theta}, -\sqrt{1-\cos\theta}\right) +\mathcal{O}\left(\mu^{\frac{3(5\nu-1)}{8}}\right).\\
    \end{aligned}
\end{equation}
Relying on \eqref{eqn: qinf,pinf}, \eqref{eqn: approximation p_inf} and \eqref{eqn: z_inf, w_inf}, its tangent vector $\partial_{\theta}\,\upgamma_{\infty}^{u,<}(\theta) = \left(\partial_\theta\, z^{u,<}_\infty(\theta), \partial_\theta\,w^{u,<}_\infty(\theta)\right)$ is given by
\begin{equation}\label{eqn: z_inf',w_inf'}
\begin{aligned}
    \partial_\theta\,z_{\infty}^{u,<}(\theta) =&\ (\partial_\theta\,z_1^{u,<}(\theta), \partial_\theta\,z_2^{u,<}(\theta))
    = \frac{\mu^{\frac \nu 2}}{4}\left(\mp \frac{\sin\theta}{\sqrt{1+\cos\theta}}, \pm \frac{\sin\theta}{\sqrt{1-\cos\theta}}\right),\\
    \partial_\theta\,w_{\infty}^{u,<}(\theta) =&\ \left(\partial_\theta\,w_1^{u,<}(\theta),\partial_\theta\,w_2^{u,<}(\theta)\right)
    = \pm\frac{\mu^{\frac \nu 2}}{4}\left(\frac{\sin\theta}{\sqrt{1+\cos\theta}}, \frac{\sin\theta}{\sqrt{1-\cos\theta}}\right) + \mathcal{O}\left(\mu^{\frac{3(5\nu-1)}{8}},\mu^{1-\frac 5 2\nu}\right).
\end{aligned}
\end{equation}
We apply the diffeomorphism  $\Gamma$ defined in \eqref{eqn: straightening (z,w)-(s,u)} to express the curve $ \upgamma_{\infty}^{u,<}$ in \eqref{eqn: curv inf in coord (z,w)} in coordinates $(s,u)$, yielding a parameterization of the form 
\begin{equation}\label{eqn: curv inf in coord (x,y)}
   \boldsymbol\gamma_{\infty}^{u,<}(\theta) = \left(s_\infty^{u,<}(\theta),u_\infty^{u,<}(\theta)\right).
\end{equation}
Relying on \eqref{eqn: z_inf, w_inf} we have
\begin{equation}\label{eqn: sinf,uinf}
    \begin{aligned}
        s_\infty^{u,<}(\theta) &= (s_1^{u,<}(\theta),s_2^{u,<}(\theta))
        = \pm\frac{\mu^{\frac \nu 2}}{\sqrt 2}\left(\sqrt{1+\cos\theta},0\right) + \mathcal{O}\left(\mu^{\frac{3(5\nu-1)}{8}}\right),\\
        u_\infty^{u,<}(\theta) &= (u_1^{u,<}(\theta),u_2^{u,<}(\theta))
        =\pm\frac{\mu^{\frac \nu 2}}{\sqrt 2}\left(0,\sqrt{1-\cos\theta}\right) + \mathcal{O}\left(\mu^{\frac{3(5\nu-1)}{8}}\right).
    \end{aligned}
\end{equation}
From \eqref{eqn: z_inf',w_inf'} its tangent vector $\boldsymbol \gamma_\infty^{u,<}{}'(\theta) = \left(s_\infty^{u,<}{}'(\theta),u_\infty^{u,<}{}'(\theta)\right)$ admits the following estimates
\begin{equation}\label{eqn: sinf',uinf'}
    \begin{aligned}
        \partial_\theta\,s_\infty^{u,<}(\theta) &= (\partial_\theta\,s_1^{u,<}(\theta),\partial_\theta\,s_2^{u,<}(\theta))
        = \mp\frac{\mu^{\frac{\nu}{2}}}{2\sqrt 2}\left(\frac{\sin\theta}{\sqrt{1+\cos\theta}},0\right) + \mathcal{O}\left(\mu^{\frac{3(5\nu-1)}{8}},\mu^{1-\frac{5}{2}\nu}\right),\\
        \partial_\theta\,u_\infty^{u,<}(\theta) &=(\partial_\theta\,u_1^{u,<}(\theta),\partial_\theta\,u_2^{u,<}(\theta))
        =\pm\frac{\mu^{\frac{\nu}{2}}}{2\sqrt 2}\left(0,\frac{\sin\theta}{\sqrt{1-\cos\theta}}\right)+ \mathcal{O}\left(\mu^{\frac{3(5\nu-1)}{8}},\mu^{1-\frac{5}{2}\nu}\right).
    \end{aligned}
\end{equation}
To apply the transition map $ \mathbf f$ in Proposition \ref{prop: transition map close to collision} we have to compute the norm of both components $s_\infty^{u,<}(\theta)$ and $u_\infty^{u,<}(\theta)$ in \eqref{eqn: sinf,uinf},
\begin{equation}\label{eqn: Norm (x,y)}
\begin{aligned}    
    |s_\infty^{u,<}(\theta)| &= \frac{1}{\sqrt 2}\mu^{\frac \nu 2}\sqrt{1+\cos\theta}\left(1+ \mathcal{O}\left(\mu^{\frac{11\nu-3}{8}}\right)\right) = \mu^{\frac{\nu}{2}}\left(1+\mathcal{O}_2(\iota)\right),\\
    |u_\infty^{u,<}(\theta)| &= \frac{1}{\sqrt 2}\mu^{\frac \nu 2}\sqrt{1-\cos\theta}\left(1+ \mathcal{O}\left(\mu^{\frac{11\nu-3}{8}}\right)\right) = \mu^{\frac{\nu}{2}}\mathcal{O}(\iota),
\end{aligned}
\end{equation}
where we have used that $\theta \in (\theta_- - \iota,\theta_-+\iota)$ for $\theta_-$ defined in \eqref{eqn: notation transversality}. Since $|s_\infty^{u,<}(\theta)| > |u_\infty^{u,<}(\theta)|$, we obtain $\boldsymbol{\gamma}_\infty^{u,<}\subset \boldsymbol{\Sigma}^+$ in \eqref{eqn: bold sigma+-}. Moreover, from \eqref{eqn: sinf,uinf} and \eqref{eqn: sinf',uinf'} we estimate their derivatives as
\begin{equation}\label{eqn: dernormsu}
    \begin{aligned}
        |s_{\infty}^{u,<}(\theta)|' =& \frac{s_1^{u,<}(\theta)\partial_\theta s_1^{u,<}(\theta)+s_2^{u,<}(\theta)\partial_\theta s_2^{u,<}(\theta)}{|s_{\infty}^{u,<}(\theta)|}  = \mp \frac{\mu^{\frac \nu 2}}{2\sqrt 2}\left(\sqrt{1-\cos\theta} + \mathcal{O}\left(\mu^{\frac{11\nu-3}{8}},\mu^{1-3\nu}\right)\right),\\
        |u_{\infty}^{u,<}(\theta)|' =& \frac{u_1^{u,<}(\theta)\partial_\theta u_1^{u,<}(\theta)+u_2^{u,<}(\theta)\partial_\theta u_2^{u,<}(\theta)}{|u_{\infty}^{u,<}(\theta)|} = \pm \frac{\mu^{\frac \nu 2}}{2\sqrt 2}\left(\sqrt{1+\cos\theta} + \mathcal{O}\left(\mu^{\frac{11\nu-3}{8}},\mu^{1-3\nu}\right)\right).
    \end{aligned}
\end{equation}
We apply the transition map $\mathbf f$ from Proposition \ref{prop: transition map close to collision} to the curve $\boldsymbol \gamma_{\infty}^{u,<} \subset \boldsymbol{\Sigma}^+$, which yields the following curve
\begin{equation}\label{eqn: curv gamma_inf >}
\begin{aligned}
\boldsymbol\gamma_{\infty}^{u,>} (\theta) =  \mathbf f(\boldsymbol \gamma_\infty^{u,<}(\theta)) = (s_\infty^{u,>}(\theta),u_\infty^{u,>}(\theta)) =& \left(\frac{|u_\infty^{u,<}(\theta)|}{|s_\infty^{u,<}(\theta)|}\cdot s_\infty^{u,<}(\theta),\;  \frac{u_\infty^{u,<}(\theta)}{|u_\infty^{u,<}(\theta)|}\cdot |s_\infty^{u,<}(\theta)|\right) \\
&+ \mathcal{O}_{2}(|s_\infty^{u,>}(\theta)|).
\end{aligned}
\end{equation}
Relying on \eqref{eqn: sinf,uinf} and \eqref{eqn: Norm (x,y)} we obtain
\begin{equation}\label{eqn: (x_inf,y_inf)>}
    \begin{aligned}
        s_\infty^{u,>}(\theta) =& \left(s_1(\theta),\;s_2(\theta)\right) = \pm\frac{\mu^{\frac{\nu}{2}}}{\sqrt 2}\left(\sqrt{1-\cos\theta}, 0\right) + \mathcal{O}\left(\mu^{\frac{3(5\nu-1)}{8}}\right),\\
        u_\infty^{u,>}(\theta) =& \left(u_1(\theta) ,\;u_2(\theta)\right) = \pm\frac{\mu^{\frac{\nu}{2}}}{\sqrt 2}\left(0,\sqrt{1+\cos\theta}\right) + \mathcal{O}\left(\mu^{\frac{3(5\nu-1)}{8}}\right).
    \end{aligned}
\end{equation}
We compute the tangent vector $\partial_\theta\,\boldsymbol \gamma_{\infty}^{u,>}(\theta) = \left(\partial_\theta\,s_\infty^{u,>}(\theta),\partial_\theta\,u_\infty^{u,>}(\theta)\right)$ from \eqref{eqn: sinf,uinf}, \eqref{eqn: sinf',uinf'},  \eqref{eqn: Norm (x,y)} and \eqref{eqn: dernormsu} yielding
\begin{equation}\label{eqn:dersu}
    \begin{aligned}
        \partial_\theta\,s_\infty^{u,>}(\theta) =& \left(s_1'(\theta),\;s_2'(\theta)\right)\\
        =&\ \partial_\theta\,s_{\infty}^{u,<}(\theta)\frac{|u_{\infty}^{u,<}(\theta)|}{|s_\infty^{u,<}(\theta)|} + s_{\infty}^{u,<}(\theta)\left(\frac{|u_{\infty}^{u,<}(\theta)|'}{|s_\infty^{u,<}(\theta)|} -\frac{|u_{\infty}^{u,<}(\theta)|\cdot|s_\infty^{u,<}(\theta)|'}{|s_\infty^{u,<}(\theta)|^2}\right) + \mathcal{O}(\mu^\nu)\\
        =&\pm \frac{\mu^{\frac \nu 2}}{2\sqrt 2}\left(\sqrt{1+\cos\theta},0\right) + \mathcal{O}\left(\mu^{\frac{3(5\nu-1)}{8}},\mu^{1-\frac 5 2\nu}\right),\\
        \partial_\theta\,u_\infty^{u,>}(\theta) =& \left(u_1'(\theta),\;u_2'(\theta)\right)\\
        =&\ \partial_\theta\,u_{\infty}^{u,<}(\theta)\frac{|s_\infty^{u,<}(\theta)|}{|u_{\infty}^{u,<}(\theta)|} + u_{\infty}^{u,<}(\theta)\left(\frac{|s_\infty^{u,<}(\theta)|'}{|u_{\infty}^{u,<}(\theta)|} - \frac{|s_\infty^{u,<}(\theta)|\cdot|u_{\infty}^{u,<}(\theta)|'}{|u_{\infty}^{u,<}(\theta)|^2}\right)+ \mathcal{O}(\mu^\nu)\\
        =&\ \mp\frac{\mu^{\frac \nu 2}}{2\sqrt 2}\left(0,\sqrt{1-\cos\theta}\right) + \mathcal{O}\left(\mu^{\frac{3(5\nu-1)}{8}},\mu^{1-\frac 5 2\nu}\right).
    \end{aligned}
\end{equation}
We apply the changes $\Gamma^{-1}$, $\psi$ in \eqref{eqn: straightening (z,w)-(s,u)} and \eqref{eqn: Change of coordinates Levi-Civita}  on the curve $\boldsymbol \gamma_\infty^{u,>} \subset \boldsymbol{\Sigma}^-$ in \eqref{eqn: curv gamma_inf >} (see \eqref{eqn: bold sigma+-} for the definition of the section $\boldsymbol{\Sigma}^-$) to obtain the curve $ \gamma_{\infty}^{u,>}\subset \Sigma_\nu^>$ (see \eqref{eqn: Transverse sections}) in coordinates $(q,p)$ parameterized in terms of $\theta$ as
\begin{equation}\label{eqn: curv gamma_infty > (q,p)}
     \gamma_\infty^{u,>}(\theta)= \left(q_\infty^{u,>}(\theta),\ p_\infty^{u,>}(\theta)\right) = \left(q_1(\theta),\ q_2(\theta),\ p_1(\theta),\ p_2(\theta)\right)
\end{equation}
where, from the estimates in \eqref{eqn: (x_inf,y_inf)>}, we have
\begin{equation*}
    \begin{aligned}
    q_1(\theta) =&\left(s_1(\theta)+u_1(\theta)\right)^2 - \left(s_2(\theta)+ u_2(\theta)\right)^2 + \mathcal{O}_6(s(\theta), u(\theta))\\
    =&\left(\pm\frac{\mu^{\frac \nu 2}}{\sqrt 2}  \sqrt{1-\cos\theta} + \mathcal{O}\left(\mu^{\frac{3(5\nu-1)}{8}}\right)\right)^2 -\left(\pm\frac{ \mu^{\frac \nu 2}}{\sqrt 2}\sqrt{1+\cos\theta}+\mathcal{O}\left(\mu^{\frac{3(5\nu-1)}{8}}\right)\right)^2 + \mathcal{O}(\mu^{3\nu}) \\
    =& - \mu^\nu\left(\cos\theta + \mathcal{O}\left(\mu^{\frac{11\nu - 3}{8}}\right)\right),\\
    q_2(\theta) =&\;2\left(s_1(\theta)+u_1(\theta)\right)\cdot\left(s_2(\theta)+u_2(\theta)\right)+ \mathcal{O}_6(s(\theta), u(\theta))\\
    =&\; 2\left(\pm \frac{\mu^{\frac \nu 2}}{\sqrt 2} \sqrt{1-\cos\theta} + \mathcal{O}\left(\mu^{\frac{3(5\nu-1)}{8}}\right)\right)\left(\pm \frac{\mu^{\frac \nu 2}}{\sqrt 2} \sqrt{1+\cos\theta}+\mathcal{O}\left(\mu^{\frac{3(5\nu-1)}{8}}\right)\right) +\mathcal{O}(\mu^{3\nu}) \\
    =&\;\mu^\nu\left(\sin\theta + \mathcal{O}\left(\mu^{\frac{11\nu - 3}{8}}\right)\right),\\
    p_1(\theta) =& -\frac{\left(s_1^2(\theta) - u_1^{2}(\theta) \right)- \left(s_2^{2}(\theta) - u_2^{2}(\theta) \right) + \mathcal{O}_6\left(s(\theta), u(\theta)\right)}{\left(s_1(\theta) + u_1(\theta)\right)^2 + \left(s_2(\theta) + u_2(\theta)\right)^2 + \mathcal{O}_6\left(s(\theta), u(\theta)\right)} =  - \frac{\mu^\nu+ \mathcal{O}\left(\mu^{\frac{19\nu-3}{8}}\right)}{\mu^\nu + \mathcal{O}\left(\mu^{\frac{19\nu-3}{8}}\right)} \\
    =&\; -1 + \mathcal{O}\left(\mu^{\frac{11\nu-3}{8}}\right),\\
    p_2(\theta) =&\;\frac{\left(s_1(\theta)+u_1(\theta)\right)\left(u_2(\theta)-s_2(\theta)\right) + \left(s_2(\theta) + u_2(\theta)\right)\left(u_1(\theta)-s_1(\theta)\right) + \mathcal{O}_6\left(s(\theta), u(\theta)\right)}{\left(s_1(\theta) + u_1(\theta)\right)^2 + \left(s_2(\theta) + u_2(\theta)\right)^2 + \mathcal{O}_6\left(s(\theta),u(\theta)\right)}\\
    &= \frac{\mu^{\frac{\nu}{2}}\mathcal{O}\left(\mu^{\frac{3(5\nu-1)}{8}}\right)}{\mu^\nu + \mathcal{O}\left(\mu^{\frac{19\nu-3}{8}}\right)} = \mathcal{O}\left(\mu^{\frac{11\nu-3}{8}}\right).
\end{aligned}
\end{equation*}
Moreover, the tangent vector 
\[ \partial_\theta\gamma_\infty^{u,>}(\theta)= \left(\partial_\theta q_\infty^{u,>}(\theta),\ \partial_\theta p_\infty^{u,>}(\theta)\right) = \left(q_1'(\theta),\ q_2'(\theta),\ p_1'(\theta),\ p_2'(\theta)\right)\]
is obtained from \eqref{eqn: (x_inf,y_inf)>} and \eqref{eqn:dersu} as
\begin{alignat*}{3}
        q_1'(\theta) &=\mu^\nu\left(\sin\theta + \mathcal{O}\left(\mu^{\frac{11\nu-3}{8}},\mu^{1-3\nu}\right)\right),\quad &&q_2'(\theta) &&= \mu^\nu\left(\cos\theta + \mathcal{O}\left(\mu^{\frac{11\nu-3}{8}},\mu^{1-3\nu}\right)\right),\\
        p_1'(\theta) &=\mathcal{O}\left(\mu^{\frac{11\nu-3}{8}},\mu^{1-3\nu}\right),\quad &&p_2'(\theta) &&=\mathcal{O}\left(\mu^{\frac{11\nu-3}{8}},\mu^{1-3\nu}\right).
\end{alignat*}
We apply the change $\Upsilon^{-1}$ in \eqref{eqn: (q,p) - (r,theta,R,Theta)} on the curve $ \gamma_\infty^{u,>}$ in \eqref{eqn: curv gamma_infty > (q,p)} to express it in coordinates $(\theta,\Theta)$ as a graph of the form
\begin{equation*}
    \gamma_\infty^{u,>} = \Big\{(\theta, \Theta_\infty^{u,>}(\theta))\colon \theta \in (\theta_- - \iota,\theta_-+\iota)\Big\},
\end{equation*}
 where $\Theta_\infty^{u,>}(\theta)$ satisfies
 \begin{equation*}
 \begin{aligned}
    \Theta_\infty^{u,>}(\theta) &= q_1(\theta)p_2(\theta) - q_2(\theta)p_1(\theta) =\mu^\nu\left(\sin\theta + \mathcal{O}\left(\mu^{\frac{11\nu-3}{8}}\right)\right),\\
    \partial_\theta\, \Theta_\infty^{u,>}(\theta) &= q_1'(\theta)p_2(\theta) + q_1(\theta)p_2'(\theta) - q_2'(\theta)p_1(\theta)-q_2(\theta)p_1'(\theta) = \mu^\nu\left(\cos\theta + \mathcal{O}\left(\mu^{\frac{11\nu-3}{8}},\mu^{1-3\nu}\right)\right),
\end{aligned}
\end{equation*}
completing the proof.
\printbibliography
\end{document}